\documentclass[preprint,3p,number]{elsarticle}

\pagestyle{plain}

\usepackage{JLAMP}

\begin{document}

\begin{frontmatter}

\title{Automated Verification of Reactive and \\ Concurrent Programs by Calculation}

\author{Simon Foster}
\ead{simon.foster@york.ac.uk}

\cortext[cor1]{Corresponding author}

\author{Kangfeng Ye}
\ead{kangfeng.ye@york.ac.uk}

\author{Ana Cavalcanti}
\ead{ana.cavalcanti@york.ac.uk}

\author{Jim Woodcock}
\ead{jim.woodcock@york.ac.uk}

\begin{abstract}
  Reactive programs combine traditional sequential programming constructs with primitives to allow communication with
  other concurrent agents. They are ubiquitous in modern applications, ranging from components systems and web services,
  to cyber-physical systems and autonomous robots. In this paper, we present an algebraic verification strategy for
  concurrent reactive programs, with a large or infinite state space. We define novel operators to characterise
  interactions and state updates, and an associated equational theory. With this we can calculate a reactive program's
  denotational semantics, and thereby facilitate automated proof. Of note is our reasoning support for iterative
  programs with reactive invariants, based on Kleene algebra, and for parallel composition. We illustrate our strategy
  by verifying a reactive buffer. Our laws and strategy are mechanised in Isabelle/UTP, our implementation of Hoare and
  He's Unifying Theories of Programming (UTP) framework, to provide soundness guarantees and practical verification
  support.
\end{abstract}

\end{frontmatter}

\section{Introduction}
\label{sec:intro}
Reactive programming~\cite{Harel1985,Bainomugisha2013} is a paradigm that enables effective description of software
systems that exhibit both internal sequential behaviour and event-driven interaction with a concurrent party. Reactive
programs are ubiquitous in safety-critical systems, and typically have a very large or infinite state space. Though
model checking is an invaluable verification technique, it exhibits inherent limitations with state explosion and
infinite-state systems that can be overcome by supplementing it with theorem proving.

Previously~\cite{Foster17c}, we have shown how \emph{reactive contracts} support an automated verification technique for
reactive programs. Reactive contracts follow the design-by-contract paradigm~\cite{Meyer92}, where programs are
accompanied by pre- and postconditions. Reactive programs are often non-terminating and so we also capture intermediate
behaviours, where the program has not terminated, but is quiescent and offers opportunities to interact. Our contracts
are triples, $\rc{\!P_1\!}{\!P_2\!}{\!P_3\!}$, where $P_1$ is the precondition, $P_3$ the postcondition, and $P_2$ the
pericondition. $P_2$ characterises the quiescent observations in terms of the interaction history, and the events
enabled at that point. Broadly speaking, our contract theory has its roots in the CSP process algebra~\cite{Hoare85},
and its failures-divergences semantic model~\cite{Roscoe2005,Cavalcanti&06}.

Reactive contracts describe communication and state updates, so $P_1$, $P_2$, and $P_3$ can refer to both a trace
history of events and internal program variables. They are, therefore, called ``reactive relations'': like relations
that model sequential programs, they can refer to variables before ($x$) and later ($x'$) in execution, but also the
interaction trace ($\trace$), in both intermediate and final observations.

Verification using contracts employs refinement ($\refinedby$), which allows an implementation to weaken the
precondition, and strengthen both the peri- and postcondition when the precondition holds. We employ the
``programs-as-predicates'' approach~\cite{Hehner93}, where the implementation ($Q$) is itself denoted as a composition
of contracts. Thus, a verification problem, $\rc{\!P_1\!}{\!P_2\!}{\!P_3\!} \refinedby Q$, can be solved by calculating
a program $\rc{\!Q_1\!}{\!Q_2\!}{\!Q_3\!} = Q$, and then discharging three proof obligations: (1) $Q_1 \refinedby P_1$;
(2) $P_2 \refinedby (Q_2 \land P_1)$; and (3) $P_3 \refinedby (Q_3 \land P_1)$. These can be further decomposed, using
relational calculus, to produce verification conditions. In \cite{Foster17c} we employ this strategy in an Isabelle/HOL
tactic.

In summary, in our approach verification of reactive programs reduces to reasoning about reactive relations. For
programs of a significant size, these relations are complex, and so the resulting proof obligations are difficult to
discharge using relational calculus. We need, first, abstract patterns so that the relations can be simplified. This
necessitates bespoke constructs that allow us to concisely formulate the three parts of a contract: assumptions,
quiescent observations, and terminated observations. Second, we need calculational laws to handle iterative programs,
which are only partly handled in our previous work~\cite{Foster17c}.

In this paper we present a novel calculus for description, composition, and simplification of reactive relations in the
stateful failures-divergences model~\cite{Roscoe2005,Hoare&98,Oliveira&09}. We characterise conditions, external
interactions, and state updates. An equational theory allows us to reduce pre-, peri-, and postconditions to
compositions of the new constructs using operators of Kleene algebra~\cite{Kozen90} (KA) and utilise KA proof
techniques. Our theory is characterised in the Unifying Theories of Programming~\cite{Hoare&98,Cavalcanti&06} (UTP)
framework. For that, we identify a class of UTP theories that induce KAs, and utilise it in the derivation of
calculational laws for iteration. We use our UTP mechanisation, called
Isabelle/UTP~\cite{Foster2020-IsabelleUTP,Foster16a}, to implement an automated verification approach for infinite-state
reactive programs with rich data structures based on our calculus.

Our framework can be applied to a wide spectrum of reactive programming languages with trace-based semantics, including
real-time and hybrid dynamical systems~\cite{He94,Zhan2008,Sherif2010}. A particular focus is languages descended from
CSP~\cite{Hoare85,Roscoe2005}. In this paper, our approach is applied to the \Circus modelling
language~\cite{Woodcock2001-Circus,Oliveira&09} which combines state modelling using Z~\cite{Spivey89} and reactive
primitives from CSP~\cite{Hoare85,Roscoe2005}. An example application is verification of Simulink block diagrams, to
which both \Circus and hybrid CSP~\cite{He94} have been successfully applied~\cite{Cavalcanti2011Simulink,Zhan2017}. More
recently, \Circus and CSP have been used for verification of a formal state-machine based language for robotic controllers
called RoboChart~\cite{Miyazawa2019-RoboChart,Foster18b}.

The paper is structured as follows. \S\ref{sec:prelim} outlines preliminary material, including UTP, its mechanisation
in Isabelle/UTP, and reactive programs. \S\ref{sec:utp-kleene} identifies a class of UTP theories that induce KAs, and
applies this class for calculation of iterative contracts. \S\ref{sec:circus-rc} specialises reactive relations with new
operators to capture stateful failures-divergences, and derives their equational theory. This allows us to automatically
calculate semantics for sequential reactive programs. \S\ref{sec:ext-choice} extends our equational theory with support
for calculating external choices, for programs where the environment has control over a decision. We also develop
healthiness conditions characterising productivity -- a requirement for both algebraic laws of external choice and
iteration.  \S\ref{sec:iter} extends the strategy with while loops and reactive invariants. \S\ref{sec:par} encodes
parallel composition as a reactive design, and further extends the strategy with calculational laws for concurrent
behaviours. With this, we can then calculate semantics for concurrency and communication between reactive
processes. \S\ref{sec:verify} demonstrates the resulting proof strategy in a small verification. \S\ref{sec:concl}
outlines related work and concludes.

All our theorems, definitions, and proofs have been mechanically verified in Isabelle/UTP, and are documented in a
series of technical reports\footnote{For historical reasons, we use the syntax $\ckey{R}_s(P \vdash Q \diamond R)$ in
our mechanisation for a contract $\rc{\!P\!}{\!Q\!}{\!R\!}$. The former builds on Hoare and He's original syntax for the
theory of designs~\cite{Hoare&98}.}~\cite{Foster16a,Foster-KA-UTP,Foster-RDES-UTP,Foster-SFRD-UTP}. Additionally, most
theorems and definitions in the paper are accompanied by a small Isabelle icon (\isalogo). In the electronic version,
each icon is hyperlinked to the corresponding mechanised artefact in our Isabelle/UTP GitHub
repository\footnote{Isabelle/UTP repository: \url{https://github.com/isabelle-utp/utp-main}. An archive containing all
the files for this paper, and instructions on how to load them into Isabelle/HOL, can also be found at
\url{http://doi.org/10.5281/zenodo.3541080}.}.

This paper is an extension of~\cite{Foster18a}. It adds a body of additional theorems in \S\ref{sec:circus-rc} on more
specialised healthiness conditions for stateful-failure reactive relations (Theorem~\ref{thm:ncsp-intro}), calculation
of iterative reactive relations (Theorem~\ref{thm:crel-comp}-\eqref{thm:crc7}), preconditions of reactive contracts
(Theorem~\ref{thm:evwp}), and also extended supporting commentary. Moreover, a substantial new \S\ref{sec:par} extends
the strategy for parallel composition. A number of additional supporting theorems and definitions are also included in
the other sections.

\section{Preliminaries}
\label{sec:prelim}
This section describes background material relevant for the definition of our new calculus.

\subsection{Unifying Theories of Programming}
\label{sec:utp}

UTP~\cite{Hoare&98,Cavalcanti&06} uses the ``programs-as-predicates'' approach to encode denotational semantics and
facilitate reasoning about programs. It uses the alphabetised relational calculus, which combines predicate calculus
operators, such as disjunction ($\lor$), complement ($\neg$), and quantification ($\exists x \!@ P(x)$), with relation
algebra~\cite{Tarski41}, to denote programs as binary relations between initial variables ($x$) and their subsequent
values ($x'$). Here, ``alphabetised'' means that every such relational predicate is accompanied by a set of declarations
of variables to which the predicate can refer. For example, a program fragment, $x := 1 \relsemi x := x + 1$, with two
distinct variables $x : int$ and $y : bool$, can be modelled by the relational predicate $x' = 2 \land y' = y$, with the
alphabet $\alpha = \{x : int, x' : int, y : bool, y' : bool\}$.

In this presentation of the UTP, we first define the set of alphabetised expressions, $[\view, \src]\uexpr$, which is
parametric over $\view$ and $\src$, types that represent the value type and observation space, respectively. The latter
is induced by an alphabet, with a set of typed variable declarations. Expressions are isomorphic to functions
$\src \to \view$, which return a value in $\view$ for a given observation space. Alphabetised predicates are represented
by Boolean expressions, $[\src]\upred \defs [bool, \src]\uexpr$. We denote the set of alphabetised relations by
$[\src_1,\src_2]\urel \defs [\src_1 \times \src_2]\upred$, a predicate over a product space, where $\src_1$ and $\src_2$
are the initial and final observation space, and correspond to the sets of undashed and dashed
variables\footnote{Textbook presentations of UTP~\cite{Hoare&98,Cavalcanti&06} typically use $in\alpha P$ and
  $out\alpha P$ to denote the input and output alphabet. Here, we find it more convenient to invoke parametric sets,
  which is also consistent with our mechanisation.}, called the input and output alphabets. The set of homogeneous
relations $[\src]\hrel \defs [\src, \src]\urel$ has identical input and output alphabets. We often notationally
distinguish predicates over a unitary type and relations over a product type by use of boldface characters; for example,
$\ptrue$ is a predicate and $\true$ is a relation.

For any given $\src_1$ and $\src_2$, $[\src_1,\src_2]\urel$ is partially ordered by refinement $\refinedby$
(refined-by), denoting universally closed reverse implication, where $\false$ refines every relation. In this context,
$S \refinedby P$ means that $P$ is more deterministic that $S$. For example, we have it that
$(x' > 2) \refinedby (x := 3)$, since the specification that $x$ should finally have a value greater than $2$ is
satisfied by assigning $3$ to $x$.

Every operator of a sequential programming language can be denoted using relations in UTP. Relational composition
($P \relsemi Q$) denotes sequential composition, and has the type
$[\src_1, \src_2]\urel \to [\src_2, \src_3]\urel \to [\src_1, \src_3]\urel$, since the output alphabet of the first
relation must match the input alphabet of the second. Sequential composition has identity
$\II \defs (\skey{s}' = \skey{s})$, of type $[\src]\hrel$, where $\skey{s} : \src$ denotes the entire state. We also
define the conditional operator $(\conditional{p}{q}{r}) \defs ((q \land p) \lor (\neg q \land r))$, with
$p, q, r : \upred[\src]$, which selects $p$ or $r$ based on the truth valuation of $q$.

We summarise the algebraic properties of a homogeneous UTP theory of relations in terms of Boolean
quantales~\cite{Moller2006}, a useful algebraic structure for characterising homogeneous relations.
\begin{definition}[Boolean Quantales]
  A Boolean quantale~\cite{Moller2006} is a structure $(S, \le, 0, \cdot, 1)$, where $(S, \le)$ is a complete Boolean
  lattice with least element $0$; $(S, \cdot, 1)$ is a monoid with $0$ as left and right annihilator; and the function
  $\cdot$ distributes over the lattice join from the left and right.
\end{definition}
\begin{theorem} For any $\src$,  $([\src]\hrel, \mathrel{\sqsupseteq}, \false, \relsemi, \II)$ is a Boolean quantale~\cite{Moller2006}, so that:
\begin{enumerate}
  \item $([\src]\hrel, \refinedby)$ is a complete lattice, with infimum $\bigvee$, supremum $\bigwedge$, greatest element
     $\false$, least element $\true$, and weakest (least) fixed-point operator $\mu F$;
  \item $([\src]\hrel, \lor, \false, \land, \true, \neg)$ is a Boolean algebra;
  \item $([\src]\hrel, \relsemi, \II)$ is a monoid with $\false$ as left and right annihilator and $\II$ as identity;
  \item $\relsemi$ distributes over $\bigvee$ from the left and right.
\end{enumerate}
\end{theorem}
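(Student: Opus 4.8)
The plan is to exploit the fact that, by definition, a homogeneous relation is a Boolean-valued predicate over the product space $\src \times \src$, so that $[\src]\hrel \cong \mathcal{P}(\src \times \src)$ under the characteristic-function view. Almost all of the required structure then transports pointwise from the two-element Boolean algebra. First I would dispatch item (2): the operators $\lor, \land, \neg$ with constants $\false, \true$ are the pointwise liftings of the corresponding Boolean connectives, so each Boolean-algebra axiom (commutativity, associativity, absorption, distributivity, complementation) holds because it holds at every point of $\src \times \src$. For item (1) the same pointwise order yields a complete lattice: arbitrary disjunction $\bigvee$ and conjunction $\bigwedge$ are computed componentwise and remain Boolean predicates. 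Since refinement $\refinedby$ is reverse implication, it is the order dual, so the roles swap — $\bigvee$ is the infimum, $\bigwedge$ the supremum, the everywhere-false relation $\false$ is the greatest (most refined) element, and $\true$ the least; completeness then supplies the weakest fixed point $\mu F$ of any monotone $F$ by Knaster--Tarski.

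The remaining two items concern relational composition, which I would unfold via its defining existential quantification over an intermediate state $\skey{s}_0$ linking the output of $P$ to the input of $Q$. For item (3): associativity of $\relsemi$ follows by rearranging the two nested existential quantifiers and using associativity and commutativity of $\land$; the identity laws $\II \relsemi P = P = P \relsemi \II$ follow from the one-point rule applied to the equality constraint $\skey{s}' = \skey{s}$ built into $\II$; and $\false$ is both a left and a right annihilator because substituting it into either conjunct makes the quantified body identically false, collapsing the whole composition to $\false$.

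Finally, item (4) is the genuinely quantale-specific step. Left and right distribution of $\relsemi$ over $\bigvee$ reduces to the fact that the existential quantifier distributes over arbitrary disjunction, $\exists \skey{s}_0 \!@ \bigvee_i R_i = \bigvee_i (\exists \skey{s}_0 \!@ R_i)$, together with $\land$ distributing over $\bigvee$. I expect this to be the main obstacle — not because any single calculation is hard, but because it must hold for \emph{arbitrary}, possibly infinite, index sets. It is precisely this unrestricted distributivity, rather than mere finite distributivity, that separates a quantale from a plain Kleene algebra, and it is what later licenses the Kleene-algebraic reasoning about iteration. Having checked items (1)--(4), the four clauses of the Boolean-quantale definition (complete Boolean lattice with least element, monoid with annihilator, and two-sided distribution over join) are all discharged for $([\src]\hrel, \mathrel{\sqsupseteq}, \false, \relsemi, \II)$, which completes the argument.
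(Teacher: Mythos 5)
Your proposal is correct. The paper gives no proof of this theorem at all --- it is stated as a standard preliminary result, citing M\"{o}ller, with the actual verification delegated to the Isabelle/UTP mechanisation --- and your argument (pointwise lifting of the Boolean structure over $\src \times \src$, the one-point rule and quantifier rearrangement for the monoid laws of $\relsemi$ with identity $\II$ and annihilator $\false$, and distribution of $\exists$ over arbitrary disjunction for the unrestricted join-distributivity) is precisely the standard verification such a mechanisation carries out, including your correctly handled inversion of the order so that $\bigvee$ is the join with respect to $\sqsupseteq$ and $\false$ the least element of the quantale.
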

\noindent We emphasise that our complete lattice is inverted compared to several conventions~\cite{Kozen97,Moller2006},
which is normal for UTP~\cite{Hoare&98,Cavalcanti&06}. In particular, we often use $\Intchoice_{i \in I}\, P(i)$ to
denote an indexed disjunction over $I$, which intuitively refers to a nondeterministic choice, and likewise
$P \intchoice Q$ to denote $P \lor Q$. As we have mentioned, refinement reduces nondeterminism, which is illustrated by
the following law.

$$\bigsqcap_{i \in I}\, P(i) \, \refinedby \, \bigsqcap_{j \in J}\, P(j) \textnormal{ when } J \subseteq I$$
\noindent In other words, refinement reduces the possible choices that a program is permitted to make. We note that the
partial order $\le$ of the Boolean quantale is $\mathrel{\sqsupseteq}$, and so our lattice operators are inverted: for
example, $\bigvee$ is the infimum with respect to $\refinedby$, and $\mu F$ is the least fixed-point. More general
refinement laws can be found in the work of Back and von Wright~\cite[Chapter~7]{Back1998}.

Relations can be used to denote sequential programming constructs like assignment, and finite and infinite
iteration~\cite{Hoare&98,Armstrong2015}. From these denotations the algebraic laws of programming can be
derived~\cite{Hoare87}, along with operational and axiomatic presentations of the semantics~\cite{Hoare&98}. Moreover,
relations can be enriched to characterise more advanced computational paradigms --- such as object
orientation~\cite{SCS06}, real-time~\cite{Sherif2010}, hybrid computation~\cite{Foster17b}, and
concurrency~\cite{Hoare&98} --- using UTP theories that encode semantic domains.

UTP theories use distinguished observational variables to record observable quantities of the program or operating
environment. By their very nature, such variables are not under the control of the programmer, and instead are governed
by logical invariants called healthiness conditions. For example, we may introduce variables
$time, time' : \real_{\ge 0}$ into $\alpha$ to record the time before and after a real-time program fragment
executed. We can then define a delay construct, $delay(n) \defs time' = time + n \land s' = s$, where $s$ is shorthand
for any variable other than $time$, that advances time whilst leaving all other variables unchanged.

Normally time can only advance, and so a desirable healthiness condition is $time \le time'$, a predicate that any
relation modelling a healthy real-time program should respect. The delay construct $delay(n)$ is an example of a healthy
relation, and $time = 1 \land time' = 0$ is an unhealthy one. We can also prove more general theorems for the other
relational operators: for example, if $P$ and $Q$ are both healthy, then also clearly $P \relsemi Q$ is healthy, by
transitivity of $\le$. Similar closure laws can be proved for other operators, which allows us to characterise the
signature, or syntax, of our UTP theory: the set of function symbols guaranteed to construct healthy programs when the
arguments are healthy.

UTP thus inverts the typical denotational semantic approach of defining an inductive syntax tree, for example using an
algebraic datatype, and then giving it a semantics by a recursive function. It has the significant advantages that we
can (1) further constrain our semantic domain by adding extra healthiness conditions, in a compositional manner
supported by the predicative semantics, and (2) extend the signature with additional syntax when necessary, whilst at
the same time retaining all theorems proved with respect to the existing healthiness conditions and operators. Moreover,
we avoid the need to perform induction over the syntax tree in our proofs.

A UTP theory can be formally characterised as the set of fixed-points of a function
$\healthy{H} : [\src]\hrel \to [\src]\hrel$, that models the healthiness conditions. For example,
$\healthy{HT}(P) \defs (P \land time \le time')$ is an idempotent healthiness function whose fixed-points are those
relations that satisfy $time \le time'$. Any predicate on the observational variables can be encoded as a healthiness
function in this way, and therefore we treat the terms healthiness condition and healthiness function as synonyms. If
$P$ is a fixed-point of $\healthy{H}$, it is said to be $\healthy{H}$-healthy, and the set of healthy relations is
$\theoryset{\healthy{H}} \defs \{ P | \healthy{H}(P) = P \}$.

In UTP, it is desirable that $\healthy{H}$ is idempotent ($\healthy{H}\!\circ\!\healthy{H} = \healthy{H}$) and also
monotonic ($X \refinedby Y \implies \healthy{H}(X) \refinedby \healthy{H}(Y)$). Idempotence ensures that, for any $P$,
$\healthy{H}(P)$ is indeed $\healthy{H}$-healthy, and also means that $\theoryset{\healthy{H}}$ is actually the image of
$\healthy{H}$. Monotonicity additionally ensures, by the Knaster-Tarski theorem, that $\theoryset{\healthy{H}}$ forms a
complete lattice under $\refinedby$. Consequently, there exist strongest and weakest fixed-points operators, which allow
us to reason about both nondeterministic and recursive elements of the UTP theory.

Often, we construct a UTP theory by composition of several healthiness functions,
$\healthy{H}_1 \circ \healthy{H}_2 \cdots \circ \healthy{H}_n$. In this case, we can demonstrate idempotence and
monotonicity of $\healthy{H}$ using the following important theorem:

\begin{theorem} \label{thm:healthy-comp} We assume that $\healthy{H} \defs \healthy{H}_1 \circ \healthy{H}_2 \cdots \circ \healthy{H}_n$. Then,
  $\healthy{H}$ is idempotent provided that (1) each $\healthy{H}_i$, for $i \in 1..n$ is idempotent, and (2) each pair commutes:
  for any $i, j \in 1..n$, such that $i \neq j$,
  $\healthy{H}_i \circ \healthy{H}_j = \healthy{H}_j \circ \healthy{H}_i$. Moreover, $\healthy{H}$ is monotonic,
  provided each $\healthy{H}_i$ is monotonic. \isalink{https://github.com/isabelle-utp/utp-main/blob/1287e8d68ca23fea81bc31064febe3d956db6ee2/utp/utp_healthy.thy\#L125}
\end{theorem}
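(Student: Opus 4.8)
For the monotonicity claim I would argue directly, since a composition of monotonic maps is monotonic. Assuming $X \refinedby Y$, monotonicity of $\healthy{H}_n$ yields $\healthy{H}_n(X) \refinedby \healthy{H}_n(Y)$, and applying $\healthy{H}_{n-1}, \ldots, \healthy{H}_1$ in turn preserves the ordering at each stage by the monotonicity of each $\healthy{H}_i$; composing these implications gives $\healthy{H}(X) \refinedby \healthy{H}(Y)$. This is most cleanly packaged as a short induction on $n$, with the one-function case trivial and the step discharged by the preceding observation.

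The idempotence claim carries the real content, and I would prove it by induction on $n$, writing $G \defs \healthy{H}_2 \circ \cdots \circ \healthy{H}_n$ so that $\healthy{H} = \healthy{H}_1 \circ G$. The base case $n = 1$ is immediate from idempotence of $\healthy{H}_1$. For the inductive step the crucial preparatory fact is that $\healthy{H}_1$ commutes not merely with each $\healthy{H}_j$ but with the whole composite $G$, that is $G \circ \healthy{H}_1 = \healthy{H}_1 \circ G$. Granting this, I compute
\[
\healthy{H} \circ \healthy{H} = \healthy{H}_1 \circ (G \circ \healthy{H}_1) \circ G = \healthy{H}_1 \circ (\healthy{H}_1 \circ G) \circ G = (\healthy{H}_1 \circ \healthy{H}_1) \circ (G \circ G),
\]
using associativity throughout and the commutation identity in the middle. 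Idempotence of $\healthy{H}_1$ collapses $\healthy{H}_1 \circ \healthy{H}_1$ to $\healthy{H}_1$, while the induction hypothesis applies to $G$ — a composition of $n-1$ maps that are still pairwise commuting idempotents — giving $G \circ G = G$. Hence $\healthy{H} \circ \healthy{H} = \healthy{H}_1 \circ G = \healthy{H}$.

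The step I expect to be the main obstacle is precisely establishing $G \circ \healthy{H}_1 = \healthy{H}_1 \circ G$ from the pairwise hypotheses; attempting this by brute reshuffling of the full $2n$-fold composition would be error-prone. I would instead isolate a small auxiliary lemma stating that commutation is preserved under composition: if $f$ commutes with $g$ and with $h$, then
\[
f \circ (g \circ h) = g \circ (f \circ h) = g \circ (h \circ f) = (g \circ h) \circ f,
\]
and then iterate this over the factors $\healthy{H}_2, \ldots, \healthy{H}_n$ of $G$ by a further induction. With that lemma in hand the outer induction closes cleanly, and every remaining manipulation is pure rewriting by associativity, commutativity, and idempotence — which is exactly the form in which it would be discharged in Isabelle/UTP.
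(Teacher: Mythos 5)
Your proposal is correct and takes essentially the same route as the paper's proof: the paper states the theorem without a textual argument, deferring to its Isabelle/UTP mechanisation, and that mechanised proof is discharged by exactly your computation --- commuting $\healthy{H}_1$ past the remaining composite, collapsing $\healthy{H}_1 \circ \healthy{H}_1$ and $G \circ G$ by idempotence and the induction hypothesis, with monotonicity following as a composition of monotonic maps. Your auxiliary lemma that commutation is preserved under composition (so that $G \circ \healthy{H}_1 = \healthy{H}_1 \circ G$ follows from the pairwise hypotheses) is precisely the supporting fact the $n$-fold generalisation rests on, so there is no gap.
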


\noindent Consequently, we can reason about a composite healthiness condition in terms of its components. In this paper,
we use such a UTP theory to characterise concurrent and reactive programs.

\subsection{Isabelle/UTP}
\label{sec:isabelleutp}

Theory engineering and verification using UTP is supported by Isabelle/UTP~\cite{Foster2020-IsabelleUTP,Foster16a},
which provides a shallow embedding of the relational calculus on top of Isabelle/HOL, and various approaches to
automated proof. The foundation of Isabelle/UTP is its model of observations, which utilises
lenses~\cite{Foster09,Foster2020-IsabelleUTP,Foster16a} to model variables as algebraic structures. A lens is a pair of
functions $\lget : \src \to \view$ and $\lput : \src \to \view \to \src$, which are used to query and update a view
($\view$) of a larger observation space ($\src$). We write $X : \view \lto \src$ for a lens $X$ viewing $\view$ in the
source $\src$, and $\lget_X$ and $\lput_X$ for its functions. Typically, $\src$ is characterised by an alphabet of
variables ($\alpha$), and consequently we can safely conflate the observation space and alphabet. We characterise the
behaviour of each lens using three axioms~\cite{Foster09}, which link together the functions.
\begin{definition}[Lens Axioms] A lens $X : \view \lto \src$ satisfies, for any $s : \src$ and $v, v' : \view$, the equations

 $$\lget~(\lput~s~v) = v \qquad \lput~(\lput~s~v')~v = \lput~s~v \qquad \lput~s~(\lget~s) = s$$
\end{definition}

\noindent In this paper, we require that all lenses satisfy these three axioms. We note in passing that these axioms have close
analogues in Back and von Wright's variable calculus~\cite{Back1998}, which predate lenses by several years. There,
$\lget$ is called \textsf{val} and $\lput$ is called \textsf{set}, but they are governed by the same axioms. These
axioms have several models including record types, total functions, and products~\cite{Foster2020-IsabelleUTP,Foster16a}. From them, we can
characterise the laws of assignment and substitution without dependence on a particular state model. Moreover, we
describe semantically when two lenses correspond to different variables, using lens independence~\cite{Foster2020-IsabelleUTP}.

\begin{definition}[Lens Independence] We fix $X : \view_1 \lto \src$ and $Y : \view_2 \lto \src$, and then define: \isalink{https://github.com/isabelle-utp/utp-main/blob/bee1f650ce9a5f2f6faf7e5d76da99d4265cd5f0/optics/Lens_Laws.thy\#L309}
$$X \lindep Y \defs (\forall \,s : \src, u : \view_1, v : \view_2 @ \lput_X(\lput_Y~s~v)~u ~=~ \lput_Y(\lput_X~s~u)~v)$$ 
\end{definition}
\noindent $X$ and $Y$ are independent, written $X \lindep Y$, provided that their $\lput$ functions commute, meaning
that they do not interfere with one another. Lenses can model, not just individual variables, but also sets
thereof. Intuitively, a lens $X : \view \lto \src$ abstractly characterises a $\view$-shaped subregion of a $\src$. The
lens summation operator~\cite{Foster2020-IsabelleUTP}, $X \lplus Y$, allows us to compose two such independent regions.
With it, we can model a set of variables $\{x, y, z\}$ through the summation, $x \lplus y \lplus z$. We also introduce two
special lenses~\cite{Foster2020-IsabelleUTP}:
\begin{itemize}
\item $\lzero : \{\emptyset\} \lto \src$, which for any given $\src$, characterises an empty (point) region; and
  \item $\lone : \src \lto \src$, which characterises the entirety of $\src$.
\end{itemize}

\noindent We can also use lenses to construct a state by combining the view of one state $s_2 : \src$ with the
complement from another state $s_1 : \src$. This is useful for merging of parallel threads that act on disjoint parts of
the state. We define a novel lens override operator to perform this state merge.

\begin{definition} We fix $X : \view \lto \src$ and $s_1, s_2 : \src$, and define
  $\lovrd{s_1}{s_2}{X} \defs \lput_X~s_1~(\lget_X~s_2)$.
  \isalink{https://github.com/isabelle-utp/utp-main/blob/f22f260d20c95337e30ee7bb4d6d16e6fda27af0/optics/Lens_Laws.thy\#L73} \label{def:lovrd}
\end{definition}
\noindent Lens override ($\lovrd{s_1}{s_2}{X}$) extracts the region described by $X$ from $s_2$ and overwrites the corresponding
region in $s_1$, leaving the complement unchanged. This operator obeys a number of useful algebraic laws.
\begin{theorem}[Override Laws] \label{thm:ovrd} \isalink{https://github.com/isabelle-utp/utp-main/blob/07cb0c256a90bc347289b5f5d202781b536fc640/optics/Lens_Order.thy\#L462}
  \begin{align}
    \lovrd{s_1}{s_2}{\lzero} &= s_1 \label{law:ov1} \\
    \lovrd{s_1}{s_2}{\lone} &= s_2 \label{law:ov2} \\
    \lovrd{s}{s}{X} &= s \label{law:ov3} \\ 
    \lovrd{(\lovrd{s_1}{s_2}{X})}{s_3}{Y} &= \lovrd{(\lovrd{s_1}{s_3}{Y})}{s_3}{X} & \text{provided } X \lindep Y \label{law:ov4}
  \end{align}
\end{theorem}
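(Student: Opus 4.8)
The plan is to prove all four equations by unfolding Definition~\ref{def:lovrd}, which rewrites every override $\lovrd{a}{b}{X}$ as the combination $\lput_X~a~(\lget_X~b)$, and then to discharge each resulting goal using the lens axioms (and, for the last law, the definition of lens independence). Since override is defined purely in terms of $\lget$ and $\lput$, each law becomes a small pointwise equation about these two functions, so no induction or fixed-point reasoning is needed.

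For \eqref{law:ov1}, unfolding gives $\lput_{\lzero}~s_1~(\lget_{\lzero}~s_2)$; because $\lzero$ views the empty region, its $\lput$ discards the supplied value and returns its source unchanged, so the expression collapses to $s_1$. For \eqref{law:ov2}, unfolding gives $\lput_{\lone}~s_1~(\lget_{\lone}~s_2)$, and since $\lone$ views the whole of $\src$ we have $\lget_{\lone}~s_2 = s_2$ and $\lput_{\lone}~s_1~s_2 = s_2$, yielding $s_2$. For \eqref{law:ov3}, unfolding gives exactly $\lput_X~s~(\lget_X~s)$, which is the third lens axiom $\lput~s~(\lget~s) = s$, so the result is immediate.

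The only law with genuine content is \eqref{law:ov4}. First I would unfold both nested overrides, so that each side becomes an expression that applies $\lput_X$ (with an $X$-value extracted by $\lget_X$) and $\lput_Y$ (with a $Y$-value extracted by $\lget_Y$) to the base state $s_1$, the two sides differing only in the order in which these two updates are performed. The side condition $X \lindep Y$ then supplies precisely the commuting equation $\lput_X(\lput_Y~s~v)~u = \lput_Y(\lput_X~s~u)~v$; instantiating $s$ with $s_1$, $u$ with the extracted $X$-value, and $v$ with the extracted $Y$-value reorders the two $\lput$ applications and equates the two sides.

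I expect the main (and really the only) obstacle to be the bookkeeping in \eqref{law:ov4}: getting the instantiation of the independence equation to line up exactly with the unfolded nested overrides, and confirming that the values fed to $\lput_X$ and $\lput_Y$ are themselves untouched by the reordering. Everything else reduces to a single rewrite with a lens axiom or with the defining behaviour of $\lzero$ and $\lone$; indeed, in Isabelle/UTP I would expect each equation to discharge by unfolding the override definition and simplifying with the lens and independence laws.
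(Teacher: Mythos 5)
Your proof is correct and takes exactly the route of the paper's (mechanised) proof: the paper gives no in-text argument, and the linked Isabelle lemma is discharged precisely by unfolding Definition~\ref{def:lovrd} and simplifying with the lens axioms (laws \eqref{law:ov1}--\eqref{law:ov3}) and the independence equation (law \eqref{law:ov4}). One caveat worth noting: law \eqref{law:ov4} as printed contains a typo --- the right-hand side should be $\lovrd{(\lovrd{s_1}{s_3}{Y})}{s_2}{X}$, with the $X$-value taken from $s_2$, as both the surrounding prose (``commute the order in which we apply $s_2$ and $s_3$'') and the mechanised lemma confirm; read literally, the printed equation is false (take $\src = A \times B$ with $X$, $Y$ the two projections and $s_2$, $s_3$ differing in the $X$ component). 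Your step asserting that ``the two sides differ only in the order in which the updates are performed'' holds only for the corrected statement, so your argument silently proves the intended (true) law --- which is the right thing to do, but you should flag the discrepancy rather than let it pass unremarked.
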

Law \eqref{law:ov1} shows that overriding $s_1$ with $s_2$ using $\lzero$, the empty lens, effectively means that we use
none of $s_2$, and \eqref{law:ov2} is the dual case with the $\lone$ lens. Law \eqref{law:ov3} shows that overriding a
source element is idempotent. Law \eqref{law:ov4} is a kind of commutativity law. In the term
$\lovrd{\lovrd{s_1}{s_2}{X}}{s_3}{Y}$ we are constructing a composite source from the $X$ region of $s_2$, the $Y$
region of $s_3$, and the remainder from $s_1$, with the assumption that $X$ and $Y$ are independent. The law shows that
we can, in this case, commute the order in which we apply $s_2$ and $s_3$.

We can also relate lenses using the sublens preorder~\cite{Foster2020-IsabelleUTP}, $X \lsubseteq Y$, which requires that the view of
$X$ is contained within the view of $Y$. For example, $X \lsubseteq X \lplus Y$ -- the order is analogous to a subset
relation for variable sets: $\{x, y\} \lsubseteq \{x, y, z\}$. Moreover, $\lzero \lsubseteq X$ and $X \lsubseteq \lone$,
as these are the smallest and largest lenses.

With lenses, we can also construct substitutions, which are modelled as functions $\sigma : \src \to \src$. They are
used in Isabelle/UTP to unify variable substitutions, state updates, assignments, and evaluation contexts, also
following the pattern given by Back and von Wright~\cite{Back1998}. We can construct substitutions
$\substmap{x_1 \mapsto e_1, x_2 \mapsto e_2, \cdots, x_n \mapsto e_n}$, which assign an expression
$e_i : [\view_i, \src]\uexpr$ to each lens $x_i : \view_i \lto \src$ with a matching view type. Each expression can
refer to the previous values of the variables, and variables not mentioned retain their current value. A substitution
$\sigma : \src \to \src$ can be applied to an expression $e : [\view, \src]\uexpr$ using the operator
$\substapp{\sigma}{e} \defs e \circ \sigma$, which precomposes the characteristic function of $e$ with the substitution
function. We can then define $e[k/x] \defs \substapp{\substmap{x \mapsto k}}{e}$ to obtain the classical substitution
operator. It obeys similar laws to syntactic substitution, though it is a semantic operator~\cite{Foster2020-IsabelleUTP}.

This substitution constructor is syntactic sugar for a more general update operator
$$\sigma(x \mapsto e) \defs (\lambda s @ \lput_x ~ (\sigma(s)) ~ (e(s)))$$ which updates the value of lens $x$ to
expression $e$. We can perform several updates using the shorthand
$$\sigma(x_1 \mapsto e_1, x_2 \mapsto e_2, \cdots, x_n \mapsto e_n) = \sigma(x_1 \mapsto e_1)(x_2 \mapsto e_2)\cdots(x_n \mapsto e_n)$$ and moreover
$\substmap{x_1 \mapsto e_1, x_2 \mapsto e_2, \cdots} = id(x_1 \mapsto e_1)(x_2 \mapsto e_2) \cdots$, where
$id : \src \to \src$ is the identity substitution. Substitution update obeys several useful laws.

\begin{theorem}[Substitutions] We fix $x : \view \lto \src$, $y : \mathcal{W} \lto \src$, $e : [\view, \src]\uexpr$,
  $f : [\mathcal{W}, \src]\uexpr$, $e_i : [\view_i, \src]$ for $1 \le i \le n$, and
  $op : [\view_1, \src]\uexpr \to [\view_2, \src]\uexpr \cdots \to [\view, \src]\uexpr$ and then prove the following
  laws:
  \isalink{https://github.com/isabelle-utp/utp-main/blob/bee1f650ce9a5f2f6faf7e5d76da99d4265cd5f0/utp/utp_subst.thy}
  \begin{align}
    \sigma(x \mapsto x) &= \sigma \label{law:SB0} \\
    \sigma(x \mapsto e, y \mapsto f) &= \sigma(y \mapsto f, x \mapsto e) & \textnormal{if } x \lindep y \label{law:SB1} \\
    \sigma(x \mapsto e, y \mapsto f) &= \sigma(y \mapsto f) & \textnormal{if } x \lsubseteq y \label{law:SB2} \\
    \substapp{\sigma}{(op~e_1 \cdots e_n)} &= op~(\substapp{\sigma}{e_1}) \cdots (\substapp{\sigma}{e_n}) \label{law:SB4} \\
    \substapp{\sigma(x \mapsto e)}{x} &= e \label{law:SB5}
  \end{align}
\end{theorem}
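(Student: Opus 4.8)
The plan is to prove each of the five substitution laws \eqref{law:SB0}--\eqref{law:SB5} by unfolding the definitions of substitution update and application, and then reducing each to an equality of functions $\src \to \src$ (or $\src \to \view$) that follows from the lens axioms and the definition of lens independence/sublens. Throughout, I would work pointwise: to establish an equality of substitutions $\sigma_1 = \sigma_2$ it suffices, by extensionality, to show $\sigma_1(s) = \sigma_2(s)$ for an arbitrary $s : \src$, and likewise for expressions, since both are modelled as total functions. The key defining equations are $\sigma(x \mapsto e) = (\lambda s @ \lput_x\,(\sigma\,s)\,(e\,s))$ and $\substapp{\sigma}{e} = e \circ \sigma$.

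First I would dispatch the two simplest laws. For \eqref{law:SB5}, unfolding gives $\substapp{\sigma(x \mapsto e)}{x} = x \circ \sigma(x \mapsto e)$, which at a point $s$ evaluates to $\lget_x(\lput_x\,(\sigma\,s)\,(e\,s))$; this collapses to $e\,s$ by the first lens axiom $\lget~(\lput~s~v) = v$, giving exactly $e$. For \eqref{law:SB0}, I unfold $\sigma(x \mapsto x)$ at a point $s$ to obtain $\lput_x\,(\sigma\,s)\,(\lget_x(\sigma\,s))$, and then apply the third lens axiom $\lput~s~(\lget~s) = s$ with $s := \sigma\,s$ to recover $\sigma\,s$. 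These two are essentially immediate consequences of single lens axioms.

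Next I would handle the two laws that depend on the relationship between $x$ and $y$. For the commutativity law \eqref{law:SB1}, I expand both sides pointwise to nested $\lput$ applications; the left side gives $\lput_y\,(\lput_x\,(\sigma\,s)\,(e\,s))\,(f\,s)$ and the right the same with $x,y$ swapped. The hypothesis $x \lindep y$ is precisely the statement that these two nested updates are equal (after checking that the expression arguments $e\,s$ and $f\,s$, whose values are computed from the unmodified $s$, are unaffected by reordering), so the law follows directly from Definition of lens independence. For the subsumption law \eqref{law:SB2}, the hypothesis $x \lsubseteq y$ means the view of $x$ sits inside that of $y$; intuitively, overwriting $x$ and then $y$ is the same as overwriting $y$ alone, since the later $\lput_y$ discards the effect of the earlier $\lput_x$. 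I expect this to require a characterisation of sublens in terms of the $\lput$ functions (factoring $x$'s update through $y$'s), which is the technical content I would need to pull from the lens library.

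Finally, law \eqref{law:SB4} concerns how application distributes through an $n$-ary operator $op$ lifted to expressions. Here I would appeal to the fact that lifted operators are defined by pointwise application, so $(op~e_1 \cdots e_n)\,s = op\,(e_1\,s)\cdots(e_n\,s)$; precomposing with $\sigma$ then pushes $\sigma$ inside each argument, yielding $op~(\substapp{\sigma}{e_1})\cdots(\substapp{\sigma}{e_n})$ by definition of $\substapp{\cdot}{\cdot}$. This is purely a matter of unfolding the shallow-embedding definitions and commuting function composition past the pointwise lifting. The main obstacle I anticipate is \eqref{law:SB2}: unlike the others, which reduce to a single named axiom or definition, the sublens law needs an auxiliary lemma relating $\lsubseteq$ to the behaviour of $\lput_x$ relative to $\lput_y$, so the bulk of the work would be in either invoking or establishing that factorisation.
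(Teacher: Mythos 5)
Your overall route --- pointwise extensionality, unfolding $\sigma(x \mapsto e) = (\lambda s @ \lput_x\,(\sigma\,s)\,(e\,s))$ and $\substapp{\sigma}{e} = e \circ \sigma$, and discharging each goal by a lens axiom or a lens-order lemma --- is exactly how these laws are established in the paper's mechanisation, which proves them in Isabelle/UTP by transfer to the underlying functions together with the lens laws. Your arguments for \eqref{law:SB1}, \eqref{law:SB2}, \eqref{law:SB4} and \eqref{law:SB5} are sound: for \eqref{law:SB1} you rightly note that $e$ and $f$ are both evaluated at the unmodified $s$, so independence $x \lindep y$ (instantiated at $\sigma\,s$, $e\,s$, $f\,s$) is literally the required commutation; and for \eqref{law:SB2} you correctly anticipate that the content lies in an auxiliary factorisation of $\lput_x$ through $\lput_y$, namely $\lput_y\,(\lput_x~s~u)~v = \lput_y~s~v$ when $x \lsubseteq y$, which is precisely the lemma the lens library supplies.

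There is, however, a genuine slip in your proof of \eqref{law:SB0}. Unfolding $\sigma(x \mapsto x)$ at a point $s$ does \emph{not} give $\lput_x\,(\sigma\,s)\,(\lget_x(\sigma\,s))$: by the defining equation you yourself quote, the assigned expression is evaluated at the \emph{original} state, so the correct unfolding is $\lput_x\,(\sigma\,s)\,(\lget_x\,s)$ --- note that you used exactly this convention in your proof of \eqref{law:SB5}, where $e$ is evaluated at $s$. With the correct unfolding the third lens axiom no longer applies, because the state being read differs from the state being updated; the goal $\lput_x\,(\sigma\,s)\,(\lget_x\,s) = \sigma\,s$ is equivalent (via the first axiom) to $\lget_x(\sigma\,s) = \lget_x\,s$, which fails for any $\sigma$ that modifies $x$: take a one-variable state space with $x = \lone$ and $\sigma = \substmap{x \mapsto x + 1}$, where the left-hand side returns $s$ but $\sigma\,s = s+1$. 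What does hold, and what the mechanisation in \textsf{utp\_subst.thy} actually establishes, is the $\sigma = id$ instance $\substmap{x \mapsto x} = id$ (there the two unfoldings coincide and the third axiom closes the goal), or equivalently the variant $\sigma(x \mapsto \substapp{\sigma}{x}) = \sigma$ in which the assigned expression is read through $\sigma$ --- your argument is in fact a correct proof of that variant. To repair your proof you must either restrict \eqref{law:SB0} to $\sigma = id$, or add a side condition such as $\lget_x \circ \sigma = \lget_x$; as written for arbitrary $\sigma$, the step would fail.
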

\noindent An update of a variable to itself has no effect \eqref{law:SB0}. We can commute two updates provided the
variables are independent \eqref{law:SB1}. An update to $y$ overrides one to $x$ when $x$ is a narrower lens than $y$,
or is equivalent \eqref{law:SB2}. Substitution application distributes through applied operator symbols $op$ \eqref{law:SB4},
and replaces variables with their assigned value \eqref{law:SB5}. These laws provide the foundation for modelling state
in a variety of works. In this paper, lenses are valuable in characterising concurrent state updates, as
demonstrated in \S\ref{sec:par}.

\subsection{Reactive Programs}
\label{sec:rea-prog}

Whilst sequential programs determine the relationship between an initial and final state, reactive programs also pause
during execution to interact with the environment. For example, the CSP~\cite{Hoare85,Cavalcanti&06} and
\Circus~\cite{Woodcock2001-Circus,Oliveira&09} languages can model networks of concurrent processes that communicate
using shared channels. Reactive behaviour is described using primitives such as event prefix $a\!\then\!P$, which awaits
event $a$ and then enables $P$; conditional guard, $b \guard P$, which enables $P$ when $b$ is true; external choice
$P\!\extchoice\!Q$, where the environment resolves the choice by communicating an initial event of $P$ or $Q$; and
iteration $\ckey{while}~b~\ckey{do}~P$.  Channels can carry data, and so events can take the form of an input ($c?x$) or
output ($c!v$). \Circus processes also have local state variables that can be assigned ($x := e$).

We exemplify the \Circus notation with the program for an unbounded buffer.
\begin{example} \label{ex:buffer} In the $Buffer$ process below, variable $bf : \seq \nat$ is a finite
    sequence of natural numbers\footnote{In Isabelle/UTP, we model sequences using the HOL parametric type
    $[A]list$, which represents inductive lists.} that records the elements, and channels $inp(n : \nat)$ and
  $outp(n : \nat)$ represent inputs and outputs.
  $$Buffer ~\defs~ bf := \langle\rangle \relsemi \left(
  \begin{array}{l}
    \ckey{while}~~true~~\ckey{do} \\
    ~~
    \begin{array}{l}
      inp?v \then bf := bf \cat \langle v \rangle \\
      \extchoice (\#bf > 0) \guard out!(head(bf)) \then bf := tail(bf)
    \end{array} 
  \end{array} \right)$$

\noindent Here, $xs \cat ys$ denotes sequence concatenation~\cite{Spivey89}, and $\langle x, y, z, \cdots \rangle$
denotes an enumerated sequence. Variable $bf$ is set to the empty sequence $\langle\rangle$, and then a non-terminating
loop describes the main behaviour. Its body repeatedly allows the environment to either provide a value $v$ over $inp$,
followed by which $bf$ is extended, or else, if the buffer is non-empty, receive the value at the head, and then $bf$ is
contracted. \qed
\end{example}
\Circus has previously been given both a denotational~\cite{Oliveira&09} and an operational
semantics~\cite{Woodcock2005-OpSemCircus}, which are linked in the UTP framework. Here, we build on these previous
results and capture the axiomatic semantics for reactive programs using reactive contracts~\cite{Foster17c}. Reactive
contracts can be used both to specify requirements for reactive programs, under certain assumptions, and also to assign
denotational semantics to each operator of a reactive programming language. The denotational semantics symbolically
encodes the possible transitions a reactive program can exhibit. We can therefore use a theorem prover to reason about a
reactive program with a very large or infinite state space. As an example application, we have used them for
  verifying state-machine diagrams in the RoboChart language~\cite{Foster18b}.

\begin{table}
 \arraycolsep=1.4pt
 \def\arraystretch{1.3}
 $$\begin{array}{rlc|@{\hspace{1ex}}l}
     \multicolumn{2}{c}{\textbf{Observational Variable}} && \multicolumn{1}{c}{\textbf{Description}} \\ \hline
     ok, ok' &: \mathbb{B} && \text{Flags whether predecessor or current action has diverged.} \\
     wait, wait' &: \mathbb{B} && \text{Flags whether predecessor or current action is quiescent.} \\
     tr, tr', \trace &: \seq \textit{Event} && \text{The trace, before, after, and during execution.} \\
     \state, \state' &: \Sigma && \text{The state, before and after execution.} \\
     \refu'          &: \power(\textit{Event}) && \text{The set of events before refused at a quiescent point.}
   \end{array}$$

  \vspace{-1ex}
  \caption{Overview of Reactive Design Observational Variables}  
  \label{tab:reaobs}
\end{table}
 
Reactive contracts are built with the following constructor, which is part of our UTP theory's signature:
$$\rc{P_1(\trace, \state, r)}{P_2(\trace, \state, r, r')}{P_3(\trace, \state, \state', r, r')}$$
$P_1$ is called the precondition, $P_2$ is the pericondition, and $P_3$ is the postcondition. The notation $P_i(x,y,z)$
indicates that relation $P_i$ may refer only to $x$, $y$, and $z$ explicitly; any number of variables may be
indicated. The variables are modelled as lenses, but for brevity we omit this technicality. Variable $\trace$ refers to
the trace, which is modelled using a trace algebra~\cite{Foster17b}, and $\state, \state' : \Sigma$ to the state, for
state space $\Sigma$. Different to the basic relational program model, we follow the pattern of encapsulating
  all state variables under $\state$ to explicitly distinguish them from observational
  variables~\cite{Sherif2010,BGW09}. Traces are equipped with operators for the empty trace $\snil$, concatenation
$tt_1 \cat tt_2$, prefix $tt_1 \le tt_2$, and difference $tt_1 - tt_2$, which removes a prefix $tt_2$ from
$tt_1$. Reactive contracts have an extensible alphabet, and can encode additional semantic data, such as refusals, using
extension variables $r, r'$, which are placeholders for additional observational variables.

$P_{1-3}$ are reactive relations~\cite{Foster17c}: a specialised form of homogeneous UTP relation with
  information about the trace history and state. The different combinations of variables permitted by these relations
  are constrained using healthiness conditions. These three relations respectively encode, (1) the precondition in
terms of the initial state and permissible traces; (2) the pericondition with possible intermediate interactions with
respect to an initial state; and (3) the postcondition characterising possible final states should the program
  terminate. Pericondition $P_2$ and postcondition $P_3$ are both within the ``guarantee'' part of the underlying
design contract, and so can be strengthened by refinement; see \cite{Foster17c} for details. $P_2$ does not refer to
intermediate state variables since they are concealed when a program is quiescent. We sometimes abbreviate
$\rc{\truer}{P_2}{P_3}$, a contract with a true precondition, with the notation $\rcs{P_2}{P_3}$. Our precondition
corresponds to the ``assume'' part of a contract. Reactive contracts lie with the greater field of assume-guarantee
conditions~\cite{Benveniste2007,Benvenuti2008,Vincentelli2012}; a detailed comparison can be found in~\cite{Foster17c}.

In this paper, traces are modelled as finite sequences, $\trace : \seq \textit{Event}$, for some set of events given by
$Event$, though other models are also admitted~\cite{Foster17b}. Events can be parametric, written $a.x$, where $a$ is a
channel and $x$ is the data. Our theory provides an extensible denotational semantic model for reactive and concurrent
languages. To exemplify, we consider the semantics of the skip, event, and assignment actions from \Circus, which
require that we add variable $\refu' : \power(\textit{Event})$ to record refusals, which instantiates the extension
variable $r'$.
\begin{definition}[Skip Action, Terminated Event Prefix, and Assignment] \label{def:skipprefas}
\begin{align*}
  \Skip ~~\defs~~~ & \rc{\truer}{\false}{\trace = \snil \land \state' = \state} \\
  \doact{a} ~~\defs~~~& \rc{\truer}{\trace = \langle\rangle \land a \notin \refu'}{\trace = \langle a \rangle \land \state' = \state} & \text{for } a : [\textit{Event}, \Sigma]\uexpr \\
  x := e ~~\defs~~~& \rc{\truer}{\false}{\state' = \state(x \mapsto e) \land \trace =\langle\rangle} & \text{for } x : \view \lto \Sigma \text { and } e : [\view, \Sigma]\uexpr
\end{align*}
\end{definition}
\noindent
Each of these contracts specifies the possible behaviours that can be observed in the reactive program. $\Skip$ is an
action that cannot diverge, and immediately terminates leaving the state unchanged. Its precondition is $\truer$, the
universal reactive relation (defined below), since it is always satisfied. The pericondition is $\false$ because there
are no quiescent behaviours. In the postcondition, we define that no events occur ($\trace = \snil$), and the state is
left unchanged ($\state' = \state$). The event action ($\doact{a}$) also has a true precondition. In the pericondition,
we specify that in an intermediate state no events have occurred, but $a$ is not being refused -- intuitively this means
that the program is waiting to engage in the $a$ event. In the postcondition, we specify that the trace is extended by
$a$, since it has now happened, and the state is unchanged. With this we can define the \Circus event prefix:
$a \then P \defs \doact{a} \relsemi P$.  Assignment also has a true precondition, and a false pericondition since it terminates without
interaction. The postcondition specifies the updates to the state, and leaves the trace unchanged. This
  definition of assignment is naturally more expressive than the relational assignment (\S\ref{sec:utp}) since it also
  handles observational variables like $\trace$.

As mentioned, reactive contracts can also be used as a specification mechanism. For example, we can define the following
contract for deadlock-freedom.

\begin{example}[Deadlock-freedom Contract] $\ckey{CDF} \defs \textstyle\rc{\truer}{\exists e @ e \notin \refu'}{\truer}$ \isalink{https://github.com/isabelle-utp/utp-main/blob/90ec1d65d63e91a69fbfeeafe69bd7d67f753a47/theories/sf_rdes/utp_sfrd_fdsem.thy\#L425}
\end{example}

\noindent $\ckey{CDF}$ requires that every intermediate observation must exhibit at least one enabled event $e$, that
is, one event $e$ is not being refused -- that is what deadlock-freedom means. The pre- and postcondition do not specify
any particular behaviours, since we are only concerned with quiescent observations. Any reactive program that refines
$\ckey{CDF}$ must always have an enabled transition. For example, it is the case that $\ckey{CDF} \refinedby \doact{a}$.
This can be formally demonstrated using the contract refinement theorem below (Theorem~\ref{thm:rdesrefine}). First
though, we give an overview of the encoding of reactive contracts in UTP.

Following the UTP approach, the constructor $\rc{P_1}{P_2}{P_3}$ is really syntactic sugar for a complex
relation~\cite{Foster17c} that is defined using constructs from the UTP theories of reactive processes and
designs. Consequently, contracts can be composed using the UTP relational operators. Reactive relations and contracts
are characterised by healthiness conditions $\healthy{RR}$ and $\healthy{NSRD}$, respectively, which we have previously
described~\cite{Foster17c}, and reproduce in Table~\ref{tab:reahcond}. They are all both idempotent and
continuous~\cite{Foster17c}. The observational variables include $ok$ and $wait$, which are used to distinguish normal
from divergent behaviour, and quiescent from terminating behaviour, respectively. A summary of all the observational
variables is shown in Table~\ref{tab:reaobs}. A reactive contract is then defined as below.

\begin{definition}
  $\rc{P_1}{P_2}{P_3} \defs \healthy{R1} \circ \healthy{R2} \circ \healthy{R3}_{\!h}(ok \land P_1 \implies ok' \land (\conditional{P_2}{wait'}{P_3}))$ \isalink{https://github.com/isabelle-utp/utp-main/blob/f22f260d20c95337e30ee7bb4d6d16e6fda27af0/theories/rea_designs/utp_rdes_triples.thy}
\end{definition}

\noindent This first applies the reactive healthiness conditions, $\healthy{R1}$, $\healthy{R2}$, and
$\healthy{R3}_{\!h}$~\cite{Foster17c}. It then requires that if the predecessor has not diverged ($ok$), and the precondition holds
($P_1$), then the contract does not diverge ($ok'$). There are then two possibilities: either the contract is quiescent
($wait'$), in which case $P_2$ holds, otherwise it is terminating ($\neg wait'$), in which case $P_3$ holds.
$\healthy{NSRD}$ specialises the theory of reactive designs~\cite{Cavalcanti&06,Oliveira&09} to \emph{normal stateful
  reactive designs}~\cite{Foster17c}. This version of reactive designs imposes the requirement that $\state'$ cannot be
referenced in the pericondition, as we assume that quiescent observations do not reveal the state.

Reactive relations characterise the inner elements of a reactive contract, namely the pre-, peri-, and
postconditions. Using healthiness conditions called $\healthy{R1}$ and $\healthy{R2}$, $\healthy{RR}$ ensures that every
observation describes a well-formed trace ($\trace$), and furthermore does not depend on $ok$ or $wait$, as these are
only required by the reactive contract infrastructure. Technically, $\trace$ is not a relational variable, but a special
variable $\trace \defs tr' - tr$ where $tr, tr'$, as usual in UTP, encode the trace relationally~\cite{Hoare&98}, under
the assumption that $\healthy{RR}$ is satisfied. Nevertheless, due to our previous results~\cite{Foster16a,Foster17b},
$\trace$ can be treated as a variable, and it is more intuitive to do so. We treat $tr$ and $tr'$ as semantic machinery
that is concealed in $\trace$, which represents the actual trace.

Preconditions of a reactive contract are elements $\theoryset{\healthy{RC}}$, which specialises $\healthy{RR}$ by
requiring that only the initial state ($\state$) is referenced, and that the trace is prefix closed. The intuition here
is that when a trace violates the precondition of a contract, then any extension of this trace must also violate it,
similar to how the set of divergences in CSP is extension closed~\cite{Brookes1984}. By duality, if a trace satisfies
the precondition, then any prefix of the trace must also satisfy the precondition, and hence the precondition is prefix
closed with respect to the trace. The basic reactive relational operators are defined below.

\begin{definition}[Reactive Relational Operations] \isalink{https://github.com/isabelle-utp/utp-main/blob/9bbb6e407a4224591b2370605186ca0a45b718f9/theories/reactive/utp_rea_rel.thy\#L99}
  $$\truer \defs \healthy{R1}(\true) \qquad (\negr P) \defs \healthy{R1}(\neg P) \qquad P \rimplies Q \defs (\negr P \lor Q)$$
\end{definition}

\noindent The theory of reactive relations forms a Boolean algebra, but we have to redefine $\true$, $\neg$, and
$\implies$ as these are not reactive relations. The relational $\true$ is not $\healthy{RR}$ healthy, since it permits
any combination of $tr$ and $tr'$, and so we define $\truer$ to be the least reactive relation. We also need a bespoke
complement, $(\negr P)$, because $\theoryset{\healthy{RR}}$ is similarly not closed under $\neg$. So, after taking the
negation, we need to apply $\healthy{R1}$ to obtain a healthy relation. We also redefine implication for the same
reasons ($P \rimplies Q$). We do not need to redefine $\false$ because, unlike $\true$, it is already
$\healthy{RR}$-healthy, and the same follows for the other logical connectives. We then have proved the following
theorem~\cite{Foster17c}.

\begin{theorem} $(\theoryset{\healthy{RR}}, \lor, \false, \land, \truer, \negr)$ forms a Boolean algebra \isalink{https://github.com/isabelle-utp/utp-main/blob/9bbb6e407a4224591b2370605186ca0a45b718f9/theories/reactive/utp_rea_rel.thy\#L574} \end{theorem}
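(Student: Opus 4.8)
The plan is to avoid verifying the Boolean algebra axioms from scratch and instead inherit the structure from the ambient relational algebra, which is already known to be a Boolean algebra by part~2 of the Boolean quantale theorem. The operations $\lor$ and $\land$ on $\theoryset{\healthy{RR}}$ are literally the ambient ones, so every law not involving the bounds or complement — commutativity, associativity, absorption, idempotence, and distributivity of $\land$ over $\lor$ — will transfer verbatim, provided $\theoryset{\healthy{RR}}$ is closed under these operations. The only genuinely new ingredients are the redefined top $\truer$ and the bespoke complement $\negr$, so the real work is to locate $\theoryset{\healthy{RR}}$ inside the ambient algebra in a way that makes these two the correct Boolean structure.

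The cleanest route is to observe that $\theoryset{\healthy{R1}}$ — the relations $P$ satisfying $P = P \land \truer$ (equivalently $P \implies \truer$, since $\truer = \healthy{R1}(\true)$ is the predicate $tr \le tr'$) — is precisely the principal ideal $\downarrow\,\truer$ of the ambient Boolean algebra $[\src]\hrel$. A principal ideal is itself a Boolean algebra, with bottom $\false$, top $\truer$, meet $\land$, join $\lor$, and \emph{relative} complement $P \mapsto \neg P \land \truer$. Crucially, this relative complement is exactly $\negr P$, since $\neg P \land \truer = \healthy{R1}(\neg P)$. Thus the complement laws $P \land \negr P = \false$ and $P \lor \negr P = \truer$, together with the lattice and distributivity laws, come for free from the general theory of principal ideals in Boolean algebras. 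It then only remains to show that $\theoryset{\healthy{RR}}$, obtained by further imposing $\healthy{R2}$ and independence from $ok$ and $wait$, is a Boolean \emph{subalgebra} of $\downarrow\,\truer$: a subset that contains $\false$ and $\truer$ and is closed under $\land$, $\lor$, and $\negr$.

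For this subalgebra step I would first note that every $\healthy{RR}$-healthy relation is $\healthy{R1}$-healthy — because the constituents of $\healthy{RR}$ commute and $\healthy{R1}$ is idempotent (invoking Theorem~\ref{thm:healthy-comp}) — so $\theoryset{\healthy{RR}} \subseteq \downarrow\,\truer$. Membership of $\false$ and $\truer$, and closure under $\lor$ and $\land$, follow from the previously established distributivity of $\healthy{R1}$, $\healthy{R2}$, and the flag-projection over the Boolean connectives. The delicate case is closure under $\negr$: I must show that $\negr P = \healthy{R1}(\neg P)$ is $\healthy{RR}$-healthy whenever $P$ is. The key observation is that, among the three constituents of $\healthy{RR}$, only $\healthy{R1}$ fails to commute with complement — it is conjunction with the non-trivial predicate $tr \le tr'$, which is destroyed by $\neg$ — whereas $\healthy{R2}$-healthiness and independence from $ok$ and $wait$ are constraints on the \emph{form} of a predicate and are therefore preserved under $\neg$. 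Re-applying $\healthy{R1}$ alone after negation thus restores full healthiness, which is precisely why the complement is defined as $\healthy{R1}(\neg P)$ rather than $\neg P$.

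I expect the main obstacle to be exactly this closure of $\theoryset{\healthy{RR}}$ under $\negr$ — specifically, confirming that $\healthy{R2}$-healthiness and flag-independence survive Boolean complement so that $\healthy{R1}$ is the only condition that must be re-imposed. This is subtle because complement does not in general commute with the projection-like components of $\healthy{RR}$; the argument rests on these components characterising which variables a predicate may mention rather than conjoining a fixed constraint, and the mechanised proof would discharge it using the established algebraic laws for $\healthy{R1}$ and $\healthy{R2}$ and their mutual commutativity.
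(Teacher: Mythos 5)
Your proposal is correct, but it takes a different route from the paper, which states this theorem without an explicit proof, citing~\cite{Foster17c} and the Isabelle/UTP mechanisation; there, the result is established the direct way, by proving closure lemmas for $\theoryset{\healthy{RR}}$ under each operator and then discharging the Boolean algebra axioms one by one via the relational calculus (as Isabelle's algebraic class instantiation requires). Your argument is more structural: identifying $\theoryset{\healthy{R1}}$ with the principal ideal of $\truer$ in the ambient Boolean algebra $([\src]\hrel, \lor, \false, \land, \true, \neg)$ is exactly right, since $\healthy{R1}$-healthiness of $P$ amounts to $P = P \land \truer$, and the ideal's relative complement $\neg P \land \truer$ coincides with $\negr P = \healthy{R1}(\neg P)$ by definition; this makes all lattice, distributivity, and complementation laws free, reducing the whole theorem to the subalgebra check. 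That check also goes through as you describe, and your flagged ``delicate case'' is the right one to flag, though it is in fact slightly easier than you suggest: $\healthy{R2}$ commutes with $\neg$ outright, because the trace substitution is a Boolean homomorphism and $\neg(\conditional{P}{b}{Q}) = (\conditional{\neg P}{b}{\neg Q})$, so $\healthy{R2}$-healthiness of $\neg P$ is immediate rather than merely a preserved ``constraint on form''; independence from $ok$, $ok'$, $wait$, $wait'$ is, as you say, a variable-restriction property preserved by complement even though the existential projection itself does not commute with $\neg$; and re-applying $\healthy{R1}$ preserves both, since $tr \le tr'$ is itself $\healthy{R2}$-healthy and flag-independent. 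What your approach buys is modularity and reuse of the general fact that a principal ideal of a Boolean algebra is a Boolean algebra; what the mechanised approach buys is that the closure lemmas it proves along the way (closure of $\theoryset{\healthy{RR}}$ under $\lor$, $\land$, $\negr$) are needed independently elsewhere in the theory, so little is saved in practice by the abstract shortcut.
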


\noindent Both $\theoryset{\healthy{RR}}$ and $\theoryset{\healthy{NSRD}}$ are closed under sequential composition, and
have units $\IIr$ and $\IIsrd$, respectively, which are defined below

\begin{definition}[Reactive Relation and Reactive Contract Identities] \label{def:rrel-id}
  \isalink{https://github.com/isabelle-utp/utp-main/blob/f22f260d20c95337e30ee7bb4d6d16e6fda27af0/theories/reactive/utp_rea_rel.thy\#L119}
  $$\IIr \defs (tr' = tr \land \state' = \state \land r' = r) \qquad
    \IIsrd \defs \conditional{(\conditional{(\exists st @ \II)}{wait}{\II})}{ok}{(tr \le tr')}$$
\end{definition}
  
\noindent $\IIr$ is more complex that the basic identity $\II$. It requires that $tr$, $\state$, and $r$ are unchanged,
but leaves the other variables $ok$ and $wait$ unconstrained. We note that $\IIsrd$ and $\Skip$ are different
operators, as the latter does not restrict $\refu$ in the pericondition~\cite{Foster17c}. Both UTP theories also form
complete lattices under $\refinedby$, with top elements $\false$ and
$\Miracle = \rc{\!\truer\!}{\!\false\!}{\!\false\!}$, respectively. $\Miracle$ is not a reactive program, but
  denotes a miraculous or infeasible specification. $\Chaos = \rc{\!\false\!}{\!\false\!}{\!\false\!}$, the least
determinisitic contract, is the bottom of the reactive contract lattice. Any action refines $\Chaos$, and it therefore
allows us to denote unspecified or unpredictable behaviour, with the possibility of both termination and
non-termination. We define the reactive conditional operator $\rconditional{P}{b}{Q}$, which specialises the
relational conditional operator ($\conditional{P}{b}{Q}$), such that $P$ and $Q$ are reactive relations or contracts,
and $b : [\Sigma]\upred$ is a condition on state variables only.

\begin{table}
 \arraycolsep=1.4pt
 \def\arraystretch{1.3}
 $$\begin{array}{rlc|@{\hspace{1ex}}l}
      \multicolumn{2}{c}{\textbf{Healthiness Condition}} && \multicolumn{1}{c}{\textbf{Description}} \\ \hline
      \healthy{R1}(P)   &\defs P \land tr \le tr' && \text{The trace monotonically increases} \\
      \healthy{R2}(P)   &\defs \conditional{P[\langle\rangle, tr'-tr/tr, tr']}{tr \le tr'}{P} && \text{The trace extension is independent of the history} \\
      \healthy{R3}_{\!h}(P) &\defs \conditional{\IIsrd}{wait}{P} && \text{When a predecessor is quiescent behave as $\IIsrd$} \\
      \healthy{RR}(P)   &\defs (\exists (ok, ok', wait, wait') @ \healthy{R1}(\healthy{R2}(P)))&& \text{Reactive Relations: no references to $ok$ and $wait$} \\
      \healthy{RC}(P)   &\defs \healthy{R1}(\healthy{RR}(P) \relsemi tr' \le tr) && \hspace{-1.3ex}\text{\begin{tabular}{l} Reactive Conditions: Reactive Relations where trace \\[-1ex] is prefix closed and there are no references to $\state'$ \end{tabular} \hspace{-1ex}}
      \\
      \healthy{SRD1}(P) &\defs (ok \rimplies P) && \text{Observations are only possible without divergence} \\
      \healthy{SRD3}(P) &\defs (P \relsemi \IIsrd) && \text{Reactive skip is a right unit for $\relsemi$} \\
      \healthy{NSRD}    &\defs \healthy{SRD3} \circ \healthy{SRD1} \circ \healthy{R3}_{\!h} \circ \healthy{R2} \circ \healthy{R1} && \text{Normal Stateful Reactive Designs}
   \end{array}$$

  \vspace{-1ex}
  \caption{Overview of Reactive Design Healthiness Conditions}

  \label{tab:reahcond}

\end{table}

Verification can be facilitated through refinement $\rc{\!P_1\!}{\!P_2\!}{\!P_3\!} \refinedby Q$, where the required
property is specified as an explicit contract triple, and the program $Q$ is an $\healthy{NSRD}$ relation. Contract
refinement allows the precondition to be weakened, and the peri- and postcondition both to be
strengthened~\cite{Foster17c}.
\begin{theorem}[Reactive Design Refinement] \label{thm:rdesrefine} \isalink{https://github.com/isabelle-utp/utp-main/blob/07cb0c256a90bc347289b5f5d202781b536fc640/theories/rea_designs/utp_rdes_triples.thy\#L828}
  \vspace{.5ex}

  \noindent $\rc{P_1}{P_2}{P_3} \refinedby \rc{Q_1}{Q_2}{Q_3}$ if, and only if, $Q_1 \refinedby P_1$,
  $P_2 \refinedby (Q_2 \land P_1)$, and $P_3 \refinedby (Q_3 \land P_1)$.
\end{theorem}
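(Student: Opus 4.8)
The plan is to reduce the statement to the standard refinement law for UTP designs, discharge the resulting obligation on the combined postcondition by splitting on $wait'$, and then justify that the reactive healthiness wrapper neither strengthens nor weakens the comparison. By definition, $\rc{P_1}{P_2}{P_3}$ is the reactive design $P_1 \vdash (\conditional{P_2}{wait'}{P_3})$ closed under $\healthy{H} \defs \healthy{R1} \circ \healthy{R2} \circ \healthy{R3}_{\!h}$. I would first recall the classical design refinement law, namely that $(A_1 \vdash B_1) \refinedby (A_2 \vdash B_2)$ holds iff $A_2 \refinedby A_1$ and $B_1 \refinedby (A_1 \land B_2)$, read as universally closed implications. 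Taking $A_1 = P_1$, $A_2 = Q_1$, $B_1 = \conditional{P_2}{wait'}{P_3}$, and $B_2 = \conditional{Q_2}{wait'}{Q_3}$ yields the precondition obligation $Q_1 \refinedby P_1$ together with a single design-level obligation $(\conditional{P_2}{wait'}{P_3}) \refinedby (P_1 \land \conditional{Q_2}{wait'}{Q_3})$.

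Next I would split this remaining obligation on $wait'$. Since $P_1$, $P_2$, $P_3$, $Q_2$, and $Q_3$ are reactive relations that do not mention $wait'$, conjunction with $P_1$ distributes through the conditional, so the right-hand side becomes $\conditional{(P_1 \land Q_2)}{wait'}{(P_1 \land Q_3)}$. A refinement between two conditionals on the same guard factors branch-by-branch: the universally closed implication holds iff it holds separately under $wait'$ and under $\neg wait'$. The $wait'$ branch reduces to $P_2 \refinedby (wait' \land P_1 \land Q_2)$, and as neither side refers to $wait'$ the guard is vacuous, collapsing it to $P_2 \refinedby (Q_2 \land P_1)$; the $\neg wait'$ branch dually yields $P_3 \refinedby (Q_3 \land P_1)$. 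These are precisely the peri- and postcondition obligations of the statement.

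The main obstacle is justifying the passage between contract refinement and design refinement, since both sides are wrapped in $\healthy{H}$, which the bare design law does not account for. The ``if'' direction is routine: by Theorem~\ref{thm:healthy-comp} each of $\healthy{R1}$, $\healthy{R2}$, and $\healthy{R3}_{\!h}$ is monotonic, so refinement of the underlying designs is preserved by $\healthy{H}$. The ``only if'' direction is the delicate one, as it requires that $\healthy{H}$ not identify designs that differ in their components. Here I would exploit that $P_1$ is $\healthy{RC}$-healthy and $P_2$, $P_3$ are $\healthy{RR}$-healthy, hence already fixed points of the relevant healthiness functions, so that the construction admits faithful pre-, peri-, and postcondition projections. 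Concretely, I would instantiate the universally closed implication defining contract refinement at valuations that deactivate $\healthy{R3}_{\!h}$, taking $wait$ false and $ok$ true while toggling $ok'$ and $wait'$, to read off each of the three obligations in turn; $\healthy{R1}$ and $\healthy{R2}$ impose only trace constraints that both sides already satisfy, and so factor out. Combining this reduction with the conditional split above establishes the equivalence in both directions.
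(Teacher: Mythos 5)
Your proposal is correct and follows essentially the same route as the paper's proof: the theorem is stated in the paper without an in-text argument, deferring to the Isabelle/UTP mechanisation from~\cite{Foster17c}, which likewise reduces contract refinement to the classical design refinement law and then reads off the three obligations via pre-, peri-, and postcondition projections (instantiating $ok$, $wait$, $ok'$, $wait'$ exactly as you describe), relying on the $\healthy{RC}$/$\healthy{RR}$-healthiness of the components so that the $\healthy{R1}$ and $\healthy{R2}$ coercions are fixed points and factor out. Your $wait'$ branch split and the observation that the antecedents $Q_2 \land P_1$ and $Q_3 \land P_1$ already entail $tr \le tr'$ are precisely the steps that make the extraction faithful in the ``only if'' direction.
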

\noindent Thus, if the contract of the reactive program $Q$ can be calculated to be $\rc{\!Q_1\!}{\!Q_2\!}{\!Q_3\!}$,
then refinement follows by three proof obligations: (1) $Q_1 \refinedby P_1$; (2) $P_2 \refinedby (Q_2 \land P_1)$; and
(3) $P_3 \refinedby (Q_3 \land P_1)$. In words, the precondition may be weakened, and both the peri- and postcondition
may be strengthened, assuming the precondition $P_1$ holds. As usual, refinement can remove choices, making a contract
more deterministic. A consequence is that a non-terminating contract, with postcondition $\false$, can refine a
terminating contract. Indeed we have that for any $P$, $P \refinedby \Miracle$. We can avoid refinement by miraculous
behaviour by adding feasibility healthiness conditions~\cite{Hoare&98,Cavalcanti&06}.

In addition to feasibility, refinement does not guarantee to preserve other properties, such as prefix closure
of the trace, which is often needed for languages such as CSP~\cite{Roscoe2005}. In this case, it is necessary to
check these properties of the refined process, or ensure that the process is only constructed of operators that
preserve prefix closure, as is the case for CSP. Either way, this check can be conducted separately to the refinement,
possibly using a type system, though this is not a concern for this paper.

Contracts can be composed using relational calculus. The following identities~\cite{Foster17c,Foster-RDES-UTP} show how
this entails composition of the underlying pre-, peri-, and postconditions for $\bigsqcap$ and $\relsemi$, and also
demonstrate closure of reactive contracts under these operators.

\begin{theorem}[Reactive Contract Composition] \label{thm:rc-comp} \isalink{https://github.com/isabelle-utp/utp-main/blob/07cb0c256a90bc347289b5f5d202781b536fc640/theories/rea_designs/utp_rdes_normal.thy\#L416}
\begin{align}
  \textstyle \bigsqcap_{i\in I} \, \rc{\!P_1(i)\!}{\!P_2(i)\!}{\!P_3(i)\!} &= \textstyle \rc{\bigwedge_{i \in I} P_1(i)}{\bigvee_{i \in I} P_2(i)}{\bigvee_{i \in I} P_3(i)} \label{thm:rc-choice} \\[.1ex]
 \rconditional{\rc{P_1}{P_2}{P_3}}{b}{\rc{Q_1}{Q_2}{Q_3}} &= \rc{\rconditional{P_1}{b}{Q_1}}{\rconditional{P_2}{b}{Q_2}}{\rconditional{P_3}{b}{Q_3}} \\[.1ex]
  \rc{\!P_1\!}{\!P_2\!}{\!\!P_3\!} \relsemi \rc{\!Q_1\!}{\!Q_2\!}{\!\!Q_3\!}&= \rc{\!P_1\!\land\!(P_3\!\wpR\!Q_1)\!}{P_2\!\lor\!(P_3\!\relsemi\!Q_2)\!}{\!P_3\!\relsemi\!Q_3} \label{thm:rc-seq} \\[.1ex]
  \rcs{P_2}{P_3} \relsemi \rcs{Q_2}{Q_3} &= \rcs{P_2\!\lor\!(P_3\!\relsemi\!Q_2)\!}{\!P_3\!\relsemi\!Q_3} \label{thm:rc-seqs}
\end{align}
\end{theorem}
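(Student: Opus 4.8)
The plan is to reduce every identity to the defining expansion of the contract constructor, $\rc{P_1}{P_2}{P_3} = \healthy{R1} \circ \healthy{R2} \circ \healthy{R3}_{\!h}(ok \land P_1 \implies ok' \land (\conditional{P_2}{wait'}{P_3}))$, and then exploit the known algebra of reactive designs together with the closure and distributivity facts already established. Since the inner predicate is a design with assumption $P_1$ and guarantee $\conditional{P_2}{wait'}{P_3}$, wrapped in $\healthy{R1} \circ \healthy{R2} \circ \healthy{R3}_{\!h}$, the uniform strategy is: unfold the contract, apply the corresponding design law for the operator in question, propagate the reactive healthiness conditions through that operator, and re-fold into contract form. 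Re-folding is justified by the $\healthy{RR}$-healthiness of the components and by the composability of the design healthiness conditions (Theorem~\ref{thm:healthy-comp}).

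For the infimum \eqref{thm:rc-choice}, I would first observe that $\bigsqcap$ is nondeterministic choice, i.e.\ indexed disjunction under our inverted order, and that each of $\healthy{R1}$, $\healthy{R2}$, $\healthy{R3}_{\!h}$, and the design turnstile distribute suitably over disjunction. The only subtlety is the precondition: because the assumption sits in the negated position of the design, the disjunction of the designs forces the \emph{conjunction} $\bigwedge_{i} P_1(i)$ --- divergence is avoided only when every disjunct's assumption holds --- whereas the guarantee components yield disjunctions directly, giving $\bigvee_i P_2(i)$ and $\bigvee_i P_3(i)$ once the conditional on the fixed flag $wait'$ is pushed through. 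The conditional law follows the same template: since $b$ mentions only $\Sigma$, the reactive conditional commutes with $\healthy{R1}$, $\healthy{R2}$, $\healthy{R3}_{\!h}$, and the turnstile, and so distributes into each of the three components.

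The sequential-composition law \eqref{thm:rc-seq} is where the real work lies, and I expect it to be the main obstacle. Here I would expand both contracts and appeal to the reactive-design composition law, which must track the $ok$/$wait$ bookkeeping: the first contract may diverge, remain quiescent ($wait'$), or terminate, and control passes to the second only on termination, since $\healthy{R3}_{\!h}$ makes $\IIsrd$ behave as the quiescent carry-over. The case analysis produces three obligations. The postcondition is simply $P_3 \relsemi Q_3$, both actions terminating in sequence. The pericondition $P_2 \lor (P_3 \relsemi Q_2)$ is the disjunction of ``the first is quiescent'' and ``the first terminated and the second is then quiescent''. The precondition is the delicate part: the composite must avoid divergence, so we need $P_1$, and, after the first terminates via $P_3$, we must guarantee $Q_1$ --- this is exactly the weakest reactive precondition $P_3 \wpR Q_1$. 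I would therefore establish the characterisation of $\wpR$ (that $P_3 \wpR Q_1$ is the weakest $\healthy{RC}$-condition guaranteeing $Q_1$ after $P_3$), check its $\healthy{RC}$-healthiness, and verify that it coincides with the precondition emerging from the design composition. The chief difficulty is discharging the $wait$/$wait'$ case split while keeping every component in healthy normal form, so that the result re-folds cleanly as a contract.

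Finally, \eqref{thm:rc-seqs} is a corollary of \eqref{thm:rc-seq}: setting $P_1 = Q_1 = \truer$, the weakest-precondition term $P_3 \wpR \truer$ collapses to $\truer$, since the universal reactive condition is vacuously guaranteed after any $P_3$. The composite precondition is then $\truer \land \truer = \truer$, so the abbreviated $\rcs{P_2}{P_3}$ notation applies and only the stated peri- and postconditions remain.
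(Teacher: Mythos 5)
Your proposal is correct and follows essentially the same route as the paper's (mechanised, cited) proof: unfold $\rc{P_1}{P_2}{P_3}$ into its $\healthy{R1} \circ \healthy{R2} \circ \healthy{R3}_{\!h}$-design form, apply the corresponding design law for each operator --- conjunction of assumptions under $\bigsqcap$, distribution of the state-only conditional through the healthiness conditions and turnstile, and the $ok$/$wait'$ case split for $\relsemi$ with the composite assumption emerging as $P_1 \land (P_3 \wpR Q_1)$ per Definition~\ref{def:wpR} --- then re-fold using the closure and healthiness-composition results. Your derivation of \eqref{thm:rc-seqs} from \eqref{thm:rc-seq} via $P_3 \wpR \truer = \truer$ is exactly the justification the paper itself gives for that case.
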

\noindent Nondeterministic choice requires all preconditions, and asserts that one of the peri- and postcondition pairs
hold. Conditional ($\rconditional{P}{b}{Q}$) distributes through a reactive contract. For sequential composition, the
precondition assumes that $P_1$ holds, and that $P_3$ does not violate $Q_1$. The latter is formulated using a reactive
weakest liberal precondition.

\begin{definition}$P \wpR Q ~\defs~  \neg_r\, (P \relsemi \neg_r\, Q)$ where $P \is \healthy{RR}$ and $Q \is \healthy{RC}$ \label{def:wpR} \isalink{https://github.com/isabelle-utp/utp-main/blob/90ec1d65d63e91a69fbfeeafe69bd7d67f753a47/theories/reactive/utp_rea_wp.thy\#L10}
\end{definition}

\noindent Intuitively, $P \wpR Q$ is the weakest reactive condition such that when $P$ terminates, it satisfies $Q$. It obeys
standard predicate transformer laws~\cite{Dijkstra75,Foster17c} such as:
$$\textstyle(\bigvee_{i \in I}\,P(i))\!\wpR\!R = \bigwedge_{i \in I} \, (P(i)\!\wpR\!R) \qquad (P\!\relsemi\!Q)\!\wpR\!R = P\!\wpR\!(Q\!\wpR\!R) \qquad P \wpR \truer = \truer$$
In the pericondition of Theorem~\ref{thm:rc-comp}-\eqref{thm:rc-seq}, it is specified that an intermediate observation
is either of the first contract ($P_2$), or else it terminated ($P_3$) and then following we have an intermediate
observation of the second contract ($Q_2$). In the postcondition, the observation specified is for when the contracts
have both terminated ($P_3 \relsemi Q_3$). The final law, Theorem~\ref{thm:rc-comp}-\eqref{thm:rc-seqs}, is a simpler
case of the previous law. If both preconditions are true, then since $P_2 \wpR \truer$ reduces to $\truer$, the overall
precondition is also $\truer$.

With these and related theorems~\cite{Foster17c}, we can calculate contracts of reactive programs. Verification, then,
can be performed by proving refinement between two reactive contracts, a strategy we have mechanised in the Isabelle/UTP
tactics \textsf{rdes-refine} and \textsf{rdes-eq}~\cite{Foster17c}. The question remains, though, of how to reason about
the underlying compositions of reactive relations for the \text{pre-}, peri-, and postconditions. As an example, we
consider the action $(a\!\then\!\Skip) \relsemi x\!:=\!v$. To reason about its postcondition, we must simplify
$(\trace = \langle a \rangle \land \state' = \state) \relsemi (\state' = \state(x \mapsto v) \land \trace =
\langle\rangle)$.
To simplify its precondition, we also need to consider reactive weakest preconditions. Without such simplifications,
reactive relations can grow very quickly and hamper proof. Of particular importance is the handling of iterative and
parallel reactive relations. We address these issues in this paper.

\section{Linking UTP and Kleene Algebra}
\label{sec:utp-kleene}
In this section, we characterise properties of a UTP theory sufficient to identify a Kleene Algebra~\cite{Kozen90}, and
use this to obtain theorems for iterative contracts. The results in this section apply, not only to stateful-failure
reactive designs, but the larger class of reactive designs (\healthy{NSRD}) as well. Consequently, the theorems can be
applied in the context of other trace models~\cite{Foster17b}.

Kleene Algebras (KA) characterise sequential and iterative behaviour in nondeterministic programs using a signature
$(K, +, 0, \cdot, 1, {}\star)$, where $+$ is a choice operator with unit $0$, and $\cdot$ a composition operator, with
unit $1$. Kleene closure $P{\star}$ denotes finite iteration of $P$ using $\cdot$ zero or more times.

We consider the class of weak Kleene algebras~\cite{Guttman2010}, which build on weak dioids, as these are the most
useful class of Kleene algebra to characterise reactive programs.

\begin{definition}
  A weak dioid is an algebraic structure $(K, +, 0, \cdot, 1)$ such that $(K, +, 0)$ is an idempotent and commutative
  monoid; $(K, \cdot, 1)$ is a monoid; the composition operator $\cdot$ left- and right-distributes over $+$; and $0$ is
  a left annihilator for~$\cdot$.
\end{definition}
\noindent The $0$ operator represents miraculous behaviour. It is a left annihilator of composition, but not a right annihilator
as this often does not hold for programs. $K$ is partially ordered by $x \le y \defs (x + y = y)$, which is defined in
terms of $+$, and has least element $0$. A weak KA extends this with the behaviour of the star.
\begin{definition} \label{def:wka}
  A weak Kleene algebra is a structure $(K\!, +, 0, \cdot, 1, \star)$ such that

  \vspace{-3.4ex}
  \begin{center}
  \begin{tabular}{ll}
  \begin{minipage}{0.5\textwidth}
  \begin{enumerate}
    \item $(K, +, 0, \cdot, 1)$ is a weak dioid
    \item $1+x \cdot x{\star} \le x{\star}$
  \end{enumerate}
  \end{minipage} &
  \begin{minipage}{0.5\textwidth}
  \begin{enumerate}
    \setcounter{enumi}{2}
    \item $z + x \cdot y \le y \implies x{\star} \cdot z \le y$ \label{thm:starinductr}
    \item $z + y \cdot x \le y \implies z \cdot x{\star} \le y$
  \end{enumerate}
  \end{minipage}  
  \end{tabular}
  \end{center}
\end{definition}
Various enrichments and specialisations of these axioms exist; for a complete survey see~\cite{Kozen90}. For our
purposes, these axioms alone suffice. From this base, a number of useful identities can be derived:

\begin{theorem} \label{thm:kalaws} $1\star = 0\star = 1$ \quad $x{{\star}{\star}} = x{\star}$ \quad $x{\star} = 1 + x\cdot x\star$ \quad $(x+y){\star} = (x{\star} \cdot y{\star}){\star}$ \quad $x \cdot x{\star} = x{\star} \cdot x$ \end{theorem}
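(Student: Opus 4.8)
The plan is to derive all five identities directly from the weak Kleene algebra axioms of Definition~\ref{def:wka}, working throughout with the natural order $x \le y \defs (x + y = y)$ and exploiting the two unfold/induction laws. First I would record a handful of elementary consequences used repeatedly: from the left-unfold axiom $1 + x \cdot x{\star} \le x{\star}$ I extract $1 \le x{\star}$ and $x \cdot x{\star} \le x{\star}$ (each summand lies below the join), and from two-sided distributivity of $\cdot$ over $+$ I obtain that $\cdot$ is monotone in both arguments. These give $x \le x{\star}$ (since $x = x \cdot 1 \le x \cdot x{\star} \le x{\star}$) and, via left induction with $z = y = x{\star}$, the transitivity law $x{\star} \cdot x{\star} \le x{\star}$; a further left-induction argument ($z = 1$, $y = b{\star}$) shows that $\star$ is monotone, i.e.\ $a \le b \implies a{\star} \le b{\star}$.

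With these lemmas in hand the unit laws (1) are routine: $1 \le 1{\star}$ and $1 \le 0{\star}$ hold generally, and each converse follows from one application of left induction (axiom~\ref{thm:starinductr}) --- taking $x := 1$, $z = y = 1$ for $1{\star} \le 1$, and $x := 0$, $z = y = 1$ for $0{\star} \le 1$, where the latter uses that $0$ is a \emph{left} annihilator so that $0 \cdot 1 = 0$. The unfold equation (3) is the left-unfold axiom in one direction; for $x{\star} \le 1 + x \cdot x{\star}$ I apply left induction with $z = 1$ and $y = 1 + x \cdot x{\star}$, discharging the premise by monotonicity together with $x \cdot x{\star} \le x{\star}$. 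Idempotence (2) is then immediate: $x{\star} \le x{\star\star}$ is the instance $y \le y{\star}$ with $y = x{\star}$, while $x{\star\star} \le x{\star}$ follows from left induction with $z = 1$, $y = x{\star}$, whose premise $1 + x{\star} \cdot x{\star} \le x{\star}$ is exactly the transitivity lemma.

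The delicate point is the commutativity law (5), $x \cdot x{\star} = x{\star} \cdot x$, because the axioms are asymmetric: only a \emph{left} unfold is postulated. One inclusion, $x{\star} \cdot x \le x \cdot x{\star}$, comes from left induction ($z = x$, $y = x \cdot x{\star}$). For the reverse inclusion the key auxiliary step is to first establish the \emph{right}-unfold law $x{\star} = 1 + x{\star} \cdot x$: I obtain $1 + x{\star} \cdot x \le x{\star}$ from $x{\star} \cdot x \le x{\star}$ (which itself chains through the first inclusion) and the converse from right induction with $z = 1$, $y = 1 + x{\star} \cdot x$, the crucial subgoal $(x{\star} \cdot x) \cdot x \le x{\star} \cdot x$ being discharged by right-monotonicity applied to $x{\star} \cdot x \le x{\star}$. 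The same right-monotonicity fact then lets right induction ($z = x$, $y = x{\star} \cdot x$) deliver $x \cdot x{\star} \le x{\star} \cdot x$. The hard part will be precisely this interplay --- spotting that right-monotonicity of $\cdot$ stands in for the missing right-unfold axiom --- which I expect to be the main obstacle.

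Finally, the denesting law (4) is assembled from the pieces above. For $(x+y){\star} \le (x{\star} \cdot y{\star}){\star}$ I note $x + y \le x{\star} \cdot y{\star}$ (each of $x, y$ lies below $x{\star} \cdot y{\star}$ by monotonicity together with $1 \le x{\star}$ and $1 \le y{\star}$) and apply monotonicity of $\star$. For the converse I use $x \le (x+y){\star}$ and $y \le (x+y){\star}$, lift through $\star$ and idempotence (2) to get $x{\star} \le (x+y){\star}$ and $y{\star} \le (x+y){\star}$, combine these with the transitivity lemma to reach $x{\star} \cdot y{\star} \le (x+y){\star}$, and close with monotonicity of $\star$ and (2) once more. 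Throughout, no appeal to right annihilation of $0$ is needed, so every step stays within the weak setting; in the mechanisation these derivations reduce to instantiating the two induction axioms and letting the order-theoretic lemmas be discharged automatically.
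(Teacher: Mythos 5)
Your proposal is correct, and it follows essentially the paper's route: the paper presents these identities as standard consequences of the weak-KA axioms of Definition~\ref{def:wka} (discharged in the mechanisation by lifting the corresponding lemmas from the Isabelle/HOL Kleene-algebra hierarchy), and those library proofs are precisely the order-theoretic lemmas and induction-axiom instantiations you spell out, including the correct observation that $0{\star} = 1$ needs only \emph{left} annihilation. One minor simplification: the detour through the right-unfold law $x{\star} = 1 + x{\star} \cdot x$ is not actually needed for $x \cdot x{\star} = x{\star} \cdot x$, since your final right-induction instance ($z = x$, $y = x{\star} \cdot x$) already goes through once $x{\star} \cdot x \le x{\star}$ is available from the first inclusion.
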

\noindent Kleene Algebra with Tests~\cite{Kozen97} (KAT) extends the algebra with conditions, and has been successfully
applied in program verification~\cite{Armstrong2015,Gomes2016}. A test is a kind of assumption that entails miraculous
behaviour if a condition is violated, and is otherwise ineffectual. The set of tests $T$ are those elements $a, b \in K$
below the identity: $a \le 1$, over which a Boolean algebra is defined.  Tests enjoy a number of additional properties.
\begin{theorem} $a \cdot 0 = 0 \quad a \cdot b = b \cdot a \quad a\star = 1$
\end{theorem}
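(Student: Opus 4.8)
The plan is to handle the three equations separately, since each leans on a different part of the structure and only one of them is substantial. I would dispose of $a \cdot 0 = 0$ entirely within the weak dioid. First note that $\cdot$ is monotone: if $x \le y$, i.e.\ $x + y = y$, then $z \cdot x + z \cdot y = z \cdot (x + y) = z \cdot y$ by left-distributivity, so $z \cdot x \le z \cdot y$ (and dually on the right). Applying this to $a \le 1$ gives $a \cdot 0 \le 1 \cdot 0 = 0$, while leastness of $0$ gives $0 \le a \cdot 0$, so antisymmetry of $\le$ yields $a \cdot 0 = 0$. The same fact also falls out of right-distribution directly, via $a \cdot 0 = a \cdot 0 + 1 \cdot 0 = (a + 1) \cdot 0 = 1 \cdot 0 = 0$, using $1 \cdot 0 = 0$ and that $a \le 1$ unfolds to $a + 1 = 1$.

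For $a\star = 1$ I would prove two inequalities and appeal to antisymmetry. The lower bound $1 \le a\star$ is immediate from the unfold identity $a\star = 1 + a \cdot a\star$ of Theorem~\ref{thm:kalaws}. For the upper bound $a\star \le 1$, I would instantiate the star-induction axiom~\ref{thm:starinductr} of Definition~\ref{def:wka} with $x := a$, $z := 1$, $y := 1$: its premise $1 + a \cdot 1 \le 1$ reduces to $1 + a \le 1$, which holds precisely because $a \le 1$ means $a + 1 = 1$, and its conclusion $a\star \cdot 1 \le 1$ is just $a\star \le 1$. This step uses only $a \le 1$ and the Kleene axioms, so it is wholly independent of the Boolean structure on tests.

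The commutativity law $a \cdot b = b \cdot a$ is the step I expect to be the main obstacle, because it does \emph{not} follow from the dioid and star axioms alone: subidentities in a general weak Kleene algebra need not commute, and this is exactly why the test set is equipped with an explicit Boolean algebra rather than taken to be all elements below $1$. The plan is to show that $a \cdot b$ is the greatest lower bound of $a$ and $b$ in $(T, \le)$ and then invoke uniqueness of meets, so that $a \cdot b = a \sqcap b = b \cdot a$. The lower-bound half is routine: $b \le 1$ and monotonicity give $a \cdot b \le a \cdot 1 = a$, while $a \le 1$ gives $a \cdot b \le 1 \cdot b = b$. The genuine difficulty is the greatest-lower-bound half, that every test beneath both $a$ and $b$ lies beneath $a \cdot b$; this is where the complement operator of the test Boolean algebra is essential. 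I would therefore either reduce the claim to De~Morgan-style manipulation of complements, or, more pragmatically, inherit commutativity directly from the already-established Boolean algebra of tests by identifying $\cdot$ with its meet, so that $a \cdot b = b \cdot a$ is simply commutativity of meet.
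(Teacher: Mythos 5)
Your three derivations are essentially correct, but the first thing to note is that the paper offers no proof of this theorem at all: it is stated as a standard fact of Kleene algebra with tests, imported from the literature (Kozen), and in the mechanisation it is obtained by lifting from the existing Isabelle/HOL KA hierarchy rather than re-derived. Against that baseline, your explicit arguments for $a \cdot 0 = 0$ and $a\star = 1$ are the standard KAT derivations and are valid in the paper's weak setting: the first uses only monotonicity of $\cdot$ (which you correctly obtain from distributivity), the unit law $1 \cdot 0 = 0$, and leastness of $0$ --- and this is a genuinely non-vacuous point here, since in a weak dioid $0$ is only a \emph{left} annihilator, so the equation really does depend on $a \le 1$; the second correctly combines $1 \le a\star$ (from the unfold axiom) with star induction instantiated at $x := a$, $z := y := 1$, whose premise $1 + a \le 1$ is exactly $a \le 1$. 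For commutativity your diagnosis is also right: it is not derivable from the KA axioms for arbitrary subidentities, and in Kozen-style KAT --- which is what the paper means by ``over which a Boolean algebra is defined'' --- the Boolean meet and join on tests are \emph{by definition} the restrictions of $\cdot$ and $+$ to $T$, so $a \cdot b = b \cdot a$ is an instance of commutativity of meet; your ``pragmatic'' fallback is therefore the intended reading, and the equation is essentially an axiom-unfold rather than a theorem with independent content. Your primary plan via greatest lower bounds, by contrast, is circular as stated: the greatest-lower-bound half needs $c \cdot c = c$ for an arbitrary test $c$ below both $a$ and $b$, and idempotence of tests under $\cdot$ is itself only available once $\cdot$ has been identified with the Boolean meet --- precisely the identification you are trying to establish. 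So keep the fallback and drop the glb detour; with that adjustment your write-up is a correct and more self-contained account than the paper's citation-only treatment.
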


\noindent UTP relations form a KA $(Rel, \intchoice, \false, \relsemi, \II, \star)$, where
$P\star \defs (\nu X @ \II \intchoice P \relsemi X)$. This definition is equivalent to
$P\star = (\Intchoice_{i \in \nat} ~ P^i)$~\cite{Foster19a-IsabelleUTP} where $P^n$ iterates sequential composition $n$
times. The proof proceeds by application of antisymmetry, the star induction law of Definition~\ref{def:wka}, and the
complete lattice theorems.

Typically, UTP theories, like $\theoryset{\healthy{NSRD}}$, share the operators for choice ($\intchoice$) and
composition ($\relsemi$), only redefining them when absolutely necessary. Formally, given a UTP theory defined by a
healthiness condition $\healthy{H}$, the set of healthy relations $\theoryset{\healthy{H}}$ is closed under $\intchoice$
and $\relsemi$. This has the major advantage that a large body of laws is directly applicable from the relational
calculus. The ubiquity of $\intchoice$, in particular, can be characterised through the subset of continuous UTP
theories, where $\healthy{H}$ distributes through arbitrary non-empty infima. We formally define this class of
healthiness condition below.
\begin{definition}[Continuous Healthiness Condition] \label{def:cont-hcond} \isalink{https://github.com/isabelle-utp/utp-main/blob/f22f260d20c95337e30ee7bb4d6d16e6fda27af0/utp/utp_healthy.thy\#L112}
$$\healthy{H}\left(\Intchoice_{i\in I}\,P(i)\right) = \Intchoice_{i\in I}\,\healthy{H}(P(i)) \text{ provided } I \neq \emptyset$$
\end{definition}
\noindent An infinite nondeterministic choice is necessary to support Kleene star iteration. Monotonicity of
$\healthy{H}$ follows from continuity, and so such theories induce a complete lattice. Moreover, if $\healthy{H}$ is
defined by composition $\healthy{H}_1 \circ \healthy{H}_2 \cdots \circ \healthy{H}_n$, as in
Theorem~\ref{thm:healthy-comp}, then we can show it is continuous by showing each $\healthy{H}_i$ is
continuous. Continuous UTP theories include designs~\cite{Hoare&98,Guttman2010}, CSP, and \Circus~\cite{Oliveira&09}. A
consequence of continuity is that the relational weakest fixed-point operator $\mu X @ F(X)$ constructs healthy
relations when $F : Rel \to \theoryset{\healthy{H}}$.

Though these theories share infima and weakest fixed points, they do not, in general, share $\top$ and $\bot$ elements,
which is why the infima are non-empty Definition~\ref{def:cont-hcond}. Rather, we have a top element
$\thtop{H} \defs \healthy{H}(\false)$ and a bottom element $\thbot{H} \defs \healthy{H}(\true)$~\cite{Foster17c}. The
theories also do not share the relational identity $\II$, but typically define a bespoke identity $\IIT{H}$, which means
that $\theoryset{\healthy{H}}$ is not closed under the relational Kleene star. However, $\theoryset{\healthy{H}}$ is
closed under Kleene plus, $P^{+} \defs P \relsemi P\star$, since it is equivalent to
$(\Intchoice_{i \in \nat} ~ P^{i+1})$, which iterates $P$ one or more times. Thus, we can obtain a theory Kleene star
with $P \bm{\star} \defs \IIT{H} \sqcap P^{+}$, under which $\healthy{H}$ is indeed closed. We, therefore, define the
following criteria for a UTP theory.
\begin{definition}
  A Kleene UTP theory $(\healthy{H}, \IIT{H})$ satisfies the following conditions: (1) $\healthy{H}$ is idempotent and
  continuous; (2) $\healthy{H}$ is closed under sequential composition; (3) identity $\IIT{H}$ is $\healthy{H}$-healthy;
  (4) it is a left- and right-unit, $\IIT{H} \relsemi P = P \relsemi \IIT{H} = P$, when $P$ is $\healthy{H}$-healthy;
  and (5) $\thtop{H} \relsemi P = \thtop{H}$, when $P$ is $\healthy{H}$-healthy. \isalink{https://github.com/isabelle-utp/utp-main/blob/f22f260d20c95337e30ee7bb4d6d16e6fda27af0/theories/kleene/utp_kleene.thy\#L126}
\end{definition}
\noindent From these properties, we can prove the following theorem.
\begin{theorem} \label{thm:kautp} If $(\healthy{H}, \IIT{H})$ is a Kleene UTP theory, then
  $(\theoryset{\healthy{H}}, \intchoice, \thtop{H}, \relsemi, \IIT{H}, \bm{\star})$ forms a weak Kleene algebra.
\end{theorem}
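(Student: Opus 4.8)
The plan is to discharge the two halves of Definition~\ref{def:wka} separately: first, that $(\theoryset{\healthy{H}}, \intchoice, \thtop{H}, \relsemi, \IIT{H})$ is a weak dioid, which follows almost entirely from the Boolean quantale structure of $([\src]\hrel, \sqsupseteq, \false, \relsemi, \II)$ together with the five defining conditions of a Kleene UTP theory; and second, that the three star axioms hold, which I would reduce to the corresponding laws of the ambient relational Kleene algebra $(Rel, \intchoice, \false, \relsemi, \II, \star)$.

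For the weak dioid, I would first note that, since $\healthy{H}$ is continuous (condition~(1)) and a binary choice is a non-empty infimum, $\theoryset{\healthy{H}}$ is closed under $\intchoice$ and the restricted $\intchoice$ coincides with relational disjunction; idempotence, commutativity and associativity of $(\theoryset{\healthy{H}}, \intchoice, \thtop{H})$ are then inherited from the quantale. The unit law $P \intchoice \thtop{H} = P$ holds because $\thtop{H} = \healthy{H}(\false)$ is the $\refinedby$-greatest healthy element (by monotonicity of $\healthy{H}$ and $P \refinedby \false$), hence the $\intchoice$-unit. That $(\theoryset{\healthy{H}}, \relsemi, \IIT{H})$ is a monoid is immediate from conditions~(2)--(4): closure under $\relsemi$, healthiness of $\IIT{H}$, and its being a two-sided unit on healthy elements, with associativity from the quantale. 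Left and right distributivity of $\relsemi$ over $\intchoice$ specialises the quantale's distribution over $\bigvee$. Finally, the left-annihilation law $\thtop{H} \relsemi P = \thtop{H}$ is exactly condition~(5); no right-annihilation is required, matching the weak dioid axioms.

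The star axioms are where the real work lies, and I would route them through a single lemma: for any healthy $Q$, both $Q \relsemi P\bm{\star} = Q \relsemi P\star$ and $P\bm{\star} \relsemi Q = P\star \relsemi Q$. This holds because $P\bm{\star} = \IIT{H} \intchoice P^{+}$ and $P\star = \II \intchoice P^{+}$ differ only in their zeroth term, and both $\IIT{H}$ and $\II$ act as units when composed with a healthy element (condition~(4), and the fact that $\II$ is always a relational unit), while the remaining terms $\Intchoice_{i\in\nat} P^{i+1}$ coincide and distribute through $\relsemi$ by the quantale. The unfold axiom then follows by computing $P \relsemi P\bm{\star} = P \relsemi P\star = P^{+}$, so that $\IIT{H} \intchoice P \relsemi P\bm{\star} = \IIT{H} \intchoice P^{+} = P\bm{\star}$, giving the required inequality with equality to spare. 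For the two induction axioms I would observe that the KA order $\le$, defined by $x \intchoice y = y$, agrees on $\theoryset{\healthy{H}}$ with the relational order; from a hypothesis $z \intchoice P \relsemi Y \le Y$ the relational left-induction law of $(Rel, \intchoice, \false, \relsemi, \II, \star)$ yields $P\star \relsemi z \le Y$, and the lemma rewrites $P\star \relsemi z = P\bm{\star} \relsemi z$ to conclude $P\bm{\star} \relsemi z \le Y$; the right-induction axiom is symmetric, using $z \relsemi P\bm{\star} = z \relsemi P\star$.

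I expect the main obstacle to be precisely this bridge between the theory star $\bm{\star}$ and the relational star $\star$: one must check that $\theoryset{\healthy{H}}$ is genuinely closed under every operation appearing (in particular that $P^{+}$, and hence $P\bm{\star}$, is healthy, applying continuity to the non-empty family $\{P^{i+1}\}$), that $\relsemi$ distributes over the infinite choice $\Intchoice_{i\in\nat}$, and above all that replacing $\II$ by $\IIT{H}$ is harmless exactly because every factor composed with the identity is healthy. Once this lemma is secured, the induction laws transfer wholesale from the relational Kleene algebra, since they are pure $(\intchoice, \relsemi, \star)$-inequalities insensitive to the difference between $\false$ and $\thtop{H}$.
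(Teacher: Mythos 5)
Your proposal is correct, and it reaches the same destination as the paper by essentially the same strategy of transferring the axioms from the ambient relational Kleene algebra $(Rel, \intchoice, \false, \relsemi, \II, \star)$ to the healthy subset --- but the paper's own proof is only a one-sentence pointer to the mechanisation (``lifting of laws from the Isabelle/HOL KA hierarchy'', with details deferred to the technical report), so your write-up supplies the mathematical content that the paper leaves implicit. The crux is exactly your bridging lemma: since $P\bm{\star} = \IIT{H} \intchoice P^{+}$ and $P\star = \II \intchoice P^{+}$ differ only in the unit summand, condition~(4) yields $Q \relsemi P\bm{\star} = Q \relsemi P\star$ and $P\bm{\star} \relsemi Q = P\star \relsemi Q$ for healthy $Q$, which lets you compute the unfold law outright (indeed with equality) and import both induction laws wholesale, after observing that the KA order defined via $\intchoice$ agrees on $\theoryset{\healthy{H}}$ with the relational one. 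Your dioid half is also sound: closure under $\intchoice$ and healthiness of $P^{+} = \Intchoice_{i \in \nat} P^{i+1}$ follow from continuity applied to non-empty families, $\thtop{H}$ is the $\intchoice$-unit by monotonicity (which, as the paper notes, follows from continuity), and condition~(5) gives left annihilation with no right annihilation required by the weak axioms. The trade-off between the two presentations: the paper's locale-instantiation route buys machine-checked soundness and immediate access to the whole mechanised KA law library, whereas your explicit argument exposes \emph{why} the axioms survive the replacement of $\II$ by $\IIT{H}$ --- only the unit conditions and quantale distributivity are ever used, nothing else about $\healthy{H}$. One small point worth recording if you polish this: note explicitly that $\thtop{H} \in \theoryset{\healthy{H}}$, which holds by idempotence of $\healthy{H}$, so the carrier genuinely contains the proposed $0$.
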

\begin{proof}
  We prove this in Isabelle/UTP by lifting of laws from the Isabelle/HOL KA
  hierarchy~\cite{Armstrong2015,Gomes2016}. For details see~\cite{Foster-KA-UTP}.
\end{proof}
\noindent The identities of Theorem~\ref{thm:kalaws} hold in a Kleene UTP theory, which allow us to reason about
iterative programs. In particular, we can show that
$(\theoryset{\healthy{NSRD}}, \intchoice, \Miracle, \relsemi, \IIsrd, \bm{\star})$ and
$(\theoryset{\healthy{RR}}, \intchoice, \false, \relsemi, \IIr, \bm{\star})$ both form weak KAs. Moreover, we can now
also show how to calculate iterative contracts~\cite{Foster-RDES-UTP}.

\begin{theorem}[Reactive Contract Iteration] \label{thm:rc-iter} \isalink{https://github.com/isabelle-utp/utp-main/blob/07cb0c256a90bc347289b5f5d202781b536fc640/theories/rea_designs/utp_rdes_normal.thy\#L768}
\begin{align*}
  \rc{P_1}{P_2}{P_3}\bm{\star} &= \rc{P_3\bm{\star} \wpR P_1}{P_3\bm{\star} \relsemi P_2}{P_3\bm{\star}} \\
  \rc{P_1}{P_2}{P_3}^+ &= \rc{P_3\bm{\star} \wpR P_1}{P_3\bm{\star} \relsemi P_2}{P_3^+}
\end{align*}
\end{theorem}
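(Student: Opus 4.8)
The plan is to reduce both identities to the reactive contract composition laws of Theorem~\ref{thm:rc-comp}, exploiting the Kleene-algebra representation of the plus and star from \S\ref{sec:utp-kleene}. Since $(\theoryset{\healthy{NSRD}}, \intchoice, \Miracle, \relsemi, \IIsrd, \bm{\star})$ is a weak Kleene algebra by Theorem~\ref{thm:kautp}, I would start from the characterisation $\rc{P_1}{P_2}{P_3}^+ = \bigsqcap_{n \ge 1} \rc{P_1}{P_2}{P_3}^n$ and reduce the whole problem to computing the finite power $\rc{P_1}{P_2}{P_3}^n$ in closed form. Concretely, I would prove by induction on $n \ge 1$ that $\rc{P_1}{P_2}{P_3}^n = \rc{\bigwedge_{i=0}^{n-1}(P_3^i \wpR P_1)}{\bigvee_{i=0}^{n-1}(P_3^i \relsemi P_2)}{P_3^n}$. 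The base case $n = 1$ is immediate, using the auxiliary facts $\IIr \wpR P_1 = P_1$ and $\IIr \relsemi P_2 = P_2$, which hold because $\IIr$ is the $\theoryset{\healthy{RR}}$ identity and $\negr$ is an involutive complement.

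For the inductive step I would unfold $\rc{P_1}{P_2}{P_3}^{n+1} = \rc{P_1}{P_2}{P_3} \relsemi \rc{P_1}{P_2}{P_3}^n$ and apply the sequential composition law, Theorem~\ref{thm:rc-comp}-\eqref{thm:rc-seq}. The postcondition recurrence $P_3 \relsemi P_3^n = P_3^{n+1}$ is trivial, and the pericondition $P_2 \lor (P_3 \relsemi \bigvee_{i}(P_3^i \relsemi P_2))$ collapses to $\bigvee_{i=0}^{n}(P_3^i \relsemi P_2)$ by left distribution of $\relsemi$ over $\bigvee$ and reindexing. For the precondition I would use that $\wpR$ distributes over conjunction in its second argument -- a consequence of $P \wpR Q \defs \negr(P \relsemi \negr Q)$ together with distribution of $\relsemi$ over $\lor$ -- alongside the law $(P \relsemi Q) \wpR R = P \wpR (Q \wpR R)$, to rewrite $P_3 \wpR \bigwedge_i(P_3^i \wpR P_1) = \bigwedge_i(P_3^{i+1} \wpR P_1)$ and reindex, giving $\bigwedge_{i=0}^{n}(P_3^i \wpR P_1)$.

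Next I would apply the indexed choice law, Theorem~\ref{thm:rc-comp}-\eqref{thm:rc-choice}, to $\bigsqcap_{n \ge 1}\rc{\cdots}{\cdots}{\cdots}$, turning the outer choice into a conjunction of preconditions and disjunctions of peri- and postconditions. Reindexing the double-indexed families collapses them to single families over $i \ge 0$: the precondition becomes $\bigwedge_{i \ge 0}(P_3^i \wpR P_1)$, the pericondition $\bigvee_{i \ge 0}(P_3^i \relsemi P_2)$, and the postcondition $\bigvee_{n \ge 1} P_3^n = P_3^+$. Using the representation $P_3\bm{\star} = \bigsqcap_{i \ge 0} P_3^i$ in the $\theoryset{\healthy{RR}}$ Kleene algebra, together with the law $(\bigvee_i P(i)) \wpR R = \bigwedge_i(P(i) \wpR R)$ and left distribution of $\relsemi$ over $\bigvee$, these simplify to $P_3\bm{\star} \wpR P_1$ and $P_3\bm{\star} \relsemi P_2$, establishing the plus identity. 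The star identity then follows by writing $\rc{P_1}{P_2}{P_3}\bm{\star} = \IIsrd \intchoice \rc{P_1}{P_2}{P_3}^+$, expressing $\IIsrd$ in its contract form $\rc{\truer}{\false}{\IIr}$, applying the choice law once more, and using $\IIr \lor P_3^+ = P_3\bm{\star}$ to fold the postcondition.

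The main obstacle I anticipate is not any single algebraic step but the bookkeeping of healthiness and well-definedness throughout: each $P_3^i$ must be shown $\healthy{RR}$ (by closure of $\theoryset{\healthy{RR}}$ under $\relsemi$), each $P_3^i \wpR P_1$ must be a genuine reactive condition in $\theoryset{\healthy{RC}}$ so that the contract constructor and the $\wpR$ operator are applicable, and the infinite infima and suprema must be shown to remain within the respective theories so that the choice law extends to the infinite index set. Care is also needed to justify the reindexing of the double-indexed families and their collapse to $P_3\bm{\star}$, which rests on the explicit characterisation of the $\theoryset{\healthy{RR}}$ Kleene star as $\bigsqcap_{i \ge 0} P_3^i$; establishing this characterisation and the contract form of $\IIsrd$ are the auxiliary lemmas on which the calculation depends.
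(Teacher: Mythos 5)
Your proposal is correct and follows essentially the route the paper itself sets up: it exploits continuity to represent the outer plus as $\bigsqcap_{n \ge 1}$ of contract powers, computes those powers by induction via Theorem~\ref{thm:rc-comp} together with the stated $\wpR$ laws, collapses the double-indexed families using the representation $P_3\bm{\star} = \bigsqcap_{i \ge 0} P_3^i$, and recovers the star case from $\IIsrd = \rc{\truer}{\false}{\IIr}$ -- all of which matches the mechanised proof the paper defers to~\cite{Foster-RDES-UTP}. The auxiliary obligations you flag (closure of $\theoryset{\healthy{RR}}$ and $\theoryset{\healthy{RC}}$ under the relevant operators, the non-empty infinite index sets in Theorem~\ref{thm:rc-comp}-\eqref{thm:rc-choice}, and the contract form of $\IIsrd$) are exactly the right ones, and all hold in the paper's framework.
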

\noindent We note that the outer and inner star are different operators. The outer star is formed from the identity
$\IIsrd$, and the inner star from $\IIr$. The precondition states that $P_3$ must not violate $P_1$ after any number of
iterations. The pericondition has $P_3$ iterated followed by $P_2$ holding, since the final observation is
intermediate. The postcondition simply iterates $P_3$. We also provide a similar law for the Kleene plus operator. It
distributes in the same way, except that both the precondition and the pericondition use the star because they must hold
before the first iteration; only the postcondition uses the plus.

In this section we have established the basis for calculating and reasoning about iterative reactive contracts. In the
next section we specialise our UTP theory to stateful-failure reactive designs, and develop the underlying equational
theory. We return to the subject of iteration in Section~\ref{sec:iter}.

\section{Reactive Relations of Stateful Failures-Divergences}
\label{sec:circus-rc}
In this section, we specialise our contract theory to incorporate failure traces, which are used in CSP, \Circus, and
related languages~\cite{Zhan2008}. We define atomic operators to describe the underlying reactive relations, and the
associated equational theory to expand and simplify compositions arising from Theorems~\ref{thm:rc-comp} and
\ref{thm:rc-iter}, and thus support automated reasoning. We consider external choice separately
(\S\ref{sec:ext-choice}).

The failures-divergences model~\cite{Roscoe2005} was defined to give a denotational semantics to CSP. It models a
process with a pair of sets: $F \subseteq \power (\seq\,\textit{Event} \, \times \, \power\,\textit{Event})$ and
$D \subseteq \power (\seq\,\textit{Event})$, which are, respectively, the set of failures and divergences. A failure is
a trace of events plus a set of events can be refused at the end of the interaction. A divergence is a trace of events
that leads to divergent behaviour, that is, unpredictable behaviour like that of $\Chaos$. A distinguished event
$\tick \in Event$ is used as the final element of a trace to indicate that this is a terminating observation. The UTP
gives a relational account of the failures-divergences model~\cite{Hoare&98}, which was expanded upon by Woodcock and
Cavalcanti~\cite{Cavalcanti&06}, and by Oliveira~\cite{Oliveira2005-PHD} to account for state variables in
\Circus~\cite{Oliveira&09}. It is this latter model that we here call the stateful failures-divergences model.

Healthiness condition $\healthy{NCSP} \defs \healthy{NSRD} \circ \healthy{CSP3} \circ \healthy{CSP4}$ characterises the
stateful failures-divergences model~\cite{Cavalcanti&06,Oliveira&09}. Healthiness conditions $\healthy{CSP3}$ and
$\healthy{CSP4}$ are defined below.

\begin{definition}[Stateful-Failure Healthiness Conditions] \isalink{https://github.com/isabelle-utp/utp-main/blob/f22f260d20c95337e30ee7bb4d6d16e6fda27af0/theories/sf_rdes/utp_sfrd_healths.thy\#L32}
  \begin{align*}
  \healthy{CSP3}(P) &\defs (\Skip \relsemi P) && \text{There are no references to $\refu$}. \\
  \healthy{CSP4}(P) &\defs (P \relsemi \Skip) && \text{The postcondition may not refer to $\refu'$}.
  \end{align*}
\end{definition}  
\noindent $\healthy{CSP3}$ and $\healthy{CSP4}$ ensure the refusal sets are well-formed~\cite{Cavalcanti&06,Hoare&98}:
$\refu'$ can be mentioned only in the pericondition, since refusals are only observed in quiescent
observations. \healthy{NCSP}, like \healthy{NSRD}, is continuous and has $\Skip$ as a left and right unit. Thus, it
fulfils the criteria of a Kleene UTP theory (Definition~\ref{thm:kautp}), and consequently
$(\theoryset{\healthy{NCSP}}, \intchoice, \Miracle, \relsemi, \Skip, \bm{\star})$ forms a Kleene algebra. Every
\healthy{NCSP}-healthy relation corresponds to a reactive contract with the following specialised form~\cite{Foster-SFRD-UTP}.
$$\rc{P_1(\trace, \state)}{P_2(\trace, \state, \refu')}{P_3(\trace, \state, \state')}$$
The underlying reactive relations capture a portion of the stateful failures-divergences. $P_1$ is the precondition, which
captures the initial states and traces that do not induce divergence. It corresponds to the the complement of $D$, the
set of divergences~\cite{Roscoe2005,Cavalcanti&06}. $P_2$ is the pericondition, which captures the stateful failures of a
program: the set of events that may be refused ($\refu'$) having performed trace $\trace$, starting in state $\state$. It
corresponds to the the failure traces in $F$ that are not terminating. $P_3$ captures the terminated behaviours, where a
final state is observed but no refusals. It, of course, corresponds to the traces in $F$ that have $\tick$ as the final
element. We now characterise these reactive relations using healthiness conditions.

\begin{definition} Stateful-failure Reactive Relations, Finalisers, and Conditions are characterised as fixed-points of
  the healthiness conditions $\healthy{CRR}$, $\healthy{CRF}$, and $\healthy{CRC}$ defined
  below. \isalink{https://github.com/isabelle-utp/utp-main/blob/07cb0c256a90bc347289b5f5d202781b536fc640/theories/sf_rdes/utp_sfrd_rel.thy\#L7}
  $$\healthy{CRR}(P) \defs \exists \refu @ \healthy{RR}(P) \qquad \healthy{CRF}(P) \defs \exists (\refu, \refu') @ \healthy{RR}(P) \qquad \healthy{CRC}(P) \defs \exists \refu @ \healthy{RC}(P)$$
\end{definition}

\begin{figure}
  \centering
  \includegraphics[width=.6\linewidth]{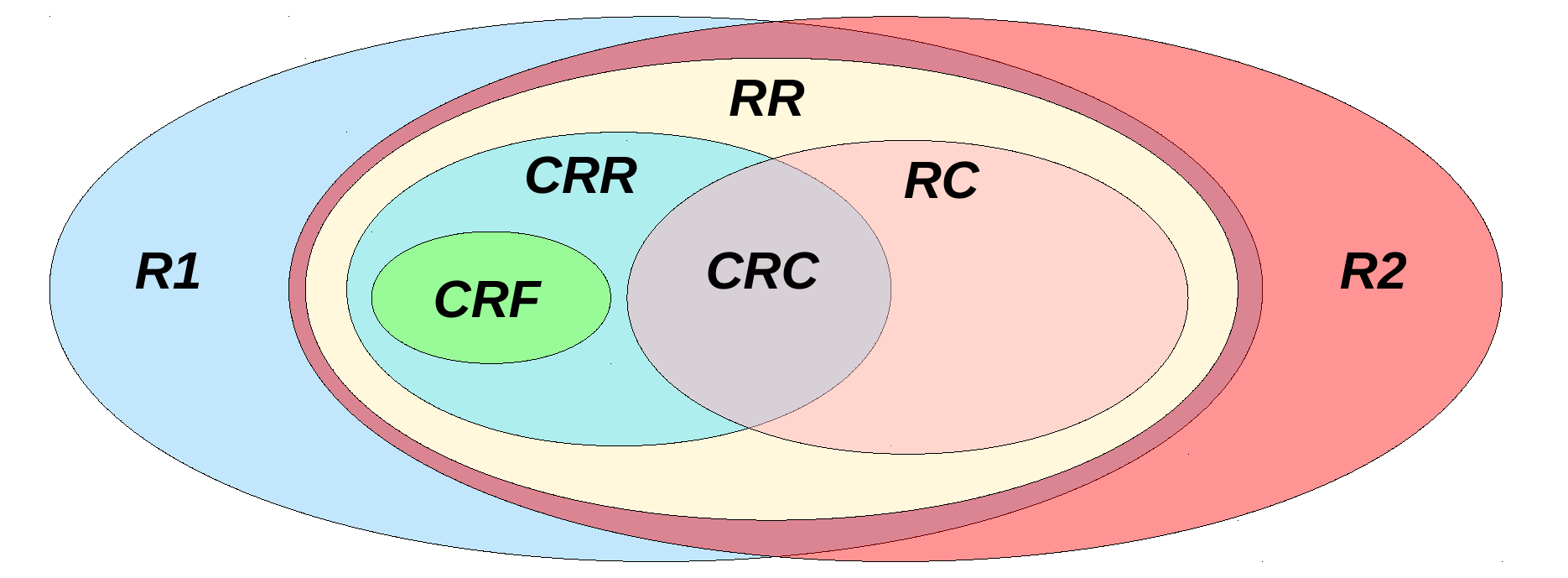}
  \caption{Reactive Relational Healthiness Conditions Venn diagram}
  \label{fig:rrel-impls}
\end{figure}

\noindent These are straightforward extensions of the healthiness conditions for reactive relations ($\healthy{RR}$) and
conditions ($\healthy{RC}$) that we previously defined~\cite{Foster17c} and are presented in
Table~\ref{tab:reahcond}. In addition to requiring that the relations describe a well-formed trace, $\healthy{CRR}$,
$\healthy{CRF}$, and $\healthy{CRC}$ require that there is no reference to $\refu$, because there is never a dependence
on the refusal set of a predecessor. Reactive finalisers ($\healthy{CRF}$) additionally forbid reference to $\refu'$:
they are used to characterise postconditions in a stateful-failure reactive contracts. Every $\healthy{CRF}$-healthy
relation is also $\healthy{CRR}$-healthy; further containments are shown in Figure~\ref{fig:rrel-impls}. We can formally
characterise \healthy{NCSP} contracts with the following theorem.

\begin{theorem} \label{thm:ncsp-intro} $\rc{P_1}{P_2}{P_3}$ is \healthy{NCSP}-healthy if the following conditions are satisfied: \isalink{https://github.com/isabelle-utp/utp-main/blob/07cb0c256a90bc347289b5f5d202781b536fc640/theories/sf_rdes/utp_sfrd_healths.thy\#L533}

  \begin{enumerate}
    \item $P_1$ is \healthy{CRC}-healthy;
    \item $P_2$ is \healthy{CRR}-healthy;
    \item $P_2$ does not refer to $\state'$; and
    \item $P_3$ is \healthy{CRF}-healthy.
  \end{enumerate}
\end{theorem}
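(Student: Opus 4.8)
The plan is to exploit the definitional decomposition $\healthy{NCSP} = \healthy{NSRD} \circ \healthy{CSP3} \circ \healthy{CSP4}$ and to show that, under hypotheses (1)--(4), the contract $\rc{P_1}{P_2}{P_3}$ is a fixed point of each of the three constituent healthiness conditions separately. Since each of them then fixes the contract, so does their composition, and hence $\healthy{NCSP}(\rc{P_1}{P_2}{P_3}) = \rc{P_1}{P_2}{P_3}$, which is the claim.

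First I would discharge the $\healthy{NSRD}$ obligation by appeal to the existing $\healthy{NSRD}$-introduction result~\cite{Foster17c}, which yields $\healthy{NSRD}$-healthiness of $\rc{P_1}{P_2}{P_3}$ from $P_1 \is \healthy{RC}$, $P_2 \is \healthy{RR}$ with no occurrence of $\state'$, and $P_3 \is \healthy{RR}$. These follow from (1)--(4): each of $\healthy{CRC}$, $\healthy{CRR}$, and $\healthy{CRF}$ is defined as an $\refu$- (and, for $\healthy{CRF}$, also $\refu'$-) existential wrapped around $\healthy{RC}$ or $\healthy{RR}$, and since this quantifier commutes with $\healthy{R1}$ and $\healthy{R2}$, a $\healthy{CRC}$-healthy relation is $\healthy{RC}$-healthy while $\healthy{CRR}$- and $\healthy{CRF}$-healthy relations are both $\healthy{RR}$-healthy. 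Hypothesis (3) supplies the no-$\state'$ side-condition on $P_2$ directly.

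The substance of the proof lies in the $\healthy{CSP3}$ and $\healthy{CSP4}$ obligations, namely $\Skip \relsemi \rc{P_1}{P_2}{P_3} = \rc{P_1}{P_2}{P_3}$ and $\rc{P_1}{P_2}{P_3} \relsemi \Skip = \rc{P_1}{P_2}{P_3}$. I would expand both sides via the sequential-composition law for contracts (Theorem~\ref{thm:rc-comp}-\eqref{thm:rc-seq}), taking $\Skip = \rc{\truer}{\false}{\trace = \snil \land \state' = \state}$. For $\healthy{CSP4}$, composing $\Skip$ on the right leaves the precondition as $P_1 \land (P_3 \wpR \truer) = P_1$ (using $P_3 \wpR \truer = \truer$) and the pericondition as $P_2 \lor (P_3 \relsemi \false) = P_2$, while the postcondition becomes $P_3 \relsemi (\trace = \snil \land \state' = \state)$; because $\Skip$'s postcondition leaves $\refu'$ unconstrained, this composition existentially conceals the intermediate $\refu'$, and hypothesis (4) ($P_3$ is $\healthy{CRF}$-healthy, hence free of $\refu'$) makes it collapse back to $P_3$. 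For $\healthy{CSP3}$, composing $\Skip$ on the left turns the pericondition into $(\trace = \snil \land \state' = \state) \relsemi P_2$, which conceals the initial $\refu$; hypothesis (2) ($P_2$ is $\healthy{CRR}$-healthy, hence free of $\refu$) then returns $P_2$, with the precondition and postcondition handled by the same left-unit argument using that $P_1$ and $P_3$ are themselves $\refu$-free.

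The main obstacle I anticipate is the bookkeeping in these two identity-compositions: establishing precisely that $(\trace = \snil \land \state' = \state) \relsemi P$ and $P \relsemi (\trace = \snil \land \state' = \state)$ act as units on the trace and state variables while existentially hiding exactly the one refusal variable that the relevant hypothesis eliminates. This rests on the trace algebra treating $\snil$ as the unit of concatenation, together with the fact that $\Skip$ constrains neither $\refu$ nor $\refu'$; the refusal-independence hypotheses (2) and (4) are exactly what force these compositions to return their argument unchanged. With the three fixed-point equalities in hand, composing them gives the desired $\healthy{NCSP}$-healthiness.
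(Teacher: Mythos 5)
Your proposal is correct and follows essentially the same route as the paper's mechanised proof: decompose $\healthy{NCSP} = \healthy{NSRD} \circ \healthy{CSP3} \circ \healthy{CSP4}$, discharge the $\healthy{NSRD}$ obligation via the contract-introduction result of~\cite{Foster17c} (using that $\healthy{CRC}$, $\healthy{CRR}$, and $\healthy{CRF}$ respectively entail $\healthy{RC}$- and $\healthy{RR}$-healthiness), and obtain the $\healthy{CSP3}$ and $\healthy{CSP4}$ fixed-point equations from the unit laws for $\IIc$ -- a left unit for $\healthy{CRR}$/$\healthy{CRC}$-healthy relations and a right unit precisely for $\healthy{CRF}$-healthy ones, which is exactly where hypotheses (2) and (4) enter. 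Your identification of the refusal-concealment bookkeeping in the two identity compositions as the crux is accurate; that is precisely why the paper restricts $\IIc$ to being a right unit only for $\healthy{CRF}$.
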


\noindent Due to the restrictions on $\refu\!$, $\IIr$ is not $\healthy{CRR}$-healthy, and so we define the identity
relation below.

\begin{definition} $\IIc \defs (\state' = \state \land tr' = tr)$ \isalink{https://github.com/isabelle-utp/utp-main/blob/07cb0c256a90bc347289b5f5d202781b536fc640/theories/sf_rdes/utp_sfrd_rel.thy\#L65} \end{definition}

\noindent This identity specifies only that the trace and state remain unchanged, whilst $\refu$ is
  unspecified. This is different to $\IIr$ which has $\refu' = \refu$ because $r$ is specialised to $\refu$ (see
  Definition~\ref{def:rrel-id}). Consequently, $\IIc$ is indeed $\healthy{CRF}$-healthy. $\IIc$ is a left unit for
$\healthy{CRR}$, $\healthy{CRF}$, and $\healthy{CRC}$-healthy relations, but it is a right unit only for
$\healthy{CRF}$. This is because in $P \relsemi \IIc$, if $P$ refers to $\refu'$ then this information is lost, whilst
$\healthy{CRF}$ relations do not refer to $\refu'$. We can construct a Kleene algebra for $\healthy{CRF}$-healthy
relations.

\begin{theorem} $(\theoryset{\healthy{CRF}}, \intchoice, \false, \relsemi, \IIc, \bm{\star})$ forms a weak Kleene algebra. \isalink{https://github.com/isabelle-utp/utp-main/blob/07cb0c256a90bc347289b5f5d202781b536fc640/theories/sf_rdes/utp_sfrd_rel.thy\#L512}
\end{theorem}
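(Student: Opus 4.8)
The plan is to show that the pair $(\healthy{CRF}, \IIc)$ is a Kleene UTP theory and then obtain the result directly from Theorem~\ref{thm:kautp}. The first observation is that the top element of this theory is $\healthy{CRF}(\false) = \exists(\refu, \refu') @ \healthy{RR}(\false) = \false$, so the $0$ of the induced algebra coincides with the $\false$ appearing in the stated structure, and the closure $\bm{\star}$ is built from $\IIc$ exactly as Theorem~\ref{thm:kautp} requires. It therefore remains to discharge the five conditions in the definition of a Kleene UTP theory.

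For condition~(1), idempotence and continuity, I would decompose $\healthy{CRF}$ as $E \circ \healthy{RR}$, where $E(P) \defs \exists(\refu, \refu') @ P$. The quantifier elimination $E$ is idempotent and continuous, since existential quantification distributes through arbitrary disjunction, and $\healthy{RR}$ is already known to be idempotent and continuous. Because $\refu$ and $\refu'$ are distinct from all variables referenced inside $\healthy{RR}$ (namely $tr, tr', ok, ok', wait, wait'$), the two functions commute, so idempotence follows from Theorem~\ref{thm:healthy-comp} and continuity from the continuity of each component. For condition~(3), that $\IIc$ is $\healthy{CRF}$-healthy, and condition~(4), that $\IIc$ is both a left and right unit for $\healthy{CRF}$-healthy relations, I would appeal to the facts already established in the surrounding text: $\IIc$ refers only to $\trace$ and $\state$, hence mentions neither $\refu$ nor $\refu'$, and the right-unit property holds precisely because $\healthy{CRF}$ forbids reference to $\refu'$ — which is the very reason the algebra is constructed for $\healthy{CRF}$ rather than $\healthy{CRR}$. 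Condition~(5) is immediate: the top element is $\false$, and $\false \relsemi P = \false$ since $\false$ is a left annihilator for $\relsemi$.

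The substantive work, and the main obstacle, lies in condition~(2): closure of $\theoryset{\healthy{CRF}}$ under sequential composition. Closure under $\relsemi$ for the underlying $\healthy{RR}$ relations is already known, so the crux is showing that composing two relations that reference neither $\refu$ nor $\refu'$ again yields a relation free of both. I would argue this by pushing $E$ through $\relsemi$: the input refusal $\refu$ of $P \relsemi Q$ is inherited from $P$, which does not mention it, and the output refusal $\refu'$ is inherited from $Q$, which does not mention it, so the intermediate refusal introduced by the composition's quantification over the midpoint is unconstrained and can be eliminated. Formally this amounts to commuting the existential quantifiers over $\refu$, $\refu'$, and the intermediate state with relational composition, which is routine once the independence of these variables is made explicit. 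Combining this with $\healthy{RR}$-closure yields $\healthy{CRF}$-closure, completing the verification of the Kleene UTP theory conditions and hence, via Theorem~\ref{thm:kautp}, the weak Kleene algebra.
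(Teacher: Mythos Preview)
Your approach is correct and is essentially the one the paper intends: the whole point of introducing the Kleene UTP theory abstraction and Theorem~\ref{thm:kautp} is precisely so that results like this one follow by checking the five conditions, and the surrounding text already supplies the key facts you invoke (that $\IIc$ is $\healthy{CRF}$-healthy and is a two-sided unit exactly for $\healthy{CRF}$). The paper offers no further detail beyond the Isabelle link, so your discharge of the conditions---in particular the decomposition of $\healthy{CRF}$ as an existential over $\healthy{RR}$ and the argument for closure under $\relsemi$---is a faithful expansion of what the mechanised proof does.
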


\noindent Using the Kleene star operators for $\healthy{NCSP}$ and $\healthy{CRF}$, we can also revalidate
Theorem~\ref{thm:rc-iter} in this context. This is necessary because the star in Theorem~\ref{thm:rc-iter} is defined in
terms of $\IIr$ and not $\IIc$. Nevertheless the two star operators are strongly related, and so we need only to
slightly adapt the proof.

Having defined our theories of stateful-failure reactive relations, we now proceed to define operators for constructing
pre-, peri-, and postconditions. These operators allow us to describe a distinct pattern for the form of reactive
programs. We describe this pattern using the following constructs.

\begin{definition}[Reactive Relational Operators] \label{def:rrelop} \isalink{https://github.com/isabelle-utp/utp-main/blob/07cb0c256a90bc347289b5f5d202781b536fc640/theories/sf_rdes/utp_sfrd_rel.thy\#L757}
 \arraycolsep=1.5pt

$$\begin{array}{lrcl}
  \text{Assumption:}   &\cspin{s(\state)}{t(\state)} &\defs& \healthy{CRC}(s(\state) \rimplies \negr (t(\state) \le \trace)) \\[.2ex]
  \text{Quiescence:} \qquad \qquad  &\cspen{s(\state)}{t(\state)}{E(\state)} &\defs& \healthy{CRR}(s(\state) \land \trace = t(\state) \land (\forall e\!\in\!E(\state) @ e \notin \refu')) \\[.2ex]
  \text{Finalisation:} \qquad &\csppf{s(\state)}{\sigma}{t(\state)} &\defs& \healthy{CRR}(s(\state) \land \state' = \sigma(\state) \land \trace = t(\state))
\end{array}$$
\end{definition}
\noindent We utilise expressions $s$, $t$, and $E$ that refer only to the variables indicated. Namely, $s\!:\!\Bool$ is
a condition on $\state$, $t\!:\!\seq \textit{Event}$ is a trace expression that describes an event sequence in terms of
$\state$, and $E : \power \textit{Event}$ describes a set of events. The use of trace expressions allows us to handle
symbolic traces that contain free state variables, and characterise a potentially infinite number of traces with a
finite presentation. With this, we can calculate semantics for reactive programs with an infinite number of states.

$\cspin{s(\state)}{t(\state)}$ is a $\healthy{CRC}$-healthy reactive condition that is used to specify assumptions on
the state and trace in preconditions. It states that, if the state initially satisfies condition $s$, then $t$ is not a
prefix of the overall trace. For example, the assumption $\cspin{x > 2}{\langle a, b \rangle}$ means that if the state
variable $x$ is initially greater than 2, then we disallow the trace $\langle a, b \rangle$, and any extension
thereof. The intuition here is that $t$ is a trace that introduces divergence, and so any extension of $t$ violates the
precondition. Put another way, any trace that is a either strict prefix or orthogonal to $t$ satisfies the precondition
when $s$ holds. Effectively, $t$ sets a strict upper bound on the traces permitted by the precondition.

$\cspen{s(\state)}{t(\state)}{E(\state)}$ is used in periconditions to specify quiescent observations, and corresponds
to a set of symbolic failure traces. It specifies that the state variables initially satisfy $s$, the interaction
described by $t$ has occurred, and finally we reach a quiescent phase where none of the events in $E$ are being
refused. It has the form of an acceptance trace~\cite{Roscoe2005}, as this provides a more comprehensible presentation,
but the semantics is encoded as a collection of failure traces. Specifically,
$(\forall e\!\in\!E(\state) @ e \notin \refu')$ defines all possible refusals that satisfy symbolic acceptance set
$E$. We sometimes write $\cspens{t(\state)}{E(\state)}$ when $s$ is $\ptrue$.

$\csppf{s(\state)}{\sigma}{t(\state)}$ is used in postconditions to specify final terminated observations. It specifies
that the initial state satisfies $s$, the state update $\sigma$ is applied to update $\state$, and the symbolic
interaction $t$ has occurred. Since $\csppf{s}{\sigma}{t}$ does not refer to $\refu'$, it is $\healthy{CRF}$-healthy. We
sometimes write $\csppfs{\sigma}{t(\state)}$ in the case that $s$ is $\ptrue$. Moreover, we also introduce the
abbreviation $\casm{s} \defs \csppf{s}{id}{\snil}$ that denotes a reactive relational test (or assumption) on the state
of the property $s$.

These operators are all deterministic, in the sense that they describe a single interaction and state-update
history. There is no need for explicit nondeterminism here, as this is achieved using $\bigvee$. These operators allow
us to concisely specify the basic operators of our theory as given below.

\begin{definition}[Basic Reactive Operators] \label{thm:bcircus-def} \isalink{https://github.com/isabelle-utp/utp-main/blob/07cb0c256a90bc347289b5f5d202781b536fc640/theories/sf_rdes/utp_sfrd_prog.thy\#L257}
\begin{align}
  \assignsC{\sigma} ~\defs~& \rc{\truer}{\false}{\csppf{\ptrue}{\sigma}{\langle\rangle}} \label{def:gen-asn} \\[.1ex]
  \Skip              ~=~& \assignsC{id} \\[.1ex]
  \ckey{Do}(a)      ~=~& \rc{\truer}{\cspen{\ptrue}{\langle\rangle}{\{a\}}}{\csppf{\ptrue}{id}{\langle a \rangle}} \\[.1ex]
  \Stop             ~\defs~& \rc{\truer}{\cspen{\ptrue}{\langle\rangle}{\emptyset}}{\false} %
\end{align}
\end{definition}

\noindent The definitions of $\Skip$ and $\ckey{Do}(a)$ are expressed as theorems that we have proved using
Definition~\ref{def:skipprefas}. However, for the remainder of this paper we treat these identities as
definitions. Generalised assignment $\assignsC{\sigma}$ is again inspired by \cite{Back1998}. It has a $\truer$
precondition and a $\false$ pericondition: it has no intermediate observations. The postcondition states that for any
initial state ($\ptrue$), the state is updated using $\sigma$, and no events are produced ($\langle\rangle$). A
singleton assignment $x := v$ can be expressed using $\assignsC{x \mapsto v}$. We can use it to show that
$\Skip = \assignsC{id}$, where $id : \Sigma \to \Sigma$ is the identity function that leaves all variables unchanged.

$\ckey{Do}(a)$ encodes an event action. Its pericondition states that no event has occurred, and $a$ is accepted. Its
postcondition extends the trace by $a$, leaving the state unchanged. We can denote \Circus event prefix $a \then P$ as
$\ckey{Do}(a) \relsemi P$. Finally, $\Stop$ represents a deadlock: its pericondition states the trace is unchanged and
no events are being accepted. The postcondition is false as there is no way to terminate. A \Circus guard $g \guard P$
can be denoted as $(\rconditional{P}{g}{\Stop})$, which behaves as $P$ when $g$ is true, and otherwise deadlocks.

To calculate contractual semantics, we need laws to reduce pre-, peri-, and postconditions. These need to cater for
compositions of quiescent and final observations using operators like internal choice ($\bigsqcap$), sequential
composition ($\relsemi$), and external choice ($\Extchoice$, see \S\ref{sec:ext-choice}). So, we
prove~\cite{Foster-SFRD-UTP} the following laws for $\mathcal{E}$ and $\Phi$.

\begin{theorem}[Reactive Relational Compositions] \label{thm:crel-comp} \isalink{https://github.com/isabelle-utp/utp-main/blob/07cb0c256a90bc347289b5f5d202781b536fc640/theories/sf_rdes/utp_sfrd_rel.thy\#L757}
  \begin{align}
    \csppf{\ptrue}{id}{\langle\rangle} &= \IIc \label{thm:crc0} \\[.2ex]
    \csppf{s_1}{\sigma_1}{t_1} \relsemi \csppf{s_2}{\sigma_2}{t_2} &= \csppf{s_1 \land \substapp{\sigma_1}{s_2}}{\sigma_2 \circ \sigma_1}{t_1 \cat \substapp{\sigma_1}{t_2}} \label{thm:crc2} \\[.2ex]
    \csppf{s_1}{\sigma_1}{t_1} \relsemi \cspen{s_2}{t_2}{E} &= \cspen{s_1 \land \substapp{\sigma_1}{s_2}}{t_1 \cat \substapp{\sigma_1}{t_2}}{\substapp{\sigma_1}{E}} \label{thm:crc3} \\[.2ex]
    \rconditional{\csppf{s_1}{\sigma_1}{t_1}}{c}{\csppf{s_2}{\sigma_2}{t_2}} &= \csppf{\rconditional{s_1}{c}{s_2}}{\rconditional{\sigma_1}{c}{\sigma_2}}{\rconditional{t_1}{c}{t_2}} \label{thm:crc4} \\[.2ex]
    \rconditional{\cspen{s_1}{t_1}{E_1}}{c}{\cspen{s_2}{t_2}{E_2}} &= \cspen{\rconditional{s_1}{c}{s_2}}{\rconditional{t_1}{c}{t_2}}{\rconditional{E_1}{c}{E_2}} \label{thm:crc5} \\[.2ex]
    \left(\bigwedge_{i \in I}\,\cspen{s(i)}{t}{E(i)}\right) &= \cspen{\bigwedge_{i \in I}\,s(i)}{t}{\bigcup_{i \in I}\,E(i)} \label{thm:crc6} \\[.2ex]
    \left(\bigvee_{i \in I}\,\cspen{s(i)}{t}{E(i)}\right) &= \cspen{\bigvee_{i \in I}\,s(i)}{t}{\bigcap_{i \in I}\,E(i)} \label{thm:crc7} \\[.2ex]
    \csppf{s}{\sigma}{t}^{\star} &= \bigsqcap_{n \in \nat} ~ \csppf{\bigwedge_{i \le n} (\substapp{\sigma^i}{s})}{\sigma^n}{\prod_{j < n} ~ (\substapp{\sigma^j}{t})} \label{thm:crc8}
  \end{align}
\end{theorem}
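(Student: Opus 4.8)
The plan is to treat Theorem~\ref{thm:crel-comp} as a family of predicate equalities, unfolding the operators of Definition~\ref{def:rrelop} to their underlying $\healthy{CRR}(\cdots)$ form and then reasoning in the relational calculus, exploiting idempotence and continuity of $\healthy{CRR}$ to move the healthiness wrapper outside the composite constructions. The base identity \eqref{thm:crc0} is immediate: unfolding $\csppf{\ptrue}{id}{\snil}$ gives $\healthy{CRR}(\state' = \state \land \trace = \snil)$, and since $\trace = \snil$ is equivalent to $tr' = tr$ under $\healthy{R1}$ and $\IIc$ is already $\healthy{CRR}$-healthy, the wrapper collapses and the equality follows.

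The real workhorses are the two sequential-composition laws \eqref{thm:crc2} and \eqref{thm:crc3}. For both I would unfold the operands, observe that relational composition introduces an existentially quantified intermediate state $\state_0$ together with an intermediate trace point, and then apply the one-point rule: the left operand pins $\state_0 = \sigma_1(\state)$ and contributes the trace segment $t_1$. Substituting $\state_0 = \sigma_1(\state)$ into the second operand is exactly what produces the $\substapp{\sigma_1}{\cdot}$ terms: the substitution-application laws (notably \eqref{law:SB4} and \eqref{law:SB5}) push $\sigma_1$ through the condition $s_2$, the trace expression $t_2$, and the acceptance set $E$, and compose the updates to give $\sigma_2 \circ \sigma_1$. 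The traces then combine as $t_1 \cat \substapp{\sigma_1}{t_2}$ by the trace-algebra laws, with $\healthy{R2}$ guaranteeing the extension is independent of history; for \eqref{thm:crc3} the refusal variable $\refu'$ simply passes through from the quiescence operand.

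The conditional and lattice laws \eqref{thm:crc4}--\eqref{thm:crc7} I would reduce to pure predicate reasoning under the $\healthy{CRR}$ wrapper. For \eqref{thm:crc4} and \eqref{thm:crc5}, since the guard $c$ constrains only the initial state, the reactive conditional splits by case analysis on $c$ and distributes componentwise. For the meet and join laws I would first factor out the common $\trace = t$. The meet law \eqref{thm:crc6} is then routine, because $\bigwedge_i (\forall e \in E(i) @ e \notin \refu')$ is literally $\forall e \in \bigcup_i E(i) @ e \notin \refu'$, and $\healthy{CRR}$ distributes through this conjunction since its quantified observational variables do not occur free. The join law \eqref{thm:crc7} is the delicate one: here I would use continuity of $\healthy{CRR}$ over the disjunction, but the refusal condition is antitone in the acceptance set, so matching $\bigvee_i$ of the branch conditions against the single condition indexed by $\bigcap_i E(i)$ requires careful handling of how the state conditions $s(i)$ interact with the refusal sets; this is the step I would scrutinise most closely in this group.

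The main obstacle is the Kleene-star law \eqref{thm:crc8}, which I would approach in two stages. First, using the Kleene-algebra characterisation of §\ref{sec:utp-kleene} (Theorem~\ref{thm:kautp}) instantiated for the $\healthy{CRF}$ theory with identity $\IIc$, I would rewrite the star as the countable infimum of powers, $\csppf{s}{\sigma}{t}\bm{\star} = \bigsqcap_{n \in \nat} \csppf{s}{\sigma}{t}^n$. Second, I would prove by induction on $n$, with \eqref{thm:crc0} as base case and \eqref{thm:crc2} as inductive step, the closed form $\csppf{s}{\sigma}{t}^n = \csppf{\bigwedge_{i < n} \substapp{\sigma^i}{s}}{\sigma^n}{\prod_{j < n} \substapp{\sigma^j}{t}}$, and then pass $\healthy{CRR}$ through the infimum by continuity to recover the stated disjunction. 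The conceptual crux is the first stage --- justifying that the theory star coincides with the indexed disjunction --- which rests on the complete-lattice and star-induction machinery of §\ref{sec:utp-kleene}; the inductive bookkeeping with the nested substitutions $\sigma^i$ and the iterated concatenation $\prod_{j < n} \substapp{\sigma^j}{t}$ is fiddly but routine once \eqref{thm:crc2} is in hand.
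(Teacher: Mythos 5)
Your overall strategy is the paper's own: Theorem~\ref{thm:crel-comp} is stated without a printed proof, deferring to the Isabelle/UTP mechanisation, and that mechanisation proceeds exactly as you sketch for most clauses --- unfold Definition~\ref{def:rrelop} to the underlying $\healthy{CRR}$ form, eliminate the intermediate state and trace of $\relsemi$ by the one-point rule so that the substitution laws \eqref{law:SB4}--\eqref{law:SB5} produce the $\substapp{\sigma_1}{\cdot}$ terms (with $\healthy{R2}$ justifying the trace shift), split on the state-only guard for \eqref{thm:crc4}--\eqref{thm:crc5}, and, for \eqref{thm:crc8}, expand the $\healthy{CRF}$-theory star into an infimum of powers via \S\ref{sec:utp-kleene} and induct with \eqref{thm:crc0} and \eqref{thm:crc2}. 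So for \eqref{thm:crc0}--\eqref{thm:crc6} your proposal is sound and matches the intended route.

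Two clauses, however, are not actually discharged. First, \eqref{thm:crc7}: you flag it as "the step I would scrutinise most closely" but give no argument, and your suspicion is well founded --- the acceptance condition $(\forall e \in E @ e \notin \refu')$ is \emph{antitone} in $E$, so disjunction does not push into the acceptance parameter by any continuity argument. Concretely, with $I = \{1,2\}$, $s(1) = s(2) = \ptrue$, $E(1) = \{a\}$, $E(2) = \{b\}$ and a common trace $t$, unfolding gives $a \notin \refu' \lor b \notin \refu'$ on the left, whereas the right-hand side, having $\bigcap_i E(i) = \emptyset$, leaves $\refu'$ unconstrained; the observation $\refu' = \{a,b\}$ separates the two sides. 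So the clause cannot be closed by the generic predicate reasoning you propose, and you would need to work from the precise mechanised formulation (where, e.g., the acceptance sets agree across the family, as in the trivial case $E(i) = E$) to reconcile it. Second, \eqref{thm:crc8}: your induction is correct but yields the closed form $\csppf{\bigwedge_{i < n} \substapp{\sigma^i}{s}}{\sigma^n}{\prod_{j < n} \substapp{\sigma^j}{t}}$ for the $n$-th power --- check $n = 0$, which must be $\IIc$ by \eqref{thm:crc0}, and $n = 1$, which is just $\csppf{s}{\sigma}{t}$ --- whereas the statement is indexed $i \le n$, whose $n = 0$ disjunct is $\csppf{s}{id}{\snil} = \casm{s}$, strictly stronger than $\IIc$. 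Your final step, "pass $\healthy{CRR}$ through the infimum \ldots to recover the stated disjunction", therefore does not follow from your closed form: the extra conjunct $\substapp{\sigma^n}{s}$ amounts to a trailing state test (compare the $(\casm{b} \relsemi P_3)\bm{\star} \relsemi \casm{b}$ pattern in Theorem~\ref{thm:whilecalc}), and for nonempty $t$ the two infima genuinely differ at the empty-trace disjunct. You must either derive the printed indexing from a correspondingly adjusted star, or explicitly reconcile your $i < n$ form against the mechanised lemma; as written, the last sentence of your star argument is a non sequitur. A minor point in the same family: \eqref{thm:crc6}--\eqref{thm:crc7} need $I \neq \emptyset$, since the empty meet is $\truer$ while the right-hand side still pins $\trace = t$.
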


\noindent Law \eqref{thm:crc0} gives the meaning of $\Phi$ with a trivial precondition, state update, and empty trace:
it is simply the reactive identity. Law \eqref{thm:crc2} states that the composition of two terminated observations
results in the conjunction of the state conditions, composition of the state updates, and concatenation of the
traces. It is necessary to apply the initial state update $\sigma_1$ as a substitution to both the second state
condition ($s_2$) and the trace expression ($t_2$). Law \eqref{thm:crc3} is similar, but accounts for the enabled events
rather than state updates. Laws \eqref{thm:crc2} and \eqref{thm:crc3} are required because of
Theorem~\thmeqref{thm:rc-comp}{thm:rc-seq}, which sequentially composes a pericondition with a postcondition, and a
postcondition with a postcondition.

Laws \eqref{thm:crc4} and \eqref{thm:crc5} show how conditional distributes through the operators. Law \eqref{thm:crc6}
shows that a conjunction of intermediate observations with a common trace corresponds to the conjunction of the state
conditions, and the union of the enabled events. It is needed for external choice, which conjoins the periconditions
(see \S\ref{sec:ext-choice}). Law \eqref{thm:crc7} shows the dual case of \eqref{thm:crc6}: when taking a choice of
periconditions, we have the disjunction of all the state conditions, and intersection of all enabled events.

Finally, \eqref{thm:crc8} gives the meaning of an iterated final observation. The nondeterministic choice over
$n \in \nat$ denotes the number of iterations.  Inside, the $\Phi$ operator distributes iteration though the condition,
state updates, and trace. Here, $f^n$ is iterated function composition ($f \circ f \circ f \cdots$), and $\prod$ is
iterated concatenation: $$\prod_{i < n}\, xs(i) \defs xs(0) \cat xs(1) \cat \cdots \cat xs(n-1)$$ The condition of an
iterated final observation requires that $s$ holds whenever $\sigma$ is applied as a substitution $i$ times, where
$i \le n$. The state update applies $\sigma$ a total of $n$ times in sequence. The trace expression concatenates $t$ a
total of $n$ times, and each instance has the state update applied $j < n$ times.

We can now use these laws, along with Theorem \ref{thm:rc-comp}, to calculate the semantics of
processes, and to prove equality and refinement conjectures, as we illustrate below.

\begin{example} We show that
  $(x\!:=\!1 \relsemi \ckey{Do}(a.x) \relsemi x := x + 2) ~=~ (\ckey{Do}(a.1) \relsemi x\!:=\!3)$. By applying
  Definition~\ref{thm:bcircus-def} and Theorems~\thmeqref{thm:rc-comp}{thm:rc-seq}, \ref{thm:crel-comp},
  \ref{thm:evwp}, both sides reduce to
  $\,\rcs{\cspen{true}{\langle\rangle}{\{a.1\}}}{\csppf{true}{\{x \mapsto 3\}}{\langle a.1 \rangle}}$, which has
  a single quiescent state, waiting for event $a.1$, and a single final state, where $a.1$ has occurred and state
  variable $x$ has been updated to $3$. We calculate the left-hand side below.
  \begin{align*}
    &(x\!:=\!1 \relsemi \ckey{Do}(a.x) \relsemi x := x + 2) \\[.5ex]
    =& \left(\begin{array}{l}
         \rcs{\false}{\csppf{\ptrue}{\substmap{x \mapsto 1}}{\langle\rangle}} \relsemi \\[.2ex]
         \rcs{\cspen{\ptrue}{\langle\rangle}{\{a.x\}}}{\csppf{\ptrue}{id}{\langle a.x \rangle}} \relsemi \\[.2ex]
         \rcs{\false}{\csppf{\ptrue}{\substmap{x \mapsto x + 2}}{\langle\rangle}} 
       \end{array}\right) & [\text{Def.}~\ref{thm:bcircus-def}] \\[.5ex]
    =& \left(\begin{array}{l}
         \rcs{\begin{array}{l} \false \lor \\ \csppf{\ptrue}{\substmap{x \mapsto 1}}{\langle\rangle} \relsemi \cspen{\ptrue}{\langle\rangle}{\{a.x\}} \end{array}}
             {\begin{array}{l} \csppf{\ptrue}{\substmap{x \mapsto 1}}{\langle\rangle} \relsemi \\ \csppf{\ptrue}{id}{\langle a.x \rangle} \end{array}
               }  \relsemi \\[3ex]
         \rcs{\false}{\csppf{\ptrue}{\substmap{x \mapsto x + 2}}{\langle\rangle}} 
       \end{array}\right) & [\text{Thm.}~\ref{thm:rc-comp}] \\[.5ex]
    =& \left(\begin{array}{l}
         \rcs{\cspen{\ptrue[1/x]}{\langle\rangle[1/x]}{\{a.x\}[1/x]}}
             {\csppf{\ptrue[1/x]}{\substmap{x \mapsto 1}}{\langle a.x \rangle[1/x]}}  \relsemi \\[.5ex]
         \rcs{\false}{\csppf{\ptrue}{\substmap{x \mapsto x + 2}}{\langle\rangle}} 
       \end{array}\right) & [\text{Thm.}~\ref{thm:crel-comp}] \\[.5ex]
    =& \left(\begin{array}{l}
         \rcs{\cspen{\ptrue}{\langle\rangle}{\{a.1\}}}
             {\csppf{\ptrue}{\substmap{x \mapsto 1}}{\langle a.1 \rangle}}  \relsemi \\[.5ex]
         \rcs{\false}{\csppf{\ptrue}{\substmap{x \mapsto x + 2}}{\langle\rangle}} 
       \end{array}\right) & \\[.5ex]
    =& \begin{array}{l}
         \rcs{\begin{array}{l} \cspen{\ptrue}{\langle\rangle}{\{a.1\}} \lor \\ \csppf{\ptrue}{\substmap{x \mapsto 1}}{\langle a.1 \rangle} \relsemi \false \end{array}}
             {\begin{array}{l} \csppf{\ptrue}{\substmap{x \mapsto 1}}{\langle a.1 \rangle} \relsemi \\ \csppf{\ptrue}{\substmap{x \mapsto x + 2}}{\langle\rangle} \end{array}}
       \end{array} & [\text{Thm.}~\ref{thm:rc-comp}] \\[.5ex]
    =& \rcs{\cspen{\ptrue}{\langle\rangle}{\{a.1\}}}
             {\csppf{\ptrue[1/x]}{\substmap{x \mapsto x + 2} \circ \substmap{x \mapsto 1}}{\langle a.1 \rangle}}
       & [\text{Thm.}~\ref{thm:crel-comp}] \\[.5ex]
    =&\rcs{\cspen{\ptrue}{\langle\rangle}{\{a.1\}}}{\csppf{\ptrue}{\substmap{x \mapsto 3}}{\langle a.1 \rangle}}
  \end{align*}

  \noindent In the first step, we expand out the definitions of the three sequential actions using
  Definition~\ref{thm:bcircus-def}. In the second step, we employ Theorem~\ref{thm:rc-comp} to calculate the sequential
  composition of the first two contracts. In the third step, we use Theorem~\ref{thm:crel-comp} to calculate the
  resulting composite peri- and postconditions, which in particular pushes the initial substitution into both the
  quiescent and terminated observations of the second contract. In the fourth step, we apply the resulting substitutions
  to complete composition of the first two contracts. In the remaining steps, we apply the same theorems again to
  compose with the third contract. \qed
  
\end{example}
\noindent This proof can be automated using a single invocation of the \textsf{rdes-eq} tactic~\cite{Foster17c}
  in Isabelle/UTP, which implements our calcuational proof strategy\footnote{Several examples of this can be
  found in our respository, using the link to the
  right. \isalink{https://github.com/isabelle-utp/utp-main/blob/1287e8d68ca23fea81bc31064febe3d956db6ee2/tutorial/utp_csp_ex.thy}}. We
  can also use our calculation theorems, with the help of \textsf{rdes-eq}, to prove a number of general laws, which
  would otherwise require a complex manual proof~\cite{Foster-SFRD-UTP}.

\begin{theorem}[Stateful Failures-Divergences Laws] \label{thm:sfdl} \isalink{https://github.com/isabelle-utp/utp-main/blob/07cb0c256a90bc347289b5f5d202781b536fc640/theories/sf_rdes/utp_sfrd_prog.thy\#L994}

  \noindent
  \begin{minipage}{0.5\textwidth}
  \begin{align}
  \assignsC{\sigma}\relsemi\rc{P_1}{P_2}{P_3} &= \rc{\substapp{\sigma}{P_1}}{\substapp{\sigma}{P_2}}{\substapp{\sigma}{P_3}} \label{thm:asgdist}  \\
  \assignsC{\sigma} \relsemi \ckey{Do}(e) &= \ckey{Do}(\substapp{\sigma}{e}) \relsemi \assignsC{\sigma} \label{thm:asgev} \\
  x := v \relsemi e \then P &= e[v/x] \then x\!:=\!v \relsemi P \label{thm:sagmsev}
  \end{align}
  \end{minipage}
  \begin{minipage}{0.5\textwidth}
  \begin{align}
  a \then (P \intchoice Q) &= (a \then P) \intchoice (a \then Q) \label{thm:prfchoice} \\
  \assignsC{\sigma} \relsemi \assignsC{\rho} &= \assignsC{\rho \circ \sigma} \label{thm:asgcomp} \\ 
  \Stop \relsemi P &= \Stop \label{thm:dlockzero}
  \end{align}
  \end{minipage}
\end{theorem}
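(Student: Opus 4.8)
The plan is to reduce every identity to a computation on contract triples, using the contract sequential-composition law \eqref{thm:rc-seq} of Theorem~\ref{thm:rc-comp} together with the reactive-relational calculation laws of Theorem~\ref{thm:crel-comp}, and then to compare the resulting pre-, peri-, and postconditions componentwise. The linchpin is \eqref{thm:asgdist}: once generalised assignment is known to distribute through an arbitrary contract as an initial-state substitution, the remaining five laws follow as corollaries. I would therefore establish \eqref{thm:asgdist} first.

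For \eqref{thm:asgdist}, I would unfold $\assignsC{\sigma} = \rc{\truer}{\false}{\csppf{\ptrue}{\sigma}{\langle\rangle}}$ and apply \eqref{thm:rc-seq}, obtaining the triple with precondition $\csppf{\ptrue}{\sigma}{\langle\rangle} \wpR P_1$, pericondition $\false \lor (\csppf{\ptrue}{\sigma}{\langle\rangle} \relsemi P_2)$, and postcondition $\csppf{\ptrue}{\sigma}{\langle\rangle} \relsemi P_3$. The crux is a substitution lemma: for any $\healthy{CRR}$- or $\healthy{CRF}$-healthy $P$, $\csppf{\ptrue}{\sigma}{\langle\rangle} \relsemi P = \substapp{\sigma}{P}$, i.e. left-composition with $\Phi$ realises the initial-state substitution. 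Since $\csppf{\ptrue}{\sigma}{\langle\rangle}$ contributes the empty trace and fixes $\state' = \sigma(\state)$ while leaving $\refu$ free, and $P$ neither depends on $\refu$ nor constrains the inherited trace $\trace = tr' - tr$, this lemma holds for the atomic operators by \eqref{thm:crc2} and \eqref{thm:crc3}, and extends to arbitrary healthy relations either via left-distributivity of $\relsemi$ over $\bigvee$ on a disjunctive normal form or by a direct semantic unfolding of $\relsemi$ and the substitution operator. This immediately settles the peri- and postcondition components. For the precondition I would use $P \wpR Q = \negr(P \relsemi \negr Q)$ together with the fact that $\substapp{\sigma}{\cdot}$, touching only $\state$, commutes with $\negr$ and with $\healthy{R1}$: then $\csppf{\ptrue}{\sigma}{\langle\rangle} \wpR P_1 = \negr(\csppf{\ptrue}{\sigma}{\langle\rangle} \relsemi \negr P_1) = \negr(\negr\,\substapp{\sigma}{P_1}) = \substapp{\sigma}{P_1}$, using Boolean-algebra involution of $\negr$ on $\theoryset{\healthy{RR}}$.

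The remaining laws are corollaries. Law \eqref{thm:asgcomp} follows by applying \eqref{thm:asgdist} to $\assignsC{\rho} = \rc{\truer}{\false}{\csppf{\ptrue}{\rho}{\langle\rangle}}$ and simplifying $\substapp{\sigma}{\csppf{\ptrue}{\rho}{\langle\rangle}} = \csppf{\ptrue}{\rho \circ \sigma}{\langle\rangle}$, an instance of \eqref{thm:crc2}. For \eqref{thm:asgev} I would push the assignment through $\ckey{Do}(e)$ with \eqref{thm:asgdist}, compute $\substapp{\sigma}{\cspen{\ptrue}{\langle\rangle}{\{e\}}} = \cspen{\ptrue}{\langle\rangle}{\{\substapp{\sigma}{e}\}}$ and $\substapp{\sigma}{\csppf{\ptrue}{id}{\langle e\rangle}} = \csppf{\ptrue}{\sigma}{\langle \substapp{\sigma}{e}\rangle}$, then separately evaluate $\ckey{Do}(\substapp{\sigma}{e}) \relsemi \assignsC{\sigma}$ via \eqref{thm:rc-seq}, \eqref{thm:crc2}, and $P \wpR \truer = \truer$; the two triples coincide. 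Law \eqref{thm:sagmsev} is then the special case $\sigma = \substmap{x \mapsto v}$ of \eqref{thm:asgev}, using $e \then P = \ckey{Do}(e) \relsemi P$, $x := v = \assignsC{x \mapsto v}$, $e[v/x] = \substapp{\substmap{x \mapsto v}}{e}$, and associativity of $\relsemi$. Law \eqref{thm:prfchoice} is simply the Boolean-quantale left-distributivity of $\relsemi$ over $\bigvee$, since $a \then R = \ckey{Do}(a) \relsemi R$ and $\intchoice$ is disjunction. Finally, \eqref{thm:dlockzero} uses \eqref{thm:rc-seq} on $\Stop = \rc{\truer}{\cspen{\ptrue}{\langle\rangle}{\emptyset}}{\false}$: because $\false$ is a left annihilator of $\relsemi$ and $\false \wpR P_1 = \negr(\false \relsemi \negr P_1) = \negr\,\false = \truer$, the composite precondition stays $\truer$ and both the pericondition and postcondition collapse to those of $\Stop$.

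The main obstacle is the substitution lemma underpinning \eqref{thm:asgdist}, namely that $\Phi$-left-composition is exactly initial-state substitution on \emph{arbitrary} healthy peri- and postconditions, not merely on the atomic operators. For atomic arguments it is immediate from \eqref{thm:crc2} and \eqref{thm:crc3}, but the general case requires either a canonical disjunctive representation with left-distributivity over $\bigvee$, or a careful semantic argument that the empty-trace, $\refu$-free behaviour of $\Phi$ effects substitution without disturbing $\trace$; the refusal handling in the pericondition is the delicate point, and is precisely where $\healthy{CRR}$ healthiness (no dependence on $\refu$) must be invoked.
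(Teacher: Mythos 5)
Your proposal is correct and takes essentially the same route as the paper: the paper proves these laws via its calculational strategy (Theorems~\ref{thm:rc-comp}, \ref{thm:crel-comp}, and \ref{thm:evwp}, automated by the \textsf{rdes-eq} tactic), establishing \eqref{thm:asgdist} as the primary law and deriving \eqref{thm:asgev}, \eqref{thm:asgcomp}, and \eqref{thm:sagmsev} as consequences, exactly as you do, with \eqref{thm:prfchoice} from left-distributivity of $\relsemi$ over $\bigvee$ and \eqref{thm:dlockzero} from the $\false$-postcondition calculation. One remark: of your two candidate routes to the key substitution lemma $\csppf{\ptrue}{\sigma}{\langle\rangle} \relsemi P = \substapp{\sigma}{P}$, only the direct semantic unfolding (one-point elimination of the intermediate state and trace, with $\healthy{CRR}$-healthiness discharging the intermediate refusal) is viable in general, since no disjunctive normal form for arbitrary $\healthy{CRR}$ relations is established (the paper only speculates about such a normal form as future work) --- but since you identify and justify that route, the argument stands.
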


\noindent Law~\eqref{thm:asgdist} shows how a leading assignment distributes substitutions through a
contract. Laws~\eqref{thm:asgev} and \eqref{thm:asgcomp} are consequences of
Law~\eqref{thm:asgdist}. Law~\eqref{thm:asgev} shows that an assignment can be pushed through an event by applying the
substitution to the event expression. Law~\eqref{thm:sagmsev} is a further consequence of Law~\ref{thm:asgdist} that
shows the case for a singleton assignment and a prefixed action. Law~\eqref{thm:prfchoice} shows that a prefix event
distributes from the left through nondeterministic choice. Law~\eqref{thm:asgcomp} shows that composing two assignments
yields a single assignment where the two substitution functions are composed. Effectively, this law shows the
correspondence between functional and relational composition for deterministic relations represented by
assignments. Finally, Law~\eqref{thm:dlockzero} shows that the deadlock action, $\Stop$, is a left annihilator.

So far, the reactive contracts we have considered have all contained trivial preconditions. However, divergence is a
useful modelling technique that allows us to model unspecified or unpredictable behaviour, when certain assumptions are
violated. We consider, for example, the simple action $a \then \Skip \extchoice b \then \Chaos$. If event $a$ occurs,
then it terminates, and if $b$ occurs it diverges. The behaviour following the occurrence of $a$ is predictable
(termination), but the behaviour following the occurrence of $b$ is unpredictable.

In order to calculate contracts for actions of this form, we need to consider the weakest liberal precondition operator
$\wpR$. So far, we have only considered simple formulae of the form $P \wpR \truer = \truer$; we now supply theorems for
more sophisticated preconditions. Theorem~\ref{thm:rc-comp}-\eqref{thm:rc-seq} requires that, in a sequential
composition $P \relsemi Q$, we need to show that the postcondition of contract $P$ satisfies the precondition of
contract $Q$. We, consider for example the following partial calculation of the contract for $b \then \Chaos$.
\begin{align*}
      b \then \Chaos
  =~& \ckey{Do}(b) \relsemi \Chaos \\
  =~& \rcs{\cspen{\ptrue}{\langle\rangle}{\{b\}}}{\csppf{\ptrue}{id}{\langle b \rangle}} \relsemi \rc{\false}{\false}{\false} \\
  =~& \rc{\csppf{\ptrue}{id}{\langle b \rangle} \wpR \false}{\cspen{\ptrue}{\langle\rangle}{\{b\}} \lor \false}{\false} \\
  =~& \rc{\csppf{\ptrue}{id}{\langle b \rangle} \wpR \false}{\cspen{\ptrue}{\langle\rangle}{\{b\}}}{\false}
\end{align*}

\noindent The postcondition is $\false$, so this action has no final state. It can be quiescent, waiting for $b$ to
occur. We cannot, however, calculate the precondition yet; it states that the trace $\langle b \rangle$ should never
occur.

In general, the precondition of a reactive contract uses the weakest liberal precondition of a previously applied
postcondition. Theorem~\ref{thm:crel-comp} explains how to eliminate most composition operators in a contract's
postcondition, but not disjunction ($\lor$). Postconditions are, therefore, typically expressed as disjunctions of the
$\Phi$ operator. So, our weakest liberal precondition calculus needs to handle disjunctions of $\Phi$ terms.

\begin{theorem}[Reactive Preconditions] \label{thm:evwp} \isalink{https://github.com/isabelle-utp/utp-main/blob/07cb0c256a90bc347289b5f5d202781b536fc640/theories/sf_rdes/utp_sfrd_rel.thy\#L941}
  \begin{align}
    \csppf{s}{\sigma}{t} \wpR \false               &= \cspin{s}{t} \label{law:wpfalse} \\
    \csppf{s_1}{\sigma}{t_1} \wpR \cspin{s_2}{t_2}  &= \cspin{s_1 \land \substapp{\sigma}{s_2}}{t_1 \cat (\substapp{\sigma}{t_2})} \label{law:wpin} \\
    \cspin{\pfalse}{t}                             &= \truer \label{law:wpinfalse} \\
    \cspin{\ptrue}{\snil}                          &= \false \label{law:wpintrue} \\
    \cspin{s_1}{t} \land \cspin{s_2}{t}            &= \cspin{s_1 \lor s_2}{t} \\
    \cspin{s_1}{t} \lor \cspin{s_2}{t}             &= \cspin{s_1 \land s_2}{t}
  \end{align}
\end{theorem}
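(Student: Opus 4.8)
The plan is to reduce every law to propositional reasoning over an $\healthy{R1}$-normalised body, with the only genuinely relational step being the sequential composition hidden inside $\wpR$. First I would establish a normal form for the assumption operator: unfolding $\rimplies$, $\negr$ and $\truer$ in Definition~\ref{def:rrelop}, the body $s \rimplies \negr(t \le \trace)$ rewrites to $\healthy{R1}(s \implies \neg(t \le \trace))$. Since this predicate is already $\healthy{RR}$-healthy, mentions neither $\refu$ nor $\state'$, and is downward closed in $\trace$ (if $t \not\le \trace$ then $t \not\le \trace''$ for every prefix $\trace'' \le \trace$, by transitivity of $\le$), the outer $\healthy{CRC}$ collapses, giving $\cspin{s}{t} = \healthy{R1}(s \implies \neg(t \le \trace))$. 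Dually I would record $\negr\,\cspin{s}{t} = \healthy{R1}(s \land t \le \trace)$.

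With this normal form in hand, the four propositional laws are immediate. For \eqref{law:wpinfalse}, $\pfalse \rimplies \negr(t \le \trace)$ reduces to $\truer \lor \cdots = \truer$; for \eqref{law:wpintrue}, $\snil \le \trace$ holds universally, so $\negr(\snil \le \trace) = \false$ and the whole assumption collapses to $\false$. The two lattice laws follow from the tautologies $(s_1 \implies C) \land (s_2 \implies C) = (s_1 \lor s_2) \implies C$ and $(s_1 \implies C) \lor (s_2 \implies C) = (s_1 \land s_2) \implies C$ applied under $\healthy{R1}$, with $C$ standing for $\neg(t \le \trace)$; the characteristic swap of $\land$ and $\lor$ arises precisely because the state condition sits in the antecedent of the implication.

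The central result is \eqref{law:wpin}. I would unfold $P \wpR Q \defs \negr(P \relsemi \negr Q)$ from Definition~\ref{def:wpR}, substitute the dual normal form for $\negr\,\cspin{s_2}{t_2}$, and then compute the single sequential composition $\csppf{s_1}{\sigma}{t_1} \relsemi \healthy{R1}(s_2 \land t_2 \le \trace)$. Using the trace algebra to split the overall trace as a concatenation of the two contributions, and passing the intermediate state (which the finaliser fixes to $\sigma(\state)$) into the second relation, the composite becomes $\healthy{R1}(s_1 \land \substapp{\sigma}{s_2} \land (t_1 \cat \substapp{\sigma}{t_2}) \le \trace)$; the substitutions $\substapp{\sigma}{\cdot}$ on $s_2$ and $t_2$ are exactly the effect of routing the final state of the first relation through as the initial state of the second. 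Negating via $\negr$ and re-folding the normal form yields $\cspin{s_1 \land \substapp{\sigma}{s_2}}{t_1 \cat \substapp{\sigma}{t_2}}$, as required. Finally, \eqref{law:wpfalse} need not be proved independently: since $\false = \cspin{\ptrue}{\snil}$ by \eqref{law:wpintrue}, instantiating \eqref{law:wpin} at $s_2 = \ptrue$ and $t_2 = \snil$ and simplifying $\substapp{\sigma}{\ptrue} = \ptrue$, $\substapp{\sigma}{\snil} = \snil$ and $t_1 \cat \snil = t_1$ gives $\csppf{s_1}{\sigma}{t_1} \wpR \false = \cspin{s_1}{t_1}$.

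The main obstacle is the sequential composition in \eqref{law:wpin}: it is the one place where I must reason explicitly about the trace monoid, simultaneously decomposing $\trace$ across $\relsemi$, threading the intermediate state so that $\sigma$ materialises as the substitutions $\substapp{\sigma}{s_2}$ and $\substapp{\sigma}{t_2}$, and discharging the $\healthy{R1}$/$\healthy{CRC}$ healthiness bookkeeping so that the negated composite is genuinely a $\healthy{CRC}$-healthy assumption rather than merely an $\healthy{R1}$ relation. The prefix-closure argument underpinning the normal form is what makes this final healthiness step go through cleanly.
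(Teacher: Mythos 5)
Your proposal is correct and proceeds essentially as the paper's mechanised proof does: unfold Definitions~\ref{def:rrelop} and \ref{def:wpR}, normalise $\cspin{s}{t}$ to $\healthy{R1}(s \implies \neg(t \le \trace))$ (with the prefix-closure argument via transitivity of $\le$ discharging the $\healthy{CRC}$ obligation, and $\negr\,\cspin{s}{t} = \healthy{R1}(s \land t \le \trace)$ as the dual), and then compute the one sequential composition by splitting $\trace$ and threading the intermediate state $\sigma(\state)$, which is exactly where the substitutions $\substapp{\sigma}{s_2}$ and $\substapp{\sigma}{t_2}$ arise. The only organisational difference is that you obtain \eqref{law:wpfalse} as a corollary of \eqref{law:wpin} and \eqref{law:wpintrue} (legitimate, since $\false$ is $\healthy{RC}$-healthy and $t_1 \cat \snil = t_1$) rather than proving it directly, which is a harmless and slightly tidier arrangement.
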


We recall that $\cspin{s}{t}$ means that, if $s$ is satisfied in the current state, then the action can only perform
traces that do not have $t$ as a prefix, or else divergence will result. Law~\eqref{law:wpfalse} calculates the weakest
liberal precondition under which $\csppf{s}{\sigma}{t}$ achieves $\false$, which is impossible. Consequently, we must
require that the trace $t$ never occurs, when the state initially satisfies $s$. With this law, we can complete the
contract calculation of $b \then \Chaos$ to obtain
$$\rc{\cspin{\ptrue}{\langle b \rangle}}{\cspen{\ptrue}{\langle\rangle}{\{b\}}}{\false}$$
whose precondition assumes that event $b$ does not occur initially. Law~\eqref{law:wpin} considers the final observation
specified $\cspin{s_2}{t_2}$. If we start in a state that satisifes $s_1$ and $s_2$ with state update $\sigma$ applied,
then an upper bound on the trace is $t_1 \cat (\sigma \dagger t_2)$, which also inserts the state update.

The remaining laws are for different compositions for $\mathcal{I}$. Law~\eqref{law:wpinfalse} shows that if an
assumption's condition is $\pfalse$, then it reduces to the reactive precondition $\truer$. Conversely,
law~\eqref{law:wpintrue} shows that if the condition is $\ptrue$, but the trace is $\snil$, then this is $\false$, since
all traces are disallowed. The remaining two laws show the effect of conjunction and disjunction on assumptions sharing
a trace expression.

This completes the calculational approach for the core sequential programming operators. In the next section, we extend
our proof approach to support external choice~\cite{Hoare85,Roscoe2005}.

\section{External Choice and Productivity}
\label{sec:ext-choice}
In this section we consider external choice~\cite{Hoare85,Roscoe2005}, and characterise the class of productive
contracts~\cite{Foster17c}, which are also essential in verifying recursive and iterative reactive programs.

An external choice $P \extchoice Q$ is resolved whenever either $P$ or $Q$ engages in an event or terminates. Thus, its
semantics requires that we filter observations with a non-empty trace. We introduce healthiness condition
$\healthy{R4}(P) \defs (P \land \trace > \langle\rangle)$, whose fixed points strictly increase the trace, and its dual
$\healthy{R5}(P) \defs (P \land \trace = \langle\rangle)$ where the trace is unchanged. We use these to define indexed
external choice.
\begin{definition}[Indexed External Choice] \label{def:ext-choice} \isalink{https://github.com/isabelle-utp/utp-main/blob/07cb0c256a90bc347289b5f5d202781b536fc640/theories/sf_rdes/utp_sfrd_extchoice.thy\#L146}
\begin{align*}
& \Extchoice i \in I @ \rc{P_1(i)}{P_2(i)}{P_3(i)} \defs \\[.5ex]
& \qquad \textstyle \rc{\bigwedge_{i \in I} ~ P_1(i)}{\left(\bigwedge_{i \in I} ~ \healthy{R5}(P_2(i))\right) \lor \left(\bigvee_{i \in I} ~ \healthy{R4}(P_2(i))\right)}{\bigvee_{i \in I}{P_3(i)}}
\end{align*}
\end{definition}
\noindent This generalises the binary definition~\cite{Hoare&98,Oliveira&09}, and recasts our definition
in~\cite{Foster17c} for calculation. As we note in \S\ref{sec:circus-rc}, every $\healthy{NCSP}$ relation
  corresponds to a reactive contract, and so this definition applies to any stateful-failure reactive design. Like
nondeterministic choice, the precondition of external choice requires that all constituent preconditions are
satisfied. In the pericondition, \healthy{R4} and \healthy{R5} filter all observations. We take the conjunction of all
\ckey{R5} behaviours: no event has occurred, and all branches are offering an event. We also take the disjunction of all
\ckey{R4} behaviours: an event has occurred, and the choice is resolved. In the postcondition the choice is resolved,
either by synchronisation or termination, and so we take the disjunction of all constituent postconditions. Since
unbounded choice is covered by Definition~\ref{def:ext-choice}, we can denote indexed input prefix for any size of input
domain $A$.
$$a?x\!:\!A \then P(x) ~~~\defs~~~ \Extchoice x \in A @ a.x \then P(x)$$
We can also define the binary operator as a special case: $P \extchoice Q \defs \Extchoice X \in \{P, Q\} @ X$. We next
show how \healthy{R4} and \healthy{R5} filter the various reactive relational operators.
\begin{theorem}[Trace Filtering] \label{thm:filtering} \isalink{https://github.com/isabelle-utp/utp-main/blob/07cb0c256a90bc347289b5f5d202781b536fc640/theories/reactive/utp_rea_healths.thy\#L729}

\vspace{1ex}

\centering
$\begin{aligned}
  \textstyle\healthy{R4}\left(\bigvee_{i \in I} P(i)\right) & = \textstyle\bigvee_{i \in I} \healthy{R4}(P(i)) \\[1ex]
  \healthy{R4}(\csppf{s}{\sigma}{\langle \rangle}) & = \false \\[1ex]
  \healthy{R4}(\csppf{s}{\sigma}{\langle a, ... \rangle}) & = \csppf{s}{\sigma}{\langle a, ... \rangle}
\end{aligned}
\quad
\begin{aligned}
  \textstyle\healthy{R5}\left(\bigvee_{i \in I} P(i)\right) & = \textstyle\bigvee_{i \in I} \healthy{R5}(P(i)) \\[1ex]
  \healthy{R5}(\cspen{s}{\langle \rangle}{E}) & = \cspen{s}{\langle \rangle}{E} \\[1ex]
  \healthy{R5}(\cspen{s}{\langle a, ... \rangle}{E}) & = \false
\end{aligned}$
\end{theorem}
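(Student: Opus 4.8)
The plan is to split the six equations into two groups. The two distributivity laws are immediate: since $\healthy{R4}(P) = P \land (\trace > \langle\rangle)$ and $\healthy{R5}(P) = P \land (\trace = \langle\rangle)$ each conjoin $P$ with a predicate that constrains only the trace and is independent of the index $i$, both laws follow from distributivity of $\land$ over an arbitrary supremum $\bigvee_{i\in I}$ in the underlying Boolean quantale. No healthiness reasoning or side conditions are needed here.

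The four atomic laws reduce to elementary trace arithmetic once we know that applying a filter to one of the operators $\csppf{s}{\sigma}{t}$ or $\cspen{s}{t}{E}$ merely conjoins its trace predicate to the body inside $\healthy{CRR}$. The key lemma I would establish first is that the filters commute with $\healthy{CRR}$, i.e.\ $\healthy{R4}\circ\healthy{CRR} = \healthy{CRR}\circ\healthy{R4}$ and likewise for $\healthy{R5}$. This holds because the filter predicates $\trace > \langle\rangle$ and $\trace = \langle\rangle$ mention only $\trace = tr'-tr$: they commute with the conjunct $tr \le tr'$ supplied by $\healthy{R1}$ (mere reassociation of $\land$), with the existential quantifiers over $ok$, $wait$ and $\refu$ in $\healthy{RR}$ and $\healthy{CRR}$ (by the frame rule, since those variables do not occur in the filter), and with the normalisation $\healthy{R2}$. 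With commutation in hand, I would push the filter onto the body of each operator and finish by trace algebra. In the two cases where the operator fixes an empty trace ($\csppf{s}{\sigma}{\langle\rangle}$ under $\healthy{R4}$, and $\cspen{s}{\langle a,\ldots\rangle}{E}$ under $\healthy{R5}$, whose body fixes a non-empty trace), the added conjunct contradicts the existing trace constraint, so the body collapses and $\healthy{CRR}(\false) = \false$. In the two remaining cases the added conjunct is entailed by the existing trace constraint and is absorbed, leaving $\healthy{CRR}$ to reconstruct the original operator.

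I expect the only genuine obstacle to be the commutation with $\healthy{R2}$, because $\trace$ is not a primitive relational variable but the derived quantity $tr'-tr$ that is meaningful only under the reactive healthiness conditions. I would discharge this by showing the filter predicate is a fixed point of the $\healthy{R2}$ trace substitution $[\langle\rangle, tr'-tr/tr, tr']$, which reduces, using the trace-algebra identity $(tr'-tr)-\langle\rangle = tr'-tr$, to the observation that $tr'-tr > \langle\rangle$ (resp.\ $tr'-tr = \langle\rangle$) is invariant under that substitution; since conjunction distributes through the conditional of $\healthy{R2}$, the two then-branches then agree. Everything else — distributivity of $\land$ over suprema, the frame rule for the existentials, and the facts that $\trace = \langle\rangle$ contradicts $\trace > \langle\rangle$ while a non-empty literal trace entails $\trace > \langle\rangle$ — is routine. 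In the mechanisation these goals are closed directly by the reactive-relational simplifier, but the manual argument is exactly this commutation-plus-trace-arithmetic decomposition.
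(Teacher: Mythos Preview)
Your proposal is correct and, in fact, more detailed than anything the paper itself supplies: the paper states the theorem, gives a one-sentence informal gloss (``Both operators distribute through $\bigvee$. Relations that produce an empty trace yield $\false$ under $\healthy{R4}$\ldots''), and defers the actual proof to the Isabelle/UTP mechanisation via the accompanying link. There is no manual argument in the text to compare against.

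Your decomposition --- distributivity from $\land$ over $\bigvee$ in the Boolean quantale, then commutation of the filter conjuncts through $\healthy{CRR}$ followed by trace arithmetic --- is exactly how the mechanised proof unwinds, and you have correctly isolated the only non-routine step: showing the filter predicates are $\healthy{R2}$-invariant because $\trace = tr'-tr$ is already normalised. The remaining ingredients (frame rule for the existentials over $ok$, $wait$, $\refu$; $\healthy{CRR}(\false)=\false$; absorption/contradiction of the trace constraints) are all sound as you state them.
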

\noindent Both operators distribute through $\bigvee$. Relations that produce an empty trace yield $\false$ under
$\healthy{R4}$ and are unchanged under $\healthy{R5}$. Relations that produce a non-empty trace yield $\false$ for
$\healthy{R5}$, and are unchanged under $\healthy{R4}$. We can now filter the behaviours that do and do not resolve the
choice, as exemplified below.

\begin{example} We consider the calculation of the contract for the action
  $a\!\then\!b\!\then\!\Skip \extchoice c\!\then\!\Skip$. The left branch has two quiescent observations, one waiting
  for $a$, and one waiting for $b$ having performed $a$: its pericondition is
  $\cspen{true}{\langle \rangle}{\{a\}} \lor \cspen{true}{\langle a \rangle}{\{b\}}$.  Application of \healthy{R5} to
  this yields the first disjunct, since the trace has not increased, and application of \healthy{R4} yields the second
  disjunct. For the right branch there is one quiescent observation, $\cspen{true}{\langle \rangle}{\{c\}}$, which
  contributes an empty trace and is $\healthy{R5}$ only. The overall pericondition is
  $$(\cspen{true}{\langle \rangle}{\{a\}} \land \cspen{true}{\langle \rangle}{\{c\}}) \lor \cspen{true}{\langle a
    \rangle}{\{b\}}$$
  which is simply $\cspen{true}{\langle \rangle}{\{a, c\}} \lor \cspen{true}{\langle a \rangle}{\{b\}}$. \qed
\end{example}

\noindent By calculation, we can now prove that $(\theoryset{\healthy{NCSP}}, \extchoice, \Stop)$ forms a commutative
and idempotent monoid, and $\Chaos$, the divergent program, is its annihilator ($P \extchoice \Chaos =
\Chaos$). Sequential composition also distributes from the left and right through external choice, but only when the
choice branches are productive~\cite{Foster17c}, a notion defined below.

\begin{definition}
  A contract $\rc{P_1}{P_2}{P_3}$ is productive when $P_3$ is \healthy{R4}-healthy. \isalink{https://github.com/isabelle-utp/utp-main/blob/07cb0c256a90bc347289b5f5d202781b536fc640/theories/rea_designs/utp_rdes_productive.thy\#L156}
\end{definition}

\noindent A productive contract is one that, whenever it terminates, strictly increases the trace. For example
$a \then \Skip$ is productive, but $\Skip$ is not. Constructs that do not terminate, like $\Chaos$, are also
productive. The imposition of \healthy{R4} ensures that only final observations that increase the trace, or are $\false$,
are admitted.

We next define healthiness condition $\healthy{PCSP}$, which extends $\healthy{NCSP}$ with productivity. We also define
$\healthy{ICSP}$, which formalises instantaneous contracts where the postcondition is \healthy{R5}-healthy and the
pericondition is $\false$.

\begin{definition}[Productive and Instantaneous Healthiness Conditions] \isalink{https://github.com/isabelle-utp/utp-main/blob/f22f260d20c95337e30ee7bb4d6d16e6fda27af0/theories/rea_designs/utp_rdes_instant.thy}
  \begin{align*}
    \healthy{Productive}(P) &\defs P \mathop{\otimes} \rc{\truer}{\true}{\trace > \snil} \\
    \healthy{ISRD1}(P) &\defs P \mathop{\otimes} \rc{\truer}{\false}{\trace = \snil} \\
    \healthy{PCSP} & \defs \healthy{Productive} \circ \healthy{NCSP} \\
    \healthy{ICSP} & \defs \healthy{ISRD1} \circ \healthy{NCSP}
  \end{align*}
\end{definition}
\noindent Here, the $\otimes$ operator combines two contracts by conjoining the pre-, peri-, and postconditions, that is:
$$\rc{P_1}{P_2}{P_3} \otimes \rc{Q_1}{Q_2}{Q_3} = \rc{P_1 \land Q_1}{P_2 \land Q_2}{P_3 \land Q_3}$$ 
Healthiness condition $\healthy{Productive}$ leaves the pre- and periconditions unchanged, but conjoins the
postcondition with $\trace > \snil$ -- the trace must strictly increase. $\healthy{ISRD1}$ similarly leaves the precondition
unchanged, but coerces the pericondition to $\false$ to remove quiescent observations. The postcondition is conjoined
with $\trace = \snil$ to disallow events from occurring. We then define $\healthy{PCSP}$ and $\healthy{ICSP}$ by composing the
former two functions with $\healthy{NCSP}$. These healthiness conditions obey the following equations for reactive
contracts.
\begin{theorem}[\healthy{PCSP} and \healthy{ICSP} contracts] \label{thm:pcsp-icsp-ctr} \isalink{https://github.com/isabelle-utp/utp-main/blob/07cb0c256a90bc347289b5f5d202781b536fc640/theories/sf_rdes/utp_sfrd_healths.thy\#L43}
  \begin{align*}
    \healthy{PCSP}(\rc{P_1}{P_2}{P_3}) &= \rc{P_1}{P_2}{\healthy{R4}(P_3)} \\
    \healthy{ICSP}(\rc{P_1}{P_2}{P_3}) &= \rc{P_1}{\false}{\healthy{R5}(P_3)})
  \end{align*}
\end{theorem}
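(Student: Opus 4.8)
The plan is to reduce each composite healthiness condition to its component-wise action on a contract triple, after which only a few Boolean simplifications remain. Throughout I assume, as is standard here, that the argument $\rc{P_1}{P_2}{P_3}$ is already $\healthy{NCSP}$-healthy --- equivalently, that $P_1$, $P_2$, $P_3$ satisfy the four conditions of Theorem~\ref{thm:ncsp-intro}. Since $\healthy{PCSP} = \healthy{Productive} \circ \healthy{NCSP}$ and $\healthy{ICSP} = \healthy{ISRD1} \circ \healthy{NCSP}$, the inner $\healthy{NCSP}$ then acts as the identity, and the problem collapses to computing $\healthy{Productive}(\rc{P_1}{P_2}{P_3})$ and $\healthy{ISRD1}(\rc{P_1}{P_2}{P_3})$.

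For $\healthy{PCSP}$, I would unfold $\healthy{Productive}(P) = P \otimes \rc{\truer}{\true}{\trace > \snil}$ and apply the component-wise law for $\otimes$ stated above to obtain $\rc{P_1 \land \truer}{P_2 \land \true}{P_3 \land (\trace > \snil)}$. Three local simplifications then finish the job: $P_1 \land \truer = P_1$, because a precondition of an $\healthy{NCSP}$ contract is $\healthy{CRC}$-healthy and hence $\healthy{R1}$-healthy, so it already entails $tr \le tr'$; $P_2 \land \true = P_2$, since $\true$ is the top element for $\land$; and $P_3 \land (\trace > \snil) = \healthy{R4}(P_3)$, directly by the definition of $\healthy{R4}$. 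This gives the stated $\rc{P_1}{P_2}{\healthy{R4}(P_3)}$. The $\healthy{ICSP}$ case has the same shape: unfolding $\healthy{ISRD1}(P) = P \otimes \rc{\truer}{\false}{\trace = \snil}$ and applying $\otimes$ yields $\rc{P_1 \land \truer}{P_2 \land \false}{P_3 \land (\trace = \snil)}$, where the precondition simplifies as before, the pericondition collapses by $P_2 \land \false = \false$, and the postcondition becomes $\healthy{R5}(P_3)$ by the definition of $\healthy{R5}$, delivering $\rc{P_1}{\false}{\healthy{R5}(P_3)}$.

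The genuine content lies not in this arithmetic but in the structural facts it rests on, which is where I expect the main difficulty. The key appeal is to the component-wise law for $\otimes$ itself: one must know that relational conjunction of the two reactive designs factors exactly through the triple, a fact of the underlying reactive-design theory rather than of surface syntax. With that in place, the pericondition simplification is robust --- whether the auxiliary pericondition contributes $\true$ or its healthy image $\truer$, conjoining with the $\healthy{R1}$-healthy $P_2$ returns $P_2$ --- and the precondition simplification follows identically from $\healthy{R1}$-healthiness of $P_1$. The one remaining obligation, needed for the right-hand sides to be legitimate $\healthy{NCSP}$ contracts, is a small closure lemma: that $\healthy{R4}(P_3)$ and $\healthy{R5}(P_3)$ stay $\healthy{CRF}$-healthy whenever $P_3$ is. This holds because $\trace > \snil$ and $\trace = \snil$ neither mention $\refu'$ nor disturb $\healthy{R1}$ or $\healthy{R2}$. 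With these in hand, every remaining equality is Boolean or definitional, and the result follows by calculation.
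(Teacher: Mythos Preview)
Your proposal is correct and follows the natural calculational route that the paper implicitly relies on: the paper states the component-wise law for $\otimes$ immediately before defining $\healthy{PCSP}$ and $\healthy{ICSP}$, and the theorem is then a direct unfolding plus Boolean simplification, with the proof itself delegated to the Isabelle mechanisation rather than written out in the text. Your identification of the only nontrivial ingredients --- the $\otimes$ distribution law, $\healthy{R1}$-healthiness of $P_1$ and $P_2$ for the unit simplifications, and $\healthy{CRF}$-closure of $\healthy{R4}$ and $\healthy{R5}$ --- is accurate and matches what the mechanised proof must discharge.
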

\noindent Application of \healthy{PCSP} to a reactive contract is equivalent to applying \healthy{R4} to its
postcondition. Application of \healthy{ICSP} to a reactive contract makes the pericondition $\false$, and applies
\healthy{R5} to its postcondition, meaning it can contribute no events. Both $\Skip$ and $x := v$ are
$\healthy{ICSP}$-healthy as they do not contribute to the trace and have no intermediate observations. As shown
below, we can also prove that $\Chaos$ is a right annihilator for \healthy{ICSP} reactive contracts with a feasible
postcondition.

\begin{theorem}[\healthy{ICSP} Annihilator] \label{thm:icsp-anhil} If $\rc{P_1}{P_2}{P_3}$ is $\healthy{ICSP}$-healthy and
  $P_3 \wpR \false = \false$, that is, $P_3$ is feasible, then $\rc{P_1}{P_2}{P_3} \relsemi \Chaos = \Chaos$.
\end{theorem}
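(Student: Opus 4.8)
The plan is to reduce the statement to a direct calculation with the reactive contract composition law, Theorem~\ref{thm:rc-comp}-\eqref{thm:rc-seq}. First I would unfold $\Chaos = \rc{\false}{\false}{\false}$ and instantiate that law with $Q_1 = Q_2 = Q_3 = \false$, obtaining
$$\rc{P_1}{P_2}{P_3} \relsemi \Chaos ~=~ \rc{P_1 \land (P_3 \wpR \false)}{P_2 \lor (P_3 \relsemi \false)}{P_3 \relsemi \false}.$$
It then suffices to show that each of the three component relations collapses to $\false$, since the right-hand side is then $\rc{\false}{\false}{\false}$, which is $\Chaos$ by definition.

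For the precondition, the feasibility hypothesis applies immediately: $P_3 \wpR \false = \false$, so $P_1 \land (P_3 \wpR \false) = P_1 \land \false = \false$. For the pericondition, $\healthy{ICSP}$-healthiness is the decisive ingredient. By Theorem~\ref{thm:pcsp-icsp-ctr}, an $\healthy{ICSP}$-healthy contract equals $\rc{P_1}{\false}{\healthy{R5}(P_3)}$, so I may take its pericondition to be $\false$; combined with $P_3 \relsemi \false = \false$, the pericondition reduces to $\false \lor \false = \false$. The postcondition $P_3 \relsemi \false$ is dealt with by the same annihilation. Assembling the three components gives $\rc{\false}{\false}{\false} = \Chaos$.

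The one step that needs genuine care --- and the place where I expect the subtlety to lie --- is the right annihilation $P_3 \relsemi \false = \false$. The weak Kleene algebra on $\theoryset{\healthy{RR}}$ guarantees only that $\false$ is a \emph{left} annihilator, so right annihilation cannot be read off from the Kleene structure. Instead I would justify it from the ambient Boolean quantale of homogeneous relations, in which $\false$ annihilates $\relsemi$ on both sides, together with the fact that $\false$ is $\healthy{RR}$-healthy so that the result remains in the reactive relational theory. Alongside this I would check the side conditions of the machinery invoked: that $\false$ is $\healthy{RC}$-healthy, so that $P_3 \wpR \false$ is well-formed per Definition~\ref{def:wpR}, and that $P_3$ is $\healthy{RR}$-healthy, which holds because the contract is $\healthy{NCSP}$-healthy and hence $P_3$ is $\healthy{CRF}$-healthy by Theorem~\ref{thm:ncsp-intro}. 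Once these are in place the calculation is routine, and notably no use of the $\healthy{R5}$-healthiness of $P_3$ is required beyond its role in witnessing $\healthy{ICSP}$-healthiness.
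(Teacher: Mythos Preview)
Your proposal is correct and takes essentially the same approach as the paper's proof: both unfold $\Chaos$, invoke the sequential composition law (Theorem~\ref{thm:rc-comp}-\eqref{thm:rc-seq}), use $\healthy{ICSP}$-healthiness via Theorem~\ref{thm:pcsp-icsp-ctr} to kill the pericondition, and apply the feasibility hypothesis to collapse the precondition. The only cosmetic difference is that the paper applies $\healthy{ICSP}$-healthiness \emph{before} composing (rewriting the contract to $\rc{P_1}{\false}{P_3}$ first), whereas you apply it afterward; your more explicit discussion of why $P_3 \relsemi \false = \false$ holds is a welcome clarification of a step the paper leaves implicit.
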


\begin{proof}
  \begin{align*}
    \rc{P_1}{P_2}{P_3} \relsemi \Chaos &= \rc{P_1}{\false}{P_3} \relsemi \rc{\false}{\false}{\false} & \text{[Thm.~\ref{thm:pcsp-icsp-ctr}, $\Chaos$ definition]} \\
    &= \rc{P_1 \land P_3 \wpR \false}{\false}{\false} & \text{[Thm.~\ref{thm:rc-comp}]}\\
    &= \rc{\false}{\false}{\false} & \text{[Feasibility]} \\
    &= \Chaos & \qedhere
  \end{align*}
\end{proof}

\noindent This follows essentially because $\Chaos$ annihilates any postcondition, and an $\healthy{ICSP}$-healthy
contract already has a $\false$ pericondition. So, if the postcondition $P_3$ is feasible, then the precondition of the
overall contract also reduces to $\false$. An example of a feasible postcondition is any assignment to variables, such
as $\csppf{\ptrue}{\sigma}{\snil}$ (see Theorem~\ref{thm:evwp}), and consequently it is straightforward to show that
$\assignsC{\sigma} \relsemi \Chaos = \Chaos$.

We can also use $\healthy{PCSP}$ and $\healthy{ICSP}$ to prove the following laws of external choice.

\begin{theorem}[External Choice Distributivity] \label{thm:ext-choice-distrib} \isalink{https://github.com/isabelle-utp/utp-main/blob/07cb0c256a90bc347289b5f5d202781b536fc640/theories/sf_rdes/utp_sfrd_extchoice.thy\#L669}
  \begin{align*}
    (\Extchoice i\!\in\!I @ P(i)) \relsemi Q &~=~ \Extchoice i\!\in\!I @ (P(i) \relsemi Q) & ~~[\text{if, } \forall i\!\in\!I, P(i) \text{ is } \healthy{PCSP} \text{ healthy}] \\[.3ex]
    P \relsemi (\Extchoice i\!\in\!I @ Q(i)) &~=~ \Extchoice i\!\in\!I @ (P \relsemi Q(i)) & ~~[\text{if } P \text{ is } \healthy{ICSP} \text{ healthy}]
  \end{align*}
\end{theorem}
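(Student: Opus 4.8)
The plan is to prove each equation by reducing both sides to reactive-contract triples via Definition~\ref{def:ext-choice} and Theorem~\thmeqref{thm:rc-comp}{thm:rc-seq}, and then comparing the precondition, pericondition, and postcondition componentwise. I write $P(i) = \rc{P_1(i)}{P_2(i)}{P_3(i)}$ (and dually for the $Q(i)$), and use the normal forms of Theorem~\ref{thm:pcsp-icsp-ctr} to expose the $\healthy{R4}$- resp.\ $\healthy{R5}$-healthiness of the postconditions carried by the hypotheses. In both laws the precondition and postcondition obligations are routine: $\relsemi$ distributes over $\bigvee$ from the left, so the postcondition reduces to the same disjunction of componentwise compositions on each side; and the predicate-transformer laws for $\wpR$ (namely $(\bigvee_i P(i)) \wpR R = \bigwedge_i (P(i)\wpR R)$, together with its dual $R \wpR (\bigwedge_i S(i)) = \bigwedge_i (R \wpR S(i))$) turn the choice over $I$ into a conjunction, so the precondition likewise agrees. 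The real content is therefore the pericondition.

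For the first (right) law, after expansion the left-hand pericondition is $(\bigwedge_i \healthy{R5}(P_2(i))) \lor (\bigvee_i \healthy{R4}(P_2(i))) \lor ((\bigvee_i P_3(i)) \relsemi Q_2)$. The key observation is that productivity ($\healthy{R4}(P_3(i)) = P_3(i)$) makes each $P_3(i) \relsemi Q_2$ strictly increase the trace, hence $\healthy{R4}$-healthy; so $\healthy{R5}(P_3(i) \relsemi Q_2) = \false$ while $\healthy{R4}(P_3(i) \relsemi Q_2) = P_3(i)\relsemi Q_2$. I would then expand the right-hand side, whose pericondition is $(\bigwedge_i \healthy{R5}(P_2(i) \lor (P_3(i)\relsemi Q_2))) \lor (\bigvee_i \healthy{R4}(P_2(i) \lor (P_3(i)\relsemi Q_2)))$, and push $\healthy{R4}$ and $\healthy{R5}$ through the disjunctions using Theorem~\ref{thm:filtering}. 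The $\healthy{R5}$ summand collapses to $\bigwedge_i \healthy{R5}(P_2(i))$ and the $\healthy{R4}$ summand to $(\bigvee_i \healthy{R4}(P_2(i))) \lor (\bigvee_i (P_3(i)\relsemi Q_2))$, and since $\relsemi$ distributes over $\bigvee$ the two sides coincide. This direction is clean precisely because the composed term $P_3(i)\relsemi Q_2$ lands in the freely distributing $\bigvee$/$\healthy{R4}$ component.

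For the second (left) law, $P$ is $\healthy{ICSP}$-healthy, so by Theorem~\ref{thm:pcsp-icsp-ctr} it has a $\false$ pericondition and an $\healthy{R5}$-healthy postcondition $P_3$; in particular $P_3$ leaves the trace unchanged. The first step is a pair of commutation lemmas that exploit exactly this trace-preservation: for any $\healthy{R5}$-healthy $P_3$, $\healthy{R4}(P_3 \relsemi R) = P_3 \relsemi \healthy{R4}(R)$ and $\healthy{R5}(P_3 \relsemi R) = P_3 \relsemi \healthy{R5}(R)$, because the intermediate trace equals the initial one and so the whole composition inherits its trace from $R$. Using these, the left-hand pericondition $P_3 \relsemi ((\bigwedge_i \healthy{R5}(Q_2(i))) \lor (\bigvee_i \healthy{R4}(Q_2(i))))$ and the right-hand pericondition $(\bigwedge_i \healthy{R5}(P_3 \relsemi Q_2(i))) \lor (\bigvee_i \healthy{R4}(P_3 \relsemi Q_2(i)))$ agree on their $\healthy{R4}$/$\bigvee$ parts immediately, again by left-distribution of $\relsemi$ over $\bigvee$.

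The main obstacle is the $\healthy{R5}$/$\bigwedge$ part, where I must establish $P_3 \relsemi (\bigwedge_i \healthy{R5}(Q_2(i))) = \bigwedge_i (P_3 \relsemi \healthy{R5}(Q_2(i)))$, i.e.\ that the instantaneous postcondition distributes through a conjunction under sequential composition. This is the delicate step, since $\relsemi$ does \emph{not} in general distribute over $\bigwedge$: a nondeterministic $P_3$ could satisfy the different conjuncts via different intermediate states, which the right-hand side permits but the left-hand side forbids. I therefore expect to need a functionality lemma showing that the instantaneous postcondition behaves as a (partial) function on the state --- as enjoyed by the intended instances $\Skip$ and assignments $x := v$ --- so that the intermediate state is uniquely determined and the two readings of the conjunction coincide. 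Establishing this distribution, and pinning down the precise determinacy requirement on $P$ under which it holds, is where the essential use of the $\healthy{ICSP}$ hypothesis lies and is the crux of the whole proof.
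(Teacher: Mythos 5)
Your treatment of the first law is correct and is essentially the argument behind the paper's (mechanised) proof, whose prose intuition is exactly your observation: productivity makes each composed term $P_3(i) \relsemi Q_2$ strictly trace-increasing ($\healthy{R4}$ of $P_3(i)$ composed with $\healthy{R1}$ of $Q_2$), so it is annihilated by $\healthy{R5}$ and absorbed by $\healthy{R4}$, after which Theorem~\ref{thm:filtering} and distribution of $\relsemi$ over $\bigvee$ make the two periconditions coincide. The pre- and postcondition obligations are routine as you say; note only that the conjunctivity law $R \wpR \bigwedge_{i} S(i) = \bigwedge_{i} (R \wpR S(i))$ you invoke for the second law is not among the $\wpR$ laws listed in the paper, but it does follow directly from Definition~\ref{def:wpR} by de Morgan in the reactive Boolean algebra together with left-distribution of $\relsemi$ over $\bigvee$, so that step is fine. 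Your commutation lemmas pushing $\healthy{R4}$ and $\healthy{R5}$ past an $\healthy{R5}$-healthy prefix are likewise sound, since the intermediate trace equals the initial one.

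The genuine gap is in the second law, and it is exactly where you stopped: the proposal does not prove the distribution $P_3 \relsemi \bigwedge_{i} \healthy{R5}(Q_2(i)) = \bigwedge_{i} (P_3 \relsemi \healthy{R5}(Q_2(i)))$, and your suspicion that $\healthy{ICSP}$ alone does not deliver it is justified, because $\healthy{ICSP}$ as defined in the paper admits nondeterministic instantaneous postconditions. Concretely, $P \defs (x := 1 \sqcap x := 2)$ is $\healthy{ICSP}$-healthy, and taking $Q(1) \defs (x = 1) \guard a \then \Skip$ and $Q(2) \defs (x = 2) \guard b \then \Skip$, calculation with Theorems~\ref{thm:rc-comp} and \ref{thm:crel-comp} gives the pericondition of $P \relsemi (Q(1) \extchoice Q(2))$ as $\cspens{\snil}{\{a\}} \lor \cspens{\snil}{\{b\}}$, whereas that of $(P \relsemi Q(1)) \extchoice (P \relsemi Q(2))$ acquires the additional disjunct $\cspens{\snil}{\emptyset}$, i.e.\ $\trace = \snil$ with $\refu'$ unconstrained: each branch can resolve its own copy of the internal choice the wrong way and deadlock, so the right-hand side admits refusal of $\{a,b\}$ at the empty trace while the left-hand side does not. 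Hence the ``functionality lemma'' you anticipate is not a technical convenience but a necessity: the distribution (and with it the law) goes through when the instantaneous postcondition is a single deterministic update $\csppf{s}{\sigma}{\snil}$, for which composition acts by substitution (Theorem~\ref{thm:crel-comp}) and substitution distributes through $\land$ --- covering $\Skip$ and $\assignsC{\sigma}$, the intended instances --- but it fails for disjunctions of such terms. So your proof of the second law is incomplete and cannot be completed from the hypothesis as you (and the paper's prose) state it; a determinacy side-condition on the $\healthy{ICSP}$ prefix, of the kind your final paragraph gestures at, must be made explicit and discharged for the argument to close.
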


\noindent The first law follows because every $P(i)$, being productive, must resolve the choice before terminating, and
thus it is not possible to reach $Q$ before this occurs. It generalises the standard guarded choice distribution law for
CSP~\cite[page 211]{Hoare&98}. The second law follows for the converse reason: since $P$ cannot resolve the choice with
any of its behaviour, it is safe to execute it first. Productivity also forms an important criterion for guarded
recursion that we use in \S\ref{sec:iter} to calculate fixed points.

$\healthy{PCSP}$ is closed under several operators.

\begin{theorem}[Productive Constructions] \isalink{https://github.com/isabelle-utp/utp-main/blob/07cb0c256a90bc347289b5f5d202781b536fc640/theories/sf_rdes/utp_sfrd_prog.thy\#L654}
   \begin{itemize}
     \item $\Miracle$, $\Chaos$, $\Stop$, and $\ckey{Do}(a)$ are all \healthy{PCSP} healthy;
     \item $b \guard P$ is \healthy{PCSP} if $P$ is \healthy{PCSP};
     \item $P \relsemi Q$ is \healthy{PCSP} if either $P$ or $Q$ is \healthy{PCSP};
     \item $\Intchoice i \in I @ P(i)$ is \healthy{PCSP} if, for all $i \in I$, $P(i)$ is \healthy{PCSP};
     \item $\Extchoice i \in I @ P(i)$ is \healthy{PCSP} if, for all $i \in I$, $P(i)$ is \healthy{PCSP}.
   \end{itemize}
\end{theorem}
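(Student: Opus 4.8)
The plan is to reduce each of the five closure claims to a single productivity obligation on the postcondition, and to discharge that obligation using the composition laws already available in this section. Since $\healthy{PCSP} = \healthy{Productive} \circ \healthy{NCSP}$, and each listed operator is closed in $\theoryset{\healthy{NCSP}}$ (sequential composition and internal choice because $\healthy{NCSP}$ is a continuous Kleene UTP theory, external choice and the guard from the monoid and conditional results of this and the preceding section), every construction $X$ under consideration already satisfies $\healthy{NCSP}(X) = X$. Consequently $\healthy{PCSP}(X) = \healthy{Productive}(X)$, and by Theorem~\ref{thm:pcsp-icsp-ctr} this equals $\rc{X_1}{X_2}{\healthy{R4}(X_3)}$, where $\rc{X_1}{X_2}{X_3}$ is the calculated contract of $X$. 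Hence, by the definition of productivity, $X$ is $\healthy{PCSP}$-healthy exactly when its postcondition $X_3$ is $\healthy{R4}$-healthy. It therefore suffices, case by case, to compute $X_3$ via the contract-composition laws and verify that $\healthy{R4}$ fixes it.

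For the atomic operators I would read off the postconditions from Definition~\ref{thm:bcircus-def} and the definitions of $\Miracle$ and $\Chaos$: those of $\Miracle$, $\Chaos$, and $\Stop$ are all $\false$, and $\healthy{R4}(\false) = \false$; the postcondition of $\ckey{Do}(a)$ is $\csppf{\ptrue}{id}{\langle a \rangle}$, whose trace is non-empty, so Theorem~\ref{thm:filtering} gives $\healthy{R4}(\csppf{\ptrue}{id}{\langle a \rangle}) = \csppf{\ptrue}{id}{\langle a \rangle}$. For internal choice, Theorem~\ref{thm:rc-comp}-\eqref{thm:rc-choice} yields the postcondition $\bigvee_{i \in I} P_3(i)$, and Definition~\ref{def:ext-choice} gives exactly the same postcondition for external choice. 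Since $\healthy{R4}$ distributes through $\bigvee$ (Theorem~\ref{thm:filtering}) and each $P_3(i)$ is $\healthy{R4}$-healthy by hypothesis, the disjunction is $\healthy{R4}$-healthy; thus the two choice cases follow from one identical argument.

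For the guard I would rewrite $b \guard P$ as $\rconditional{P}{b}{\Stop}$ and apply the conditional law of Theorem~\ref{thm:rc-comp}, so that the postcondition is $\rconditional{P_3}{b}{\false}$. Because the reactive condition $b$ constrains only the state while $\healthy{R4}$ constrains only the trace, $\healthy{R4}$ distributes through the state conditional, i.e. $\healthy{R4}(\rconditional{A}{b}{B}) = \rconditional{\healthy{R4}(A)}{b}{\healthy{R4}(B)}$; with $P_3$ $\healthy{R4}$-healthy and $\healthy{R4}(\false) = \false$, this makes the whole postcondition $\healthy{R4}$-healthy. The sequential case is the only one needing genuine trace reasoning: by Theorem~\ref{thm:rc-comp}-\eqref{thm:rc-seq} the postcondition of $P \relsemi Q$ is $P_3 \relsemi Q_3$, so I would establish the supporting lemma that $\healthy{R4}$ is preserved by $\relsemi$ whenever either operand is $\healthy{R4}$-healthy, that is, both $\healthy{R4}(P_3) \relsemi Q_3$ and $P_3 \relsemi \healthy{R4}(Q_3)$ are $\healthy{R4}$-healthy.

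This lemma is where I expect the main obstacle to lie. To prove it I would unfold $\relsemi$ relationally over the shared intermediate trace and use $\healthy{R2}$ to express the composite trace contribution as the concatenation $\trace_P \cat \trace_Q$ of the two component contributions. The claim then reduces to a property of the trace algebra: since $\snil$ is the identity of $\cat$ and concatenation is monotone and cancellative, $\trace_P \cat \trace_Q > \snil$ holds whenever $\trace_P > \snil$ or $\trace_Q > \snil$. Hence a strict increase contributed by either operand forces a strict increase of the composite trace, which is precisely $\healthy{R4}$-healthiness of $P_3 \relsemi Q_3$. Assembling these facts through the contract-composition laws then discharges all five bullets, with the sequential-composition lemma supplying the only step beyond the routine distribution of $\healthy{R4}$.
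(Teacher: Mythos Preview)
Your proposal is correct and follows precisely the line of argument the paper implicitly relies on: the paper does not spell out a proof here, deferring instead to the Isabelle mechanisation, but the mechanised proof proceeds exactly by reducing each case to $\healthy{R4}$-healthiness of the computed postcondition via Theorem~\ref{thm:pcsp-icsp-ctr} and the contract-composition laws, just as you do. Your identification of the sequential case as the only one requiring genuine trace-algebra reasoning (strict growth of a concatenation from strict growth of either summand) is also the crux in the mechanisation.
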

\noindent With these results, calculation of contracts for external choice is supported, and a notion of
productivity, with relevant laws, is defined. In the next section we use the latter for calculation of contracts for
while-loops.

\section{While Loops and Reactive Invariants}
\label{sec:iter}
In this section, we introduce a useful pattern for reasoning about iterative reactive programs with potentially
non-terminating behaviour. As indicated in \S\ref{sec:circus-rc}, the healthiness condition $\healthy{NCSP}$ is both
idempotent and continuous, and consequently our theory of stateful-failure reactive designs forms a complete
lattice~\cite{Foster17c}. As for $\healthy{NSRD}$, the top of the lattice is $\Miracle$, and the bottom is
$\Chaos$. Through the Knaster-Tarski theorem~\cite{Tarski55}, we also obtain operators for constructing both weakest
($\mu X @ F(X)$) and strongest ($\nu X @ F(X)$) fixed-points. Iterative programs can be constructed using the reactive
while loop, defined below.
\begin{definition}[Reactive While Loop] \label{def:reawhile}
  $\while{b}{P} ~~\defs~~ (\mu X @ \rconditional{P \relsemi X}{b}{\Skip}).$
\end{definition}
\noindent We use the weakest fixed-point so that an infinite loop with no observable activity corresponds to the
divergent action $\Chaos$, rather than $\Miracle$, which represents infeasible behaviour. The weakest fixed-point is
governed principally by the following standard theorems~\cite{Hoare&98}.

\begin{theorem}[Weakest Fixed-Point] \label{thm:wfp}
  \begin{align*}
    F(P) \refinedby P \implies (\mu X @ F(X)) & \refinedby P \\
    (\forall X @ F(X) \refinedby X \implies P \refinedby X) \implies P & \refinedby (\mu X @ F(X))
  \end{align*}
\end{theorem}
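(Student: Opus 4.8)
The plan is to recognise both laws as the two halves of the universal property of the infimum (greatest lower bound) in the complete lattice $([\src]\hrel, \refinedby)$, which the excerpt has already established: the first part of the Boolean-quantale theorem gives the complete lattice, and the Knaster--Tarski argument shows that $\theoryset{\healthy{NCSP}}$ is itself a complete lattice under $\refinedby$. The one standing assumption I need is that $F$ is monotonic with respect to $\refinedby$. For the reactive while loop this is immediate, since $F(X) = \rconditional{P \relsemi X}{b}{\Skip}$ is built from $\relsemi$ (monotone, as it distributes over $\bigvee$) and the conditional, both monotone in $X$. Monotonicity lets me take the weakest fixed-point operator in its standard UTP form $\mu X @ F(X) = \bigvee \{\, X \mid F(X) \refinedby X \,\}$, namely the infimum, with respect to $\refinedby$, of the set $S$ of pre-fixed points of $F$.

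With this unfolding, both laws are immediate lattice facts and no explicit fixed-point computation is needed. For the first law I assume $F(P) \refinedby P$, so that $P \in S$. As $\bigvee S$ is a lower bound of $S$ in the refinement order, $\bigvee S \refinedby Q$ holds for every $Q \in S$; instantiating $Q := P$ gives $(\mu X @ F(X)) = \bigvee S \refinedby P$, the required conclusion.

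For the second law I read the antecedent, $\forall X @ (F(X) \refinedby X) \implies (P \refinedby X)$, as saying that $P$ is a lower bound of $S$: it satisfies $P \refinedby X$ for every pre-fixed point $X$. Since $\bigvee S$ is by definition the \emph{greatest} lower bound of $S$, every lower bound of $S$ lies below it in the refinement order, and so $P \refinedby \bigvee S = (\mu X @ F(X))$, as required.

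I do not anticipate any substantive obstacle: the real mathematical work --- that $\theoryset{\healthy{NCSP}}$, and indeed the image of any monotone healthiness condition, is a complete lattice under $\refinedby$, so that $\bigvee$ and $\mu$ exist and enjoy the universal property used above --- has already been discharged earlier in the development. The only point demanding care is the inverted UTP ordering: one must keep straight that $\refinedby$ is the lattice order, that $\bigvee$ is its infimum, and hence that the ``weakest/least'' fixed-point is the infimum of the \emph{pre-fixed} points rather than of the fixed points. Once the definition is unfolded in this orientation, each law is a single appeal to the defining property of the greatest lower bound; in Isabelle/UTP both are obtained as instances of the generic complete-lattice fixed-point lemmas transported along $\refinedby$.
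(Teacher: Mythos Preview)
Your proposal is correct and matches the paper's own treatment: the paper does not give a detailed proof but cites these as standard UTP theorems from \cite{Hoare&98} and immediately remarks that they ``demonstrate that $\mu X @ F(X)$ is the greatest lower bound of the prefixed points of $F$, that is $\bigsqcap \{X ~|~ F(X) \refinedby X\}$'', which is exactly the unfolding you perform. One small refinement: monotonicity of $F$ is not actually required for the two displayed implications themselves --- once $\mu F$ is \emph{defined} as the infimum of the pre-fixed points, both laws are pure lattice facts regardless of $F$; monotonicity is what guarantees (via Knaster--Tarski) that this infimum is genuinely a fixed point, which is a separate matter.
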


\noindent These theorems demonstrate that $\mu X @ F(X)$ is the greatest lower bound of the prefixed points of $F$, that is
$\bigsqcap \{X ~|~ F(X) \refinedby X\}$. We can therefore deduce that $(\mu X @ X) = \Chaos$, whereas in contrast
$(\nu X @ X) = \Miracle$. Similarly, we can calculate that $(\while{\ptrue}{~x := x + 1~}) = \Chaos$.

\begin{proof}
  Since $\Chaos$ is the bottom of the lattice, it suffices to show that
  $(\while{\ptrue}{~x := x + 1~}) \refinedby \Chaos$. We first note that $x := x + 1 \relsemi \Chaos = \Chaos$ by
  Theorem~\ref{thm:icsp-anhil}. We conclude with the following calculation.
\begin{align*}
  \while{\ptrue}{~x := x + 1~}
  &= (\mu X @ \rconditional{x := x + 1 \relsemi X}{\ptrue}{\Skip}) & \text{[Def.~\ref{def:reawhile}]} \\
  &= (\mu X @ x := x + 1 \relsemi X) & \text{[Conditional]} \\
  &\refinedby \Chaos & \text{[Thm.~\ref{thm:wfp}]} & \qedhere
\end{align*}
\end{proof}

\noindent The issue here is not simply the $\ptrue$ loop condition. Unlike its imperative counterpart, the reactive
while loop pauses for interaction with its environment, and therefore infinite executions are observable and potentially
useful. The problem is that the body of this infinite loop also produces no events, and therefore all observations reduce to
$\Chaos$. In general, to prove a loop refinement we first need to find a suitable variant function that demonstrates
termination~\cite{Mor1996}. For purely stateful interactions, it is usually necessary that this variant is manually
created. However, for productive reactive behaviour, where the body of a loop is guarded by events, we can identify a
variant automatically using Hoare and He's fixed-point theorem~\cite{Hoare&98}, as explained next.

A fixed-point ($\mu X @ F(X)$) is guarded provided at least one event is contributed to the trace by $F$ prior to it
reaching $X$. For instance, $\mu X @ a \then X$ is guarded, but $\mu X @ y := 1 \relsemi X$ is not. An unguarded
recursive action has the potential to introduce divergence, where neither an event nor termination is observable. Hoare
and He's fixed-point theorem~\cite[theorem 8.1.13, page 206]{Hoare&98} states that if $F$ is guarded, then there is a
unique fixed-point of $F$ and hence $(\mu X @ F(X)) = (\nu X @ F(X))$. This effectively shows that the loop is
productive, and so even though it is infinite we can observe any finite sequence of interactions. The reduction of $\mu$
to $\nu$ means that, provided $F$ is also continuous, we can invoke Kleene's fixed-point theorem~\cite{Lassez82} to
calculate $\nu F$. Our previous result~\cite{Foster17c} shows that if $P$ is productive, then $\lambda X @ P \relsemi X$
is guarded, and so we can calculate its fixed-point. We now generalise this for the function given in
Definition~\ref{def:reawhile}.

\begin{theorem} If $P$ is productive, then $(\mu X @ \rconditional{P \relsemi X}{b}{\Skip})$ is guarded. \isalink{https://github.com/isabelle-utp/utp-main/blob/07cb0c256a90bc347289b5f5d202781b536fc640/theories/rea_designs/utp_rdes_guarded.thy\#L149}
\end{theorem}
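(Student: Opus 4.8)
The plan is to prove guardedness compositionally, by decomposing the function $F \defs \lambda X @ \rconditional{P \relsemi X}{b}{\Skip}$ into its two conditional branches and recombining them through a closure law. The essential observation is that the recursion variable $X$ occurs only in the then-branch, $P \relsemi X$, where it sits behind the productive action $P$; in the else-branch, $\Skip$, the variable $X$ does not occur at all. Thus every occurrence of $X$ in $F(X)$ is separated from the initial observation by at least the events contributed by $P$, which is exactly what guardedness demands.

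First I would invoke the prior result of~\cite{Foster17c}: productivity of $P$ guarantees that $\lambda X @ P \relsemi X$ is guarded, since $P$ must strictly extend the trace (its postcondition is $\healthy{R4}$-healthy) before control can reach $X$. Second, I would observe that the constant function $\lambda X @ \Skip$ is vacuously guarded, because $X$ never appears; a constant function is the degenerate case of guardedness, and trivially satisfies any condition phrased as ``at least one event precedes every occurrence of $X$''. These two facts supply the guarded building blocks.

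The key remaining step is a closure law: if $G_1$ and $G_2$ are guarded, then so is $\lambda X @ \rconditional{G_1(X)}{b}{G_2(X)}$. Unfolding the reactive conditional as $(G_1(X) \land b) \lor (\neg b \land G_2(X))$, where $b$ constrains only the initial state, I would argue that the state-only condition $b$ merely partitions the behaviours at the outset and introduces no dependence on $X$ before an event occurs. Guardedness is therefore preserved, because the underlying property of $G_1$ and $G_2$ distributes through disjunction and through conjunction with a state predicate that mentions neither the trace nor $X$. Applying this law with $G_1 \defs \lambda X @ P \relsemi X$ and $G_2 \defs \lambda X @ \Skip$ then yields guardedness of $F$, which is the claim.

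I expect the main obstacle to be this closure law for the conditional. Its proof depends on the precise semantic characterisation of guardedness used in the reactive-design theory --- whether phrased via truncation of the trace to a bounded length or as an order-theoretic contraction --- and requires verifying that this characterisation is compatible with the construction of the reactive conditional from $\land$, $\lor$, and the state condition $b$. Since $b$ depends on no observational variable other than the initial state, and in particular not on the trace, this compatibility ought to hold; but establishing it rigorously, rather than the handling of the two branches, is where the careful work lies.
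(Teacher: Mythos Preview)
Your proposal is correct and follows essentially the same approach as the paper's proof: the paper likewise appeals to the prior result that productivity of $P$ makes $\lambda X @ P \relsemi X$ guarded, observes that a constant function (here $\lambda X @ \Skip$) is guarded since $X$ is not mentioned, and then invokes the closure law that the conditional $\lambda X @ \rconditional{F(X)}{b}{G(X)}$ is guarded whenever $F$ and $G$ are. Your identification of the conditional closure law as the step requiring the most care is apt, though the paper simply states it as a property and defers the details to the mechanisation.
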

\begin{proof} In addition to our previous theorem~\cite{Foster17c}, we use the following properties:
  \vspace{-1ex}
  \begin{itemize}
    \item If $X$ is not mentioned in $P$ then $\lambda X @ P$ is guarded;
    \item If $F$ and $G$ are both guarded, then $\lambda X @ \rconditional{F(X)}{b}{G(X)}$ is guarded. \qedhere
  \end{itemize} 
\end{proof}
\noindent This allows us to convert the fixed-point into an iterative form. In particular, we can prove the following
theorem that expresses it in terms of the Kleene star.
\begin{theorem} If $P$ is \healthy{PCSP} healthy then \label{thm:whilekleene}
  $\while{b}{P} = (\Casm{b} \relsemi P)\bm{\star} \relsemi \Casm{\neg b}$. \isalink{https://github.com/isabelle-utp/utp-main/blob/07cb0c256a90bc347289b5f5d202781b536fc640/theories/sf_rdes/utp_sfrd_prog.thy\#L87}
\end{theorem}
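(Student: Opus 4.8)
The plan is to exploit the \emph{uniqueness} of the fixed-point, rather than trying to match least and greatest fixed-points directly. First I would observe that, since $P$ is $\healthy{PCSP}$-healthy it is in particular productive, so the guardedness result established immediately above applies to the functional $F(X) \defs \rconditional{P \relsemi X}{b}{\Skip}$: this $F$ is guarded. By Hoare and He's fixed-point theorem, a guarded functional has a \emph{unique} fixed-point, which therefore coincides with $\while{b}{P} = \mu X @ F(X)$. Consequently, to prove the identity it suffices to exhibit $(\Casm{b} \relsemi P)\bm{\star} \relsemi \Casm{\neg b}$ as \emph{a} fixed-point of $F$; uniqueness then forces it to equal the left-hand side. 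This is the key manoeuvre, as it sidesteps the inverted-lattice subtlety whereby the theory's Kleene star naturally corresponds to the strongest ($\nu$) rather than the weakest ($\mu$) fixed-point --- a distinction that collapses here precisely because productivity guarantees $\mu = \nu$.

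The second ingredient is a Kleene-algebra-with-tests rewriting of the reactive conditional in terms of the assumption contract $\Casm{b} = \rc{\truer}{\false}{\csppf{b}{id}{\snil}}$, namely the lemma $\rconditional{Q}{b}{R} = (\Casm{b} \relsemi Q) \intchoice (\Casm{\neg b} \relsemi R)$. I would derive this from the contract composition laws (Theorem~\ref{thm:rc-comp}) together with Theorems~\ref{thm:crel-comp} and \ref{thm:evwp}, which show that precomposing with $\Casm{b}$ restricts each pre-, peri-, and postcondition to initial states satisfying $b$; disjoining the $b$ and $\neg b$ branches then reconstructs the conditional. Writing $A \defs \Casm{b} \relsemi P$, the loop functional becomes $F(X) = A \relsemi X \intchoice \Casm{\neg b}$, using that $\Skip$ is a right unit so $\Casm{\neg b} \relsemi \Skip = \Casm{\neg b}$.

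With $S \defs A\bm{\star} \relsemi \Casm{\neg b}$, I would then verify $F(S) = S$ purely within the $\healthy{NCSP}$ Kleene algebra $(\theoryset{\healthy{NCSP}}, \intchoice, \Miracle, \relsemi, \Skip, \bm{\star})$ established earlier. Substituting $S$ and the conditional rewrite gives
$$F(S) = A \relsemi A\bm{\star} \relsemi \Casm{\neg b} \,\intchoice\, \Casm{\neg b} = (A \relsemi A\bm{\star} \,\intchoice\, \Skip) \relsemi \Casm{\neg b} = A\bm{\star} \relsemi \Casm{\neg b} = S,$$
where the middle step factors out $\Casm{\neg b}$ using right-distributivity of $\relsemi$ over $\intchoice$ (from the Boolean quantale structure) and $\Skip \relsemi \Casm{\neg b} = \Casm{\neg b}$, and the penultimate step applies the star-unfolding law $A\bm{\star} = \Skip \intchoice A \relsemi A\bm{\star}$ from Theorem~\ref{thm:kalaws}. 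Since $S$ is a fixed-point and $F$ has a unique fixed-point, $\while{b}{P} = S$, as required.

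The main obstacle I anticipate is the conditional lemma: unlike the purely algebraic fixed-point check, relating the reactive conditional to the assumption contracts $\Casm{b}$ and $\Casm{\neg b}$ requires care with the precondition, since precomposition by $\Casm{b}$ acts through the reactive weakest-liberal precondition $\wpR$ (Theorem~\ref{thm:evwp}) rather than by simple conjunction. A subsidiary point to discharge is confirming that $A = \Casm{b} \relsemi P$ inherits the productivity needed for guardedness, and that every manipulation stays within $\theoryset{\healthy{NCSP}}$ so that the Kleene-algebra laws genuinely apply; this follows from closure of $\healthy{PCSP}$ under sequential composition, but it must be checked explicitly.
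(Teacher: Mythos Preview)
Your proposal is correct and follows essentially the same strategy the paper sketches: use productivity of $P$ to obtain guardedness of $F(X) = \rconditional{P \relsemi X}{b}{\Skip}$, invoke Hoare and He's unique fixed-point theorem, and then appeal to Kleene-algebra identities in $\theoryset{\healthy{NCSP}}$. The only tactical difference is that the paper's surrounding text frames the last step as ``$\mu F = \nu F$, then compute $\nu F$ via Kleene's fixed-point theorem (continuity of $F$)'', whereas you exhibit the candidate $S = (\Casm{b} \relsemi P)\bm{\star} \relsemi \Casm{\neg b}$ directly and verify $F(S) = S$, letting uniqueness do the rest. Both routes are sound; yours is arguably cleaner since it avoids explicitly unwinding the Kleene iteration and needs only the star-unfold law and the KAT-style decomposition $\rconditional{Q}{b}{R} = (\Casm{b} \relsemi Q) \intchoice (\Casm{\neg b} \relsemi R)$, which indeed follows from $\Casm{b} \relsemi Q = \rconditional{Q}{b}{\Miracle}$ together with $\Miracle$ being the unit for $\intchoice$.
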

\noindent Here, $\Casm{b} \defs \rconditional{\Skip}{b}{\Miracle}$ denotes a reactive program state test (cf. $\casm{b}$,
which is for reactive relations). This theorem is similar to the usual imperative definition in Kleene Algebra with
Tests~\cite{Kozen97,Armstrong2015,Gomes2016}. It relies on productivity of $P$, though the condition $b$ can be used to
guard $P$ and therefore prune away any unproductive behaviours that violate $b$. In Theorem~\ref{thm:whilekleene}, $P$
is executed multiple times when $b$ is true initially, but each run concludes when $b$ is false. However, due to the
embedding of reactive behaviour, there is more going on than meets the eye; the next theorem shows how to calculate an
iterative contract.

\begin{theorem} If $P_3$ is \healthy{R4} healthy then \label{thm:whilecalc} \isalink{https://github.com/isabelle-utp/utp-main/blob/07cb0c256a90bc347289b5f5d202781b536fc640/theories/sf_rdes/utp_sfrd_prog.thy\#L93}
  \begin{align*}
  \while{b}{\rc{\!P_1\!}{\!P_2\!}{\!P_3\!}} =& \rc{(\casm{b}\!\relsemi\!P_3)\bm{\star}\!\relsemi\!\casm{b} \wpR P_1}{(\casm{b} \relsemi P_3)\bm{\star}\!\relsemi\!\casm{b}\!\relsemi\!P_2}{(\casm{b}\!\relsemi\!P_3)\bm{\star}\!\relsemi\!\casm{\neg b}}
  \end{align*}
\end{theorem}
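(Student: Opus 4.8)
The plan is to reduce the weakest fixed-point to an iterated contract by combining the Kleene-star characterisation of the while loop (Theorem~\ref{thm:whilekleene}) with the contract iteration and composition laws (Theorems~\ref{thm:rc-iter} and~\ref{thm:rc-comp}). First I would observe that, since $\rc{P_1}{P_2}{P_3}$ is by assumption a reactive contract (hence \healthy{NCSP}-healthy) and the hypothesis makes $P_3$ be \healthy{R4}-healthy, the contract is productive and so \healthy{PCSP}-healthy. Theorem~\ref{thm:whilekleene} then rewrites the loop as $(\Casm{b} \relsemi \rc{P_1}{P_2}{P_3})\bm{\star} \relsemi \Casm{\neg b}$, replacing the $\mu$-expression by a Kleene star and discharging the productivity obligation.

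The key bridging step is to express the program-level test $\Casm{b}$ as a reactive contract so that Theorem~\ref{thm:rc-comp} becomes applicable. Since $\Skip = \rc{\truer}{\false}{\IIc}$ and $\Miracle = \rc{\truer}{\false}{\false}$, distributing the conditional through the triple (Theorem~\ref{thm:rc-comp}) and using $\IIc = \csppf{\ptrue}{id}{\snil}$ (Law~\eqref{thm:crc0}) gives $\Casm{b} = \rc{\truer}{\false}{\casm{b}}$, which links the program test $\Casm{b}$ to the relational test $\casm{b} = \csppf{b}{id}{\snil}$; the same holds for $\Casm{\neg b}$. Composing with the loop body via Theorem~\ref{thm:rc-comp}, and simplifying with the unit laws $\truer \land X = X$ and $\false \lor X = X$, yields $\Casm{b} \relsemi \rc{P_1}{P_2}{P_3} = \rc{\casm{b} \wpR P_1}{\casm{b} \relsemi P_2}{\casm{b} \relsemi P_3}$.

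Next I would apply the contract iteration law (Theorem~\ref{thm:rc-iter}) to this composite contract --- using the $\IIc$-based star from the revalidation noted in \S\ref{sec:circus-rc} --- to obtain its three components in terms of $(\casm{b} \relsemi P_3)\bm{\star}$. The precondition produced is $(\casm{b} \relsemi P_3)\bm{\star} \wpR (\casm{b} \wpR P_1)$, which I would fold into the target form $\big((\casm{b} \relsemi P_3)\bm{\star} \relsemi \casm{b}\big) \wpR P_1$ using the predicate-transformer law $(P \relsemi Q) \wpR R = P \wpR (Q \wpR R)$. Finally, composing the star with $\Casm{\neg b} = \rc{\truer}{\false}{\casm{\neg b}}$ via Theorem~\ref{thm:rc-comp} and discharging the residual $(\casm{b} \relsemi P_3)\bm{\star} \wpR \truer = \truer$ and $(\casm{b} \relsemi P_3)\bm{\star} \relsemi \false = \false$ terms (the latter using that $\false$ is a right annihilator of $\relsemi$) gives exactly the stated triple. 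The main obstacle I anticipate is the careful bookkeeping of the weakest-precondition algebra --- in particular justifying the $\wpR$ refolding above and confirming that the star invoked by Theorem~\ref{thm:rc-iter} is the correct $\IIc$-based operator in this stateful-failure setting, rather than the $\IIr$-based star under which that theorem was originally stated.
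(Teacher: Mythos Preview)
Your proposal is correct and follows essentially the approach implied by the paper. The paper does not spell out a proof for this theorem (it defers to the Isabelle mechanisation), but it places the result immediately after Theorem~\ref{thm:whilekleene} and the contract iteration law (Theorem~\ref{thm:rc-iter}), and your derivation---expanding $\Casm{b}$ as $\rc{\truer}{\false}{\casm{b}}$, applying Theorem~\ref{thm:rc-comp} twice and Theorem~\ref{thm:rc-iter} once, then refolding the precondition via $(P \relsemi Q) \wpR R = P \wpR (Q \wpR R)$---is exactly the calculation those surrounding results are set up to enable; your caveat about the $\IIc$- versus $\IIr$-based star is also acknowledged in the paper as the one nontrivial bookkeeping point.
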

\noindent The precondition requires that any number of $P_3$ iterations, where $b$ is initially true, satisfies $P_1$. This
ensures that the contract does not violate its own precondition from one iteration to the next. The pericondition states
that intermediate observations have $P_3$ executing several times, with $b$ true, and following this $b$ remains true and
the contract is quiescent ($P_2$). The postcondition is similar, but after several iterations, $b$ becomes false and the
loop terminates, which is the standard relational form of a while loop.

Theorem~\ref{thm:whilecalc} can be used to prove a refinement introduction law for the reactive while loop. This employs
``reactive invariant'' relations, which describe how both the trace and state variables are permitted to evolve.
\begin{theorem}$\rc{\!I_1\!}{\!I_2\!}{\!I_3\!} \refinedby \while{b}{\rc{\!Q_1\!}{\!Q_2\!}{\!Q_3\!}}$
  provided that: \label{thm:rea-inv} \isalink{https://github.com/isabelle-utp/utp-main/blob/07cb0c256a90bc347289b5f5d202781b536fc640/theories/rea_designs/utp_rdes_prog.thy\#L746}
  \vspace{-.5ex}
  \begin{enumerate} \itemsep4pt
    \item $Q_3$ is $\healthy{R4}$-healthy, so that the reactive contract is productive;
    \item the assumption is weakened $((\casm{b} \relsemi Q_3)\bm{\star} \wpR (b \implies Q_1) \refinedby I_1)$;
    \item when $b$ holds, $Q_2$ establishes the $I_2$ pericondition invariant $(I_2 \refinedby (\casm{b} \relsemi Q_2))$ and,
      $Q_3$ maintains it $(I_2 \refinedby \casm{b} \relsemi Q_3 \relsemi I_2)$;
    \item postcondition invariant $I_3$ is established when $b$ is false ($I_3 \refinedby \casm{\neg b}$) and $Q_3$
      establishes it when $b$ is true ($I_3 \refinedby \casm{b} \relsemi Q_3 \relsemi I_3$).
  \end{enumerate}
\end{theorem}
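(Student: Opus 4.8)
The plan is to rewrite the loop as an explicit reactive contract and then discharge the three proof obligations of Theorem~\ref{thm:rdesrefine}. Condition~(1), that $Q_3$ is $\healthy{R4}$-healthy, is precisely the hypothesis of Theorem~\ref{thm:whilecalc}, so I would begin by applying that theorem to obtain
$$\while{b}{\rc{Q_1}{Q_2}{Q_3}} = \rc{W_1}{W_2}{W_3}$$
with $W_1 = (\casm{b} \relsemi Q_3)\bm{\star} \relsemi \casm{b} \wpR Q_1$, $W_2 = (\casm{b} \relsemi Q_3)\bm{\star} \relsemi \casm{b} \relsemi Q_2$, and $W_3 = (\casm{b} \relsemi Q_3)\bm{\star} \relsemi \casm{\neg b}$. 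Theorem~\ref{thm:rdesrefine} then reduces the goal $\rc{I_1}{I_2}{I_3} \refinedby \rc{W_1}{W_2}{W_3}$ to: (a) $W_1 \refinedby I_1$; (b) $I_2 \refinedby (W_2 \land I_1)$; and (c) $I_3 \refinedby (W_3 \land I_1)$.

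For the peri- and postcondition obligations (b) and (c), I would first drop the conjunct $I_1$: since $W_2 \refinedby (W_2 \land I_1)$ and $W_3 \refinedby (W_3 \land I_1)$, transitivity of $\refinedby$ means it suffices to prove the stronger facts $I_2 \refinedby W_2$ and $I_3 \refinedby W_3$. Both have the shape $Y \refinedby X\bm{\star} \relsemi Z$ with $X = \casm{b} \relsemi Q_3$, and I would discharge them with the right Kleene-star induction law (axiom~(3) of Definition~\ref{def:wka}). Read through refinement, recalling that the Kleene-algebra order is the converse of $\refinedby$, this law states that $Y \refinedby Z$ and $Y \refinedby X \relsemi Y$ together imply $Y \refinedby X\bm{\star} \relsemi Z$. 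Taking $Y = I_2$ and $Z = \casm{b} \relsemi Q_2$, the two premises are exactly the two parts of condition~(3); taking $Y = I_3$ and $Z = \casm{\neg b}$, the premises are exactly the two parts of condition~(4).

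The precondition obligation (a), $W_1 \refinedby I_1$, is the delicate step, since it must be matched against condition~(2), which carries the star \emph{outside} the $\wpR$ and uses the antecedent $b \implies Q_1$ rather than a trailing test $\casm{b}$. I would bridge this with the weakest-precondition sequencing law $(P \relsemi R) \wpR S = P \wpR (R \wpR S)$, which rewrites $W_1$ as $(\casm{b} \relsemi Q_3)\bm{\star} \wpR (\casm{b} \wpR Q_1)$. It then remains to establish the test identity $\casm{b} \wpR Q_1 = (b \implies Q_1)$: unfolding $\casm{b} = \csppf{b}{id}{\snil}$ and the definition of $\wpR$, the composition $\casm{b} \relsemi \negr Q_1$ merely restricts to initial states satisfying $b$, and then reactive De~Morgan with double-negation elimination yield $\negr b \lor Q_1 = (b \rimplies Q_1)$. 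With this, $W_1$ is literally the left-hand side of condition~(2), which supplies $W_1 \refinedby I_1$ directly. I expect this to be the main obstacle: it needs the test identity and careful tracking of the $\healthy{RR}$/$\healthy{RC}$ side-conditions of $\wpR$ (Definition~\ref{def:wpR}), namely that $Q_1$ is $\healthy{CRC}$-healthy (hence $\healthy{RC}$) and that $(\casm{b} \relsemi Q_3)\bm{\star}$ and $\casm{b}$ are $\healthy{RR}$-healthy, so that the sequencing law applies.

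A final point requiring care is the ambient algebra of the star induction: $Z = \casm{b} \relsemi Q_2$ is only $\healthy{CRR}$-healthy, whereas $(\casm{b} \relsemi Q_3)\bm{\star}$ lives in the $\healthy{CRF}$ Kleene algebra. I would sidestep this by reading the induction step as the least-prefixed-point property of the map $Y \mapsto Z \lor (\casm{b} \relsemi Q_3) \relsemi Y$ in the ambient reactive-relation lattice, for which $(\casm{b} \relsemi Q_3)\bm{\star} \relsemi Z$ is the least solution and conditions~(3) and (4) exhibit $I_2$ and $I_3$ as prefixed points. Closure of the constituent relations under $\relsemi$ and $\bm{\star}$, together with the contract-composition results of Theorem~\ref{thm:rc-comp}, ensures the resulting expressions are healthy, so no further well-formedness checks are needed to assemble the three obligations into the claimed refinement.
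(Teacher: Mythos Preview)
Your proposal is correct and follows essentially the same approach as the paper's proof, which is stated tersely as ``by application of refinement introduction, with Theorems~\ref{def:wka}(\ref{thm:starinductr}) and~\ref{thm:whilecalc}''. You have unpacked precisely those three ingredients: Theorem~\ref{thm:whilecalc} to expose the contract form of the loop, Theorem~\ref{thm:rdesrefine} to split into three obligations, and the star induction axiom to discharge the peri- and postcondition obligations from conditions~(3) and~(4). Your treatment of the precondition via the $\wpR$ sequencing law and the test identity $\casm{b} \wpR Q_1 = (b \rimplies Q_1)$, and your remark about the healthiness mismatch in the star induction, are genuine details that the paper's one-line proof suppresses but that the mechanisation must handle.
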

\begin{proof}
  By application of refinement introduction, with Theorems~\thmeqref{def:wka}{thm:starinductr} and \ref{thm:whilecalc}.
\end{proof}
\noindent Theorem~\ref{thm:rea-inv} shows the conditions under which an iterated contract satisfies an invariant
contract $\rc{\!I_1\!}{\!I_2\!}{\!I_3\!}$. Relations $I_2$ and $I_3$ are reactive invariants that must hold in quiescent
and final observations. Both can refer to $\state$ and $\trace$, $I_2$ can additionally refer to $\refu'$, and $I_3$ to
$\state'$. There is no need to supply a variant since productivity guarantees the existence of a descending
approximation chain for the iteration~\cite{Foster17c}. Combined with the results from \S\ref{sec:circus-rc} and
\S\ref{sec:ext-choice}, this result forms the basis for a proof strategy for iterative reactive programs.

\section{Parallel Composition}
\label{sec:par}
In this section we extend our calculational approach to one of the most challenging operators: parallel composition. We
build on the parallel-by-merge scheme~\cite{Hoare&98}, $P \parallel_M Q$, where the semantics is expressed in terms of a
merge predicate $M$ that describes how the observations of each parallel program, $P$ and $Q$, should be
merged~\cite{Foster16a}. We create a specialised law for parallel composition of stateful-failure reactive designs, that
merges the pre-, peri-, and postconditions, and show how the $\mathcal{I}$, $\Phi$, and $\mathcal{E}$ operators are
merged. We use the strategy on a number of examples, and prove characteristic algebraic theorems for parallel
composition.

The contents of the first two subsections, \S\ref{sec:pbm} and \S\ref{sec:prd}, contain some restatements of theorems we
have previously proved~\cite{Foster17c}, which are included for the purpose of self-containment and
explanation. \S\ref{sec:psfrd} onwards, which specialises to stateful-failure reactive designs, is entirely novel.

\subsection{Parallel-by-Merge}
\label{sec:pbm}

We recall the parallel-by-merge operator~\cite{Foster17c}. It employs the $\psep{P}{n}$
construct, which renames all dashed variables of $P$ by adding an index $n$, so that they can be distinguished
from other indexed variables\footnote{These are called ``separating simulations'' in~\cite[page 172]{Hoare&98}, and are
  denoted using special relations called $U0$ and $U1$.}.

\begin{definition}[Parallel-by-Merge] $P \parallel_M Q ~\defs~ (\psep{P}{0} \land \psep{Q}{1} \land \uv' = \uv) \relsemi M$ \isalink{https://github.com/isabelle-utp/utp-main/blob/07cb0c256a90bc347289b5f5d202781b536fc640/utp/utp_concurrency.thy\#L275}
\end{definition}

\noindent This operator effectively splits the observation space into three identical segments: one for $P$, one for
$Q$, and a third that is identical to the original input. Relation $M$ then takes the outputs from $P$, $Q$, and the
original input $\uv$, and merges them into a single output. Here, $\uv$ is a special variable that denotes the entirety
of the state space. The dataflow of this operator is depicted in Figure~\ref{fig:pbm}, which illustrates the definition
for an example with three variables, $x$, $y$, and $z$.

\begin{figure}
  \centering\includegraphics[width=.5\linewidth]{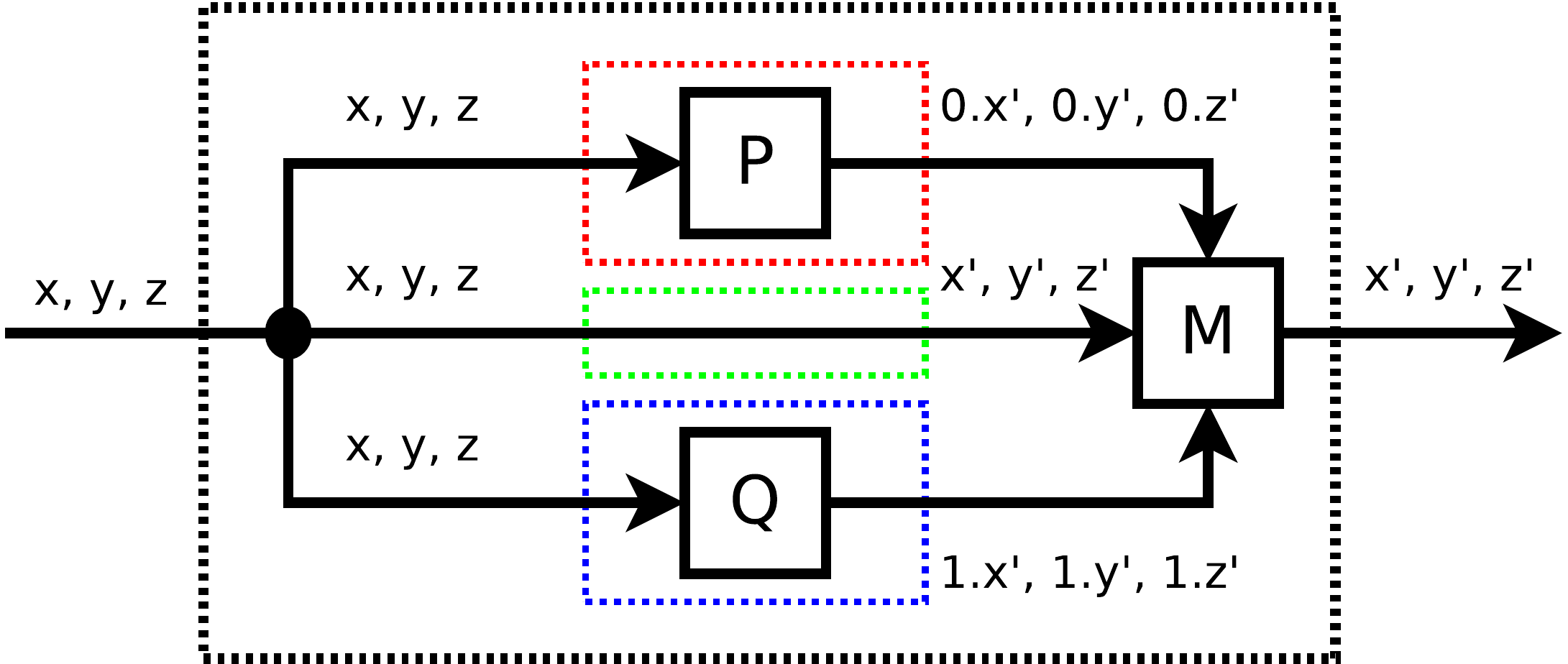}

  \caption{Parallel-by-merge dataflow}
  \label{fig:pbm}

\end{figure}

The outputs of $P$ and $Q$ are distinguished by a numeral prefix. If $P$ and $Q$ act on an observation space $\src$,
then $M$ is a heterogeneous relation of type $[\src \times \src \times \src, \src]\urel$, which relates three input
copies of $\src$ with a single output $\src$. The merge predicate therefore refers to variables from $P$, using the
$0.x$ notation, variables from $Q$, using $1.x$, initial variables, as usual written as $x$, and final variables, as
$x'$.

A substantial advantage of using parallel-by-merge is that several theorems can be proven for the generic
operator. Below, we highlight some of the most important theorems.

\begin{theorem}[Parallel-by-Merge Laws] \label{thm:pbm-laws} \isalink{https://github.com/isabelle-utp/utp-main/blob/07cb0c256a90bc347289b5f5d202781b536fc640/utp/utp_concurrency.thy\#L438}
  \begin{minipage}{0.5\textwidth}
  \begin{align*}
    \left(\bigsqcap_{i \in I} ~ P(i)\right) \parallel_M Q &= \bigsqcap_{i \in I} ~ \left(P(i) \parallel_M Q\right) \\[.5ex]
    \false \parallel_M P &= \false
  \end{align*}
  \end{minipage}\begin{minipage}{0.5\textwidth}
  \begin{align*}
    P \parallel_M \left(\bigsqcap_{i \in I} ~ Q(i)\right) &= \bigsqcap_{i \in I} ~ \left(P \parallel_M Q(i)\right) \\[.5ex]
    P \parallel_M \false &= \false
  \end{align*}
  \end{minipage}
  \vspace{1ex}

  \quad~$P_1 \refinedby P_2 \land Q_1 \refinedby Q_2 ~\implies~ (P_1 \parallel_M Q_1) \refinedby (P_1 \parallel_M Q_1)$
\end{theorem}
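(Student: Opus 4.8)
The plan is to prove all five laws by unfolding the parallel-by-merge definition $P \parallel_M Q \defs (\psep{P}{0} \land \psep{Q}{1} \land \uv' = \uv) \relsemi M$ and reducing each to the Boolean quantale properties of $\relsemi$ established earlier. The one structural fact I would isolate first is that the separating simulation $\psep{\cdot}{n}$, being merely a renaming of the dashed variables, is a post-composition $\psep{P}{n} = P \relsemi U_n$ with a fixed renaming relation $U_n$ (the relations $U0$ and $U1$ alluded to in the footnote). Since $\relsemi$ distributes over $\bigvee$ from the left and has $\false$ as a left annihilator, it follows at once that $\psep{\cdot}{n}$ distributes over arbitrary disjunctions, sends $\false$ to $\false$, and is monotone with respect to $\refinedby$. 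These are the only facts about separating simulations that the rest of the argument needs.

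For the left-distributivity law I would unfold the definition and recall that $\bigsqcap$ is just disjunction $\bigvee$ in the inverted lattice. Pushing the disjunction outwards gives a three-step calculation: first $\psep{\bigsqcap_i P(i)}{0} = \bigsqcap_i \psep{P(i)}{0}$ by distributivity of the separating simulation; then conjunction with the fixed predicate $\psep{Q}{1} \land \uv' = \uv$ distributes over the disjunction by Boolean algebra; and finally $\relsemi M$ distributes over $\bigsqcap$ from the left by the quantale property. Collecting terms yields $\bigsqcap_i (P(i) \parallel_M Q)$. The right-distributivity law is entirely symmetric, distributing the disjunction through $\psep{\cdot}{1}$ instead of $\psep{\cdot}{0}$.

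The two annihilation laws, $\false \parallel_M P = \false$ and $P \parallel_M \false = \false$, I would treat as the empty-index instances of the distributivity laws, since $\bigsqcap_{i \in \emptyset}$ is $\false$; equivalently they follow directly from $\psep{\false}{n} = \false$, the fact that any conjunction with $\false$ is $\false$, and $\false$ being the left annihilator of $\relsemi$. For the final monotonicity law I note that the statement as printed carries the same expression on both sides of $\refinedby$ and is thus a typographical slip; the intended claim is $(P_1 \parallel_M Q_1) \refinedby (P_2 \parallel_M Q_2)$. This follows because every constructor in the definition is monotone: from $P_1 \refinedby P_2$ and $Q_1 \refinedby Q_2$ we obtain $\psep{P_1}{0} \refinedby \psep{P_2}{0}$ and $\psep{Q_1}{1} \refinedby \psep{Q_2}{1}$, conjunction and $\relsemi$ both preserve $\refinedby$, and so the two parallel terms are related as required; alternatively it is a corollary of distributivity, writing $P_1 = P_1 \intchoice P_2$.

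I expect the main obstacle to be purely the first step: pinning down the behaviour of the separating simulation $\psep{\cdot}{n}$ with respect to $\bigvee$ and $\false$. Everything downstream is routine quantale bookkeeping, so the proof stands or falls on representing $\psep{\cdot}{n}$ as a post-composition with a renaming relation and then invoking the left and right distributivity, strictness, and monotonicity of $\relsemi$.
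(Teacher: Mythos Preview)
Your proposal is correct and follows the natural line of argument. The paper itself does not spell out a proof for this theorem in the text; it is presented as a restatement of results from prior work~\cite{Foster17c} and discharged in the Isabelle mechanisation, with only the informal remark that distributivity, annihilation, and monotonicity all hold because $\bigsqcap$ is disjunction and $\false$ is the miraculous relation. Your unfolding via $\psep{P}{n} = P \relsemi U_n$ and subsequent appeal to the Boolean-quantale laws for $\relsemi$ is exactly how the mechanised proof proceeds, and you are right to flag the typographical slip in the monotonicity conclusion.
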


\noindent Parallel-by-merge distributes through nondeterministic choice ($\bigsqcap$) from both the left and right,
regardless of the merge predicate $M$. Since $\bigsqcap$ corresponds to $\exists$ and also $\vee$, we can similarly
distribute through an existential quantification and a disjunction. The miraculous relation $\false$ is both a left and
right annihilator for parallel composition, which is also monotonic with respect to refinement in both arguments.

Parallel-by-merge may or may not be commutative, depending on the merge predicate. A helpful scheme can be used for
proving that parallel composition is commutative, which reduces this to a property of the merge predicate. We adopt a
similar approach to \cite{Hoare&98}, but give an account that is more algebraic in nature. We first define the
following auxiliary operator.

\begin{definition}[Merge Swap] $\ckey{sw} \defs \uv' = \uv \land 0.\uv' = 1.\uv \land 1.\uv' = 0.\uv$ \isalink{https://github.com/isabelle-utp/utp-main/blob/07cb0c256a90bc347289b5f5d202781b536fc640/utp/utp_concurrency.thy\#L131}

\end{definition}

\noindent The relation $\ckey{sw}$ swaps the outputs from the left- and right-hand sides, whilst keeping the initial
values ($\uv$) the same. Using $\ckey{sw}$, we can prove the following property of parallel-by-merge.

\begin{theorem}[Parallel-by-Merge Swap] $P \parallel_{\ckey{\small sw}\, \relsemi M} Q = Q \parallel_M P$ \isalink{https://github.com/isabelle-utp/utp-main/blob/9bbb6e407a4224591b2370605186ca0a45b718f9/utp/utp_concurrency.thy\#L385}
\end{theorem}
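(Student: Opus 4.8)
The plan is to unfold both sides with the definition of parallel-by-merge and reduce the goal to a single identity relating the two ``pre-merge'' relations. Writing $\Delta_{P,Q} \defs (\psep{P}{0} \land \psep{Q}{1} \land \uv' = \uv)$ for the pre-merge relation, the left-hand side is $\Delta_{P,Q} \relsemi (\ckey{sw} \relsemi M)$ and the right-hand side is $\Delta_{Q,P} \relsemi M$. By associativity of $\relsemi$ the left-hand side equals $(\Delta_{P,Q} \relsemi \ckey{sw}) \relsemi M$, so by congruence of $\relsemi$ in its first argument it suffices to prove the key lemma $\Delta_{P,Q} \relsemi \ckey{sw} = \Delta_{Q,P}$. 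Note that no cancellation of $M$ is needed: I use only that equal left operands yield equal composites.

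For the key lemma, the essential observation is that $\ckey{sw}$ is total and deterministic --- the graph of the involution on the three-copy intermediate space that exchanges the $0$- and $1$-indexed blocks while fixing the unchanged copy $\uv$. Post-composing a relation with the graph of such a bijection simply renames its dashed outputs according to the bijection, here swapping $0.\uv'$ with $1.\uv'$ and leaving $\uv'$ untouched. I would discharge this either by expanding $\relsemi$ as existential quantification over the shared intermediate copies and eliminating the three equalities contributed by $\ckey{sw}$ by the one-point rule, or, more algebraically, by treating $\ckey{sw}$ as an assignment that swaps the indexed copies and invoking the substitution and composition laws of \S\ref{sec:isabelleutp}.

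Applying this relabelling termwise, the conjunct $\psep{P}{0}$, whose output lies in the $0$-block, is carried to $\psep{P}{1}$; symmetrically $\psep{Q}{1}$ is carried to $\psep{Q}{0}$; and $\uv' = \uv$ is preserved since $\ckey{sw}$ fixes the unchanged copy. Commutativity of $\land$ then rearranges the outcome as $(\psep{Q}{0} \land \psep{P}{1} \land \uv' = \uv)$, i.e.\ $\Delta_{Q,P}$, which establishes the key lemma and hence the theorem.

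I expect the main obstacle to be the precise justification that post-composition with $\ckey{sw}$ enacts exactly the index swap on the separating simulations, namely that $\psep{P}{0} \relsemi \ckey{sw}$ contributes $\psep{P}{1}$ and dually for $Q$. This hinges on the concrete definition of the separating simulations (the $U0/U1$ relations of Hoare and He) and their lens-based renaming of dashed variables, and on checking that the intermediate existential introduced by $\relsemi$ ranges uniformly over all three copies so that the one-point eliminations are sound. Once this renaming law is in place --- exactly the kind of property the Isabelle/UTP lens infrastructure is built to discharge --- the remaining steps are routine rewriting by associativity of $\relsemi$ and commutativity of $\land$.
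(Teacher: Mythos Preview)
Your approach is correct. The paper does not give an explicit textual proof of this theorem; it is stated and immediately followed by its commutativity corollary, with the proof deferred to the Isabelle/UTP mechanisation. Your decomposition --- unfolding the definition of $\parallel_M$, using associativity of $\relsemi$ to isolate the key lemma $\Delta_{P,Q} \relsemi \ckey{sw} = \Delta_{Q,P}$, and discharging that by treating $\ckey{sw}$ as the graph of a total deterministic renaming that swaps the $0$- and $1$-indexed outputs --- is exactly the natural route and is almost certainly what the mechanised proof does under the hood (in Isabelle/UTP such goals are typically dispatched by relational-calculus tactics after unfolding). Your identification of the only delicate point, namely that post-composition with $\ckey{sw}$ carries $\psep{P}{0}$ to $\psep{P}{1}$ and dually, is accurate and is precisely the content that the separating-simulation infrastructure ($U0$, $U1$) is designed to handle.
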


This theorem shows that precomposing a merge predicate with $\ckey{sw}$ effectively commutes the arguments $P$ and
$Q$. A corollary of this~\cite{Foster16a}, given below, shows how this can be used to demonstrate commutativity.

\begin{theorem} $P \parallel_M Q = Q \parallel_M P$ provided that $\ckey{sw} \relsemi M = M$ 
\end{theorem}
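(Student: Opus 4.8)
The plan is to obtain this statement as an immediate corollary of the preceding Parallel-by-Merge Swap theorem, $P \parallel_{\ckey{sw}\,\relsemi\,M} Q = Q \parallel_M P$, using the hypothesis $\ckey{sw} \relsemi M = M$ purely as a rewriting step. The key observation is that the parallel-by-merge operator $P \parallel_M Q$ depends on its subscript $M$ only through the relation that is sequentially composed after the observation space is split and the operands are recombined; hence any two provably equal merge predicates yield equal parallel compositions.

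Concretely, I would proceed in two steps. First, I rewrite the subscript on the left-hand side: since the hypothesis gives $M = \ckey{sw} \relsemi M$, substituting this equal predicate for $M$ yields $P \parallel_M Q = P \parallel_{\ckey{sw}\,\relsemi\,M} Q$. Second, I apply the Parallel-by-Merge Swap theorem to the right-hand side of this equation, which gives $P \parallel_{\ckey{sw}\,\relsemi\,M} Q = Q \parallel_M P$. Chaining the two equalities establishes $P \parallel_M Q = Q \parallel_M P$, as required.

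There is essentially no obstacle to overcome here: all of the substantive reasoning has already been discharged in the swap theorem, whose proof exploits the definition of parallel-by-merge together with the action of $\ckey{sw}$ in exchanging the $0$- and $1$-indexed outputs while fixing the shared input $\uv$. The only point that warrants a moment's care is the legitimacy of the first step -- that one may freely replace the merge predicate under the $\parallel$ operator by any provably equal relation -- but this is immediate from the fact that $\parallel_M$ is defined as a function of $M$. Intuitively, the hypothesis $\ckey{sw} \relsemi M = M$ expresses that $M$ is invariant under exchanging its two operand inputs, which is precisely the condition one expects to force commutativity of the induced parallel operator.
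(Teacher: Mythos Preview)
Your proposal is correct and matches the paper's approach: the paper explicitly presents this theorem as a corollary of the Parallel-by-Merge Swap theorem, obtained by rewriting the merge predicate using the hypothesis and then applying the swap law, exactly as you describe.
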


\noindent This theorem shows how proof of commutativity can be reduced to a property of the merge predicate. Specifically, if
swapping the order of the inputs to the merge predicate has no effect then it is a symmetric merge, and consequently
parallel composition is commutative.

\subsection{Parallel Reactive Designs}
\label{sec:prd}

In previous work~\cite{Foster17c}, we have used parallel-by-merge to prove a general theorem for composing reactive
designs. As for the sequential operators, this develops operators that respectively merge the pre-, peri-, and
postconditions of the corresponding reactive contract. The theorem below, reproduced from \cite{Foster17c}, shows how we
may calculate a parallel reactive contract using these operators.

\begin{theorem}[Reactive Design Parallel Composition] \label{thm:rdespar} \isalink{https://github.com/isabelle-utp/utp-main/blob/07cb0c256a90bc347289b5f5d202781b536fc640/theories/rea_designs/utp_rdes_parallel.thy\#L769}
  \begin{align*}
    &\rc{P_1}{P_2}{P_3} \rcpar{M} \rc{Q_1}{Q_2}{Q_3} = \\[.5ex]
    & \qquad \rc{\begin{array}{l} 
          (P_1 \rimplies P_2) \wppR{M} Q_1 \land \\
          (P_1 \rimplies P_3) \wppR{M} Q_1 \land \\
          (Q_1 \rimplies Q_2) \wppR{M} P_1 \land \\
          (Q_1 \rimplies Q_3) \wppR{M} P_1
       \end{array}}{\begin{array}{l}
          P_2 \rcmergee{M} Q_2 ~\lor~ \\
          P_3 \rcmergee{M} Q_2 ~\lor~ \\
          P_2 \rcmergee{M} Q_3
       \end{array}
       }{P_3 \parallel_{\text{\tiny M}} Q_3}
  \end{align*}
\end{theorem}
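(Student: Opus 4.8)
The plan is to reduce the parallel composition of the two full contracts to independent merges of their pre-, peri-, and postconditions, by unfolding the contract definition and then exploiting the distributivity properties of parallel-by-merge. The contract operator $\rcpar{M}$ is just parallel-by-merge $\parallel_M$ applied to the underlying reactive-design relations, so first I would expand each side using the reactive contract definition: $\rc{P_1}{P_2}{P_3}$ becomes $\healthy{R1} \circ \healthy{R2} \circ \healthy{R3}_{\!h}$ applied to the design $ok \land P_1 \implies ok' \land (\conditional{P_2}{wait'}{P_3})$, and similarly for the $Q$-contract. Here $M$ must be a reactive-design merge that itself respects the trace algebra and the $ok$/$wait$ flags; this is needed for the reactive healthiness conditions to commute with the merge. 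The immediate aim is to establish closure — that the merge of two reactive designs is again a reactive design — and then to compute its normal form.

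The core step is to distribute the merge through the disjunctive structure of the two designs. By Theorem~\ref{thm:pbm-laws}, $\parallel_M$ distributes through $\bigsqcap$ from both sides and annihilates $\false$; since $\bigsqcap$ coincides with disjunction, this lets me push the merge inside the case splits on the observational variables $ok$, $ok'$, $wait$, and $wait'$. I would organise the case analysis so that: when either predecessor has diverged, the composite diverges; and otherwise, with both preconditions holding, I separate quiescent from terminated behaviour using $wait'$. The latter leaves three surviving combinations — $(P_2, Q_2)$, $(P_3, Q_2)$, and $(P_2, Q_3)$ — which are all quiescent, since at least one side is still offering to interact, together with the single terminated combination $(P_3, Q_3)$. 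This yields the three peri-disjuncts, each formed with the peri-merge $\rcmergee{M}$, and the postcondition $P_3 \parallel_M Q_3$.

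The hardest part will be deriving the precondition. A parallel composition diverges precisely when one component's behaviour, once merged with the other's, violates the other's precondition; the overall assumption must therefore rule this out from both directions and across both the quiescent and the terminated behaviours of each component. This is exactly what the four $\wppR{M}$ conjuncts encode: $(P_1 \rimplies P_2)$ and $(P_1 \rimplies P_3)$ merged against $Q_1$, and symmetrically $(Q_1 \rimplies Q_2)$ and $(Q_1 \rimplies Q_3)$ merged against $P_1$. Establishing this needs a weakest-liberal-precondition characterisation of $\wppR{M}$ — namely that $R \wppR{M} S$ is the weakest reactive condition guaranteeing that the $R$-behaviour, merged under $M$, keeps $S$ intact — followed by an argument that non-divergence of the composite is equivalent to the conjunction of these four non-violation conditions. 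The technical obstacle here is propagating the reactive healthiness conditions through the merge: in particular, $\healthy{R2}$ (trace independence) and $\healthy{R3}_{\!h}$ (quiescent pass-through) must be shown to commute with $M$, which holds because $M$ is a reactive-design merge respecting the trace structure and the $wait$ flag.

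Finally, I would reassemble the derived precondition, pericondition, and postcondition into a single contract and, using the refinement and closure results (Theorems~\ref{thm:rdesrefine} and~\ref{thm:rc-comp}), confirm that this normal form is $\healthy{NSRD}$-healthy and hence a well-formed reactive contract equal to the left-hand side.
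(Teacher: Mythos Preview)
The paper does not prove this theorem here; it is explicitly reproduced from prior work~\cite{Foster17c} and the proof lives in the Isabelle/UTP mechanisation, so there is no in-paper argument to compare against directly. Your high-level shape --- unfold the contracts, case-split on the observational flags, distribute the merge through the resulting disjunctions, and reassemble --- is the right one and matches what the mechanised proof does.

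There is, however, a genuine gap in how you set things up. You write that ``$\rcpar{M}$ is just parallel-by-merge $\parallel_M$ applied to the underlying reactive-design relations'', treating $M$ as the merge predicate passed to $\parallel_{-}$. That is not what the operator is: as the paper states just after the theorem, $M$ is an \emph{inner} merge predicate dealing only with $\trace$, $\state$, $\refu$, while $\rcpar{M}$ constructs from it an \emph{outer} merge predicate that additionally fixes how $ok'$ and $wait'$ are combined, and it is this outer predicate that is fed to parallel-by-merge. The three peri-disjuncts and single post-term arise precisely because the outer merge stipulates that the composite terminates only when \emph{both} sides do (and diverges if either does); your case analysis on $wait'$ alone cannot produce this split without that convention being stated and used. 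Likewise, $\rcmergee{M}$ is $\parallel_{(\exists \state' @ M)}$, and the $\wppR{M}$ in the precondition is built from $M \relsemi \truer$ (Definition~\ref{def:wppR}), not from the outer merge --- so three distinct merge predicates are in play, and conflating them will make the calculation fail. A minor point: your closing appeal to Theorems~\ref{thm:rdesrefine} and~\ref{thm:rc-comp} for $\healthy{NSRD}$-healthiness is misplaced, since neither is a closure result; healthiness of the composite follows instead from the outer merge preserving the reactive healthiness conditions.
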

\noindent 
This complex law describes how the pre-, peri-, and postconditions are merged by the parametric reactive design parallel
composition operator $\rcpar{M}$. Internally, this operator merges the observational variables $ok'$ and $wait'$ of $P$
and $Q$ using the parallel-by-merge operator and a bespoke merge predicate~\cite{Foster17c}. Again, as we noted in
\S\ref{sec:circus-rc}, every $\healthy{NCSP}$ relation corresponds to a reactive contract, and therefore this law
applies to any stateful-failure reactive design. Here, $M$ is an ``inner merge predicate''~\cite{Foster17c}, which needs
to deal only with observational variables like $\trace$, $\state$, and $\refu$; the variables $ok$ and $wait$ having
already been merged by $\rcpar{M}$, which constructs the ``outer merge predicate'', to which parallel-by-merge is
applied.

The precondition of the composite contract in Theorem~\ref{thm:rdespar} captures the possible divergent behaviours that
both $P$ and $Q$ permit. There are four conjuncts in the precondition, as we require that neither the peri-
nor the postcondition can permit divergent behaviour disallowed by its opposing precondition.

The predicate $A \wppR{M} B$, standing for ``weakest rely'', is a reactive condition that describes the weakest context
in which reactive relation $A$ does not violate the reactive condition $B$. It is analogous to the reactive weakest
precondition operator, $\wpR$ (outlined in \S\ref{sec:rea-prog}), but is defined with respect to parallel composition
rather than sequential composition. Specifically, whereas $\wpR$ gives the weakest condition in a sequential context,
$\wppR{M}$ gives the weakest condition in a parallel context. Its definition is given below.

\begin{definition}[Weakest Rely Condition] $P \wppR{M} Q ~~\defs~~ \negr ((\negr Q) \parallel_{\hbox{\small $M$} \relsemi \truer} P)$ \label{def:wppR} \isalink{https://github.com/isabelle-utp/utp-main/blob/90ec1d65d63e91a69fbfeeafe69bd7d67f753a47/theories/rea_designs/utp_rdes_parallel.thy\#L667}
\end{definition}

\noindent This merges the traces that $Q$ does not admit ($\negr Q$) with those of $P$ using the merge predicate $M$
composed with $\truer$, which makes $M$ extension closed. It determines all the behaviours permitted by merge predicate
$M$ that are enabled by $P$ and yet denied by $Q$. We then negate the overall relation to obtain the reactive
precondition. The operator obeys several related laws shown below.

\begin{theorem}[Weakest Rely Laws] \label{thm:wrlaws} \isalink{https://github.com/isabelle-utp/utp-main/blob/07cb0c256a90bc347289b5f5d202781b536fc640/theories/rea_designs/utp_rdes_parallel.thy\#L720}
  $$ \false \wppR{M} P = \truer \qquad
    P \wppR{M} \truer = \truer \qquad 
    \left(\bigvee_{i \in I} ~ P(i)\right) \wppR{M} Q = \left(\bigwedge_{i \in I} ~ (P(i) \wppR{M} Q)\right)
  $$
\end{theorem}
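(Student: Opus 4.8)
The plan is to derive all three identities directly by unfolding Definition~\ref{def:wppR} and then appealing to the parallel-by-merge laws of Theorem~\ref{thm:pbm-laws}, together with the Boolean-algebra structure of reactive relations, under which $\negr\truer = \false$, $\negr\false = \truer$, and De Morgan holds. A point worth noting at the outset is that every parallel-by-merge law I rely on holds for an \emph{arbitrary} merge predicate, so the presence of the composite merge $M \relsemi \truer$ in the definition of $\wppR{M}$ is immaterial to every step below.

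First I would dispatch the two annihilator laws. For $\false \wppR{M} P$, unfolding the definition yields $\negr((\negr P) \parallel_{M \relsemi \truer} \false)$; since $\false$ is a right annihilator for parallel-by-merge (Theorem~\ref{thm:pbm-laws}), the inner term collapses to $\false$, and then $\negr\false = \truer$ closes the case. For $P \wppR{M} \truer$, I would first rewrite $\negr\truer = \false$, reducing the body to $\negr(\false \parallel_{M \relsemi \truer} P)$; the left-annihilator law then collapses the parallel composition to $\false$, and once more $\negr\false = \truer$.

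For the disjunction law the key step is right-distributivity of parallel-by-merge through nondeterministic choice (Theorem~\ref{thm:pbm-laws}), recalling that $\bigsqcap$ and $\bigvee$ coincide for reactive relations. Unfolding $\left(\bigvee_{i \in I} P(i)\right) \wppR{M} Q$ gives $\negr\!\left((\negr Q) \parallel_{M \relsemi \truer} \bigvee_{i \in I} P(i)\right)$, and distributing the parallel composition over the join yields $\negr\!\left(\bigvee_{i \in I} \left((\negr Q) \parallel_{M \relsemi \truer} P(i)\right)\right)$. I would then push the outer $\negr$ through the join by De Morgan, which turns $\bigvee$ into $\bigwedge$ and leaves each summand as $P(i) \wppR{M} Q$, producing exactly $\bigwedge_{i \in I} (P(i) \wppR{M} Q)$.

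The work here is genuinely light, so I anticipate no deep obstacle; the only demands are bookkeeping ones. One must track the argument orientation in Definition~\ref{def:wppR}, where $\negr Q$ sits on the left of the parallel composition and $P$ on the right, so that the first law invokes the \emph{right} annihilator and the second the \emph{left}. For the disjunction law, the infinitary De Morgan step is best justified directly from $\negr Y = \healthy{R1}(\neg Y)$: ordinary De Morgan turns $\neg\bigvee_i Y_i$ into $\bigwedge_i \neg Y_i$, and $\healthy{R1}$, being conjunction with $tr \le tr'$, distributes through the resulting meet, so that $\negr$ distributes over arbitrary joins as needed.
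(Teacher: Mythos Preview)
Your proposal is correct and follows the natural route: unfold Definition~\ref{def:wppR}, apply the annihilator and distribution laws for parallel-by-merge from Theorem~\ref{thm:pbm-laws}, and finish with the Boolean-algebra identities for $\negr$. The paper does not spell out a proof for this theorem (it is stated and linked to its Isabelle mechanisation), but your derivation is exactly what the mechanised proof amounts to, and your care over left/right argument orientation and the infinitary De~Morgan step for $\negr$ is well placed.
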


\noindent The laws show that (1) a miraculous relation satisfies any precondition, (2) any reactive relation satisfies a
true precondition, and (3) the weakest rely condition of a disjunction of relations is the conjunction of their weakest
rely conditions. These results are similar to those for $\wpR$

The pericondition in Theorem~\ref{thm:rdespar} is a disjunction of three terms that calculate possible quiescent merged
behaviours. Parallel composition is quiescent when at least one of $P$ and $Q$ is quiescent, and so the three conjuncts
characterise quiescence in both, in $Q$ only, and in $P$ only, respectively. The
$P \rcmergee{M} Q ~~\defs~~ P \parallel_{\exists \state' @ M} Q$ operator is an intermediate merge operator, which
restricts access to state (see~\cite[\S6.6]{Foster17c}). Finally, the overall contract can only terminate when both $P$
and $Q$ do, and so the postcondition simply merges their respective postconditions.

Using these laws, we can show that $\Miracle$ is always a annihilator for parallel composition, regardless of the inner
merge predicate~\cite{Foster17c}. The proof exemplifies the calculational approach for parallel composition

\begin{theorem} \label{thm:miracle-anhil} $\Miracle \rcpar{M} P = \Miracle$ \isalink{https://github.com/isabelle-utp/utp-main/blob/07cb0c256a90bc347289b5f5d202781b536fc640/theories/rea_designs/utp_rdes_parallel.thy\#L813} \end{theorem}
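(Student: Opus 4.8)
The plan is to apply the reactive design parallel composition law (Theorem~\ref{thm:rdespar}) with the left operand instantiated to $\Miracle = \rc{\truer}{\false}{\false}$, so that $P_1 = \truer$ and $P_2 = P_3 = \false$, and then to show that each of the three resulting components collapses. Writing the right operand as $\rc{Q_1}{Q_2}{Q_3}$, I expect the pre-, peri-, and postconditions to reduce independently to $\truer$, $\false$, and $\false$ respectively, so that the whole expression becomes $\rc{\truer}{\false}{\false}$, which is $\Miracle$ by definition. Crucially, the result should come out independent of the inner merge predicate $M$, mirroring the fact that $\false$ annihilates parallel-by-merge.

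First I would dispatch the postcondition, which is the easiest part. Its term is $P_3 \parallel_M Q_3 = \false \parallel_M Q_3$, and this equals $\false$ directly by the annihilator law $\false \parallel_M P = \false$ from Theorem~\ref{thm:pbm-laws}. The pericondition follows by the same idea: each of its three disjuncts has either $P_2$ or $P_3$ as the left argument of the intermediate merge operator $\rcmergee{M}$, and since $\rcmergee{M}$ is itself a parallel-by-merge ($P \parallel_{\exists \state' @ M} Q$), the annihilator law gives $\false \rcmergee{M} Q = \false$ for each disjunct. Hence the whole pericondition reduces to $\false \lor \false \lor \false = \false$.

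The precondition, a conjunction of four weakest-rely terms, requires slightly more care. The last two are of the form $(Q_1 \rimplies Q_j) \wppR{M} P_1 = (Q_1 \rimplies Q_j) \wppR{M} \truer$, and these are $\truer$ by the law $P \wppR{M} \truer = \truer$ of Theorem~\ref{thm:wrlaws}. For the first two, the key reduction is that the left argument itself collapses: $(P_1 \rimplies P_2) = (\truer \rimplies \false)$, which unfolds via the definition of reactive implication to $\negr \truer \lor \false = \negr \truer$, and $\negr \truer = \healthy{R1}(\neg \truer) = \false$; the same holds for $P_3$. Once these arguments are $\false$, the law $\false \wppR{M} Q_1 = \truer$ of Theorem~\ref{thm:wrlaws} applies, so these two conjuncts are also $\truer$. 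Conjoining all four yields $\truer$, and reassembling the three reduced parts gives $\rc{\truer}{\false}{\false} = \Miracle$.

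The main obstacle, such as it is, lies entirely in the precondition, and specifically in establishing that $\truer \rimplies \false$ collapses to $\false$ so that the miraculous weakest-rely law becomes applicable. This step depends only on unfolding reactive implication and computing $\negr \truer$ via $\healthy{R1}$, and not on any nontrivial property of $M$; indeed, every reduction used is either an annihilator law for parallel-by-merge or one of the three weakest-rely identities, none of which constrain the merge predicate. This is what makes $\Miracle$ a uniform left annihilator for $\rcpar{M}$ regardless of how observations are merged.
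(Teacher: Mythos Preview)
Your proposal is correct and follows essentially the same route as the paper: instantiate the parallel composition law (Theorem~\ref{thm:rdespar}) with the left operand $\Miracle=\rc{\truer}{\false}{\false}$, collapse the peri- and postcondition via the $\false$-annihilator law for parallel-by-merge (Theorem~\ref{thm:pbm-laws}), and reduce the four precondition conjuncts to $\truer$ using the weakest-rely identities $\false \wppR{M} Q = \truer$ and $P \wppR{M} \truer = \truer$ (Theorem~\ref{thm:wrlaws}). Your treatment is in fact slightly more explicit than the paper's in that you spell out why $\truer \rimplies \false$ reduces to $\false$ via the definitions of $\rimplies$ and $\negr$, a step the paper leaves implicit.
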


\begin{proof}
  \begin{align*}
    \Miracle \rcpar{M} P 
    &= \rc{\truer}{\false}{\false} \rcpar{M} \rc{P_1}{P_2}{P_3} \\
    &= \rc{\begin{array}{l} 
        (\truer \rimplies \false) \wppR{M} P_1 \land \\
        (\truer \rimplies \false) \wppR{M} P_1 \land \\
        (P_1 \rimplies P_2) \wppR{M} \truer \land \\
        (P_1 \rimplies P_3) \wppR{M} \truer
      \end{array}}{
      \begin{array}{l}
        \false \rcmergee{M} P_2 ~\lor~ \\
        \false \rcmergee{M} P_2 ~\lor~ \\ 
        \false \rcmergee{M} P_3
      \end{array}
    }{\false \parallel_{\text{\tiny M}} P_3} & [\ref{thm:rdespar}] \\
    &= \rc{\begin{array}{l} 
        \false \wppR{M} P_1 \land \false \wppR{M} P_1 \land \\
        \truer \land \truer
      \end{array}}{
      \begin{array}{l}
        \false ~\lor~  \false ~\lor~ \false
      \end{array}
    }{\false} & [\ref{thm:pbm-laws},\ref{thm:wrlaws}] \\
    &= \rc{\begin{array}{l} 
        \truer \land \truer
      \end{array}}{
      \begin{array}{l}
        \false
      \end{array} 
    }{\false} & [\ref{thm:wrlaws}] \\
    &= \rc{\truer}{\false}{\false} \\
    &= \Miracle & \qedhere 
  \end{align*} 
\end{proof}

\noindent In this case, the pericondition and the postcondition both reduce to $\false$, since by
Theorem~\ref{thm:pbm-laws} the merge of any relation with $\false$ reduces to $\false$. The four clauses in the
precondition all reduce to $\truer$ by the weakest rely laws of Theorem~\ref{thm:wrlaws}. Thus the entire relation
reduces to the $\Miracle$ contract. We will next specialise this calculational approach to stateful-failure reactive
designs.

\subsection{Parallel Stateful-Failure Reactive Designs}
\label{sec:psfrd}

Our parallel composition operator is adopted from \Circus~\cite{Oliveira&09} and has the general form
$$P \sfpar{ns_1}{cs}{ns_2} Q$$ for actions $P$ and $Q$, event set $cs \subseteq \textit{Event}$, and name-sets $ns_1$ and $ns_2$. $P$ and $Q$ both act on the same state space $\src$, and have the same event alphabet ($\textit{Event}$). 
Like for parallel composition in CSP, $P$ and $Q$ must synchronise on events contained in $cs$, but independently engage
in events outside $cs$. Since $P$ and $Q$ also have states, we must describe how to merge their final states. We do not
permit sharing, and so require partitioning of the state into two independent regions, characterised by two disjoint
variable name sets $ns_1$ and $ns_2$. The final state is then the composition of the two regions. We model these name
sets using independent lenses~\cite{Foster16a} from Isabelle/UTP, that is, $ns_1 : V_1 \lto \src$, and
$ns_2 : V_2 \lto \src$, for some $V_1$ and $V_2$ (see \S\ref{sec:isabelleutp}).

As usual~\cite{Oliveira&09}, we define a few abbreviations for the operator.

\begin{definition}[Parallel Composition Abbreviations] \label{def:parabbrev} \isalink{https://github.com/isabelle-utp/utp-main/blob/07cb0c256a90bc347289b5f5d202781b536fc640/theories/circus/utp_circus_parallel.thy\#L834}
  \begin{align*}
    P \sfpare{cs} Q &\defs P \sfpar{\lzero}{cs}{\lzero} Q \\
    P \interleave Q &\defs P \sfpare{\emptyset} Q
  \end{align*}
\end{definition}
\noindent The operator $P \sfpare{cs} Q$ synchronises on $cs$, but ignores the final state of both $P$ and $Q$. It is
therefore broadly equivalent to CSP parallel composition when applied to stateless actions. It uses the special $\lzero$
lens for the name sets, which characterises an empty region of the state space. The interleaving operator
$P \interleave Q$ synchronises on none of the events, and requires independent activity for $P$ and $Q$. We denote the
general operator using the reactive design parallel-by-merge operator, as shown below.

\begin{definition}[Parallel Composition] \label{def:parcomp} Let $ns_1 : \view_1 \lto \src$ and
  $ns_2 : \view_2 \lto \src$ be lenses that characterise disjoint regions, $\view_1$ and $\view_2$, of the state space
  $\src$ (that is, $ns_1 \lindep ns_2$), and let $cs$ be a set of events. Parallel composition is then defined as follows: \isalink{https://github.com/isabelle-utp/utp-main/blob/07cb0c256a90bc347289b5f5d202781b536fc640/theories/circus/utp_circus_parallel.thy\#L9} %
  \begin{align*}
  P \sfpar{ns_1}{cs}{ns_2} Q &\defs P \rcpar{\cmrgn} Q \\[2ex]
  \text{where} ~ \cmrg{ns_1}{cs}{ns_2}
                             &\defs
  \left(
  \begin{array}{l}
    \trace \in 0.\trace \parallel_{cs} 1.\trace \\ 
    \land 0.\trace \project cs = 1.\trace \project cs \\
    \land  \refu' \subseteq ((0.\refu \cup 1.\refu) \cap cs) \cup ((0.\refu \cap 1.\refu) \setminus cs) \\
    \land \state' = \lovrd{\lovrd{\state}{0.\state}{ns_1}}{1.\state}{ns_2}
  \end{array}\right)
  \end{align*}
\end{definition}
\noindent Here, $\cmrgn$ is an inner merge predicate~\cite{Foster17c} with arguments $ns_1$, $cs$, and $ns_2$, which we
omit when they can be determined from the context. It defines how the traces, states, and refusal sets from $P$ and $Q$
are merged. It is adapted from the original \Circus merge predicate~\cite{Oliveira2005-PHD,Oliveira&09}, which also
defines the function $t_1 \parallel_{cs} t_2$ that specifies the set of traces obtained by merging traces $t_1$ and
$t_2$, synchronising on the events in $cs$. For completeness, we define this recursive function below, adapting slightly
the original definition\footnote{Specifically, the definition of $\parallel_{cs}$ may be found in Oliveira's
  thesis~\cite{Oliveira2005-PHD}, Appendix B on page 183. It is based on the trace merge operator defined by Roscoe in
  Section~2.4, page 70, of~\cite{Roscoe2005}.}~\cite{Oliveira2005-PHD}.

\begin{definition}[Trace Merge Function] We define $\parallel_{cs} : \seq E \to \seq E \to \power(\seq~E)$ to be the
  least function that satisfies the following equations: \isalink{https://github.com/isabelle-utp/utp-main/blob/07cb0c256a90bc347289b5f5d202781b536fc640/theories/circus/utp_circus_traces.thy\#L9}
  \begin{align*}
    \snil \parallel_{cs} \snil                &= \{ \snil \} \\
    (e \cons t) \parallel_{cs} \snil          &= \left(\conditional{\{ \snil \}}{e \in cs}{\left(\{\langle e \rangle\} \frown (t \parallel_{cs} \snil)\right)}\right) \\
    \snil \parallel_{cs} (e \cons t)          &= \left(\conditional{\{ \snil \}}{e \in cs}{\left(\{\langle e \rangle\} \frown (\snil \parallel_{cs} t)\right)}\right) \\
    (e \cons t_1) \parallel_{cs} (e \cons t_2) &= \conditional{(\langle e \rangle \frown (t_1 \parallel_{cs} t_2))}{e \in cs}{\left(\{\langle e \rangle\} \frown (t_1 \parallel_{cs} (e \cons t_2) \cup (e \cons t_1) \parallel_{cs} t_2)\right)} \\
    (e_1 \cons t_1) \parallel_{cs} (e_2 \cons t_2) &= 
      \left(
      \begin{array}{l}
        \left(\conditional{\{ \snil \}}{e_2 \in cs}{\left(\{\langle e_2 \rangle\} \frown ((e_1 \cons t_1) \parallel_{cs} t_2)\right)}\right)\\
        \quad\conditional{}{e_1 \in cs}{} \\
        \left(
        \begin{array}{l}
          \left(\{\langle e_1 \rangle\} \frown (t_1 \parallel_{cs} (e_2 \cons t_2))\right) \\
          \quad\conditional{}{e_2 \in cs}{} \\
          \left((\{\langle e_1 \rangle\} \frown (t_1 \parallel_{cs} (e_2 \cons t_2))) \cup (\{\langle e_2 \rangle\} \frown ((e_1 \cons t_1) \parallel_{cs} t_2))\right)
        \end{array}\right)
      \end{array}
      \right) & e_1 \neq e_2
  \end{align*}

  where $ts_1 \frown ts_2 \defs \left\{t_1 \cat t_2 | t_1 \in ts_1 \land t_2 \in ts_2\right\}$

\end{definition}
\noindent Here, the operator $x \cons xs$ constructs a sequence by composing an element $x$ with an existing sequence $xs$. The
trace merge function $t_1 \parallel_{cs} t_2$ produces the set of maximal possible merges from every pair of traces;
that is the traces that include the maximum possible number of events from both $t_1$ and $t_2$, ordered to reflect
synchronisation on th events in $cs$~\cite{Oliveira2005-PHD}. If an event is encountered in $cs$, then both traces must
agree to allow this event simultaneously for behaviour to progress. For any other events, all possible interleavings are
recorded.

The merge predicate $\cmrgn$ in Definition~\ref{def:parcomp} has four conjuncts. The first conjunct states that any
permissible trace $\trace$ arises from merging the constituent traces $0.\trace$ and $1.\trace$. The second ensures that
the same synchronisations on events from $cs$ occur in both $0.\trace$ and $1.\trace$ in the same order. Filter function
$t \project cs$ returns the elements of sequence $t$ that are contained in $cs$, whilst retaining the order and number
of occurrences. A consequence of the second conjunct is that the resulting trace contains, in a suitable order, all the
events from both constituents. The third conjunct requires that the overall refusal is either a subset of the set of
synchronised events independently refused ($(0.\refu \cup 1.\refu) \cap cs$), or the non-synchronised events refused by
both ($(0.\refu \cap 1.\refu) \setminus cs$). The fourth conjunct constructs the final state by merging the $ns_1$
region of $P$'s state, the $ns_2$ region of $Q$'s state, and the remaining region from the initial state. It uses the
lens override operator $\lovrd{s_1}{s_2}{ns}$ from Definition~\ref{def:lovrd}.

The parallel operator of Definition~\ref{def:parcomp} is not, in general, commutative due to its asymmetric partitioning
of the state space. However, we can prove a useful theorem of the inner merge predicate.

\begin{theorem}[Swap Inner Merge] \label{thm:swapim} If $ns_1 \lindep ns_2$ then $\ckey{sw} \relsemi \cmrg{ns_1}{cs}{ns_2} = \cmrg{ns_2}{cs}{ns_1}$ \isalink{https://github.com/isabelle-utp/utp-main/blob/07cb0c256a90bc347289b5f5d202781b536fc640/theories/circus/utp_circus_parallel.thy\#L170}
\end{theorem}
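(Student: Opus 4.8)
The plan is to unfold the relational composition on the left-hand side and recognise it as the merge predicate $\cmrg{ns_1}{cs}{ns_2}$ with its two indexed input copies interchanged. Recall that $\ckey{sw} \defs \uv' = \uv \land 0.\uv' = 1.\uv \land 1.\uv' = 0.\uv$ leaves the shared copy $\uv$ fixed but exchanges the $0$- and $1$-indexed copies. Since $\cmrg{ns_1}{cs}{ns_2}$ reads its predecessors' contributions only through $0.\trace, 1.\trace, 0.\refu, 1.\refu, 0.\state, 1.\state$, while the output variables $\trace, \refu', \state'$ and the shared initial $\state$ are untouched (the latter by the conjunct $\uv' = \uv$), precomposing with $\ckey{sw}$ has the sole effect of interchanging the $0$- and $1$-indices throughout the predicate. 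I would discharge this first step using the substitution/one-point law for relational composition against a deterministic equational relation, so that $\ckey{sw} \relsemi \cmrg{ns_1}{cs}{ns_2}$ is exactly the conjunction of Definition~\ref{def:parcomp} with $0.v$ and $1.v$ swapped.

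It then remains to check, conjunct by conjunct, that this swapped predicate coincides with $\cmrg{ns_2}{cs}{ns_1}$, i.e.\ the merge predicate with the two name sets exchanged. Three of the four conjuncts are symmetric and use no side condition: the membership $\trace \in 1.\trace \parallel_{cs} 0.\trace$ matches $\trace \in 0.\trace \parallel_{cs} 1.\trace$ by commutativity of the trace-merge function, which I would establish as a subsidiary lemma by induction on the two argument traces (the defining clauses are mirror-symmetric, the only fiddly case being the $e_1 \neq e_2$ clause, which needs a case split on membership of the two heads in $cs$); the synchronisation constraint $1.\trace \project cs = 0.\trace \project cs$ matches by symmetry of equality; and the refusal bound $((1.\refu \cup 0.\refu) \cap cs) \cup ((1.\refu \cap 0.\refu) \setminus cs)$ matches by commutativity of $\cup$ and $\cap$.

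The substantive step, and the only place the hypothesis $ns_1 \lindep ns_2$ is consumed, is the fourth conjunct, the state merge, which is the single asymmetric part of the predicate. After the swap it reads $\state' = \lovrd{\lovrd{\state}{1.\state}{ns_1}}{0.\state}{ns_2}$, whereas $\cmrg{ns_2}{cs}{ns_1}$ requires $\state' = \lovrd{\lovrd{\state}{0.\state}{ns_2}}{1.\state}{ns_1}$. These two nested overrides are equated by the override commutativity law~\eqref{law:ov4} of Theorem~\ref{thm:ovrd}, instantiated with $s_1 \mapsto \state$, $X \mapsto ns_1$, $s_2 \mapsto 1.\state$, $Y \mapsto ns_2$ and $s_3 \mapsto 0.\state$, whose side condition $X \lindep Y$ is precisely $ns_1 \lindep ns_2$. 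Intuitively, independence of the two disjoint regions is exactly what licenses installing $P$'s $ns_1$-region and $Q$'s $ns_2$-region in either order. I expect the main obstacle to be this conjunct together with its supporting trace-merge induction: the override step is conceptually the crux (it is where the disjointness assumption does its work), while the induction for commutativity of $\parallel_{cs}$ is the most laborious routine obligation. With all four conjuncts aligned, the claimed equality follows.
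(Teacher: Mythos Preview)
Your proposal is correct and follows essentially the same approach as the paper: the paper's proof sketch states that the result ``depends on the commutativity of $\parallel_{cs}$, a property that is proved in~\cite{Oliveira&09}, and on Theorem~\ref{thm:ovrd} to switch the name set lenses,'' which are precisely the two non-trivial ingredients you identify for the first and fourth conjuncts. The only minor difference is that the paper cites the commutativity of the trace-merge function from Oliveira's thesis rather than redoing the induction, whereas you sketch the induction directly; either way the substance is the same.
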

\noindent If we precompose $\cmrgn$ with the swap relation ($\ckey{sw}$), then this amounts to switching the name
sets. The proof of this depends on the commutativity of $\parallel_{cs}$, a property that is proved
in~\cite{Oliveira&09}, and on Theorem~\ref{thm:ovrd} to switch the name set lenses. A corollary of
Theorem~\ref{thm:swapim} is a quasi-commutativity theorem for our parallel composition operator.

\begin{theorem} \label{thm:nparcomm} If $ns_1 \lindep ns_2$ then $P \sfpar{ns_1}{cs}{ns_2} Q = Q \sfpar{ns_2}{cs}{ns_1} P$ \isalink{https://github.com/isabelle-utp/utp-main/blob/07cb0c256a90bc347289b5f5d202781b536fc640/theories/circus/utp_circus_parallel.thy\#L958}
\end{theorem}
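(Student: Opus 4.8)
The plan is to unfold the definition of the parallel operator (Definition~\ref{def:parcomp}) so that both sides become instances of the reactive-design parallel-by-merge operator $\rcpar{M}$, and then to reduce commutativity entirely to the swap property of the inner merge predicate established in Theorem~\ref{thm:swapim}. Because the name sets appear in opposite positions on the two sides, this is genuinely a \emph{quasi}-commutativity: the arguments and the name sets are exchanged together, so we cannot appeal to the symmetric-merge corollary ($\ckey{sw}\relsemi M = M$ implies $P \parallel_M Q = Q \parallel_M P$) directly, since the inner merge here is not symmetric.

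Unfolding Definition~\ref{def:parcomp} gives the left-hand side as $P \rcpar{\cmrg{ns_1}{cs}{ns_2}} Q$ and the right-hand side as $Q \rcpar{\cmrg{ns_2}{cs}{ns_1}} P$. First I would rewrite the right-hand inner merge using Theorem~\ref{thm:swapim}, which applies because the relation $\lindep$ is symmetric, so $ns_1 \lindep ns_2$ yields $ns_2 \lindep ns_1$: this gives $\cmrg{ns_2}{cs}{ns_1} = \ckey{sw} \relsemi \cmrg{ns_1}{cs}{ns_2}$, and hence the right-hand side becomes $Q \rcpar{\ckey{sw}\relsemi\cmrg{ns_1}{cs}{ns_2}} P$. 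It then remains to apply the analogue of the Parallel-by-Merge Swap theorem for the reactive-design operator $\rcpar{M}$, namely $Q \rcpar{\ckey{sw}\relsemi M} P = P \rcpar{M} Q$, instantiated at $M \defs \cmrg{ns_1}{cs}{ns_2}$, to recover exactly the left-hand side $P \rcpar{\cmrg{ns_1}{cs}{ns_2}} Q$ and so close the argument by a short chain of rewrites.

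The main obstacle is lifting the Parallel-by-Merge Swap theorem from the generic $\parallel_M$ operator to $\rcpar{M}$, since the latter applies parallel-by-merge to an \emph{outer} merge predicate that augments the inner merge $M$ with the canonical merging of the observational variables $ok$ and $wait$. I would establish this lifted swap law by showing that precomposing the outer merge with $\ckey{sw}$ (now acting on the full variable set, including $ok$ and $wait$) factors through precomposition of the inner merge $M$ with $\ckey{sw}$. This holds because the outer-merge clauses governing $ok$ and $wait$ are conjunctions and disjunctions of the two indexed copies, $0.ok, 1.ok$ and $0.wait, 1.wait$, which are symmetric in the index and therefore invariant under the exchange of the $0$- and $1$-indexed segments performed by $\ckey{sw}$. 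With this factorisation in hand, the generic Parallel-by-Merge Swap theorem supplies the required exchange of arguments, and the result follows via the rewrites above. I note finally that the dependence of Theorem~\ref{thm:swapim} on commutativity of $\parallel_{cs}$ and on the override law \eqref{law:ov4} is precisely where the hypothesis $ns_1 \lindep ns_2$ is consumed, so no further side-condition is needed.
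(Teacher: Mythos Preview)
Your proposal is correct and follows essentially the same route as the paper, which simply records the result as a corollary of Theorem~\ref{thm:swapim} without spelling out the intermediate steps. Your observation that the swap law must be lifted from $\parallel_M$ to $\rcpar{M}$ via the symmetry of the outer merge in $ok$ and $wait$ is exactly the detail the paper leaves implicit; one small simplification is that you do not need symmetry of $\lindep$, since Theorem~\ref{thm:swapim} can be read right-to-left directly under the given hypothesis.
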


\noindent Thus, we can commute a parallel composition by also commuting the respective name sets.

\subsection{Composing Reactive Relations}

In order to calculate a stateful-failure reactive design for parallel composition, we specialise
Theorem~\ref{thm:rdespar}. This requires that we have specialised versions of the merge operators for peri- and
postconditions, and also a specialised weakest rely condition operator. These are defined below.

\begin{definition}[Intermediate Merge, Final Merge, and Weakest Rely Condition] \isalink{https://github.com/isabelle-utp/utp-main/blob/07cb0c256a90bc347289b5f5d202781b536fc640/theories/circus/utp_circus_parallel.thy\#L30}
  \vspace{-2ex}

  \begin{center}
  \begin{minipage}{.5\linewidth}
  \begin{align*}
    \textstyle P \imerge{cs} Q &\defs P \parallel_{(\exists \state' @ N_{\textsf{\tiny I}})} Q \\
    \textstyle P \fmerge{ns_1}{cs}{ns_2} Q &\defs P \parallel_{(\exists \refu' @ \cmrgn)} Q \\
    P \wppC{cs} Q &\defs P \wppR{N_{\textsf{\tiny I}}} Q
  \end{align*} 
  \end{minipage}
  where $N_{\textsf{\tiny I}} \defs \cmrg{\lzero}{cs}{\lzero}$
\end{center}
\end{definition}

\noindent The intermediate merge, $\imerge{cs}$ defines how two quiescent observations are merged. It is parametrised
only in $cs$ and not $ns_1$ or $ns_2$, as state is concealed in quiescent observations. Its definition applies the merge
predicate $(\exists \state' @ N_{\textsf{\tiny I}})$, which abstracts from the final state and is defined in terms of
$N_{\textsf{\tiny I}}$. The latter applies $\cmrgn$, but uses the $\lzero$ lens for $ns_1$ and $ns_2$, and therefore
ignores the final state of $P$ and $Q$.

The final state merge $\fmerge{ns_1}{cs}{ns_2}$ defines how terminated observations are merged. It is defined similarly,
but abstracts from $\refu'$, since there is no refusal information in a final observation, and uses $\cmrgn$ directly to
merge the states. Finally, the weakest rely condition $\wppC{cs}$ is simply the general reactive design version, using
$N_{\textsf{\tiny I}}$ as the merge predicate as final states are also not relevant in preconditions. 

We now demonstrate the healthiness of these new operators.

\begin{theorem}[Merge Closure Properties] \label{thm:mergeclos} \isalink{https://github.com/isabelle-utp/utp-main/blob/07cb0c256a90bc347289b5f5d202781b536fc640/theories/circus/utp_circus_parallel.thy\#L86}
  \begin{enumerate}
    \item If $P$ and $Q$ are $\healthy{CRR}$-healthy then $P \imerge{cs} Q$ is $\healthy{CRR}$-healthy.
    \item $P \imerge{cs} Q$ does not refer to $\state'$.
    \item If $P$ and $Q$ are $\healthy{CRF}$-healthy then $P \fmerge{ns_1}{cs}{ns_2} Q$ is $\healthy{CRF}$-healthy.
    \item If $P$ is $\healthy{CRR}$-healthy and $Q$ is $\healthy{CRC}$-healthy, then $P \wppC{cs} Q$ is $\healthy{CRC}$-healthy.
  \end{enumerate}
\end{theorem}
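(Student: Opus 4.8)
The plan is to unfold each of the three operators as an instance of reactive parallel-by-merge and reduce the stated closure claims to properties of the inner merge predicates $N_{\textsf{\tiny I}} = \cmrg{\lzero}{cs}{\lzero}$ and $\cmrgn$, together with the generic parallel-by-merge laws of Theorem~\ref{thm:pbm-laws} and the definitions of $\healthy{CRR}$, $\healthy{CRF}$, and $\healthy{CRC}$ as $\exists \refu @ \healthy{RR}$, $\exists (\refu,\refu') @ \healthy{RR}$, and $\exists \refu @ \healthy{RC}$ respectively. Because each of these healthiness conditions factors as a composition (cf. Theorem~\ref{thm:healthy-comp}), it suffices to verify the relevant constituents — $\healthy{R1}$, $\healthy{R2}$, absence of $ok$ and $wait$, and absence of $\refu$, $\refu'$, or $\state'$ as appropriate — one at a time.

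For part (2) I would argue directly from the definition: since $\imerge{cs}$ uses the merge predicate $\exists \state' @ N_{\textsf{\tiny I}}$, in which $\state'$ does not occur, the output state of the sequential composition underlying parallel-by-merge is left unconstrained, so $P \imerge{cs} Q$ cannot refer to $\state'$. For part (1), the substantive work is $\healthy{RR}$ closure: $\healthy{R1}$ holds because the trace conjunct $\trace \in 0.\trace \parallel_{cs} 1.\trace$ of $N_{\textsf{\tiny I}}$ yields a genuine extension whenever the contributions of $P$ and $Q$ are extensions, so trace monotonicity is preserved; $\healthy{R2}$ holds because the merge constrains only the relative trace contributions $\trace$, $0.\trace$, $1.\trace$ and never the absolute history. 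Absence of $ok$ and $wait$ is immediate, as the inner merge predicate mentions only $\trace$, $\state$, and $\refu$, and absence of the initial $\refu$ follows from $P$ and $Q$ being $\healthy{CRR}$ together with $N_{\textsf{\tiny I}}$ referring only to $0.\refu$, $1.\refu$, and $\refu'$.

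Part (3) proceeds identically for $\healthy{RR}$ closure, with the final merge using $\cmrgn$ directly so that the fourth conjunct $\state' = \lovrd{\lovrd{\state}{0.\state}{ns_1}}{1.\state}{ns_2}$ gives a well-defined final state; the additional $\healthy{CRF}$ requirement of no reference to $\refu'$ is discharged because $\fmerge{ns_1}{cs}{ns_2}$ applies $\exists \refu'$, hiding the output refusal.

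The main obstacle is part (4), where I must show that $P \wppC{cs} Q = \negr((\negr Q) \parallel_{N_{\textsf{\tiny I}} \relsemi \truer} P)$ is a reactive condition, i.e. $\healthy{RC}$-healthy and free of $\refu$ and $\state'$. The key is the postcomposition $N_{\textsf{\tiny I}} \relsemi \truer$: composing with $\truer = \healthy{R1}(\true)$ simultaneously abstracts from $\state'$ (which $N_{\textsf{\tiny I}}$ otherwise pins to $\state$) and makes the merge extension-closed in the trace. Since $Q$ is $\healthy{CRC}$, hence prefix-closed, its reactive complement $\negr Q$ is extension-closed; merging an extension-closed relation through this extension-closed merge preserves extension closure, and the outer reactive negation $\negr$ flips this back to prefix closure, which is precisely $\healthy{RC}$. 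This mirrors the argument that the sequential $\wpR$ (Definition~\ref{def:wpR}) yields $\healthy{RC}$-healthy conditions, and I would adapt that duality to the parallel-by-merge setting, with the weakest-rely laws of Theorem~\ref{thm:wrlaws} dispatching the degenerate cases. The delicate point is verifying that extension closure is genuinely preserved under the \emph{merge}, rather than merely under sequential composition, since this rests on the trace-merge function $\parallel_{cs}$ being monotone with respect to trace extension.
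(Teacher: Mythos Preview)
The paper does not give a textual proof of this theorem; it simply states the result and defers to the Isabelle/UTP mechanisation via the \verb|isalink|. So there is no written argument to compare against directly, and your outline is necessarily more detailed than anything in the paper.

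That said, your plan is sound and is exactly the kind of decomposition one would expect the mechanised proof to carry out: unfold each merge operator as a parallel-by-merge instance, then discharge the constituent healthiness conditions ($\healthy{R1}$, $\healthy{R2}$, absence of $ok$, $wait$, $\refu$, $\refu'$, $\state'$ as appropriate) by inspection of the inner merge predicate. Your treatment of parts (1)--(3) is straightforward and correct. For part (4) you have identified the genuine content --- that postcomposing the merge with $\truer$ gives extension closure, that $\negr Q$ is extension-closed when $Q$ is $\healthy{RC}$, and that the outer $\negr$ returns prefix closure --- and you rightly flag the monotonicity of $\parallel_{cs}$ under trace extension as the point where the parallel argument diverges from the sequential $\wpR$ case. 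In the mechanisation such properties of the trace-merge function are established as separate lemmas and then invoked here; your proof would proceed the same way, so there is no substantive difference in approach, only in level of detail.
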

\noindent The intermediate merge constructs a reactive relation that does not refer to the final state. The final merge
constructs a reactive finaliser, since it does not refer to $\refu'$. Weakest rely constructs a reactive
condition. Following a similar approach to Theorem~\ref{thm:nparcomm}, we can also demonstrate commutativity properties.

\begin{theorem}[Inner Merge Commutativity] \isalink{https://github.com/isabelle-utp/utp-main/blob/07cb0c256a90bc347289b5f5d202781b536fc640/theories/circus/utp_circus_parallel.thy\#L181}
  \begin{align*}
    P \imerge{cs} Q &= Q \imerge{cs} P \\
    P \fmerge{ns_1}{cs}{ns_2} Q &= Q \fmerge{ns_2}{cs}{ns_1} P  & ns_1 \lindep ns_2
  \end{align*}
\end{theorem}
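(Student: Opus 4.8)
The plan is to reduce both equations to the generic parallel-by-merge commutativity machinery of \S\ref{sec:pbm}, using the Swap Inner Merge theorem (Theorem~\ref{thm:swapim}) as the crucial ingredient. The key observation is that each specialised merge operator is simply $\parallel_M$ for a merge predicate obtained from $\cmrgn$ by abstracting one observational variable, and Theorem~\ref{thm:swapim} tells us exactly how $\ckey{sw}$ interacts with $\cmrgn$.

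For the intermediate merge, I would show that its merge predicate $(\exists \state' @ N_{\textsf{\tiny I}})$, with $N_{\textsf{\tiny I}} = \cmrg{\lzero}{cs}{\lzero}$, is symmetric, i.e. $\ckey{sw} \relsemi (\exists \state' @ N_{\textsf{\tiny I}}) = (\exists \state' @ N_{\textsf{\tiny I}})$, so that the commutativity corollary of the Parallel-by-Merge Swap theorem (``$P \parallel_M Q = Q \parallel_M P$ provided $\ckey{sw} \relsemi M = M$'') applies directly. Since both name sets of $N_{\textsf{\tiny I}}$ are $\lzero$, and $\lzero \lindep \lzero$ holds trivially, Theorem~\ref{thm:swapim} gives $\ckey{sw} \relsemi N_{\textsf{\tiny I}} = \cmrg{\lzero}{cs}{\lzero} = N_{\textsf{\tiny I}}$. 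It then remains to commute $\ckey{sw}$ past the existential: because $\state'$ is a genuine final output of the merge predicate rather than one of the intermediate (linking) variables eliminated by the composition $\ckey{sw} \relsemi (\cdot)$, we have $\ckey{sw} \relsemi (\exists \state' @ N_{\textsf{\tiny I}}) = \exists \state' @ (\ckey{sw} \relsemi N_{\textsf{\tiny I}})$, and substituting the previous equation closes this case.

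For the final merge, the two sides use \emph{different} merge predicates --- $\cmrg{ns_1}{cs}{ns_2}$ on the left and $\cmrg{ns_2}{cs}{ns_1}$ on the right --- so the symmetric-merge corollary does not apply, and I would instead invoke the Parallel-by-Merge Swap theorem itself. Writing $M = (\exists \refu' @ \cmrg{ns_2}{cs}{ns_1})$, that theorem rewrites the right-hand side $Q \fmerge{ns_2}{cs}{ns_1} P = Q \parallel_M P$ as $P \parallel_{\ckey{sw} \relsemi M} Q$. Commuting $\ckey{sw}$ past $\exists \refu'$ exactly as before yields $\ckey{sw} \relsemi M = \exists \refu' @ (\ckey{sw} \relsemi \cmrg{ns_2}{cs}{ns_1})$, and Theorem~\ref{thm:swapim} --- now applied with name sets $ns_2, ns_1$, using that lens independence is symmetric so $ns_2 \lindep ns_1$ follows from $ns_1 \lindep ns_2$ --- rewrites the inner merge to $\cmrg{ns_1}{cs}{ns_2}$. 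Hence $\ckey{sw} \relsemi M = (\exists \refu' @ \cmrg{ns_1}{cs}{ns_2})$, which is precisely the merge predicate of $\fmerge{ns_1}{cs}{ns_2}$, so $P \parallel_{\ckey{sw} \relsemi M} Q = P \fmerge{ns_1}{cs}{ns_2} Q$ and the equation follows.

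The main obstacle I anticipate is the quantifier-commutation step $\ckey{sw} \relsemi (\exists x' @ N) = \exists x' @ (\ckey{sw} \relsemi N)$ for $x' \in \{\state', \refu'\}$. This is the one point where we must look inside the definition of $\ckey{sw}$ and the parallel-by-merge observation space: it holds because $x'$ denotes a final output of the merge, whereas $\ckey{sw}$ only rewrites the triple of intermediate inputs fed into the merge predicate, so $x'$ never occurs among the variables eliminated by the relational composition. Everything else is a direct application of the swap theorems together with the trivial facts $\lzero \lindep \lzero$ and symmetry of $\lindep$; in the mechanisation these supporting steps should be discharged largely automatically.
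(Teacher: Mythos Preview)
Your proposal is correct and follows exactly the approach the paper intends: the text introduces this theorem with ``Following a similar approach to Theorem~\ref{thm:nparcomm}'', and gives no further proof beyond the Isabelle link, so your reduction to the parallel-by-merge swap machinery via Theorem~\ref{thm:swapim} is precisely what is meant. Your identification of the quantifier-commutation step $\ckey{sw} \relsemi (\exists x' @ N) = \exists x' @ (\ckey{sw} \relsemi N)$ as the only nontrivial residual obligation, and your justification that $x'$ lies in the final output alphabet rather than the intermediate alphabet eliminated by the composition, are both sound.
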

\noindent Using these new operators, we can finally prove the specialised calculation law for parallel composition.

\begin{theorem}[Parallel Calculation] \label{thm:nrdespar} \isalink{https://github.com/isabelle-utp/utp-main/blob/07cb0c256a90bc347289b5f5d202781b536fc640/theories/circus/utp_circus_parallel.thy\#L874}
  \begin{align*}
    &\rc{P_1}{P_2}{P_3} \sfpar{ns_1}{cs}{ns_2} \rc{Q_1}{Q_2}{Q_3} = \\[.5ex]
    & \qquad \rc{
      \begin{array}{l} 
        (P_1 \rimplies P_2) \wppC{cs} Q_1 \land \\
        (P_1 \rimplies P_3) \wppC{cs} Q_1 \land \\
        (Q_1 \rimplies Q_2) \wppC{cs} P_1 \land \\
        (Q_1 \rimplies Q_3) \wppC{cs} P_1
      \end{array}}{\begin{array}{l}
        P_2 \imerge{cs} Q_2 ~~\lor~~ \\
        P_3 \imerge{cs} Q_2 ~~\lor~~ \\
        P_2 \imerge{cs} Q_3
      \end{array}}{\textstyle P_3 \fmerge{ns_1}{cs}{ns_2} Q_3}
  \end{align*}
\end{theorem}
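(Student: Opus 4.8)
The plan is to reduce the statement to the general reactive-design parallel law, Theorem~\ref{thm:rdespar}, and then specialise the three generic merge operators that appear in it. First I would unfold the left-hand side using Definition~\ref{def:parcomp}, rewriting $\rc{P_1}{P_2}{P_3} \sfpar{ns_1}{cs}{ns_2} \rc{Q_1}{Q_2}{Q_3}$ as $\rc{P_1}{P_2}{P_3} \rcpar{\cmrgn} \rc{Q_1}{Q_2}{Q_3}$, with the inner merge predicate instantiated to $\cmrgn$. Applying Theorem~\ref{thm:rdespar} with $M \defs \cmrgn$ then yields a contract whose shape already matches the goal, but expressed in terms of the generic operators $\wppR{\cmrgn}$, $\rcmergee{\cmrgn}$, and $\parallel_{\cmrgn}$. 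The remaining work is to show that, on the operands at hand, these coincide with the specialised $\wppC{cs}$, $\imerge{cs}$, and $\fmerge{ns_1}{cs}{ns_2}$.

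The three coincidences all exploit that $\cmrgn$ is a conjunction of four clauses, of which only the final-state clause $\state' = \lovrd{\lovrd{\state}{0.\state}{ns_1}}{1.\state}{ns_2}$ mentions the name sets and only the refusal-bound clause mentions $\refu'$. For the pericondition I would use $P \rcmergee{M} Q \defs P \parallel_{\exists \state' @ M} Q$: existentially quantifying $\state'$ deletes the sole name-set-dependent clause of $\cmrgn$, so $\exists \state' @ \cmrgn$ and $\exists \state' @ N_{\textsf{\tiny I}}$ coincide, whence $P_2 \rcmergee{\cmrgn} Q_2 = P_2 \imerge{cs} Q_2$ and likewise for the other two disjuncts. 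For the postcondition I would invoke Theorem~\ref{thm:ncsp-intro}, so that $P_3$ and $Q_3$ are $\healthy{CRF}$-healthy and hence impose no constraint on refusals; the refusal clause of $\cmrgn$ then leaves the merged $\refu'$ entirely free, so it can be absorbed by the existential of $\fmerge{ns_1}{cs}{ns_2} \defs \parallel_{\exists \refu' @ \cmrgn}$, giving $P_3 \parallel_{\cmrgn} Q_3 = P_3 \fmerge{ns_1}{cs}{ns_2} Q_3$. For the precondition, since $\wppC{cs} \defs \wppR{N_{\textsf{\tiny I}}}$, I would argue that $\wppR{\cmrgn}$ and $\wppR{N_{\textsf{\tiny I}}}$ agree on these operands: the preconditions $P_1, Q_1$ are $\healthy{CRC}$-healthy and so reference neither $\state'$ nor $\refu'$, and the pericondition $P_2$ does not reference $\state'$ either, so the differing final-state clause of $\cmrgn$ contributes nothing.

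The main obstacle is the precondition case. Unlike the peri- and postconditions, where a single existential visibly discards the offending clause, the weakest-rely operator is defined through a double reactive negation and a composition with $\truer$ (Definition~\ref{def:wppR}), so the elimination of the name-set clause cannot be read off syntactically and must be justified semantically, using that the operands do not constrain the final state and that the result is itself $\healthy{CRC}$-healthy (Theorem~\ref{thm:mergeclos}). In addition, the four-way conjunction $(P_1 \rimplies P_2) \wppR{\cmrgn} Q_1 \land \cdots$ must be driven down to its $\wppC{cs}$ counterparts termwise; the weakest-rely laws of Theorem~\ref{thm:wrlaws} manage this bookkeeping. Once these quantifier-elimination arguments on $\cmrgn$ are in place, the peri- and postcondition reductions are routine, and the entire derivation is discharged by the corresponding mechanised calculation in Isabelle/UTP.
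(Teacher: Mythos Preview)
Your proposal is correct and follows precisely the approach the paper implicitly takes: the paper gives no explicit proof for Theorem~\ref{thm:nrdespar}, merely remarking that ``this is similar to Theorem~\ref{thm:rdespar}, but uses the specialised merge and weakest rely operators'' and deferring to the Isabelle mechanisation. Your plan of unfolding Definition~\ref{def:parcomp}, applying Theorem~\ref{thm:rdespar} with $M = \cmrgn$, and then showing that the three generic merge operators collapse to $\wppC{cs}$, $\imerge{cs}$, and $\fmerge{ns_1}{cs}{ns_2}$ is exactly the intended specialisation.

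One small sharpening: in the precondition case, the reason the name-set clause of $\cmrgn$ drops out is not primarily that the operands $P_1, Q_1, P_2$ fail to mention $\state'$, but rather that Definition~\ref{def:wppR} postcomposes the merge predicate with $\truer$. Since the state clause $\state' = \lovrd{\lovrd{\state}{0.\state}{ns_1}}{1.\state}{ns_2}$ is always satisfiable, and $\truer$ leaves the final state unconstrained, one gets $\cmrgn \relsemi \truer = N_{\textsf{\tiny I}} \relsemi \truer$ outright, so in fact $\wppR{\cmrgn} = \wppR{N_{\textsf{\tiny I}}} = \wppC{cs}$ holds for all operands, not just the healthy ones appearing here. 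This makes the ``main obstacle'' you identify rather lighter than you suggest.
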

\noindent This is similar to Theorem~\ref{thm:rdespar}, but uses the specialised merge and weakest rely operators. This
theorem shows that calculation of reactive contracts can be reduced to merging the peri- and postconditions. A
corollary, for the simpler case when the preconditions are both $\truer$, is given below.

\begin{theorem}[Simplified Parallel Calculation] \label{thm:snrdespar}
  $$\rcs{P_2}{P_3} \sfpar{ns_1}{cs}{ns_2} \rcs{Q_2}{Q_3} = \rcs{P_2 \imerge{cs} Q_2 \lor P_3 \imerge{cs} Q_2 \lor P_2 \imerge{cs} Q_3}{P_3 \fmerge{ns_1}{cs}{ns_2} Q_3}$$
\end{theorem}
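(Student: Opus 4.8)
The plan is to derive this corollary directly from the general parallel calculation law, Theorem~\ref{thm:nrdespar}, by specialising both preconditions to $\truer$. Recalling that $\rcs{P_2}{P_3}$ is shorthand for $\rc{\truer}{P_2}{P_3}$, I would instantiate Theorem~\ref{thm:nrdespar} with $P_1 = Q_1 = \truer$ and then simplify the resulting pre-, peri-, and postcondition in turn.

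The peri- and postconditions require no work: inspecting Theorem~\ref{thm:nrdespar}, neither the pericondition $P_2 \imerge{cs} Q_2 \lor P_3 \imerge{cs} Q_2 \lor P_2 \imerge{cs} Q_3$ nor the postcondition $P_3 \fmerge{ns_1}{cs}{ns_2} Q_3$ mentions $P_1$ or $Q_1$, so they carry over verbatim and already match the right-hand side of the claimed equation.

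The only real step is showing that the four-conjunct precondition collapses to $\truer$. After the instantiation, each conjunct has the shape $A \wppC{cs} \truer$: the first two are $(\truer \rimplies P_2) \wppC{cs} \truer$ and $(\truer \rimplies P_3) \wppC{cs} \truer$, whose right operand is $Q_1 = \truer$; the last two are $(\truer \rimplies Q_2) \wppC{cs} \truer$ and $(\truer \rimplies Q_3) \wppC{cs} \truer$, whose right operand is $P_1 = \truer$. Since $\wppC{cs}$ is by definition the instance $\wppR{N_{\textsf{\tiny I}}}$ of the general weakest rely operator, the law $P \wppR{M} \truer = \truer$ from Theorem~\ref{thm:wrlaws} applies with $M = N_{\textsf{\tiny I}}$, giving $A \wppC{cs} \truer = \truer$ for every conjunct. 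Hence the precondition reduces to $\truer \land \truer \land \truer \land \truer = \truer$, and the whole contract becomes $\rc{\truer}{\cdots}{\cdots}$, which is exactly $\rcs{\cdots}{\cdots}$, as required.

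I expect no genuine obstacle here, since the substantive content lives entirely in Theorem~\ref{thm:nrdespar} together with the $\truer$-absorption law for weakest rely. The one point worth confirming is purely notational: that $\wppC{cs}$ genuinely inherits Theorem~\ref{thm:wrlaws}, which I would justify by unfolding its definition to $\wppR{N_{\textsf{\tiny I}}}$ so that the general weakest rely law may be applied at the specialised merge predicate $N_{\textsf{\tiny I}}$.
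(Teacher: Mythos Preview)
Your proposal is correct and matches the paper's own justification: the paper states that the result follows by applying Theorem~\ref{thm:wrlaws}, since $P \wppC{cs} \truer = \truer$, after instantiating the general law (Theorem~\ref{thm:nrdespar}) with $P_1 = Q_1 = \truer$. Your explicit unfolding of $\wppC{cs}$ to $\wppR{N_{\textsf{\tiny I}}}$ to invoke the general weakest rely law is precisely the right way to discharge the precondition conjuncts.
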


\noindent This follows by application of Theorem~\ref{thm:wrlaws} because $P \wppC{cs} \truer = \truer$. We can also
show, with the help of Theorems~\ref{thm:ncsp-intro} and \ref{thm:mergeclos}, that $\healthy{NCSP}$ is closed under
parallel composition.

\begin{theorem}
  If $P$ and $Q$ are $\healthy{NCSP}$-healthy, and $ns_1 \lindep ns_2$, then $P \sfpar{ns_1}{cs}{ns_2} Q$ is
  $\healthy{NCSP}$-healthy. \isalink{https://github.com/isabelle-utp/utp-main/blob/07cb0c256a90bc347289b5f5d202781b536fc640/theories/circus/utp_circus_parallel.thy\#L947}
\end{theorem}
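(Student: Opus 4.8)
The plan is to reduce the problem to closure properties already established, combining the explicit parallel calculation of Theorem~\ref{thm:nrdespar} with the characterisation of $\healthy{NCSP}$ contracts in Theorem~\ref{thm:ncsp-intro}. First I would observe that, since $P$ and $Q$ are $\healthy{NCSP}$-healthy, each corresponds to a reactive contract in the specialised stateful-failure form, say $P = \rc{P_1}{P_2}{P_3}$ and $Q = \rc{Q_1}{Q_2}{Q_3}$, where $P_1, Q_1$ are $\healthy{CRC}$-healthy, $P_2, Q_2$ are $\healthy{CRR}$-healthy and free of $\state'$, and $P_3, Q_3$ are $\healthy{CRF}$-healthy. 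I would then apply Theorem~\ref{thm:nrdespar} to rewrite $P \sfpar{ns_1}{cs}{ns_2} Q$ as a single contract whose three components are the four-way weakest-rely conjunction, the three-way $\imerge{cs}$ disjunction, and the $\fmerge{ns_1}{cs}{ns_2}$ of the postconditions. The goal then reduces to checking the four conditions of Theorem~\ref{thm:ncsp-intro} on these components.

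For the postcondition this is immediate: since $P_3$ and $Q_3$ are $\healthy{CRF}$-healthy and $ns_1 \lindep ns_2$, item~3 of Theorem~\ref{thm:mergeclos} gives that $P_3 \fmerge{ns_1}{cs}{ns_2} Q_3$ is $\healthy{CRF}$-healthy, discharging condition~4. For the pericondition, each disjunct is an $\imerge{cs}$ of two relations that are $\healthy{CRR}$-healthy (using that every $\healthy{CRF}$-healthy relation is $\healthy{CRR}$-healthy, per Figure~\ref{fig:rrel-impls}); by item~1 of Theorem~\ref{thm:mergeclos} each merge is $\healthy{CRR}$-healthy, and by item~2 none refers to $\state'$. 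It then remains to observe that $\healthy{CRR}$ is closed under disjunction and that absence of $\state'$ is preserved by $\lor$, which together discharge conditions~2 and~3.

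The precondition is the delicate part, and I expect it to be the main obstacle. It is a conjunction of four terms of the shape $(A \rimplies B) \wppC{cs} C$, and I would discharge each using item~4 of Theorem~\ref{thm:mergeclos}, which yields a $\healthy{CRC}$-healthy result provided the left argument is $\healthy{CRR}$-healthy and the right is $\healthy{CRC}$-healthy. The right arguments are $Q_1$ or $P_1$, which are $\healthy{CRC}$-healthy by hypothesis. The difficulty is thus to show that each left argument, namely $(P_1 \rimplies P_2)$, $(P_1 \rimplies P_3)$, $(Q_1 \rimplies Q_2)$, and $(Q_1 \rimplies Q_3)$, is $\healthy{CRR}$-healthy. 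For this I would appeal to the Boolean-algebra structure of $\healthy{CRR}$: since $P_1$ is $\healthy{CRC}$ and hence $\healthy{CRR}$-healthy, its reactive complement $\negr P_1$ is again $\healthy{CRR}$-healthy, and $P_2$ (respectively $P_3$, which is $\healthy{CRF}$ hence $\healthy{CRR}$) is $\healthy{CRR}$-healthy, so $\negr P_1 \lor P_2 = (P_1 \rimplies P_2)$ is $\healthy{CRR}$-healthy by closure under $\lor$. Finally, since each of the four conjuncts is $\healthy{CRC}$-healthy and $\healthy{CRC}$ is closed under conjunction, the precondition is $\healthy{CRC}$-healthy, discharging condition~1. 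With all four conditions of Theorem~\ref{thm:ncsp-intro} established, the parallel composition is $\healthy{NCSP}$-healthy. The only genuinely fiddly bookkeeping is ensuring the reactive-operator closure lemmas for $\negr$, $\lor$, and $\land$ on $\healthy{CRR}$ and $\healthy{CRC}$ are available in the stateful-failure setting; these lift routinely from the corresponding Boolean-algebra results for reactive relations.
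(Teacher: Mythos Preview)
Your proposal is correct and follows essentially the same approach as the paper: the paper explicitly states that the result is shown ``with the help of Theorems~\ref{thm:ncsp-intro} and \ref{thm:mergeclos}'', which is precisely your plan of expanding via Theorem~\ref{thm:nrdespar} and then discharging the four conditions of Theorem~\ref{thm:ncsp-intro} using the merge closure properties. Your write-up in fact supplies more detail than the paper does, correctly isolating the bookkeeping needed for the precondition conjuncts.
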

\noindent For sequential processes, we have already shown that the peri- and postconditions of reactive programs can be
specified using disjunctions of the $\mathcal{E}$ and $\Phi$ operators. Consequently, to extend our calculational method
to parallel composition, we need to prove how these operators should be merged. The following theorems show how reactive
relations describing final and intermediate observations are merged.

\begin{theorem}[Merging Finalisers] If $ns_1 \lindep ns_2$ then \label{thm:rrmerge} \isalink{https://github.com/isabelle-utp/utp-main/blob/07cb0c256a90bc347289b5f5d202781b536fc640/theories/circus/utp_circus_parallel.thy\#L556}
  \begin{align*}
    \csppf{s_1}{\sigma_1}{t_1} ~\fmerge{ns_1}{cs}{ns_2}~ \csppf{s_2}{\sigma_2}{t_2} &= \left(\exists t @ \csppf{s_1 \land s_2 \land t \in t_1\!\parallel_{cs}\!t_2 \land t_1 \project cs = t_2 \project cs}{\sigma_1 \usubpar{ns_1}{ns_2} \sigma_2}{t}\right) \\[.5ex]
    \csppf{s_1}{\sigma_1}{\snil} ~\fmerge{ns_1}{cs}{ns_2}~ \csppf{s_2}{\sigma_2}{\snil} &= \csppf{s_1 \land s_2}{\sigma_1 \usubpar{ns_1}{ns_2} \sigma_2}{\snil} 
  \end{align*}

\end{theorem}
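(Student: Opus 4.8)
The plan is to prove the first equation by unfolding the layered definitions and then carrying out the relational composition that defines parallel-by-merge, deriving the second equation as a corollary. First I would expand $\fmerge{ns_1}{cs}{ns_2}$ to the parallel-by-merge $\parallel_{(\exists \refu' @ \cmrgn)}$, and then expand that operator to the relation $(\psep{\csppf{s_1}{\sigma_1}{t_1}}{0} \land \psep{\csppf{s_2}{\sigma_2}{t_2}}{1} \land \uv' = \uv) \relsemi (\exists \refu' @ \cmrgn)$. Unfolding each finaliser $\csppf{s_i}{\sigma_i}{t_i}$ to its defining relation $\healthy{CRR}(s_i(\state) \land \state' = \sigma_i(\state) \land \trace = t_i(\state))$ and applying the separating simulations, the left operand of the composition records $s_1(\state) \land s_2(\state)$ on the shared initial state, together with $0.\state = \sigma_1(\state)$, $1.\state = \sigma_2(\state)$, $0.\trace = t_1(\state)$, and $1.\trace = t_2(\state)$.

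Next I would discharge the existential over $\refu'$. Since $\refu'$ occurs in $\cmrgn$ only in the refusal conjunct $\refu' \subseteq (\cdots)$, and the empty set is a subset of any set, $\exists \refu' @ \cmrgn$ collapses that conjunct to $\true$, leaving the trace-merge conjunct $\trace \in 0.\trace \parallel_{cs} 1.\trace$, the synchronisation conjunct $0.\trace \project cs = 1.\trace \project cs$, and the state conjunct $\state' = \lovrd{\lovrd{\state}{0.\state}{ns_1}}{1.\state}{ns_2}$. Performing the relational composition then substitutes the recorded values of $0.\state$, $1.\state$, $0.\trace$, and $1.\trace$ into these three conjuncts, yielding $\state' = \lovrd{\lovrd{\state}{\sigma_1(\state)}{ns_1}}{\sigma_2(\state)}{ns_2}$, the condition $t_1 \project cs = t_2 \project cs$, and $\trace \in t_1 \parallel_{cs} t_2$. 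The penultimate step is to recognise the state expression as exactly $(\sigma_1 \usubpar{ns_1}{ns_2} \sigma_2)(\state)$, which follows by unfolding the definition of $\usubpar{ns_1}{ns_2}$; the override laws of Theorem~\ref{thm:ovrd}, together with $ns_1 \lindep ns_2$, justify the well-formedness of the merged update.

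Finally, because the merged trace is not a single value but any member of $t_1 \parallel_{cs} t_2$, I would introduce a fresh trace variable $t$ and re-express the choice over admissible traces as $\exists t @ \csppf{s_1 \land s_2 \land t \in t_1 \parallel_{cs} t_2 \land t_1 \project cs = t_2 \project cs}{\sigma_1 \usubpar{ns_1}{ns_2} \sigma_2}{t}$, appealing to Theorem~\ref{thm:mergeclos} to confirm the result is $\healthy{CRF}$-healthy. The second equation then follows by instantiation: when $t_1 = t_2 = \snil$, the trace-merge function gives $\snil \parallel_{cs} \snil = \{\snil\}$ and $\snil \project cs = \snil \project cs$ holds trivially, so the existential over $t$ collapses to $t = \snil$, delivering $\csppf{s_1 \land s_2}{\sigma_1 \usubpar{ns_1}{ns_2} \sigma_2}{\snil}$.

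I expect the main obstacle to be the faithful treatment of the trace inside the relational composition. Since $\trace$ is not a genuine relational variable but the derived quantity $tr' - tr$, the separating simulations act on $tr$ and $tr'$, and one must verify that $\healthy{R2}$ behaves compatibly so that $0.\trace$ and $1.\trace$ genuinely capture the per-branch contributions $t_1(\state)$ and $t_2(\state)$ before being fed into $\parallel_{cs}$; the surrounding $\healthy{CRR}$ applications must likewise be shown to commute with the composition. The remaining bookkeeping --- eliminating $\refu'$, substituting through the merge predicate, and folding the override into $\usubpar{ns_1}{ns_2}$ --- is routine once the trace handling is settled.
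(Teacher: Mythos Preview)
The paper does not give an inline proof of this theorem; it is stated with only a hyperlink to the Isabelle mechanisation, so there is no textual argument to compare against. Your unfolding-and-compose plan is the natural route and is almost certainly what the mechanised proof does: expand $\fmerge{ns_1}{cs}{ns_2}$ to parallel-by-merge with merge predicate $\exists \refu' @ \cmrgn$, drop the refusal conjunct via the witness $\refu' = \emptyset$, push the separated copies of the two $\Phi$-terms through the relational composition with $\cmrgn$, and then fold back into a single $\Phi$ with an existential over the merged trace. Your identification of the override expression with $\sigma_1 \usubpar{ns_1}{ns_2} \sigma_2$ and the instantiation to $\snil$ for the corollary are both correct.

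The one place where you should sharpen the argument is exactly where you anticipate it: the handling of $\trace$. In the merge predicate, $0.\trace$ and $1.\trace$ are not simply ``the dashed trace variable indexed by $0$ or $1$'', because $\trace$ abbreviates $tr' - tr$. After the separating simulations and the $\uv' = \uv$ copy, the shared $tr$ is passed to $M$ unchanged, so inside $\cmrgn$ one has $0.\trace = 0.tr - tr$ and $1.\trace = 1.tr - tr$, i.e.\ genuinely the per-branch contributions. You should make this explicit and check that the $\healthy{R2}$ component of $\healthy{CRR}$ on each $\Phi$ guarantees these differences are exactly $t_1(\state)$ and $t_2(\state)$, independent of the ambient $tr$. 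That is the only non-routine step; once it is discharged, the rest of your plan goes through as written.
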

\noindent The first equation shows how to merge two finalisers. We require the existence of the trace $t$, which is one
of the possible merges of $t_1$ and $t_2$, and require that both preconditions $s_1$ and $s_2$ hold initially. The
overall trace of the finaliser is then $t$. The second equation is a corollary for when both traces are empty, and the
event merge is trivial. In either case, the final state update is constructed using the operator $\usubpar{ns_1}{ns_2}$,
which uses lens override to merge the two disjoint state updates. It obeys the following laws.

\begin{theorem} \label{thm:subpar} Given independent lenses, $ns_1 \lindep ns_2$, the following identities hold: \isalink{https://github.com/isabelle-utp/utp-main/blob/07cb0c256a90bc347289b5f5d202781b536fc640/utp/utp_subst.thy\#L486}
  \begin{align}
  id \usubpar{ns_1}{ns_2} id &= id \label{law:subpar1} \\
  \sigma \usubpar{ns_1}{ns_2} \rho &= \rho \usubpar{ns_2}{ns_1} \sigma \label{law:subpar2} \\
  (\sigma(x \mapsto v)) \usubpar{ns_1}{ns_2} \rho &= (\sigma \usubpar{ns_1}{ns_2} \rho)(x \mapsto v) & x \lsubseteq ns_1 \label{law:subpar3} \\
  (\sigma(x \mapsto v)) \usubpar{ns_1}{ns_2} \rho &= \sigma \usubpar{ns_1}{ns_2} \rho & x \lindep ns_1 \label{law:subpar4}
  \end{align}
\end{theorem}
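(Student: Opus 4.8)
The plan is to reduce each of the four identities to a pointwise statement about the underlying state-transformer functions. Substitutions are modelled as functions $\sigma : \src \to \src$, so two substitutions are equal exactly when they agree on every $s : \src$. Reading off the fourth conjunct of the merge predicate in Definition~\ref{def:parcomp}, the merge operator unfolds to
$$(\sigma \usubpar{ns_1}{ns_2} \rho)(s) ~=~ \lovrd{(\lovrd{s}{\sigma(s)}{ns_1})}{\rho(s)}{ns_2}.$$
With this unfolding in hand, each law becomes an equation between lens-override expressions that I can discharge using the override algebra of Theorem~\ref{thm:ovrd} together with the lens axioms.

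The first two laws are essentially direct corollaries of the override laws. For \eqref{law:subpar1}, setting $\sigma = \rho = id$ gives the pointwise value $\lovrd{(\lovrd{s}{s}{ns_1})}{s}{ns_2}$, which collapses to $s$ by two applications of the idempotence law \eqref{law:ov3}; hence the merge is the identity substitution. For \eqref{law:subpar2}, the pointwise left-hand side is $\lovrd{(\lovrd{s}{\sigma(s)}{ns_1})}{\rho(s)}{ns_2}$, and since $ns_1 \lindep ns_2$ the override commutativity law \eqref{law:ov4} lets me swap the two overrides to obtain $\lovrd{(\lovrd{s}{\rho(s)}{ns_2})}{\sigma(s)}{ns_1}$, which is precisely the pointwise value of $\rho \usubpar{ns_2}{ns_1} \sigma$.

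The substance of the theorem lies in \eqref{law:subpar3} and \eqref{law:subpar4}, which involve the substitution-update constructor $\sigma(x \mapsto v)$, defined through $\lput_x$. Here the pointwise left-hand side contains the subterm $\lget_{ns_1}(\lput_x~(\sigma(s))~(v(s)))$, the $ns_1$-region of a state in which $x$ has just been written. For \eqref{law:subpar4} I would argue that, because $x \lindep ns_1$, writing $x$ leaves the $ns_1$-view untouched, so this subterm equals $\lget_{ns_1}(\sigma(s))$; the inner override is therefore unchanged and the $x$-update is simply discarded, yielding $\sigma \usubpar{ns_1}{ns_2} \rho$. For \eqref{law:subpar3}, the hypothesis $x \lsubseteq ns_1$ means the write to $x$ lies entirely inside the $ns_1$-region, and the goal is to migrate that write out past both overrides so that it reappears as the outer update $(\cdots)(x \mapsto v)$ on the right-hand side. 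The main obstacle is precisely this migration: it needs a lens lemma describing how $\lput_x$ interacts with lens override under a sublens assumption, and it additionally requires $x \lindep ns_2$ — which I would derive from $x \lsubseteq ns_1$ and $ns_1 \lindep ns_2$ (a sublens of a lens independent of $ns_2$ is itself independent of $ns_2$) — so that the $x$-write commutes with the final $ns_2$-override and can be lifted to the outside. Once these two auxiliary lens facts are in place, \eqref{law:subpar3} follows by rewriting, and all four identities reduce to the lens and override algebra, exactly as mechanised in Isabelle/UTP.
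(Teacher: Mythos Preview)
Your approach is correct. The paper itself does not spell out a proof of Theorem~\ref{thm:subpar}; it states the identities, gives their informal reading, and delegates the justification to the Isabelle/UTP mechanisation via the accompanying repository link. Your proposal is exactly the kind of argument that underlies such a mechanised proof: unfold $\usubpar{ns_1}{ns_2}$ pointwise in terms of lens override (which you correctly read off from the state conjunct of $\cmrgn$ in Definition~\ref{def:parcomp}), and then discharge each identity by the override laws of Theorem~\ref{thm:ovrd} together with the lens axioms and the sublens/independence facts. Your identification of the two auxiliary lens lemmas needed for \eqref{law:subpar3} --- that a $\lput_x$ inside an $ns_1$-override can be lifted out when $x \lsubseteq ns_1$, and that it then commutes past the outer $ns_2$-override because $x \lsubseteq ns_1 \lindep ns_2$ entails $x \lindep ns_2$ --- is precisely the non-trivial content, and you handle it correctly.
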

\noindent Merging two identity (vacuous) assignments yields an identity assignment \eqref{law:subpar1}. The operator is
quasi-commutative, when the name sets are also swapped \eqref{law:subpar2}. When one of the assignments is constructed
with a state update, if the variable being assigned is part of the corresponding name set ($x \lsubseteq ns_1$), then
the update is applied to the top-level assignment \eqref{law:subpar3}. Effectively, this means that the assignment is
retained when the parallel composition terminates. Conversely, if the assignment is to a variable outside of the name
set ($x \lindep ns_1$), then its effect is lost \eqref{law:subpar4}. Using these laws we calculate the contracts for
some examples.

\begin{example} We assume the existence of lenses $x$ and $y$, with $x \lsubseteq ns_1$, $y \lsubseteq ns_2$, and $ns_1 \lindep ns_2$, and
  calculate the meaning of parallel assignment to these variables. \isalink{https://github.com/isabelle-utp/utp-main/blob/dd8dd58e22ec371bc43b9c9f42be574060310d54/tutorial/utp_csp_ex.thy\#L84}
  \begin{align*}
      &(x := u) \sfpar{ns_1}{cs}{ns_2} (y := v) \\ 
    =~& \rcs{\false}{\csppfs{\substmap{x \mapsto u}}{\snil}} \sfpar{ns_1}{cs}{ns_2} \rcs{\false}{\csppfs{\substmap{y \mapsto v}}{\snil}} & [\ref{thm:bcircus-def}] \\
    =~& \rcs{
          \begin{array}{l}
             \false ~\imerge{cs}~ \false  \\
             \lor \csppfs{\substmap{x \mapsto u}}{\snil} ~\imerge{cs}~ \false \\
             \lor \false ~\imerge{cs} \csppfs{\substmap{y \mapsto v}}{\snil} 
          \end{array}
    }{\csppfs{\substmap{x \mapsto u}}{\snil} \fmerge{ns_1}{cs}{ns_2} \csppfs{\substmap{y \mapsto v}}{\snil}} & [\ref{thm:snrdespar}] \\
    =~& \rcs{\false}{\csppfs{\substmap{x \mapsto u} ~\usubpar{ns_1}{ns_2}~ \substmap{y \mapsto v}}{\snil}} & [\ref{thm:pbm-laws}, \ref{thm:rrmerge}] \\
    =~& \rcs{\false}{\csppfs{\substmap{x \mapsto u, y \mapsto v}}{\snil}} & [\ref{thm:subpar}] \\
    =~& \assignsC{x \mapsto u, y \mapsto v} & [\ref{thm:bcircus-def}]
  \end{align*}
  This does not rule out having $u \defs y$ and $v \defs x$, since both processes have access to the entirety of the
  initial state. We first calculate the contract for the two assignments. Since the preconditions are trivial, we apply
  Theorem~\ref{thm:snrdespar} to compute the composition contract. Since both periconditions are $\false$, by
  application of Theorem~\ref{thm:pbm-laws} and relational calculus, the overall pericondition is also $\false$. Thus,
  we can simply apply Theorem~\ref{thm:rrmerge} to compute the merge of the two finalisers, and then
  Theorem~\ref{thm:subpar} to merge the two assignments. The final form is, by Definition~\ref{thm:bcircus-def},
  equivalent to a single assignment. In Isabelle/UTP, this proof of this equality is fully automated by the
  \textsf{rdes-eq} tactic~\cite{Foster17c}.

  Using a similar calculation, and using Definition~\ref{def:parabbrev}, we can also show that
  $$(x := u) \interleave (y := v) = \Skip$$ The name sets are both $\lzero$ and consequently, since
  $x \lindep \lzero$ and $y \lindep \lzero$, both assignments are lost. \qed
\end{example}

The independence constraints on the process state spaces and loss of assignments may, at first sight, seem
unsatisfactory as this prevents shared variables. However, variables here are only for the sequential case. Shared
variables in languages like CSP and \Circus should be modelled using channel communication, for separation of
concerns. This approach has been demonstrated in several previous works~\cite{Roscoe2007}, including
JCSP~\cite{Vinter2002JCSP} and the RoboChart state-machine language~\cite{Miyazawa2019-RoboChart}.

We next show how quiescent observations are merged using the $\imerge{cs}$.

\begin{theorem}[Merging Quiescent Observations] \label{thm:qmerge} \isalink{https://github.com/isabelle-utp/utp-main/blob/07cb0c256a90bc347289b5f5d202781b536fc640/theories/circus/utp_circus_parallel.thy\#L601}
  \begin{align*}
    \cspen{s_1}{t_1}{E_1} ~\imerge{cs}~ \cspen{s_2}{t_2}{E_2} &= \left(\exists t  @ \cspen{\left(\begin{array}{l} s_1 \land s_2 \land t \in t_1\!\parallel_{cs}\!t_2 \land \\ t_1 \project cs = t_2 \project cs \end{array}\right)}{t}{\left(\!\begin{array}{l} (E_1 \cap E_2 \cap cs) \, \cup \\ ((E_1 \cup E_2) \setminus cs) \end{array}\!\right)}\right) \\[.5ex]
    \cspen{s_1}{t_1}{E_1} ~\imerge{cs}~ \csppf{s_2}{\sigma_2}{t_2} &= \left(\exists t @ \cspen{\left(\begin{array}{l} s_1 \land s_2 \land t \in t_1\!\parallel_{cs}\!t_2 \land \\ t_1 \project cs = t_2 \project cs \end{array}\right)}{t}{E_1 \setminus cs}\right)
  \end{align*}
\end{theorem}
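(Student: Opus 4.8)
The plan is to unfold the intermediate merge $\imerge{cs}$ into the underlying parallel-by-merge, eliminate the indexed intermediate variables by relational composition, and reduce the claim to a set-theoretic characterisation of the merged refusals. First I would simplify the merge predicate used by $\imerge{cs}$, namely $\exists \state' @ N_{\textsf{\tiny I}}$ with $N_{\textsf{\tiny I}} = \cmrg{\lzero}{cs}{\lzero}$. The state conjunct $\state' = \lovrd{\lovrd{\state}{0.\state}{\lzero}}{1.\state}{\lzero}$ collapses to $\state' = \state$ by two applications of override law~\eqref{law:ov1}, and the outer $\exists \state'$ then discards it, confirming via Theorem~\ref{thm:mergeclos} that the result is $\healthy{CRR}$-healthy and free of $\state'$. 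The effective merge predicate thus retains only the trace conjuncts $\trace \in 0.\trace \parallel_{cs} 1.\trace$ and $0.\trace \project cs = 1.\trace \project cs$ together with the refusal constraint $\refu' \subseteq ((0.\refu \cup 1.\refu) \cap cs) \cup ((0.\refu \cap 1.\refu) \setminus cs)$.

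Next I would unfold the parallel-by-merge definition and the two separating simulations. The simulation $\psep{\cspen{s_1}{t_1}{E_1}}{0}$ contributes $s_1$ on the shared initial state (preserved by the $\uv' = \uv$ conjunct), fixes $0.\trace = t_1$, and imposes $E_1 \cap 0.\refu = \emptyset$; symmetrically for the $1$-indexed copy. Performing the composition $\relsemi M$ eliminates the intermediate variables: substituting $0.\trace = t_1$ and $1.\trace = t_2$ yields the trace membership $\trace \in t_1 \parallel_{cs} t_2$ and the synchronisation equation $t_1 \project cs = t_2 \project cs$, which I repackage as the existential $\exists t$ with $\trace = t$ demanded by the $\mathcal{E}$ constructor, while the preconditions combine to $s_1 \land s_2$.

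The main obstacle is the refusal calculation. The $\mathcal{E}$ operator encodes an acceptance set as the downward-closed family of refusals $\{\refu' \mid \refu' \subseteq \textit{Event} \setminus E\}$, so I must show the admissible $\refu'$ from the composition form exactly such a family with maximal element accepting $(E_1 \cap E_2 \cap cs) \cup ((E_1 \cup E_2) \setminus cs)$. Since the constraint on $\refu'$ is monotone in both $0.\refu$ and $1.\refu$ and the admissible $\refu'$ are downward closed, the largest permissible $\refu'$ arises from the maximal witnesses $0.\refu = \textit{Event} \setminus E_1$ and $1.\refu = \textit{Event} \setminus E_2$ allowed by the acceptance constraints. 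Substituting these and complementing, with a case split on membership in $cs$, shows that $e$ is guaranteed accepted precisely when $e \in E_1 \cap E_2$ for $e \in cs$ and when $e \in E_1 \cup E_2$ for $e \notin cs$, which is the delicate step.

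Finally, the second equation follows by the same route, the only difference being that the finaliser $\csppf{s_2}{\sigma_2}{t_2}$ places no constraint on $1.\refu$, being $\healthy{CRF}$-healthy, while its state update $\state' = \sigma_2(\state)$ and trace are absorbed into the already-discarded intermediate state and the trace merge respectively. Maximising the now-unconstrained $1.\refu$ to the full event set makes the synchronised term $(0.\refu \cup 1.\refu) \cap cs$ cover all of $cs$, so every synchronised event is refusable and only the non-synchronised acceptances of the quiescent branch survive, yielding the acceptance set $E_1 \setminus cs$.
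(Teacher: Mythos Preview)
Your proposal is correct. The paper does not give a textual proof of this theorem at all; it only supplies the intuitive reading of the two equations and defers entirely to the Isabelle/UTP mechanisation. Your plan---unfolding $\imerge{cs}$ to the parallel-by-merge with $\exists\state'@N_{\textsf{\tiny I}}$, collapsing the state conjunct via the $\lzero$ override law~\eqref{law:ov1}, eliminating the indexed intermediates by composition, and then identifying the maximal admissible refusal set by instantiating $0.\refu$ and $1.\refu$ with the largest witnesses allowed by the acceptance constraints---is exactly the kind of relational-calculus unfolding that the mechanised proof performs, and your set-theoretic check of the resulting acceptance set is accurate in both cases.
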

\noindent The equations in Theorem~\ref{thm:qmerge} are similar to those in Theorem~\ref{thm:rrmerge}, but involve at
least one quiescent observation. We omit the symmetric case, since we know that $\imerge{cs}$ is commutative. As for the
finaliser, we need to merge the traces $t_1$ and $t_2$. There is no state update merge, as this is a quiescent
observation. We also need to consider merging the refusal sets together, which here are given by a set of accepted
events. If merging two quiescent observations, we accept (1) the events in the synchronisation set $cs$ enabled by both
$P$ and $Q$ ($E_1 \cap E_2 \cap cs$); and (2) the events not in $cs$ that enabled available in either $P$ or $Q$
($(E_1 \cup E_2) \setminus cs$). If a quiescent observation is merged with a final observation, then the accepted events
are simply those not in $cs$, that is $E_1 \setminus cs$.

In order to illustrate the use of these theorems, we provide the following example.

\setcounter{equation}{0}
\begin{example} \label{ex:commsem}
  We calculate the meaning of $a \then b \then \Skip \sfpare{\{ b \}} b \then c \then \Skip$. \isalink{https://github.com/isabelle-utp/utp-main/blob/dd8dd58e22ec371bc43b9c9f42be574060310d54/tutorial/utp_csp_ex.thy\#L72}
  \begin{align}
    =~& \left. \rc{}{
       \begin{array}{l}
         \cspens{\snil}{\{a\}} \lor \\
         \cspens{\langle a \rangle}{\{b\}}
       \end{array}
    }{\csppfs{id}{\langle a, b \rangle}} \msfpare{\{b\}}
    \rc{}{
       \begin{array}{l}
         \cspens{\snil}{\{b\}} \lor \\
         \cspens{\langle b \rangle}{\{c\}}
       \end{array}
    }{\csppfs{id}{\langle b, c \rangle}}
    \right. \label{eq:pcalc1} \\[.5ex]
    =~& \rc{}
       {
       \begin{array}{l}
         \cspens{\snil}{\{a\}} ~\imerge{\{b\}}~ \cspens{\snil}{\{b\}} \lor \\
         \cspens{\snil}{\{a\}} ~\imerge{\{b\}}~ \cspens{\langle b \rangle}{\{c\}} \lor \\
         \cspens{\snil}{\{a\}} ~\imerge{\{b\}}~ \csppfs{id}{\langle b, c \rangle} \lor \\
         \cspens{\langle a \rangle}{\{b\}} ~\imerge{\{b\}}~ \cspens{\snil}{\{b\}} \lor \\
         \cspens{\langle a \rangle}{\{b\}} ~\imerge{\{b\}}~ \cspens{\langle b \rangle}{\{c\}} \lor \\
         \cdots
       \end{array}
       }{
       \csppfs{id}{\langle a, b \rangle} ~\fmerge{\lzero}{\{b\}}{\lzero}~ \csppfs{id}{\langle b, c \rangle}
       } \label{eq:pcalc2} \\[.5ex]
    =~& \rc{}{
       \!\begin{array}{l}
       \left(\!\exists t @ \cspen{
       \left(\begin{array}{l}
         t \in \snil\!\parallel_{\{\!b\}}\!\snil \land \\
         \snil\!\project\!cs = \snil\!\project\!cs
       \end{array}\!\right)
       }{t}{\!\!
       \begin{array}{l}
         \{a\}\!\cap\!\{b\}\!\cap\! \{b\}\cup \\ 
         (\{a\}\!\cup\!\{b\}) \setminus \{b\}
       \end{array} \!
       }\right) \\
       ~ \lor \cdots
       \end{array} \!\!
       }{\exists t @ \csppf{
       \begin{array}{l}
         t \in \langle a, b \rangle \parallel_{\{b\}} \langle b, c \rangle \land \\ 
         \langle a, b \rangle \project \{b\} = \langle b, c \rangle \project \{b\}
       \end{array}
       }{id}{t}
       } \label{eq:pcalc3} \\[.5ex]
    =~& \rc{}{\cspens{\snil}{\{a\}} \lor \cspens{\langle a \rangle}{\{b\}} \lor \cspens{\langle a, b \rangle}{\{c\}}}{\csppfs{id}{\langle a, b, c \rangle}} \label{eq:pcalc4} \\[.5ex]
    =~& a \then b \then c \then \Skip \label{eq:pcalc5}
  \end{align}
  Step~\eqref{eq:pcalc1} calculates the sequential contracts for the left- and right-hand sides of the parallel
  composition using the rules already outlined in \S\ref{sec:circus-rc}. Step~\eqref{eq:pcalc2} expands out all the possible merges for the peri- and postcondition, by application of
  Theorem~\ref{thm:nrdespar}, and also Theorem~\ref{thm:pbm-laws} to distribute through the various disjunctions. There
  are a total of nine observations (of which we show five) in the pericondition, because we need to merge every disjunct
  of the pericondition, plus the postcondition, with every corresponding disjunct. The majority of these are
  inadmissible and thus reduce to $\false$; for example
  $$\cspens{\snil}{\{a\}} ~\imerge{\{b\}}~ \cspens{\langle b \rangle}{\{c\}} = \false$$ since the $b$ event cannot occur
  independently. Step~\eqref{eq:pcalc3} uses Theorem~\ref{thm:rrmerge} to demonstrate explicitly how to merge the first
  of the nine periconditions, and also the postcondition. For both the peri- and postcondition, we need to find a $t$
  that merges to two traces ($\snil$), and respects the synchronisation order. For the pericondition, there is only one
  such trace, $\snil$, and so this is the one selected. Moreover, it is necessary to calculate the events being
  accepted by appropriately selecting synchronised and non-synchronised events. For the postcondition, there is again
  only one trace, $\langle a, b, c \rangle$. Step~\eqref{eq:pcalc4} calculates all the admissible periconditions, of
  which there are three, and the postcondition. The three possible quiescent observations are (1) nothing has happened,
  and $a$ is accepted; (2) $a$ has occurred, and $b$ is accepted; and (3) $a$ and $b$ have occurred, and $c$ is
  accepted. The postcondition performs no state updates ($id$), and have the total sequence of events. This
  reactive contract is equivalent to the action $a \then b \then c \then \Skip$, as shown in step~\eqref{eq:pcalc5}.

  Though this calculation seems very complicated, the benefit of our theorems and mechanisation is that it can be
  performed automatically in Isabelle/UTP. The \textsf{rdes-eq} tactic can also discover the contract form given in
  step~\eqref{eq:pcalc4} of the proof, though not the final form given in step~\eqref{eq:pcalc5}. In practice, reasoning
  about this kind of example is more easily conducted with the help of higher level algebraic laws, such as
  $$(a \then P) \sfpare{E} (a \then Q) = a \then (P \sfpare{E} Q)   \text{ if } a \in E$$
  the like of which our proof strategy can help to prove.  \qed
\end{example}

In the example given above, the preconditions are always trivial. For non-trivial preconditions, we need laws analogous
to those for the sequential case in Theorem~\ref{thm:evwp}, but for the weakest rely calculus.

\begin{theorem}[Parallel Preconditions] \label{thm:parpre} \isalink{https://github.com/isabelle-utp/utp-main/blob/07cb0c256a90bc347289b5f5d202781b536fc640/theories/circus/utp_circus_parallel.thy\#L737}
  \begin{align*}
    \csppf{s_1}{\sigma_1}{t_1} \wppC{cs} \cspin{s_2}{t_2} &= (\forall tt_0, tt_1 @ \cspin{s_1 \land s_2 \land tt_1 \in (t_2 \cat tt_0) \parallel_{cs} t_1 \land (t_2 \cat tt_0) \project cs = t_1 \project cs}{tt_1}) \\
    \cspen{s_1}{t_1}{E} \wppC{cs} \cspin{s_2}{t_2} &= (\forall tt_0, tt_1 @ \cspin{s_1 \land s_2 \land tt_1 \in (t_2 \cat tt_0) \parallel_{cs} t_1 \land (t_2 \cat tt_0) \project cs = t_1 \project cs}{tt_1})
  \end{align*}
\end{theorem}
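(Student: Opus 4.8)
The plan is to reduce the specialised operator to the generic weakest-rely operator, writing $P \wppC{cs} \cspin{s_2}{t_2} = P \wppR{N_{\textsf{\tiny I}}} \cspin{s_2}{t_2}$ with $N_{\textsf{\tiny I}} = \cmrg{\lzero}{cs}{\lzero}$, and then unfolding Definition~\ref{def:wppR} to obtain $\negr\big((\negr\,\cspin{s_2}{t_2}) \parallel_{N_{\textsf{\tiny I}} \relsemi \truer} P\big)$. The first step is to compute the reactive complement $\negr\,\cspin{s_2}{t_2}$. Since $\cspin{s_2}{t_2}$ asserts that whenever the initial state satisfies $s_2$ the trace cannot have $t_2$ as a prefix, its complement characterises exactly the divergent observations in which $s_2$ holds and $t_2 \le \trace$, which I would rewrite as $s_2 \land \trace = t_2 \cat tt_0$ for a fresh extension $tt_0$. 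This reformulation of the prefix relation as a concatenation is precisely what introduces the witness $tt_0$ that is quantified in the conclusion.

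Next I would evaluate the parallel-by-merge. The key simplification is that $N_{\textsf{\tiny I}}$ uses the empty lens $\lzero$ for both name sets, so by the override law~\eqref{law:ov1} the state conjunct of $\cmrgn$ (Definition~\ref{def:parcomp}) collapses to $\state' = \state$ and contributes nothing, while the refusal conjunct is immaterial because $\cspin{s_2}{t_2}$ and the precondition being constructed are $\healthy{CRC}$-healthy and never mention $\refu$. This is the structural reason that the state update $\sigma_1$ of $\Phi$ and the event set $E$ of $\mathcal{E}$ are discarded, and hence why the two equations of the theorem share an identical right-hand side: only the state condition and the trace of the left operand survive the merge. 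The two remaining conjuncts of $\cmrgn$, namely $\trace \in 0.\trace \parallel_{cs} 1.\trace$ and $0.\trace \project cs = 1.\trace \project cs$, then specialise --- with the $0$-side (from $\negr\,\cspin{s_2}{t_2}$) set to $t_2 \cat tt_0$ and the $1$-side (from $P$) set to $t_1$ --- to the side conditions $tt_1 \in (t_2 \cat tt_0) \parallel_{cs} t_1$ and $(t_2 \cat tt_0) \project cs = t_1 \project cs$, where $tt_1$ is the merged trace. Any disjunctive structure arising here is handled by the distribution laws of Theorem~\ref{thm:pbm-laws}.

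The final step is to apply the outer $\negr$. Before negation the relation is an existential over $tt_0$ and $tt_1$ asserting $s_1 \land s_2$, the two merge side conditions, and $\trace = tt_1$; the reactive negation dualises these existentials into the universal $\forall tt_0, tt_1$ and converts the body into the assumption $\cspin{\cdots}{tt_1}$, whose prefix-closed semantics already absorbs the extension-closure supplied by the $\relsemi \truer$ factor. I expect this last step to be the main obstacle: one must check that the De Morgan duality moves the existential witnesses and the embedded merge side conditions correctly inside the $\mathcal{I}$ constructor, and that the $\relsemi \truer$ closure is accounted for exactly by the semantics of $\cspin{s}{t}$ rather than producing a spurious additional quantifier. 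The $\mathcal{E}$ case is then a corollary, since the computation above shows that the merge predicate discards the event set $E$, so it reduces to the $\Phi$ case.
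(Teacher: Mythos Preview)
The paper does not supply a written proof for this theorem; it states the result, gives intuitive commentary on the meaning of the two equations, and defers the actual argument to the Isabelle mechanisation via the \textsf{isalink}. Your proposal is therefore not competing against an explicit paper proof but rather reconstructing what the mechanised proof must do, and in that respect it is correct: you unfold $\wppC{cs}$ to $\wppR{N_{\textsf{\tiny I}}}$, compute $\negr\cspin{s_2}{t_2}$ to expose the extension witness $tt_0$, observe that $N_{\textsf{\tiny I}}$'s use of $\lzero$ collapses the state conjunct (discarding $\sigma_1$) and that the unconstrained $0.\refu,1.\refu$ make the refusal conjunct vacuous (discarding $E$), and then apply the outer $\negr$ so that the $\relsemi\truer$ extension-closure is absorbed by the prefix-closed semantics of $\cspin{\cdot}{\cdot}$, yielding the universal over $tt_0,tt_1$. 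Your remark that the $\mathcal{E}$ case reduces to the $\Phi$ case because the merge discards both $\sigma_1$ and $E$ is exactly the structural explanation the paper's prose gestures at when it says the two equations ``both have the same form''.
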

\noindent As usual, we need to conjoin both conditions $s_1$ and $s_2$. However, determining permissible traces is
rather more involved. We recall that in $\cspin{s_2}{t_2}$, $t_2$ is a strict upper bound on the permissible
traces. These two laws give the conditions under which a concurrent quiescent or final observation does not allow
divergence; both have the same form. Divergence occurs when $t_1$, a trace contributed by one action, permits $t_2$, a
divergent trace, to be exhausted when the two are merged. This situation occurs for any trace $tt_1$ such that there is
an arbitrary extension $tt_0$, where (1) $tt_1$ is one of the traces obtained by merging $t_2$ extended with $tt_0$,
with $t_1$, and (2) the order of synchronisation of $t_2$ with its extension is the same as that of $t_1$. The trace
$tt_1$ must therefore be a trace that exhausts all the events in $t_2$, whilst respecting both the synchronisation set
and $t_1$. Consequently, it is a strict upper bound on the permissible behaviours.

We exemplify these laws with the calculation below.

\begin{example}
  \allowdisplaybreaks
  \begin{align*}
      & a \then \Chaos \sfpare{\{a\}} a \then \Skip \\
    =~& \rc{\cspin{\ptrue}{\langle a \rangle}}{\cspen{\ptrue}{\langle\rangle}{\{a\}}}{\false} \sfpare{\{a\}} \rcs{\cspens{\snil}{\{a\}}}{\csppfs{id}{\langle a \rangle}} & [\ref{thm:bcircus-def}] \\
    =~& \rc{
        \begin{array}{l}
          \cspens{\snil}{\{a\}} \wppC{\{a\}} \cspin{\ptrue}{\langle a \rangle} \\ 
          \land \csppfs{id}{\langle a \rangle} \wppC{\{a\}} \cspin{\ptrue}{\langle a \rangle}
        \end{array}
        }{\begin{array}{l}
            \cspens{\langle\rangle}{\{a\}} ~\imerge{\{a\}}~ \cspens{\snil}{\{a\}} \\
            \lor \cspen{\ptrue}{\langle\rangle}{\{a\}} ~\imerge{\{a\}}~ \csppfs{id}{\langle a \rangle}
          \end{array}
        }{\false} & [\ref{thm:snrdespar}] \\
    =~& \rc{
        \begin{array}{l}
          \cspens{\snil}{\{a\}} \wppC{\{a\}} \cspin{\ptrue}{\langle a \rangle} \\ 
          \land \csppfs{id}{\langle a \rangle} \wppC{\{a\}} \cspin{\ptrue}{\langle a \rangle}
        \end{array}
        }{\begin{array}{l}
            \cspens{\langle\rangle}{\{a\}} \\
            \lor \false
          \end{array}
        }{\false} & [\ref{thm:qmerge}] \\
    =~& \rc{ \!\!\!
        \begin{array}{l}
          \left(\forall (tt_0, tt_1) @ \cspin{tt_1 \in (\langle a \rangle \cat tt_0) \parallel_{\{a\}} \snil \land (\langle a \rangle \cat tt_0) \project \{a\} = \snil \project cs}{tt_1}\right) \land \\
          \left(\forall (tt_0, tt_1) @ \cspin{tt_1 \in (\langle a \rangle \cat tt_0) \parallel_{\{a\}} \langle a \rangle \land (\langle a \rangle \cat tt_0) \project \{a\} = \langle a \rangle \project \{a\}}{tt_1}\right)
        \end{array} \!\!\!\!
        }{\!\cspens{\langle\rangle}{\{a\}}\!}{\!\false\!} & [\ref{thm:parpre}] \\
    =~& \rc{
        \begin{array}{l}
          \truer \land \\
          (\forall tt_1 @ \cspin{tt_1 \in (\langle a \rangle \cat \snil) \parallel_{\{a\}} \langle a \rangle \land (\langle a \rangle \cat \snil) \project \{a\} = \langle a \rangle \project \{a\}}{tt_1})
        \end{array}
        }{\cspens{\langle\rangle}{\{a\}}}{\false} \\
    =~& \rc{
        \begin{array}{l}
          (\forall tt_1 @ \cspin{tt_1 \in \langle a \rangle \parallel_{\{a\}} \langle a \rangle}{tt_1})
        \end{array}
        }{\cspens{\langle\rangle}{\{a\}}}{\false} \\
    =~& \rc{
        \begin{array}{l}
          (\forall tt_1 @ \cspin{tt_1 = \langle a \rangle}{tt_1})
        \end{array}
        }{\cspens{\langle\rangle}{\{a\}}}{\false} \\
    =~& \rc{
        \begin{array}{l}
          \cspin{\ptrue}{\langle a \rangle}
        \end{array}
        }{\cspens{\langle\rangle}{\{a\}}}{\false} \\
    =~& a \then \Chaos & [\ref{thm:bcircus-def}]
  \end{align*}

  \noindent We calculate the contract for the parallel composition as usual, but in this case it is necessary to
  calculate two weakest rely formulae: (1) $\cspens{\snil}{\{a\}} \wppC{\{a\}} \cspin{\ptrue}{\langle a \rangle}$, and
  (2) $\csppfs{id}{\langle a \rangle} \wppC{\{a\}} \cspin{\ptrue}{\langle a \rangle}$. We expand them both out using
  Theorem~\ref{thm:parpre}. For (1), we observe that the resulting formula has the equation
  $(\langle a \rangle \cat tt_0) \project \{a\} = \snil \project \{a\}$. This is impossible to satisfy, since the
  left-hand trace contains $a$, but the right-hand side does not. Consequently, this term, and therefore the whole
  condition, reduces to $\pfalse$, and so the resulting formula is $(\forall (tt_0, tt_1) @ \cspin{\pfalse}{tt_1})$,
  which, by Theorem~\ref{thm:evwp}, is simply $\truer$. The intuition here is that the empty trace does not permit
  violation of the corresponding precondition. For (2), we notice that there is exactly one possible valuation of $tt_0$
  that satisfies the resulting formula, which is $\snil$. In this case, the precondition can be violated, and $tt_1$
  also has one possible value, which is $\langle a \rangle$. The resulting formula is simply
  $\cspin{\ptrue}{\langle a \rangle}$, and the overall behaviour is equivalent to $a \then \Chaos$. \qed
\end{example}

\subsection{Algebraic Properties}

We now explore the algebraic properties of parallel composition. We show that, under certain conditions characterised by
healthiness conditions, $\Skip$ is a unit for parallel composition, and $\Chaos$ is an annihilator.

In previous work~\cite{Oliveira2005-PHD}, support for this law is provided by an additional healthiness condition, which
imposes downward closure of the refusals. This property is imposed in the standard CSP failures-divergences
model~\cite[Section~8.3]{Roscoe2005} by healthiness condition $\healthy{F2}$. However, not all expressible healthy reactive
relations satisfy this property. For example the relation $\refu' = \{ a, b \}$, which is $\healthy{CRC}$-healthy,
identifies a single refusal and thus forbids the refusal sets $\{a\}$, $\{b\}$, and $\emptyset$, which we would normally
expect to be admissible due to subset closure. We therefore define the following additional healthiness condition for
quiescent observations.

\begin{definition}[Refusal Downward Closure] A reactive relation is downward closed with respect to refusals if it is a fixed-point of healthiness
  condition $\healthy{CDC}$, defined below. \isalink{https://github.com/isabelle-utp/utp-main/blob/07cb0c256a90bc347289b5f5d202781b536fc640/theories/sf_rdes/utp_sfrd_rel.thy\#L1111}
  $$\healthy{CDC}(P) \defs (\exists ref_0 @ P[ref_0/\refu'] \land \refu' \subseteq ref_0)$$
\end{definition}
\noindent A reactive relation $P$ has downward closed refusals if, when we replace $\refu'$ with an arbitrary subset, we
obtain an observation that is still within $P$. It is easy to prove that $\healthy{CDC}$ is idempotent, which follows
due to transitivity of $\subseteq$, and also monotonic. We can also show that $\healthy{CDC}$ is closed under existing
operators.

\begin{theorem}[$\healthy{CDC}$ Closure Properties] \isalink{https://github.com/isabelle-utp/utp-main/blob/07cb0c256a90bc347289b5f5d202781b536fc640/theories/sf_rdes/utp_sfrd_rel.thy\#L1149}
  \begin{itemize}
    \item $\healthy{CDC}$ is closed under the following constructs: $\truer$, $\false$, $\lor$, and $\land$;
    \item If $Q$ is $\healthy{CDC}$-healthy, then $(P \relsemi Q)$ is $\healthy{CDC}$-healthy;
    \item If $\forall i \in I @ P(i) \is \healthy{CDC}$ then $\bigwedge_{i \in I} \, P(i)$ is $\healthy{CDC}$ and $\bigvee_{i \in I} \, P(i)$ is $\healthy{CDC}$;
    \item For any $s$, $t$, and $E$, $\cspen{s}{t}{E}$ is $\healthy{CDC}$-healthy.
  \end{itemize}
\end{theorem}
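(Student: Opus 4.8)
The plan is to exploit the fact that, with respect to ordinary entailment, $\healthy{CDC}$ is a closure operator: setting $ref_0 = \refu'$ in the definition immediately gives $P \implies \healthy{CDC}(P)$ (extensivity), and idempotence and monotonicity have already been noted. Being $\healthy{CDC}$-healthy therefore means precisely being downward closed in the final refusal $\refu'$. I would organise the proof around this reading, treating the disjunctive cases by distribution, the conjunctive cases by the standard closed-under-meet property of closure operators, the sequential case by commuting $\healthy{CDC}$ into the right operand, and $\cspen{s}{t}{E}$ by inspecting its single $\refu'$-constraint. For each operator the extensive direction of the fixed-point equation is automatic, so in every case the real work is the reverse implication, which consumes the hypothesis that the arguments are healthy.

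The disjunctive cases are routine. Since $\healthy{CDC}$ is built only from an existential quantifier over $ref_0$, a substitution for $\refu'$, and a conjunct $\refu' \subseteq ref_0$, and both $\exists$ and substitution distribute through $\lor$ and $\bigvee$, I obtain $\healthy{CDC}(\bigvee_{i} P(i)) = \bigvee_{i} \healthy{CDC}(P(i))$, which collapses to $\bigvee_{i} P(i)$ when each $P(i)$ is a fixed point; the binary $\lor$ and the constants $\truer$ and $\false$ follow as instances (for $\false$ the body is unsatisfiable, and for $\truer$, which does not mention $\refu'$, the witness $ref_0 = \refu'$ suffices). For $\land$ and $\bigwedge$ I would invoke the closure-operator fact that the fixed points of $\healthy{CDC}$ are closed under arbitrary meets: if $m = \bigwedge_{i} P(i)$ with each $P(i)$ closed, then $m \implies P(i)$ gives $\healthy{CDC}(m) \implies \healthy{CDC}(P(i)) = P(i)$ by monotonicity, hence $\healthy{CDC}(m) \implies m$, and extensivity supplies the converse. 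Concretely this is the common-witness argument: a single $ref_0 \supseteq \refu'$ validating $\bigwedge_{i} P(i)$ at $ref_0$ validates each $P(i)$ at $\refu'$ by downward closure, and therefore the conjunction at $\refu'$.

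The sequential composition case is the main obstacle, because it requires reasoning about how $\refu'$ propagates through relational composition rather than pure lattice algebra. The key observation is that in $P \relsemi Q$ the intermediate refusal (the output of $P$, input of $Q$) is existentially quantified away, while the final refusal $\refu'$ of the composite is supplied solely by $Q$. I would therefore establish the lemma that $\healthy{CDC}$ commutes with left composition, $\healthy{CDC}(P \relsemi Q) = P \relsemi \healthy{CDC}(Q)$: substituting $ref_0$ for $\refu'$ and adding $\refu' \subseteq ref_0$ touches only the right operand, so the transformer can be pushed past $P$ and the existential over the intermediate state. With $Q$ a fixed point of $\healthy{CDC}$ this immediately yields $\healthy{CDC}(P \relsemi Q) = P \relsemi Q$, with no assumption on $P$. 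The care needed here is purely bookkeeping about which variable copy $\refu'$ denotes after composition, which is exactly why this item is stated for the right operand only.

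Finally, for $\cspen{s}{t}{E}$ I would unfold its definition and isolate the only conjunct mentioning the final refusal, namely $(\forall e \in E(\state) @ e \notin \refu')$, equivalently $E(\state) \cap \refu' = \emptyset$. For each fixed $\state$ this describes the powerset of the complement of $E(\state)$, which is manifestly a down-set in $\refu'$: if $E \cap ref_0 = \emptyset$ and $\refu' \subseteq ref_0$ then $E \cap \refu' = \emptyset$. The remaining ingredients $s(\state)$ and $\trace = t(\state)$ do not refer to $\refu'$, and the enclosing $\healthy{CRR}$ machinery ($\healthy{R1}$, $\healthy{R2}$, and the $\exists \refu$ over the \emph{initial} refusal) all commute with substitution for the distinct variable $\refu'$. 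Hence the substitution $[ref_0/\refu']$ passes cleanly through the wrapper onto the refusal constraint, and the downward-closure witness argument gives $\healthy{CDC}(\cspen{s}{t}{E}) = \cspen{s}{t}{E}$, completing the theorem.
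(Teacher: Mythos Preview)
Your proposal is correct and aligns with the paper's treatment. The paper does not supply a written proof for this theorem beyond a one-line remark that $\cspen{s}{t}{E}$ is $\healthy{CDC}$-healthy ``because \ldots\ we construct the set of refusals which do not include any event in $E$, a formulation that is downward closed'', deferring the remaining items to the Isabelle/UTP mechanisation; your arguments (distribution through $\lor$, the closure-operator meet argument for $\land$, the commutation lemma $\healthy{CDC}(P \relsemi Q) = P \relsemi \healthy{CDC}(Q)$, and the direct downward-closure check on the refusal constraint) are exactly the natural elaborations of that intuition and match what one would expect the mechanised proofs to do.
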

\noindent $\cspen{s}{t}{E}$ is $\healthy{CDC}$-healthy because, as seen in Definition~\ref{def:rrelop}, we construct the
set of refusals which do not include any event in $E$, a formulation that is downward closed. Consequently, we know that
all the forms of pericondition considered so far, which are disjunctions of $\cspen{s}{t}{E}$ terms, are
$\healthy{CDC}$-healthy. Next, we recast Oliveira's healthiness condition for downward closure, called
$\healthy{C2}$~\cite{Oliveira&09}, to our setting.

\begin{definition} \label{def:C2} $\healthy{C2}(P) \defs P \sfpar{\lone}{\emptyset}{\lzero} \Skip$ \isalink{https://github.com/isabelle-utp/utp-main/blob/07cb0c256a90bc347289b5f5d202781b536fc640/theories/circus/utp_circus_parallel.thy\#L855}
\end{definition}

\noindent $\healthy{C2}$ states that $\Skip$, defined in Definition~\ref{thm:bcircus-def}, is a right unit for the
composition operator $\sfpar{\lone}{\emptyset}{\lzero}$, which takes the entirety of its final state from the left
action, and employs an empty synchronisation set. We now link $\healthy{C2}$ to $\healthy{CDC}$. The proof depends on
two properties of final state merge and weakest rely predicates.

\begin{theorem}[Merge and Weakest Rely of Identity] \label{thm:c2-props} \isalink{https://github.com/isabelle-utp/utp-main/blob/07cb0c256a90bc347289b5f5d202781b536fc640/theories/circus/utp_circus_parallel.thy\#L979}
  \begin{align*}
    P \fmerge{\lone}{\emptyset}{\lzero} \csppfs{id}{\snil} &= P & \text{if $P$ is $\healthy{CRF}$-healthy} \\
    \csppfs{id}{\snil} \wppC{\emptyset} P &= P & \text{if $P$ is $\healthy{CRC}$-healthy}
  \end{align*}
\end{theorem}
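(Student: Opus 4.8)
The plan is to prove both equations by unfolding the merge operators into the underlying parallel-by-merge, collapsing the inner merge predicate $\cmrgn$ at the specific arguments $ns_1, cs, ns_2$ using the override laws of Theorem~\ref{thm:ovrd}, and then discharging the residual with the relevant healthiness assumption. A lemma I would establish first, by induction on $t$ using the defining equations of $\parallel_{cs}$, is that the empty trace is a unit for the unsynchronised merge, $t \parallel_\emptyset \snil = \{t\}$ (and symmetrically). This is the fact that lets an inert finaliser vanish, and it is needed for both parts.

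For the first equation I would unfold $\fmerge{\lone}{\emptyset}{\lzero}$ to $\parallel_{\exists \refu' @ \cmrg{\lone}{\emptyset}{\lzero}}$ and simplify the merge predicate at these arguments. With $cs = \emptyset$ the synchronisation conjunct $0.\trace \project \emptyset = 1.\trace \project \emptyset$ holds trivially; the refusal conjunct $\refu' \subseteq 0.\refu \cap 1.\refu$ is eliminated by the $\exists \refu'$; and the state conjunct collapses to $\state' = 0.\state$ by applying override laws \eqref{law:ov1} and \eqref{law:ov2}. Substituting $\csppfs{id}{\snil}$, which is $\IIc$ by law~\eqref{thm:crc0}, fixes the right contribution to an empty trace and identity state, so the trace conjunct $\trace \in 0.\trace \parallel_\emptyset \snil$ reduces to $\trace = 0.\trace$ by the unit lemma. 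The merge then merely copies the outputs of $P$, leaving $\refu'$ unconstrained; since $P$ is $\healthy{CRF}$-healthy it does not mention $\refu'$ anyway, so nothing is lost and the whole composition equals $P$.

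For the second equation I would unfold $\wppC{\emptyset}$ to $\wppR{N_{\textsf{\tiny I}}}$ with $N_{\textsf{\tiny I}} = \cmrg{\lzero}{\emptyset}{\lzero}$, and then apply Definition~\ref{def:wppR} to rewrite $\csppfs{id}{\snil} \wppR{N_{\textsf{\tiny I}}} P$ as $\negr((\negr P) \parallel_{N_{\textsf{\tiny I}} \relsemi \truer} \csppfs{id}{\snil})$. Here the state conjunct of $N_{\textsf{\tiny I}}$ reduces to $\state' = \state$ by two applications of law~\eqref{law:ov1}, and the empty-trace unit lemma again turns the trace merge into $\trace = 0.\trace$, which the $\relsemi \truer$ then extends. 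I would show that this parallel composition equals $\negr P$, whence the outer $\negr$ gives $P$ by involution of the reactive complement.

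The main obstacle will be this last reduction in the second equation: the interaction between the extension-closing $\relsemi \truer$ inside the merge and the double reactive negation. I must argue that the arbitrary trace extension introduced by $\truer$ is exactly absorbed by the prefix closure that $\healthy{CRC}$-healthiness guarantees for $P$. Because a $\healthy{CRC}$ condition is prefix closed in the trace, its complement $\negr P$ is extension closed, so composing with $N_{\textsf{\tiny I}} \relsemi \truer$ contributes no behaviours beyond those already in $\negr P$, and $\negr(\negr P) = P$ then closes the argument. Care is also needed with the separating-simulation renamings $\psep{\cdot}{0}$ and $\psep{\cdot}{1}$ and the relational composition with the merge, but these are routine bookkeeping once the merge predicate has been reduced.
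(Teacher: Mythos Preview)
The paper does not give an explicit textual proof of this theorem; it is stated as a lemma supporting the subsequent $\healthy{C2}$ characterisation and discharged in the Isabelle mechanisation. Your proposal is therefore not being compared against a written argument but against what the mechanised proof must be doing, and on that basis your plan is sound and essentially the right one.

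Your identification of the key ingredients is correct: the unit law $t \parallel_\emptyset \snil = \{t\}$ for the trace merge, the collapse of the state override via laws~\eqref{law:ov1} and~\eqref{law:ov2}, the irrelevance of the refusal conjunct once existentially quantified (or absorbed by the trailing $\truer$), and, crucially for the second equation, that $\healthy{CRC}$-healthiness gives prefix closure of $P$ and hence extension closure of $\negr P$, which is exactly what absorbs the $\relsemi \truer$ in the weakest-rely definition. One small addition worth making explicit in the second part: a $\healthy{CRC}$ condition refers neither to $\state'$ nor to $\refu'$, so the fact that $N_{\textsf{\tiny I}}$ forces $\state' = \state$ and constrains $\refu'$ before $\truer$ relaxes them again causes no loss or gain of behaviours in $\negr P$; you should state this alongside the trace argument so that all three observational components are accounted for. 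With that, your double-negation step $\negr(\negr P) = P$ goes through because $P$, being $\healthy{CRC}$, is in particular $\healthy{R1}$-healthy.
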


\noindent The first property states that merging an arbitrary finaliser $P$ with an identity finaliser, with $P$
contributing all the final state, and an empty synchronisation set, is simply $P$. The second property, similarly,
states that the weakest rely condition that an identity finaliser reaches reactive condition $P$, with an empty
synchronisation set, is simply $P$. We can now prove the following important theorem for $\healthy{C2}$, employing our
calculational strategy, which reveals its intuitive meaning.

\begin{theorem} If $\rc{P_1}{P_2}{P_3} \is \healthy{NCSP}$ then $\healthy{C2}(\rc{P_1}{P_2}{P_3}) = \rc{P_1}{\healthy{CDC}(P_2)}{P_3}$ \isalink{https://github.com/isabelle-utp/utp-main/blob/07cb0c256a90bc347289b5f5d202781b536fc640/theories/circus/utp_circus_parallel.thy\#L1009}
\end{theorem}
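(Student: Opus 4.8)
The plan is to unfold $\healthy{C2}$ via Definition~\ref{def:C2} into the parallel composition $\rc{P_1}{P_2}{P_3} \sfpar{\lone}{\emptyset}{\lzero} \Skip$, expand $\Skip = \rcs{\false}{\csppfs{id}{\snil}}$ using Definition~\ref{thm:bcircus-def}, and then drive the whole calculation through the parallel calculation law of Theorem~\ref{thm:nrdespar}. Writing $\rc{Q_1}{Q_2}{Q_3}$ for the $\Skip$ operand, so that $Q_1 = \truer$, $Q_2 = \false$, and $Q_3 = \csppfs{id}{\snil}$, Theorem~\ref{thm:nrdespar} splits the goal into three independent sub-calculations: the merged precondition, the merged postcondition, and the merged pericondition. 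I expect the first two to collapse cleanly and the pericondition to carry all the real content.

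For the precondition, the four conjuncts of Theorem~\ref{thm:nrdespar} are discharged using Theorem~\ref{thm:wrlaws} and Theorem~\ref{thm:c2-props}. The two clauses of the form $(\cdots) \wppC{\emptyset} \truer$ are $\truer$ by the law $P \wppC{cs} \truer = \truer$. For $(\truer \rimplies \false) \wppC{\emptyset} P_1$, I first note that $\truer \rimplies \false = \negr\truer \lor \false = \negr\truer = \false$, since $\truer$ is the top of the reactive Boolean algebra; the clause is then $\false \wppC{\emptyset} P_1 = \truer$. The last clause is $(\truer \rimplies \csppfs{id}{\snil}) \wppC{\emptyset} P_1 = (\negr\truer \lor \csppfs{id}{\snil}) \wppC{\emptyset} P_1 = \csppfs{id}{\snil} \wppC{\emptyset} P_1$, which equals $P_1$ by the second identity of Theorem~\ref{thm:c2-props}, as $P_1$ is $\healthy{CRC}$-healthy by Theorem~\ref{thm:ncsp-intro}. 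The conjunction $\truer \land \truer \land \truer \land P_1$ is $P_1$. The postcondition is $P_3 \fmerge{\lone}{\emptyset}{\lzero} \csppfs{id}{\snil}$, which is exactly the first identity of Theorem~\ref{thm:c2-props}, yielding $P_3$ because $P_3$ is $\healthy{CRF}$-healthy (again Theorem~\ref{thm:ncsp-intro}).

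The pericondition is the heart of the argument. Its three disjuncts are $P_2 \imerge{\emptyset} \false$, $P_3 \imerge{\emptyset} \false$, and $P_2 \imerge{\emptyset} \csppfs{id}{\snil}$. Since $\imerge{cs}$ is a parallel-by-merge, the first two vanish to $\false$ by the law $P \parallel_M \false = \false$ of Theorem~\ref{thm:pbm-laws}, so the whole pericondition reduces to $P_2 \imerge{\emptyset} \csppfs{id}{\snil}$, and the proof reduces to establishing $P_2 \imerge{\emptyset} \csppfs{id}{\snil} = \healthy{CDC}(P_2)$. This is the main obstacle and requires unfolding the intermediate merge (which uses $\lzero$ name sets and abstracts the final state via $\exists \state'$) together with the merge predicate of Definition~\ref{def:parcomp} at $cs = \emptyset$. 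At $cs = \emptyset$ the $\project cs$ synchronisation conjunct is trivially true; the interleaving $\trace \in 0.\trace \parallel_{\emptyset} 1.\trace$ against $\Skip$'s empty trace ($1.\trace = \snil$) forces $\trace = 0.\trace$; and the state conjunct, which over $\lzero$ name sets reduces to $\state' = \state$ by Theorem~\ref{thm:ovrd}, is removed by $\exists \state'$. The refusal conjunct collapses to $\refu' \subseteq 0.\refu \cap 1.\refu$. The decisive point is that $\csppfs{id}{\snil}$, being a finaliser, places no constraint on its output refusal $1.\refu$, which is existentially bound by the relational composition $\relsemi M$ inside the parallel-by-merge; witnessing $1.\refu$ with the universal event set gives $\refu' \subseteq 0.\refu$. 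Hence the merge becomes $\exists ref_0 @ P_2[ref_0/\refu'] \land \refu' \subseteq ref_0$ with $ref_0 = 0.\refu$, which is precisely $\healthy{CDC}(P_2)$.

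I expect the delicate bookkeeping to be (i) tracking exactly which copies of the refusal and trace variables the parallel-by-merge existentially quantifies, and confirming that $1.\refu$ is genuinely free so the universal-refusal witness is admissible, and (ii) justifying the interleaving-with-$\snil$ simplification of $\parallel_{\emptyset}$ directly from the Trace Merge Function. Healthiness of $P_2$ (namely $\healthy{CRR}$-healthy and not referring to $\state'$, by Theorem~\ref{thm:ncsp-intro}) legitimises the $\exists \state'$ elimination and ensures the result remains $\healthy{CRR}$-healthy. Reassembling the three parts yields $\rc{P_1}{\healthy{CDC}(P_2)}{P_3}$, completing the proof.
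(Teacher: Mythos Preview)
Your proposal is correct and follows essentially the same approach as the paper: unfold $\healthy{C2}$, apply Theorem~\ref{thm:nrdespar}, collapse the precondition and postcondition using Theorem~\ref{thm:c2-props} and the weakest-rely laws, and reduce the pericondition to $P_2 \imerge{\emptyset} \csppfs{id}{\snil} = \healthy{CDC}(P_2)$ by relational calculus. You spell out the four precondition conjuncts and the three pericondition disjuncts more explicitly than the paper (which jumps directly to the single surviving term in each slot), and you give a detailed sketch of the merge-predicate unfolding that the paper summarises as ``by application of relational calculus'', but the structure and key lemmas are identical.
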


\begin{proof}
  \begin{align*}
    \healthy{C2}(\rc{P_1}{P_2}{P_3}) &= \rc{P_1}{P_2}{P_3} \sfpar{\lone}{cs}{\lzero} \rcs{\false}{\csppfs{id}{\snil}} & [\ref{def:C2}, \ref{thm:bcircus-def}] \\
                                    &= \rc{\csppfs{id}{\snil} \wppC{\emptyset} P_1}{P_2 \imerge{cs} \csppfs{id}{\snil}}{P_3 \fmerge{\lone}{\emptyset}{\lzero} \csppfs{id}{\snil}} & [\ref{thm:nrdespar}] \\
                                    &= \rc{P_1}{P_2 \imerge{cs} \csppfs{id}{\snil}}{P_3} & [\ref{thm:c2-props}]\\
                                    &= \rc{P_1}{(\exists ref_0 @ P_2[ref_0/\refu'] \land \refu' \subseteq ref_0)}{P_3} \\
                                    &= \rc{P_1}{\healthy{CDC}(P_2)}{P_3} & \qedhere
  \end{align*}
  
\end{proof}

\noindent This theorem tells us that an $\healthy{NCSP}$-healthy reactive contract (see Theorem~\ref{thm:ncsp-intro}) is
$\healthy{C2}$ when its pericondition is $\healthy{CDC}$. The proof calculates a contract for the parallel composition
with $\Skip$, and then shows that both the precondition and postcondition are unaltered, using
Theorem~\ref{thm:c2-props}. Finally, we show that the pericondition formula $P_2 \imerge{cs} \csppfs{id}{\snil}$ is
equivalent to $\healthy{CDC}(P_2)$, by application of relational calculus. From this theorem, and previous definitions, we
can now prove the following closure theorems for \healthy{C2}.

\begin{theorem}[\healthy{C2} closure properties] \label{thm:c2closure} \isalink{https://github.com/isabelle-utp/utp-main/blob/07cb0c256a90bc347289b5f5d202781b536fc640/theories/circus/utp_circus_parallel.thy\#L1156}
\begin{itemize}
\item $\Miracle$, $\Chaos$, $\Skip$, $\Stop$, $\ckey{Do}(a)$, and $\assignsC{\sigma}$ are all $\healthy{C2}$;
\item If $P$ and $Q$ are both $\healthy{NCSP}$ and $\healthy{C2}$, then $P \relsemi Q$, $\rconditional{P}{b}{Q}$, and $P \extchoice Q$  are all $\healthy{C2}$;
\item If $\forall i \in I @ P(i) \is \healthy{C2}$ then $\bigsqcap_{i \in I} \, P(i)$ is $\healthy{C2}$;
\item If $P$ is $\healthy{PCSP}$ and $\healthy{C2}$ then $\while{b}{P}$ is $\healthy{C2}$;
\item If $ns_1 \lindep ns_2$, and $P$ and $Q$ are both $\healthy{NCSP}$ and $\healthy{C2}$, then $P \sfpar{ns_1}{cs}{ns_2} Q$ is $\healthy{C2}$.
\end{itemize}
\end{theorem}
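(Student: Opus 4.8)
The plan is to route every case through one unifying observation: by the characterisation of $\healthy{C2}$ established immediately above, an $\healthy{NCSP}$-healthy contract $\rc{P_1}{P_2}{P_3}$ is $\healthy{C2}$-healthy exactly when its pericondition $P_2$ is $\healthy{CDC}$-healthy, since $\healthy{C2}(\rc{P_1}{P_2}{P_3}) = \rc{P_1}{\healthy{CDC}(P_2)}{P_3}$ and so $\healthy{C2}$ is the identity iff $\healthy{CDC}(P_2) = P_2$. Thus for each constructor I would (i) establish that it yields an $\healthy{NCSP}$-healthy contract, appealing to the $\healthy{NCSP}$ closure results already proved; (ii) compute its pericondition using the relevant calculation theorem; and (iii) discharge $\healthy{CDC}$-healthiness of that pericondition from the $\healthy{CDC}$ closure properties. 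A supporting micro-lemma I would prove first is that any reactive relation not mentioning $\refu'$ is automatically $\healthy{CDC}$-healthy, since substituting $ref_0$ for an absent variable is vacuous and $\exists ref_0 @ \refu' \subseteq ref_0$ is $\true$; combined with closure of $\healthy{CDC}$ under $\land$, this shows that conjoining a $\healthy{CDC}$ relation with any $\refu'$-free condition --- in particular the trace guards of $\healthy{R4}$ and $\healthy{R5}$ and the state condition of a reactive conditional --- preserves $\healthy{CDC}$.

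For the atomic constructs the periconditions are either $\false$ (for $\Miracle$, $\Chaos$, $\Skip$, and $\assignsC{\sigma}$) or a single $\cspen{s}{t}{E}$ term (for $\Stop$ and $\ckey{Do}(a)$), both $\healthy{CDC}$-healthy directly from the closure list. For the finitary composites I would expand periconditions with Theorem~\ref{thm:rc-comp}: sequential composition gives $P_2 \lor (P_3 \relsemi Q_2)$, which is $\healthy{CDC}$ because $Q_2$ is and $\healthy{CDC}$ is closed under $\lor$ and under left-composition with a $\healthy{CDC}$ right operand; the reactive conditional distributes to $\rconditional{P_2}{b}{Q_2}$, assembled from $\land$ and $\lor$ of $\healthy{CDC}$ relations and the $\refu'$-free condition $b$; and external choice, via Definition~\ref{def:ext-choice}, produces a pericondition built from $\healthy{R4}$- and $\healthy{R5}$-filtered copies of $P_2$ and $Q_2$ combined with $\land$ and $\lor$, all $\healthy{CDC}$-preserving by the micro-lemma.

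The nondeterministic-choice case I would treat separately, as it carries no $\healthy{NCSP}$ hypothesis. Here I would argue directly from the definition $\healthy{C2}(P) = P \sfpar{\lone}{\emptyset}{\lzero} \Skip$: since this is a parallel-by-merge, it distributes through $\bigsqcap$ by Theorem~\ref{thm:pbm-laws}, whence $\healthy{C2}(\bigsqcap_{i \in I} P(i)) = \bigsqcap_{i \in I} \healthy{C2}(P(i)) = \bigsqcap_{i \in I} P(i)$ using that each $P(i)$ is $\healthy{C2}$. For the while loop I would rewrite $\while{b}{P}$ using Theorem~\ref{thm:whilekleene} as $(\Casm{b} \relsemi P)\bm{\star} \relsemi \Casm{\neg b}$. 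Since $\Casm{b} = \rconditional{\Skip}{b}{\Miracle}$ is $\healthy{C2}$ and $\healthy{NCSP}$ by the atomic and conditional cases, $\Casm{b} \relsemi P$ is $\healthy{C2}$ and $\healthy{NCSP}$, hence so is every finite power; the star unfolds into $\Skip \intchoice \bigsqcap_{n} (\Casm{b} \relsemi P)^{n+1}$, a nondeterministic choice of $\healthy{C2}$ elements that is $\healthy{C2}$ by the $\bigsqcap$ case, and the final composition with $\Casm{\neg b}$ preserves it.

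The parallel-composition case is where I expect the real work. Using the $\healthy{NCSP}$ closure of parallel composition I reduce again to $\healthy{CDC}$-healthiness of the pericondition, which Theorem~\ref{thm:nrdespar} gives as $P_2 \imerge{cs} Q_2 \lor P_3 \imerge{cs} Q_2 \lor P_2 \imerge{cs} Q_3$. The crux is to show each intermediate-merge disjunct is $\healthy{CDC}$. I would exploit that the inner merge predicate constrains its output through the \emph{subset} conjunct $\refu' \subseteq ((0.\refu \cup 1.\refu) \cap cs) \cup ((0.\refu \cap 1.\refu) \setminus cs)$: for any fixed pair of component refusals the admissible $\refu'$ form a downward-closed set, and the other conjuncts are $\refu'$-free, so after existentially abstracting $0.\refu$ and $1.\refu$ the merged relation is downward closed in $\refu'$ --- the same witness pair that admits $\refu'$ admits every $\refu'' \subseteq \refu'$. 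Concretely, distributing $\imerge{cs}$ over the disjunctive normal forms of $P_2$ and $Q_2$ and applying Theorem~\ref{thm:qmerge} rewrites each disjunct as a trace-quantified $\cspen{s}{t}{E}$ term, which is $\healthy{CDC}$ by the closure list, and $\lor$-closure finishes. The main obstacle is precisely this interaction between the refusal-merging conjunct, the existential abstraction of the component refusals, and downward closure; getting that quantifier manipulation right is the delicate step, whereas every other case is a routine application of the reduction to $\healthy{CDC}$ together with the closure properties.
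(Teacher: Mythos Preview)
Your proposal is correct and follows essentially the same route the paper indicates: the text introduces the theorem with ``From this theorem, and previous definitions, we can now prove the following closure theorems for \healthy{C2}'', i.e.\ reduce everything to the characterisation $\healthy{C2}(\rc{P_1}{P_2}{P_3}) = \rc{P_1}{\healthy{CDC}(P_2)}{P_3}$ and then discharge $\healthy{CDC}$-healthiness of each pericondition via the $\healthy{CDC}$ closure list, which is exactly your plan.

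One small remark on the parallel case: your second, ``concrete'' argument via Theorem~\ref{thm:qmerge} presupposes that $P_2$ and $Q_2$ are disjunctions of $\mathcal{E}$-terms, which need not hold for an arbitrary $\healthy{CDC}$ pericondition. Your first argument is the one that actually carries the proof: the merge predicate constrains $\refu'$ only through the monotone conjunct $\refu' \subseteq ((0.\refu \cup 1.\refu) \cap cs) \cup ((0.\refu \cap 1.\refu) \setminus cs)$, so for any witnessing pair of component refusals, every subset of the admitted $\refu'$ is also admitted, and the remaining conjuncts are $\refu'$-free; hence $P \imerge{cs} Q$ is $\healthy{CDC}$ outright, independently of any normal form for $P$ or $Q$. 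With that adjustment your outline is complete.
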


\noindent We can now prove two algebraic theorems for $\healthy{C2}$ reactive programs.

\begin{theorem} \label{thm:sfparunit}
  If $P$ is $\healthy{NCSP}$ and $\healthy{C2}$ then $P \sfpar{\lone}{cs}{\lzero} \Skip = P$
\end{theorem}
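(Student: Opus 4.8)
The plan is to reduce the parallel composition to a calculation on the three constituent relations using Theorem~\ref{thm:nrdespar}, and then show that merging each component of $P$ with $\Skip$'s trivial finaliser returns that component unchanged. Since $P$ is $\healthy{NCSP}$, I may write $P = \rc{P_1}{P_2}{P_3}$ (Theorem~\ref{thm:ncsp-intro}), and by Definition~\ref{thm:bcircus-def} we have $\Skip = \rcs{\false}{\csppfs{id}{\snil}}$. First I would apply Theorem~\ref{thm:nrdespar} with $ns_1 = \lone$ and $ns_2 = \lzero$, expanding the composition into a precondition of four $\wppC{cs}$ conjuncts, a pericondition that is a disjunction of three $\imerge{cs}$ terms, and a postcondition $P_3 \fmerge{\lone}{cs}{\lzero} \csppfs{id}{\snil}$.

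Next I would collapse each part separately. For the precondition, the two clauses of the form $(\,\cdot\,) \wppC{cs} \truer$ reduce to $\truer$ (as $\Skip$ has $Q_1 = \truer$) by Theorem~\ref{thm:wrlaws}, and the clause arising from $\Skip$'s $\false$ pericondition reduces, via $\truer \rimplies \false = \false$ and $\false \wppC{cs} P_1 = \truer$, again to $\truer$; the surviving conjunct is $\csppfs{id}{\snil} \wppC{cs} P_1$, which I would discharge to $P_1$ using the second identity of Theorem~\ref{thm:c2-props}. For the pericondition, the two terms containing $\Skip$'s $\false$ pericondition collapse to $\false$ by Theorem~\ref{thm:pbm-laws}, leaving only $P_2 \imerge{cs} \csppfs{id}{\snil}$, which is exactly the merge already computed in the proof of the preceding $\healthy{C2}$ characterisation and equals $\healthy{CDC}(P_2)$. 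For the postcondition, the first identity of Theorem~\ref{thm:c2-props} gives $P_3 \fmerge{\lone}{cs}{\lzero} \csppfs{id}{\snil} = P_3$ (using that $P_3$ is $\healthy{CRF}$-healthy).

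Finally I would invoke the $\healthy{C2}$ hypothesis. Because $P$ is $\healthy{C2}$-healthy, the characterisation theorem immediately preceding this one gives $P = \healthy{C2}(P) = \rc{P_1}{\healthy{CDC}(P_2)}{P_3}$, whence $P_2 = \healthy{CDC}(P_2)$. Substituting this equality into the calculated pericondition turns $\healthy{CDC}(P_2)$ back into $P_2$, and reassembling the three parts yields $\rc{P_1}{P_2}{P_3} = P$, as required.

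The main obstacle is that Theorem~\ref{thm:c2-props} and the pericondition-merge computation are established in the excerpt only for the empty synchronisation set (the definition of $\healthy{C2}$ uses $\sfpar{\lone}{\emptyset}{\lzero}$), whereas here $cs$ is arbitrary, so the real work is showing the synchronisation set is immaterial when the right operand is $\Skip$. Since $\Skip$ contributes only the empty trace, the constraint $t \project cs = \snil$ forces $P$'s trace to contain no synchronising events, and the trace-merge $t \in t_1 \parallel_{cs} \snil$ then collapses to $t = t_1$ for every $cs$; the genuinely delicate point is the refusal-merge conjunct of $\cmrgn$, which a priori strips the $cs$-events from the accepted set, so one must verify — leaning on the downward closure ($\healthy{CDC}$) of $P_2$ guaranteed by $\healthy{C2}$-healthiness — that no accepted behaviour of $P_2$ is actually lost. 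The state-merge conjunct, by contrast, poses no difficulty: by the override laws \eqref{law:ov1} and \eqref{law:ov2} of Theorem~\ref{thm:ovrd}, the $\lone/\lzero$ configuration selects $P$'s state in its entirety, which is precisely what makes $\fmerge{\lone}{cs}{\lzero}$ against the identity finaliser behave as the identity.
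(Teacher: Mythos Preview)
The paper does not give a calculational proof here at all: the text immediately following the theorem remarks that it is ``essentially a restatement of $\healthy{C2}$''. Since $P$ being $\healthy{C2}$-healthy means by definition that $\healthy{C2}(P) = P$, i.e.\ $P \sfpar{\lone}{\emptyset}{\lzero} \Skip = P$, the result (for $cs = \emptyset$) is a one-line appeal to the fixed-point property, not the full contract calculation you propose. Your approach is correct for $cs = \emptyset$, but more roundabout: you re-derive the content of the $\healthy{C2}$ characterisation theorem and only then invoke $\healthy{C2}$-healthiness, whereas the paper simply uses Definition~\ref{def:C2} directly.

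You are right to flag, in your final paragraph, that all the supporting machinery (Theorem~\ref{thm:c2-props}, the pericondition merge calculation) is established only for $cs = \emptyset$. However, your attempted patch for arbitrary $cs$ is flawed and cannot be repaired. The constraint $t_1 \project cs = \snil$ in the merge predicate does not ``force'' $P$'s trace to avoid $cs$-events; it \emph{filters out} those observations of $P$ whose trace contains such events. If $P$ genuinely has such observations --- take $P = \ckey{Do}(a)$ with $a \in cs$ --- then the postcondition merge becomes $\false$, and by Theorem~\ref{thm:qmerge} the pericondition merge strips $a$ from the acceptance set (yielding $E_1 \setminus cs = \emptyset$), so the composite is $\Stop$-like rather than $P$. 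The statement is therefore false for general $cs$; the $cs$ in the printed theorem should be read as $\emptyset$, consistent with the paper's own commentary.
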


\begin{theorem} \label{thm:interunit}
  If $P$ is $\healthy{NCSP}$ and $\healthy{C2}$, and $\Sigma = \{\emptyset\}$, then $P \interleave \Skip = P$ \isalink{https://github.com/isabelle-utp/utp-main/blob/07cb0c256a90bc347289b5f5d202781b536fc640/theories/circus/utp_circus_parallel.thy\#L1466}
\end{theorem}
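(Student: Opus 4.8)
The plan is to reduce the statement directly to the unit law already established in Theorem~\ref{thm:sfparunit}, exploiting the fact that when $\Sigma = \{\emptyset\}$ the two special lenses $\lzero$ and $\lone$ coincide. First I would unfold the interleaving operator using Definition~\ref{def:parabbrev}: we have $P \interleave \Skip = P \sfpare{\emptyset} \Skip = P \sfpar{\lzero}{\emptyset}{\lzero} \Skip$, so both name sets are the empty lens $\lzero$ and the synchronisation set is empty. This is close to, but not literally, the instance $cs = \emptyset$ of Theorem~\ref{thm:sfparunit}, which takes the \emph{whole} final state from the left via $\lone$ as its first name set.

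The key observation is that over a one-point state space the distinction vanishes. When $\Sigma = \{\emptyset\}$, both $\lone : \Sigma \lto \Sigma$ and $\lzero : \{\emptyset\} \lto \Sigma$ have $\{\emptyset\}$ as their view and source, and since every element of a singleton equals the sole inhabitant, the lens axioms (\S\ref{sec:isabelleutp}) force their $\lget$ and $\lput$ functions to agree pointwise; hence $\lzero = \lone$ as lenses. Consequently $\sfpar{\lzero}{\emptyset}{\lzero}$ is, on this state space, definitionally the operator $\sfpar{\lone}{\emptyset}{\lzero}$. With this identification I would rewrite $P \sfpar{\lzero}{\emptyset}{\lzero} \Skip$ as $P \sfpar{\lone}{\emptyset}{\lzero} \Skip$ and invoke Theorem~\ref{thm:sfparunit} instantiated at $cs = \emptyset$; its hypotheses, that $P$ is both $\healthy{NCSP}$ and $\healthy{C2}$, are exactly our assumptions, so it yields $P \sfpar{\lone}{\emptyset}{\lzero} \Skip = P$, completing the argument.

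The only genuine content is the lens identity $\lzero = \lone$ over a one-point state, so that is where I expect the main (though modest) obstacle to lie: one must check that the view types really match and that the axioms collapse both functions to the unique point. If one prefers not to identify the lenses, the same conclusion follows by unfolding the merge predicate $\cmrgn$ of Definition~\ref{def:parcomp}: the name sets appear only in its fourth conjunct $\state' = \lovrd{\lovrd{\state}{0.\state}{ns_1}}{1.\state}{ns_2}$, and over a singleton state this equation is vacuously true for any choice of $ns_1, ns_2$, so the merge predicates underlying $\sfpar{\lzero}{\emptyset}{\lzero}$ and $\sfpar{\lone}{\emptyset}{\lzero}$ coincide and Theorem~\ref{thm:sfparunit} again applies. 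Either route is short; the substance is entirely in the degeneracy of the state space.
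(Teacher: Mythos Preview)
Your proposal is correct and matches the paper's (implicit) approach: the paper does not spell out a proof of Theorem~\ref{thm:interunit} but describes it as ``similar'' to Theorem~\ref{thm:sfparunit} under the extra proviso that $\Sigma$ is unitary, so that ``the state therefore contains no information.'' Your reduction---identifying $\lzero$ with $\lone$ over a one-point state space and then invoking Theorem~\ref{thm:sfparunit} at $cs=\emptyset$---is exactly this degeneracy argument made precise, and your fallback via the fourth conjunct of $\cmrgn$ is an equally valid route to the same identification.
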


\noindent Theorem~\ref{thm:sfparunit} is essentially a restatement of $\healthy{C2}$, that is, $\Skip$ is a right-unit
when $P$ controls the entire state-space. However, we can now use Theorem~\ref{thm:c2closure} to satisfy its provisos,
and thus apply it to programs whose pericondition is $\healthy{CDC}$. Theorem~\ref{thm:interunit} is similar, but has
the additional proviso that the state space $\Sigma$ is unitary, and the state therefore contains no information. This
being the case, $P$ is a process~\cite{Oliveira&09}, to use \Circus terminology, rather than an action, since it has no
visible state updates.

Next, we consider annihilators for parallel composition. We calculate the meaning of $\Chaos \sfpar{ns_1}{cs}{ns_2} P$,
for $\healthy{NCSP}$ healthy reactive contract $P$, using our proof strategy:
\begin{example}{$\Chaos$ parallel composition} \label{ex:par-chaos}
\begin{align*}
  \Chaos \sfpar{ns_1}{cs}{ns_2} P 
  &= \rc{\false}{\false}{\false} \sfpar{ns_1}{cs}{ns_2} \rc{P_1}{P_2}{P_3} \\
  &= \rc{P_2 \wppC{cs} \false \land P_3 \wppC{cs} \false \land \truer \wppC{cs} P_1}{\false}{\false}
\end{align*}
\end{example}
\noindent Due to the definition of $\Chaos$, by Theorem~\ref{thm:nrdespar} the peri- and postcondition reduce to
$\false$. Consequently, to show that $\Chaos$ is an annihilator, it is necessary simply to show that the precondition
reduces to $\false$ in order to complete the reduction to $\Chaos$. We already know by Theorem~\ref{thm:miracle-anhil}
that at least $\Miracle$ does not satisfy this requirement, since it is itself an annihilator for any reactive design,
including $\Chaos$. Consequently, we need to consider constraints under which one of the precondition conjuncts reduce
to $\false$.

The third conjunct, $\truer \wppC{cs} P_1$, in general reduces to $\false$ only when $P_1$ is itself $\false$, and
therefore $P = \Chaos$, which is a trivial and therefore uninteresting case. The first two conjuncts are more interesting,
and correspond to the presence of feasible behaviour by either the peri- or the postcondition. Here, we investigate the
circumstances under which $P_2 \wppC{cs} \false = \false$. For this, we need an additional healthiness condition that
ensures that there is at least one observation in the pericondition.

\begin{definition}[Accepting Actions] \isalink{https://github.com/isabelle-utp/utp-main/blob/07cb0c256a90bc347289b5f5d202781b536fc640/theories/circus/utp_circus_parallel.thy\#L1482}
  \begin{align*}
    \healthy{Accept} &\defs \rc{\truer}{\cspens{\snil}{\textit{Event}}}{\false} \\
    \healthy{CACC}(P) &\defs (P \lor \ckey{Accept})
  \end{align*}
\end{definition}

\noindent $\healthy{Accept}$ is the action that does not terminate, but has a single quiescent observation where nothing
has occurred ($\snil$), and every event in $\textit{Event}$ is accepted. It has a similar form to $\Stop$, except that
the latter accepts no events. $\healthy{Accept}$ accepts every event, and yet no event can ever be added to the
trace. Like $\Miracle$, it is excluded by several of Roscoe's standard failures-divergences healthiness
conditions~\cite[Section~8.3]{Roscoe2005}; in particular it violates $\healthy{F3}$, which requires every enabled event must also appear
in the trace. However, like $\Miracle$, it also possesses interesting theoretical properties. We emphasise that
$\cspens{\snil}{\textit{Event}}$ is both $\healthy{CRR}$ and $\healthy{CDC}$ healthy.

The healthiness condition $\healthy{CACC}$ takes the disjunction of $P$ with $\ckey{Accept}$, which effectively states
that $P$ is refined by $\ckey{Accept}$, and thus sets an upper bound on $P$~\cite{Hoare&98}. Using
Theorem~\ref{thm:rc-comp}, we prove the following calculation for application of $\healthy{CACC}$ to a reactive
contract:
\begin{theorem} \label{thm:cacc-form} $\healthy{CACC}(\rc{P_1}{P_2}{P_3}) = \rc{P_1}{\cspens{\snil}{\textit{Event}} \lor
    P_2}{P_3}$ \isalink{https://github.com/isabelle-utp/utp-main/blob/07cb0c256a90bc347289b5f5d202781b536fc640/theories/circus/utp_circus_parallel.thy\#L1487}
\end{theorem}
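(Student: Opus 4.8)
The plan is to reduce this to the nondeterministic-choice law for reactive contracts, Theorem~\ref{thm:rc-comp}-\eqref{thm:rc-choice}. First I would unfold the definitions of $\healthy{CACC}$ and $\ckey{Accept}$, rewriting the left-hand side as the disjunction $\rc{P_1}{P_2}{P_3} \lor \rc{\truer}{\cspens{\snil}{\textit{Event}}}{\false}$. The key observation is that in this inverted lattice disjunction is exactly nondeterministic choice ($\lor = \intchoice$), so this is simply a binary instance of $\bigsqcap_{i \in I}$, and the contract-choice law applies directly.

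Applying Theorem~\ref{thm:rc-comp}-\eqref{thm:rc-choice} to the two-element index set then conjoins the preconditions and disjoins the peri- and postconditions, yielding $\rc{P_1 \land \truer}{P_2 \lor \cspens{\snil}{\textit{Event}}}{P_3 \lor \false}$. It remains only to discharge three routine simplifications that exploit the unit laws of the reactive Boolean algebra: $P_1 \land \truer = P_1$, the commutation $P_2 \lor \cspens{\snil}{\textit{Event}} = \cspens{\snil}{\textit{Event}} \lor P_2$, and $P_3 \lor \false = P_3$. The first holds because $\truer = \healthy{R1}(\true)$ and $P_1$, being $\healthy{CRC}$-healthy, is in particular $\healthy{R1}$-healthy, so conjoining with $\truer$ (that is, with $tr \le tr'$) has no effect; the last holds because $\false$ is the greatest element and hence the unit for $\lor$; the middle step is mere commutativity of $\lor$. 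Assembling these gives precisely $\rc{P_1}{\cspens{\snil}{\textit{Event}} \lor P_2}{P_3}$, as required.

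This proof carries essentially no difficulty once the choice law is invoked; the only point demanding care is the justification that $\lor$ coincides with $\intchoice$ so that Theorem~\ref{thm:rc-comp}-\eqref{thm:rc-choice} is genuinely applicable, together with a check that $\ckey{Accept}$ is itself a well-formed reactive contract of the triple form (which it is by construction, with $\cspens{\snil}{\textit{Event}}$ being $\healthy{CRR}$-healthy as noted in the surrounding text). I would expect the mechanised version to discharge the whole statement by rewriting with the contract-choice law followed by the algebraic simplifier for reactive relations, so the genuine obstacle here is really only the bookkeeping about healthiness of the premise $\rc{P_1}{P_2}{P_3}$ needed to license the unit laws.
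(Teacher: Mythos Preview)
Your proposal is correct and follows essentially the same approach as the paper, which also cites Theorem~\ref{thm:rc-comp} (the contract-choice law) as the basis for this calculation. The unfolding of $\healthy{CACC}$ and $\ckey{Accept}$, identification of $\lor$ with $\intchoice$, and the subsequent unit-law simplifications are exactly the intended route.
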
 
\noindent $\healthy{CACC}$ thus requires that the pericondition refines $\cspens{\snil}{\textit{Event}}$. This means
that the pericondition must admit an observation where nothing has occurred ($\snil$), and also that any subset of
$\textit{Event}$ is accepted (including $\emptyset$). This intuition is demonstrated by the following useful theorem.
\begin{theorem} $\cspen{s_1}{t}{E_1} \refinedby \cspen{s_2}{t}{E_2} \iff (s_1 \implies s_2 \land E_1 \subseteq E_2)$ \isalink{https://github.com/isabelle-utp/utp-main/blob/07cb0c256a90bc347289b5f5d202781b536fc640/theories/sf_rdes/utp_sfrd_rel.thy\#L828}
\end{theorem}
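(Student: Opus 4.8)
The plan is to reduce the refinement to the semantic meaning of $\mathcal{E}$ (Definition~\ref{def:rrelop}) and then decouple the state component from the refusal component. Since both operands are $\healthy{CRR}$-healthy by construction and share the same trace expression $t$, I would first argue that the healthiness wrapper is benign: the bodies $s_i \land \trace = t \land (\forall e \in E_i @ e \notin \refu')$ already mention neither $ok$, $wait$, nor $\refu$, and are phrased directly in terms of $\trace$, so they are already $\healthy{R1}$- and $\healthy{R2}$-stable and the quantifiers in $\healthy{CRR}$ are vacuous. Refinement between the two operators therefore coincides with refinement between these bodies, and unfolding $\refinedby$ as universally closed reverse implication turns the goal into: for every observation with $\trace = t$, the body of $\cspen{s_2}{t}{E_2}$ entails the body of $\cspen{s_1}{t}{E_1}$.

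For the forward direction I would extract the two conjuncts of the right-hand side by judicious instantiation of the refusal variable $\refu'$. Taking $\refu' = \emptyset$ satisfies the acceptance clause of either operand vacuously, so the surviving content is purely the state condition, forcing every state admitted by $s_2$ to be admitted by $s_1$. Taking $\refu'$ to be the largest refusal compatible with the right-hand operand, namely $\refu' = \textit{Event} \setminus E_2$, makes its acceptance clause hold while reducing the left-hand clause $\refu' \cap E_1 = \emptyset$ to exactly $E_1 \subseteq E_2$. The converse is then a direct implication: assuming the entailment on state conditions together with $E_1 \subseteq E_2$, any observation of $\cspen{s_2}{t}{E_2}$ immediately satisfies $\cspen{s_1}{t}{E_1}$, since the smaller accepted set $E_1$ imposes a weaker constraint on $\refu'$.

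An alternative, more algebraic route avoids pointwise reasoning. Because $\land$ is the binary supremum (join) for $\refinedby$ in the Boolean-algebra structure of $\theoryset{\healthy{CRR}}$, the goal $\cspen{s_1}{t}{E_1} \refinedby \cspen{s_2}{t}{E_2}$ is equivalent to $\cspen{s_1}{t}{E_1} \land \cspen{s_2}{t}{E_2} = \cspen{s_2}{t}{E_2}$, and the conjunction on the left collapses by Theorem~\ref{thm:crel-comp}, law~\eqref{thm:crc6}, to $\cspen{s_1 \land s_2}{t}{E_1 \cup E_2}$. The theorem then reduces to an equality-of-$\mathcal{E}$ criterion, splitting into the requirement $s_1 \land s_2 \iff s_2$ on the governing state condition and $E_1 \cup E_2 = E_2$ on the event sets, i.e. $E_1 \subseteq E_2$.

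I expect the main obstacle to be the canonicity step needed to read conditions off the symbolic $\mathcal{E}$ forms, together with the degenerate cases. When the governing state condition is unsatisfiable the operator collapses to $\false$ and the constraint on $E$ becomes vacuous, so the event condition must properly be understood as holding only where the state condition is satisfiable. Ensuring that the universally quantified $\refu'$ interacts soundly with the state-dependence of $s$, $t$, and $E$ — so that the extraction of $E_1 \subseteq E_2$ is valid at every reachable state — is the delicate part, and the empty- and complement-refusal instantiations are the key devices for making it rigorous.
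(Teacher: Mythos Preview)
The paper does not give a written proof for this theorem; it is discharged in the Isabelle/UTP mechanisation and accompanied only by the informal commentary that ``the state condition is weakened, and the set of enabled events becomes larger''. Your two routes --- the pointwise argument via the instantiations $\refu' = \emptyset$ and $\refu' = \textit{Event} \setminus E_2$, and the algebraic reduction using Theorem~\ref{thm:crel-comp}\eqref{thm:crc6} --- are both sound strategies, and either would mechanise straightforwardly.

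There is, however, a discrepancy you walk straight past without comment. In both of your arguments you actually derive the implication $s_2 \implies s_1$: ``every state admitted by $s_2$ is admitted by $s_1$'' in the pointwise route, and ``$s_1 \land s_2 \iff s_2$'' in the algebraic one. The displayed statement has $s_1 \implies s_2$. Unfolding the semantics confirms your direction: since $P \refinedby Q$ is $[Q \Rightarrow P]$, the refinement requires $s_2(\state) \land \refu' \cap E_2 = \emptyset$ to entail $s_1(\state) \land \refu' \cap E_1 = \emptyset$, and specialising to $\refu' = \emptyset$ yields $s_2 \Rightarrow s_1$. The correct characterisation, accounting also for the state-dependence of the $E_i$, is $s_2 \implies (s_1 \land E_1 \subseteq E_2)$. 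So the printed statement appears to have the state-condition indices transposed; your reasoning is internally consistent and proves the semantically correct version, but you should have flagged the mismatch explicitly rather than leaving it implicit.
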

\noindent Refinement of one quiescent observation by another, sharing the same trace, requires that the state condition
is weakened, and that set of enabled events becomes larger. This may seem counter-intuitive, but it is because we
encode refusals in $\refu'$, and therefore the most constrained refusal observation is $\refu' = \emptyset$, which
corresponds to every event being enabled. The majority of productive operators presented so far satisfy this constraint,
and therefore we can prove the following closure properties for $\healthy{CACC}$.

\begin{theorem}[$\healthy{CACC}$ Closure] Let $P$ and $Q$ be $\healthy{NCSP}$-healthy relations, then: \isalink{https://github.com/isabelle-utp/utp-main/blob/07cb0c256a90bc347289b5f5d202781b536fc640/theories/circus/utp_circus_parallel.thy\#L1500}
  \begin{itemize}
    \item $\Chaos$, $\Stop$, and $\ckey{Do}(e)$ are $\healthy{CACC}$-healthy;
    \item If $P$ is $\healthy{CACC}$ then $P \relsemi Q$ is $\healthy{CACC}$;
    \item If $P$ and $Q$ are both $\healthy{CACC}$ then $P \extchoice Q$ is $\healthy{CACC}$;
    \item If $P$ and $Q$ are both $\healthy{CACC}$ then $P \sqcap Q$ is $\healthy{CACC}$.
  \end{itemize}
\end{theorem}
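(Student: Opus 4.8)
The plan is to reduce every clause to the reformulation provided by Theorem~\ref{thm:cacc-form}. Since $\healthy{CACC}(\rc{R_1}{R_2}{R_3}) = \rc{R_1}{\cspens{\snil}{\textit{Event}} \lor R_2}{R_3}$, a reactive contract is $\healthy{CACC}$-healthy exactly when its pericondition absorbs $\cspens{\snil}{\textit{Event}}$, that is, when $R_2 \refinedby \cspens{\snil}{\textit{Event}}$, or equivalently when the whole contract refines $\healthy{Accept}$. Each case then amounts to a pericondition computation followed by a refinement check. Crucially, every operator appearing in the statement is already known to preserve $\healthy{NCSP}$, so each composite is genuinely a contract $\rc{R_1}{R_2}{R_3}$ to which Theorem~\ref{thm:cacc-form} applies, and I would invoke that closure silently throughout.

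For the base cases I would read off the periconditions from Definition~\ref{thm:bcircus-def} and apply the quiescent-observation refinement law (the penultimate theorem), which states $\cspen{s_1}{t}{E_1} \refinedby \cspen{s_2}{t}{E_2}$ iff $s_1 \implies s_2$ and $E_1 \subseteq E_2$. For $\Stop$ the pericondition is $\cspens{\snil}{\emptyset}$ and $\emptyset \subseteq \textit{Event}$; for $\ckey{Do}(e)$ it is $\cspens{\snil}{\{e\}}$ and $\{e\} \subseteq \textit{Event}$; so both refine $\cspens{\snil}{\textit{Event}}$. The case of $\Chaos$ needs no calculation at all: $\Chaos$ is the bottom of the contract lattice, so $\Chaos \refinedby \healthy{Accept}$ holds trivially, and hence $\Chaos$ is $\healthy{CACC}$.

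Closure under $\relsemi$ and $\sqcap$ I would settle by monotonicity. By Theorem~\ref{thm:rc-comp}-\eqref{thm:rc-seq} the pericondition of $P \relsemi Q$ is $P_2 \lor (P_3 \relsemi Q_2)$, and since disjunction only weakens we have $P_2 \lor (P_3 \relsemi Q_2) \refinedby P_2 \refinedby \cspens{\snil}{\textit{Event}}$ using the hypothesis on $P$. (Equivalently, $\healthy{Accept}$ has a $\false$ postcondition and is therefore a left annihilator for $\relsemi$, so $P \relsemi Q \refinedby \healthy{Accept} \relsemi Q = \healthy{Accept}$.) For $\sqcap$, Theorem~\ref{thm:rc-comp}-\eqref{thm:rc-choice} gives pericondition $P_2 \lor Q_2$, and again $P_2 \lor Q_2 \refinedby P_2 \refinedby \cspens{\snil}{\textit{Event}}$; alternatively $P \sqcap Q \refinedby P \refinedby \healthy{Accept}$ directly.

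The external choice case is where I expect the real work, since by Definition~\ref{def:ext-choice} its pericondition is $(\healthy{R5}(P_2) \land \healthy{R5}(Q_2)) \lor (\healthy{R4}(P_2) \lor \healthy{R4}(Q_2))$, and the leading conjunction must be shown to refine $\cspens{\snil}{\textit{Event}}$ in spite of the $\land$. The key observation is that $\cspens{\snil}{\textit{Event}}$ has an empty trace and is thus an $\healthy{R5}$ fixed point (Theorem~\ref{thm:filtering}); monotonicity of $\healthy{R5}$ then lifts each hypothesis $P_2 \refinedby \cspens{\snil}{\textit{Event}}$ to $\healthy{R5}(P_2) \refinedby \cspens{\snil}{\textit{Event}}$, and likewise for $Q$. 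Because a conjunction refines a target precisely when both conjuncts do, $\healthy{R5}(P_2) \land \healthy{R5}(Q_2) \refinedby \cspens{\snil}{\textit{Event}}$, and disjoining the $\healthy{R4}$ term only weakens this. The main obstacle is therefore marshalling the filtering laws so that the $\healthy{R5}$ part retains the empty-trace acceptance while the $\healthy{R4}$ part is safely absorbed; once the $\healthy{R5}$-invariance of $\cspens{\snil}{\textit{Event}}$ is established, the remaining steps are routine lattice reasoning.
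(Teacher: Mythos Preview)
Your proposal is correct and is essentially the approach the paper implicitly relies on: the paper does not write out a proof for this theorem but defers to the mechanisation, and the natural route in that setting is precisely the one you take --- reduce $\healthy{CACC}$-healthiness to the pericondition refinement $P_2 \refinedby \cspens{\snil}{\textit{Event}}$ via Theorem~\ref{thm:cacc-form}, then discharge each clause using the contract-composition laws (Theorems~\ref{thm:rc-comp} and Definition~\ref{def:ext-choice}) and the filtering laws (Theorem~\ref{thm:filtering}). Your handling of external choice, in particular the observation that $\cspens{\snil}{\textit{Event}}$ is an $\healthy{R5}$ fixed point and that refinement into a conjunction splits into refinement into each conjunct, is exactly what is needed and matches the calculational style used throughout the paper.
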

\noindent $\healthy{Miracle}$ is not $\healthy{CACC}$ because its pericondition is $\false$, and therefore does not have
an empty interaction. More importantly, however, $\assignsC{\sigma}$ and $\Skip$ are also not $\healthy{CACC}$, because
they too have a $\false$ pericondition. However, for most processes that include at least one interaction, and do not
invoke infeasible actions like $\Miracle$, $\healthy{CACC}$ is satisfied. In particular, we note that closure of
$\healthy{CACC}$ under sequential composition only requires that the first argument is $\healthy{CACC}$. Therefore,
since by Theorems \ref{thm:sfdl} and \ref{thm:ext-choice-distrib} we can usually push leading assignments forward, most
actions are $\healthy{CACC}$-healthy. Alternatively, we could define a pseudo unit,
$\ckey{NoOp} \defs \rc{\truer}{\cspens{\snil}{\textit{Event}}}{\csppfs{id}{\langle\rangle}}$, but this has the
undesirable characteristic of not refusing any event whilst also not engaging in any event, which also violates the
failures-divergences healthiness condition $\healthy{F3}$~\cite{Roscoe2005}.

With this healthiness condition, we can finally prove the following theorem:

\begin{theorem}
  If $ns_1 \lindep ns_2$, and $P$ is $\healthy{NCSP}$ and $\healthy{CACC}$, then
  $\Chaos \sfpar{ns_1}{cs}{ns_2} P = \Chaos$. \isalink{https://github.com/isabelle-utp/utp-main/blob/07cb0c256a90bc347289b5f5d202781b536fc640/theories/circus/utp_circus_parallel.thy\#L1547}
\end{theorem}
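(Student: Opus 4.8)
The plan is to build directly on the calculation already carried out in Example~\ref{ex:par-chaos}, where $\Chaos \sfpar{ns_1}{cs}{ns_2} \rc{P_1}{P_2}{P_3}$ is reduced to the contract $\rc{P_2 \wppC{cs} \false \land P_3 \wppC{cs} \false \land \truer \wppC{cs} P_1}{\false}{\false}$. Since the pericondition and postcondition are already $\false$, it suffices to show that the precondition collapses to $\false$, for then the whole contract is $\rc{\false}{\false}{\false} = \Chaos$. As the precondition is a conjunction, it is enough to prove that a single conjunct is $\false$, and I would target $P_2 \wppC{cs} \false$, exploiting the assumption that $P$ is $\healthy{CACC}$-healthy. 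The conjunct $\truer \wppC{cs} P_1$ is no help, since it equals $\false$ only in the degenerate case $P_1 = \false$, and so the productive path is through the pericondition.

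First I would unfold $\healthy{CACC}$-healthiness. Because $P$ is a fixed point of $\healthy{CACC}$, Theorem~\ref{thm:cacc-form} gives that its pericondition satisfies $P_2 = \cspens{\snil}{\textit{Event}} \lor P_2$; that is, $P_2$ carries the accepting observation as a disjunct. Applying the weakest-rely disjunction law of Theorem~\ref{thm:wrlaws} (recalling that $\wppC{cs} = \wppR{N_{\textsf{\tiny I}}}$), I would then compute $P_2 \wppC{cs} \false = (\cspens{\snil}{\textit{Event}} \wppC{cs} \false) \land (P_2 \wppC{cs} \false)$, so the whole expression reduces to the accepting disjunct once the key lemma $\cspens{\snil}{\textit{Event}} \wppC{cs} \false = \false$ is established; absorption ($\false \land X = \false$) then finishes the precondition.

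The key lemma is the crux, and I expect it to be the main obstacle. The idea is to re-express the right-hand $\false$ as an assumption via Law~\eqref{law:wpintrue}, namely $\false = \cspin{\ptrue}{\snil}$, so that the quiescent case of the parallel precondition law (Theorem~\ref{thm:parpre}) becomes applicable. Instantiating that law with $s_1 = s_2 = \ptrue$, $t_1 = t_2 = \snil$, and $E = \textit{Event}$, and simplifying $\snil \cat tt_0 = tt_0$ and $\snil \project cs = \snil$, yields $(\forall tt_0, tt_1 @ \cspin{tt_1 \in tt_0 \parallel_{cs} \snil \land tt_0 \project cs = \snil}{tt_1})$. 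This universal quantifier is a conjunction of reactive conditions, and it includes the valuation $tt_0 = tt_1 = \snil$: here $tt_0 \project cs = \snil$ holds and $\snil \in \snil \parallel_{cs} \snil = \{\snil\}$, so this instance contributes the conjunct $\cspin{\ptrue}{\snil}$, which is $\false$ by Law~\eqref{law:wpintrue}. Since $\false$ absorbs the conjunction, the entire expression is $\false$, establishing the lemma. The delicate points are the trace-merge reasoning for $\parallel_{cs}$ and recognising that the $\snil, \snil$ instance alone forces the result, so I would justify $\snil \parallel_{cs} \snil = \{\snil\}$ from the first equation of the trace merge function.

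Chaining these results gives $P_2 \wppC{cs} \false = \false$, whence the precondition conjunction is $\false$ and the composite contract equals $\Chaos$. Throughout, the weakest-rely and merge machinery assumes suitably healthy operands; since $P$ is $\healthy{NCSP}$, Theorem~\ref{thm:ncsp-intro} guarantees $P_2$ is $\healthy{CRR}$-healthy, which discharges those side conditions.
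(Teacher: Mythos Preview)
Your proposal is correct and follows essentially the same route as the paper's own proof: reduce via Example~\ref{ex:par-chaos}, target the conjunct $P_2 \wppC{cs} \false$, split using $\healthy{CACC}$-healthiness and the disjunction law for $\wppC{cs}$, rewrite $\false$ as $\cspin{\ptrue}{\snil}$, apply Theorem~\ref{thm:parpre}, and instantiate $tt_0 = tt_1 = \snil$ to obtain $\cspin{\ptrue}{\snil} = \false$. The paper presents exactly this calculation, step for step.
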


\begin{proof}
  Given that $P = \rc{P_1}{P_2}{P_3}$, and noting the calculation in Example~\ref{ex:par-chaos}, it suffices to show that
  $P_2 \wppC{cs} \false$ reduces to $\false$.
  \begin{align*}
    P_2 \wppC{cs} \false 
    &= (\cspens{\snil}{\textit{Event}} \lor P_2) \wppC{cs} \false & [\ref{thm:cacc-form}] \\
    &= (\cspens{\snil}{\textit{Event}} \wppC{cs} \false) \land (P_2 \wppC{cs} \false)  \\
    &= (\cspens{\snil}{\textit{Event}} \wppC{cs} \cspin{\ptrue}{\snil}) \land (P_2 \wppC{cs} \false) \\
    &= \left(\forall (tt_0, tt_1) @ \cspin{\left(\begin{array}{l} tt_1 \in (\snil \cat tt_0) \parallel_{cs} \snil \land \\ (\snil \cat tt_0) \project cs = \snil \project cs\end{array}\right)}{tt_1}\right) \land (P_2 \wppC{cs} \false) \\
    &= \left(\forall (tt_0, tt_1) @ \cspin{\left(\begin{array}{l} tt_1 \in tt_0 \parallel_{cs} \snil \\ \land tt_0 \project cs = \snil\end{array}\right)}{tt_1}\right) \land (P_2 \wppC{cs} \false) \\
    &= \begin{array}{l} \cspin{\snil \in \snil \parallel_{cs} \snil \land \snil \project cs = \snil}{\snil} \\[1ex] \land \left(\forall (tt_0, tt_1) @ \cspin{\left(\begin{array}{l} tt_1 \in tt_0 \parallel_{cs} \snil \land tt_0 \project cs = \snil\end{array}\right)}{tt_1}\right) \land (P_2 \wppC{cs} \false)
       \end{array} \\
    &= \cspin{\ptrue}{\snil} \land \left(\forall (tt_0, tt_1) @ \cspin{\cdots}{tt_1}\right) \land (P_2 \wppC{cs} \false) \\
    &= \false & \qedhere
  \end{align*}
\end{proof}

\noindent The crucial part of the proof is that the complex $\mathcal{I}$ formula must hold for any given $tt_0$ and
$tt_1$, and so we can pick $\snil$ for both of them, and add this as an extra conjunct. Since $\snil$ merged with
$\snil$ yields $\{\snil\}$, the resulting formula reduces to $\cspens{\ptrue}{\snil}$, which is simply $\false$. This
calculation would not be possible if we could not exhibit $\snil$ as a possible trace in the pericondition, which
is the purpose of $\healthy{CACC}$.

In this section, we have shown how the calculational strategy can be extended to handle parallel composition, and proved
proved some important theorems that follow. In the next section we demonstrate the proof strategy on a small example.

\section{Verification Strategy for Reactive Programs}
\label{sec:verify}
Our results give rise to an automated verification strategy for reactive programs, whereby we (1)~calculate the contract
of a reactive program, (2)~use our equational theory to simplify the underlying reactive relations, (3)~identify
invariants for reactive while loops, and (4) finally prove refinements using relational calculus. Although the relations
can be complex, our equational theory from \S\ref{sec:circus-rc} and \S\ref{sec:ext-choice}, aided by the Isabelle/HOL
simplifier, can be used to rapidly reduce them to more compact forms amenable to automated proof. In this section we
illustrate this strategy using the buffer in Example~\ref{ex:buffer}. We prove two properties: (1) deadlock freedom, and
(2) the order of values produced is the same as those consumed.

Deadlock freedom can be demonstrated with the help of the following
specification contract~\cite{Foster17c}.

\begin{definition}[Deadlock-freedom Contract] $\ckey{CDF} \defs \textstyle\rcs{\exists s, t, E, e @ \cspen{s}{t}{\{e\} \cup E}}{\truer}$ \isalink{https://github.com/isabelle-utp/utp-main/blob/90ec1d65d63e91a69fbfeeafe69bd7d67f753a47/theories/sf_rdes/utp_sfrd_fdsem.thy\#L425}
\end{definition}

\noindent Since only quiescent observations can deadlock, $\ckey{CDF}$ constrains only the pericondition, which
characterises observations where at least one event $e$ is being accepted: there is no deadlock. For example, we can
show that $a \then b \then \Skip \sfpare{\{ b \}} b \then c \then \Skip$ is deadlock-free with the help of
Example~\ref{ex:commsem}.

\begin{example}[Deadlock-Freedom Calculation]
  \begin{align*}
    &\ckey{CDF} \refinedby a \then b \then \Skip \sfpare{\{ b \}} b \then c \then \Skip \\
    \iff\,& \rcs{\exists s, t, E, e @ \cspen{s}{t}{\{e\} \cup E}}{\truer} \refinedby \rcs{\cspens{\snil}{\{a\}} \lor \cspens{\langle a \rangle}{\{b\}} \lor \cspens{\langle a, b \rangle}{\{c\}}}{\csppfs{id}{\langle a, b, c \rangle}} \\
    \iff\,& (\exists s, t, E, e @ \cspen{s}{t}{\{e\} \cup E}) \refinedby (\cspens{\snil}{\{a\}} \lor \cspens{\langle a \rangle}{\{b\}} \lor \cspens{\langle a, b \rangle}{\{c\}}) \land \truer \refinedby \csppfs{id}{\langle a, b, c \rangle} \\
    \iff\,& (\exists s, t, E, e @ \cspen{s}{t}{\{e\} \cup E}) \refinedby \cspens{\snil}{\{a\}} \land (\exists s, t, E, e @ \cspen{s}{t}{\{e\} \cup E}) \refinedby \cspens{\langle a \rangle}{\{b\}} \land \cdots \\
    \iff\,& \true
  \end{align*}

  \noindent The intuition is that every $\cspen{\cdot}{\cdot}{\cdot}$ term in the process's pericondition corresponds to
  a possible transition. Consequently, we need to show that no transition exists without an enabled event. This is the
  case for all three disjuncts --- they enable $\{a\}$, $\{b\}$ and $\{c\}$, respectively --- and so the process is
  deadlock-free. \qed
\end{example}

To prove that the buffer is deadlock-free, we first calculate the contract of the main loop in the $Buffer$ process in
Example~\ref{ex:buffer}, and then use this to calculate the overall contract for the iterative behaviour.

\begin{theorem}[Loop Body]
  The body of the loop is $\rc{\truer}{B_2}{B_3}$ where \isalink{https://github.com/isabelle-utp/utp-main/blob/07cb0c256a90bc347289b5f5d202781b536fc640/tutorial/utp_csp_buffer.thy\#L59}
  \begin{align*}
    B_2 =~~& \textstyle\cspen{true}{\langle\rangle}{\bigcup_{v \in \nat}\, \{inp.v\} \cup (\rconditional{\{out.head(bf)\}}{0 < \#bf}{\emptyset})} \\[1ex]
    B_3 =~~& \left(
      \begin{array}{l}
        \textstyle
        \left(\bigvee_{v \in \nat}\, \csppf{true}{\{bf \mapsto bf \cat \langle v \rangle\}}{\langle inp.v \rangle} \right) \lor \\[1ex]
        \csppf{0 < \#bf}{\{bf \mapsto tail(bf)\}}{\langle out.head(bf) \rangle}
      \end{array}
    \right)
  \end{align*}
  \noindent \textnormal{The $\truer$ precondition implies no divergence. The pericondition states that every input event
    is enabled, and the output event is enabled if the buffer is non-empty. The postcondition contains two
    possible final observations: (1) an input event occurred and the buffer variable was extended; or (2)
    provided the buffer was non-empty initially, then the buffer's head is output and $bf$ is contracted.}
\end{theorem}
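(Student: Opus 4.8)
The plan is to calculate the contracts of the two branches of the external choice independently and then combine them with the indexed external-choice law of Definition~\ref{def:ext-choice}, relying throughout on the sequential-composition law Theorem~\ref{thm:rc-comp} and the reactive-relational simplifications of Theorem~\ref{thm:crel-comp}. The precondition requires the least effort: every primitive here ($\ckey{Do}$, assignment, $\Stop$) has precondition $\truer$, and because $P_3 \wpR \truer = \truer$ sequential composition preserves $\truer$, while both external choice and the conditional guard merely conjoin preconditions; hence the overall precondition collapses to $\truer$ at once. I would dispatch this first and then concentrate the work on the peri- and postconditions.

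For the input branch I would unfold the input prefix as the unbounded external choice $\Extchoice v \in \nat @ (inp.v \then bf := bf \cat \langle v \rangle)$ and expand each summand as $\ckey{Do}(inp.v) \relsemi (bf := bf \cat \langle v \rangle)$. Applying Theorem~\ref{thm:rc-comp}, law~\eqref{thm:rc-seq}, and fusing the two $\Phi$ terms with Theorem~\ref{thm:crel-comp}, law~\eqref{thm:crc2}, each summand reduces to $\rc{\truer}{\cspen{\ptrue}{\langle\rangle}{\{inp.v\}}}{\csppf{\ptrue}{\substmap{bf \mapsto bf \cat \langle v \rangle}}{\langle inp.v \rangle}}$, the $\csppf{\ptrue}{id}{\langle inp.v \rangle} \relsemi \false$ contribution to the composed pericondition vanishing (composition with $\false$ is $\false$). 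Feeding this family into Definition~\ref{def:ext-choice}, the key observation is that every summand's pericondition carries the empty trace, so by the trace-filtering law Theorem~\ref{thm:filtering} each $\healthy{R4}$ disjunct collapses to $\false$ and $\healthy{R5}$ acts as the identity. The surviving pericondition $\bigwedge_{v} \cspen{\ptrue}{\langle\rangle}{\{inp.v\}}$ then merges via Theorem~\ref{thm:crel-comp}, law~\eqref{thm:crc6}, into $\cspen{\ptrue}{\langle\rangle}{\bigcup_{v} \{inp.v\}}$, while the postcondition is the disjunction $\bigvee_v \csppf{\ptrue}{\substmap{bf \mapsto bf \cat \langle v \rangle}}{\langle inp.v \rangle}$.

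For the output branch I would first compute the contract of $out!(head(bf)) \then bf := tail(bf) = \ckey{Do}(out.head(bf)) \relsemi (bf := tail(bf))$ exactly as above, obtaining enabled event $out.head(bf)$ and state update $\substmap{bf \mapsto tail(bf)}$. The guard is then applied by rewriting $g \guard P$ as $\rconditional{P}{g}{\Stop}$ and distributing the reactive conditional through the contract via Theorem~\ref{thm:rc-comp} (conditional distribution). Since $\Stop$ has pericondition $\cspen{\ptrue}{\langle\rangle}{\emptyset}$ and postcondition $\false$, law~\eqref{thm:crc5} pushes the conditional into the enabled set, giving $\cspen{\ptrue}{\langle\rangle}{\rconditional{\{out.head(bf)\}}{0 < \#bf}{\emptyset}}$, whereas the postcondition $\rconditional{\csppf{\ptrue}{\substmap{bf \mapsto tail(bf)}}{\langle out.head(bf) \rangle}}{0 < \#bf}{\false}$ simplifies by relational calculus: the $\false$ branch reduces the reactive conditional to a conjunction with the state condition, which is absorbed into the $\Phi$ operator to yield $\csppf{0 < \#bf}{\substmap{bf \mapsto tail(bf)}}{\langle out.head(bf) \rangle}$.

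Finally I would combine the two branch contracts with Definition~\ref{def:ext-choice}. Both branch periconditions again carry the empty trace, so Theorem~\ref{thm:filtering} once more discards every $\healthy{R4}$ term and leaves the $\healthy{R5}$ conjunction, and a single application of Theorem~\ref{thm:crel-comp}, law~\eqref{thm:crc6}, unions the two enabled-event sets into exactly $B_2$; the postcondition is the disjunction of the two branch postconditions, i.e.\ $B_3$. I expect the principal obstacle to be the bookkeeping around the unbounded choice over $\nat$: one must check that trace filtering (Theorem~\ref{thm:filtering}) and the event-union law~\eqref{thm:crc6} apply uniformly to an infinite index set, and that the partial operations $head$ and $tail$ are exercised only under the $0 < \#bf$ guard, so that the enabled-event set and the state update remain well-defined on the empty buffer. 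In the mechanisation these routine but fiddly steps are precisely what the \textsf{rdes-eq} tactic automates.
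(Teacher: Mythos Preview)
Your proposal is correct and follows essentially the same route as the paper: the paper only exhibits the calculation for the input branch (unfolding the prefix as an external choice, expanding via Definition~\ref{thm:bcircus-def}, composing with Theorems~\ref{thm:rc-comp} and~\ref{thm:evwp}, simplifying with Theorem~\ref{thm:crel-comp}, and then applying Definition~\ref{def:ext-choice} with Theorem~\ref{thm:filtering}), remarking that the full calculation is automated. You reproduce that argument and additionally spell out the output branch, the guard handling via the conditional-distribution laws, and the final binary external choice, which the paper leaves to the tactic; the specific uses of laws~\eqref{thm:crc2}, \eqref{thm:crc5}, and~\eqref{thm:crc6} are exactly the intended simplifications.
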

\begin{proof} To exemplify, we calculate the left-hand side of the choice, employing Theorems~\ref{thm:rc-comp},
  \ref{thm:bcircus-def}, \ref{thm:crel-comp}, and \ref{thm:filtering}. The entire calculation is automated in Isabelle/UTP.
  \begin{align*}
    & inp?v \then bf := bf \cat \langle v \rangle \\
    & = \Extchoice v\!\in\!\nat @ \ckey{Do}(inp.v) \relsemi bf := bf \cat \langle v \rangle & [\textnormal{Defs}]\\ 
    & = \Extchoice v\!\in\!\nat @ \left(
      \begin{array}{l}
        \rc{\truer}{\cspen{\ptrue}{\langle\rangle}{\{inp.v\}}}{\csppf{\ptrue}{id}{\langle inp.v \rangle}} \relsemi \\ 
        \rc{\truer}{\false}{\csppf{\ptrue}{\llparenthesis bf \mapsto bf \cat \langle v \rangle \rrparenthesis}{\langle\rangle}}
      \end{array} \right) & [\ref{thm:bcircus-def}] \\[.5ex]
    & = \Extchoice v\!\in\!\nat @
        \rc{\!\truer\!}{\!
        \begin{array}{l}
          \cspen{\ptrue}{\langle\rangle}{\{inp.v\}} \\
          \lor \false
        \end{array} \!
      }{\!
        \begin{array}{l}
          \csppf{\ptrue}{id}{\langle inp.v \rangle} \relsemi \\
          \csppf{\ptrue}{\llparenthesis bf\!\mapsto\!bf\!\cat\!\langle v \rangle \rrparenthesis}{\langle\rangle}
        \end{array}
      } & [\ref{thm:rc-comp}, \ref{thm:evwp}] \\[1ex]
    & = \Extchoice\! v\!\in\!\nat @\! \rc{\!\truer\!}{\!\cspen{\ptrue}{\snil}{\{\!inp.v\!\}}\!}{\!\csppf{\ptrue}{\llparenthesis bf\!\mapsto\!bf\!\cat\!\langle v \rangle \rrparenthesis}{\langle inp.v \rangle}\!} & [\ref{thm:crel-comp}] \\[1ex]
    & = \rc{\!\truer\!}{\!\cspen{\ptrue}{\snil}{\!\bigcup_{v \in \nat} \{\!inp.v\!\}}\!\!}{\!\bigvee_{v \in \nat} \!\csppf{\ptrue}{\llparenthesis bf \!\mapsto\! bf\!\cat\!\langle v \rangle \rrparenthesis}{\langle inp.v \rangle}\!} & [\ref{def:ext-choice},\ref{thm:filtering}]
  \end{align*}

  \noindent Though this calculation seems complicated, in practice it is fully automated and a user need not be
  concerned with these minute calculational details, but can rather focus on finding suitable reactive invariants.
\end{proof}

Then, by Theorem~\ref{thm:whilecalc} we can calculate the overall behaviour of the buffer.
$$Buffer = \rc{\truer}{\csppf{true}{\{bf \mapsto \langle\rangle\}}{\langle\rangle} \relsemi B_3\bm{\star} \relsemi
  B_2}{\false}$$

\noindent This is a non-terminating contract where every quiescent behaviour begins with an empty buffer, performs some
sequence of buffer inputs and outputs accompanied by state updates ($B_3\bm{\star}$), and is finally offering the
relevant input and output events ($B_2$). We can now employ Theorem~\ref{thm:rea-inv} to verify the buffer. First, we
tackle deadlock freedom, which can be proved using the following refinement.

\begin{theorem}[Deadlock Freedom]  $\ckey{CDF} \refinedby Buffer$ \isalink{https://github.com/isabelle-utp/utp-main/blob/07cb0c256a90bc347289b5f5d202781b536fc640/tutorial/utp_csp_buffer.thy\#L118}
\end{theorem}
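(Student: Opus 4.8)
The plan is to apply the contract refinement law, Theorem~\ref{thm:rdesrefine}, to the already-calculated contract $Buffer = \rc{\truer}{Q_2}{\false}$, where $Q_2 = \csppf{true}{\{bf \mapsto \langle\rangle\}}{\langle\rangle} \relsemi B_3\bm{\star} \relsemi B_2$, against the specification $\ckey{CDF} = \rcs{\exists s, t, E, e @ \cspen{s}{t}{\{e\} \cup E}}{\truer}$, which has trivial precondition and postcondition $\truer$. The three resulting obligations collapse dramatically: the precondition obligation is $\truer \refinedby \truer$, and the postcondition obligation is $\truer \refinedby (\false \land \truer) = \false$, both immediate since $\truer$ is the least reactive relation and is therefore refined by everything. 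Thus the entire proof reduces to the single pericondition obligation $(\exists s, t, E, e @ \cspen{s}{t}{\{e\} \cup E}) \refinedby Q_2$.

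First I would reduce $Q_2$ to a disjunction of $\mathcal{E}$ terms. Because $\relsemi$ distributes over the join defining the Kleene star, we have $B_3\bm{\star} \relsemi B_2 = \bigvee_n (B_3^n \relsemi B_2)$, and each $B_3^n$ is, by Theorem~\ref{thm:crel-comp}, a disjunction of $\Phi$ terms. Composing any such $\Phi$ term with the quiescent observation $B_2$ through law~\eqref{thm:crc3} produces an $\mathcal{E}$ term, and prepending the initial finaliser $\csppf{true}{\{bf \mapsto \langle\rangle\}}{\langle\rangle}$ via the same law preserves this shape. Hence $Q_2$ is a (possibly infinite) disjunction of terms $\cspen{s'}{t'}{E'}$, one for each reachable quiescent state.

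The observation that makes every disjunct deadlock-free is that the acceptance set of $B_2$ contains the component $\bigcup_{v \in \nat} \{inp.v\}$, which is unconditionally non-empty since $\nat$ is inhabited. Moreover, since the input events $inp.v$ do not mention $bf$, each accumulated state substitution $\sigma$ pushed in by law~\eqref{thm:crc3} leaves this set fixed: $\substapp{\sigma}{\bigcup_{v} \{inp.v\}} = \bigcup_{v} \{inp.v\}$. Therefore every $E'$ arising in the disjunction still contains all input events and is in particular non-empty. The refinement of each disjunct then amounts to showing $\cspen{s'}{t'}{E'}$ implies $\exists s, t, E, e @ \cspen{s}{t}{\{e\} \cup E}$, which holds by witnessing the existentials with $s \defs s'$, $t \defs t'$, any $e \in E'$, and $E \defs E' \setminus \{e\}$; combining the per-disjunct implications by monotonicity of $\bigvee$ discharges the obligation.

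I expect the main obstacle to be the Kleene star $B_3\bm{\star}$: although law~\eqref{thm:crc8} handles the star of a single $\Phi$, here $B_3$ is itself a disjunction of two branches, so its star ranges over all interleavings of buffer inputs and outputs and admits no compact closed form. The proof is nonetheless tractable because deadlock freedom is a uniform property --- we never need the explicit reachable states, only the invariant that the trailing $B_2$ always contributes the input events. An attractive alternative that sidesteps the star entirely is to apply the reactive-invariant law Theorem~\ref{thm:rea-inv} with $I_2 \defs \exists s, t, E, e @ \cspen{s}{t}{\{e\} \cup E}$ and $I_1 = I_3 = \truer$: conditions~(1), (2) and (4) are then immediate, since $B_3$ is productive and every relation refines $\truer$, leaving only $I_2 \refinedby B_2$ and $I_2 \refinedby B_3 \relsemi I_2$, each of which reduces by the same non-emptiness argument to a single application of law~\eqref{thm:crc3}; the leading assignment $bf := \langle\rangle$ is absorbed by monotonicity of $\relsemi$ together with the fact that $I_2$, being a pure statement of enabledness, is preserved under the substitution a leading assignment induces. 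In Isabelle/UTP both routes are discharged automatically by the \textsf{rdes-refine} tactic once the contract has been calculated.
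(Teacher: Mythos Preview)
Your proposal is correct. The paper itself offers no manual proof of this theorem at all: it simply records that the refinement is ``discharged automatically in 1.8s'' by the Isabelle/UTP tactics, so there is no paper-side argument to compare against beyond the calculated contract $Buffer = \rc{\truer}{\csppf{true}{\{bf \mapsto \langle\rangle\}}{\langle\rangle} \relsemi B_3\bm{\star} \relsemi B_2}{\false}$ that you also start from.

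Your write-up is therefore strictly more informative than the paper: you correctly reduce the goal via Theorem~\ref{thm:rdesrefine} to the single pericondition obligation, and your key observation---that the unconditional component $\bigcup_{v \in \nat}\{inp.v\}$ of $B_2$'s acceptance set survives every substitution pushed in by law~\eqref{thm:crc3}, so every reachable quiescent observation has a non-empty enabled set---is exactly the semantic content that the tactic exploits. The alternative route via Theorem~\ref{thm:rea-inv} is also sound and is in fact the route the paper uses explicitly for the \emph{second} buffer property (order preservation); for deadlock-freedom both routes collapse to the same trivial invariant and the paper does not bother to distinguish them. One minor phrasing quibble: when you write ``$\truer$ is the least reactive relation and is therefore refined by everything'', it would be clearer to say ``and therefore everything refines it'' (i.e.\ $\truer \refinedby X$), which is the direction you actually use.
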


\noindent This theorem can be discharged automatically in 1.8s on an Intel i7-4790 desktop machine. This proof approach
has also been applied in demonstrating that formalised state machine models, in the RoboChart
language~\cite{Miyazawa2019-RoboChart}, are deadlock-free~\cite{Foster18b} with a similar level of automation. We next
tackle the second property.

\begin{theorem}[Buffer Order Property] The sequence of items output is a prefix of those that were previously
  input. This can be formally expressed as \isalink{https://github.com/isabelle-utp/utp-main/blob/07cb0c256a90bc347289b5f5d202781b536fc640/tutorial/utp_csp_buffer.thy\#L100}
  $$\rc{\truer}{outps(\trace) \le inps(\trace)}{\false} \refinedby Buffer$$ where $inps(t), outps(t) : \seq\,\nat$ extract the sequence of
  input and output elements from the trace $t$, respectively. The postcondition simply requires that $Buffer$ does not terminate.
\end{theorem}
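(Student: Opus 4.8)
The plan is to reduce the refinement to its three characteristic obligations via Theorem~\ref{thm:rdesrefine}, and then discharge the pericondition obligation — the only nontrivial one — using the reactive-invariant law of Theorem~\ref{thm:rea-inv}. Writing the implementation's contract as $\rc{Q_1}{Q_2}{Q_3}$, the already-calculated form $Buffer = \rc{\truer}{\csppf{true}{\substmap{bf \mapsto \langle\rangle}}{\langle\rangle} \relsemi B_3\bm{\star} \relsemi B_2}{\false}$ (from Theorem~\ref{thm:whilecalc}) immediately gives $Q_1 = \truer$ and $Q_3 = \false$. Against the specification $\rc{\truer}{outps(\trace) \le inps(\trace)}{\false}$, the precondition obligation $Q_1 \refinedby \truer$ and the postcondition obligation $\false \refinedby (Q_3 \land \truer)$ are then trivial (no divergence; non-termination). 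Everything therefore rests on showing $Q_2 \implies (outps(\trace) \le inps(\trace))$.

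First I would apply Theorem~\ref{thm:rea-inv} to the loop $\while{true}{\rc{\truer}{B_2}{B_3}}$ alone, deferring the leading assignment $bf := \langle\rangle$. The crucial choice is the reactive pericondition invariant. The naive $outps(\trace) \le inps(\trace)$ is \emph{not} inductive, because an output event contributes $head(bf)$ to $outps$, and whether this preserves the prefix relation depends on the buffer contents. I would instead use the gluing invariant $I_2 \defs \healthy{CRR}(outps(\trace) \le bf \cat inps(\trace))$, read as ``from the current state, any future trace outputs at most the current buffer followed by what it subsequently reads'', together with $I_1 \defs \truer$ and $I_3 \defs \false$. The four provisos of Theorem~\ref{thm:rea-inv} then discharge as follows: (1) $B_3$ is \healthy{R4}-healthy, since each disjunct contributes a single event; (2) the assumption condition collapses using $Q_1 = \truer$ and $\_ \wpR \truer = \truer$; and (4) the postcondition conditions hold because, with $b = true$, $\casm{\neg b} = \false$ and $\false$ is refined by anything.

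The heart of the argument is proviso (3): establishment $I_2 \refinedby (\casm{true} \relsemi B_2)$ and maintenance $I_2 \refinedby \casm{true} \relsemi B_3 \relsemi I_2$. Establishment is immediate, as $B_2$ contributes the empty trace, so $outps(\langle\rangle) = \langle\rangle \le bf$. Maintenance is the main calculation, which I would carry out by case analysis on the two disjuncts of $B_3$. For the input branch the buffer becomes $bf \cat \langle v \rangle$ while the trace grows by $inp.v$, so the required inequality reduces — using $outps(\langle inp.v\rangle \cat t) = outps(t)$ and $inps(\langle inp.v\rangle \cat t) = \langle v \rangle \cat inps(t)$ — exactly to the invariant at the successor state. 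For the output branch (guarded by $\#bf > 0$) the buffer becomes $tail(bf)$ and the trace grows by $out.head(bf)$; here I would use $bf = \langle head(bf)\rangle \cat tail(bf)$ to cancel the leading element on both sides and again recover the successor invariant. This case split, together with the small lemmas describing how $inps$ and $outps$ distribute over concatenation with a single $inp$- or $out$-event, is where essentially all the work lies.

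Finally I would reinstate the leading assignment. Theorem~\ref{thm:rea-inv} yields $\rc{\truer}{I_2}{\false} \refinedby \while{true}{\rc{\truer}{B_2}{B_3}}$, and monotonicity of $\relsemi$ gives $bf := \langle\rangle \relsemi \rc{\truer}{I_2}{\false} \refinedby Buffer$. By the leading-assignment law~\eqref{thm:asgdist} the left-hand side equals $\rc{\truer}{\substapp{\substmap{bf \mapsto \langle\rangle}}{I_2}}{\false}$, and applying the substitution to $I_2$ turns $bf \cat inps(\trace)$ into $\langle\rangle \cat inps(\trace) = inps(\trace)$, so the contract becomes precisely $\rc{\truer}{outps(\trace) \le inps(\trace)}{\false}$, the specification; transitivity of $\refinedby$ then closes the goal. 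The one point deserving care is that the forward-looking ``$\le$'' invariant, rather than the tempting equality $inps(\trace) = outps(\trace) \cat bf$, is exactly what both survives the $bf \mapsto \langle\rangle$ substitution to give the wanted prefix property \emph{and} remains inductive across iterations.
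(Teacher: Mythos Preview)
Your proposal is correct and follows essentially the same approach as the paper: the same reactive invariant $I = outps(\trace) \le bf \cat inps(\trace)$, the same application of Theorem~\ref{thm:rea-inv} to the loop alone, the same use of monotonicity of $\relsemi$ and the leading-assignment law \eqref{thm:asgdist} to handle $bf := \langle\rangle$, and the same closing by transitivity. Your initial framing via Theorem~\ref{thm:rdesrefine} is slightly redundant --- once you switch to proving the invariant contract refines the loop and precompose with the assignment, refinement of the specification follows directly without separately unpacking $Q_2$ --- but this does not affect correctness, and your detailed case analysis on the two disjuncts of $B_3$ spells out exactly what the paper summarises as ``discharged by relational calculus.''
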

  
\begin{proof}
  First, we identify the reactive invariant $I \defs outps(\trace) \le bf \cat inps(\trace)$, and show that
  $\rc{\!\truer\!}{\!I\!}{\!\false\!} \refinedby \while{\truer}{\rc{\!\truer\!}{\!B_2\!}{\!B_3\!}}$. By
  Theorem~\ref{thm:rea-inv} it suffices to show case (2), that is $I \refinedby B_2$ and $I \refinedby B_3 \relsemi I$,
  as the other two cases are vacuous. These two properties can be discharged by relational calculus. Second, we prove
  that
  $\rc{\!\truer\!}{outps(\trace) \le inps(\trace)}{\!\false\!} \refinedby bf := \langle\rangle \relsemi
  \rc{\!\truer\!}{\!I\!}{\!\false\!}$.
  This holds, by Theorem~\thmeqref{thm:sfdl}{thm:asgdist}, since
  $I[\langle\rangle/bf] = outps(\trace) \le inps(\trace)$.  Thus, the overall theorem holds by monotonicity of
  $\relsemi$ and transitivity of $\refinedby$. The proof is semi-automatic --- since we have to manually apply induction
  with Theorem~\ref{thm:rea-inv} --- with the individual proof steps taking 2.2s in total.

\end{proof}

\section{Conclusion}
\label{sec:concl}
We have demonstrated an effective verification strategy for concurrent and reactive programs employing reactive
relations and Kleene algebra. We have provided three novel operators for expressing pre-, peri-, and postconditions in
stateful-failure reactive contacts, and shown how they can be used to support automated verification through
calculation. We have defined a number of novel UTP healthiness conditions for both reactive relations and reactive
contracts, that capture important properties needed by the verification strategy and algebraic laws. Our theory supports
most of the operators of the \Circus language, including all the sequential operators from~\cite{Oliveira&09}, and also
parallel composition. Our theorems and verification tool can be found in our theory repository\footnote{Isabelle/UTP:
  \url{https://github.com/isabelle-utp/utp-main}}, together with companion proofs for the theorems presented here.

Related work includes the works of Struth \emph{et al.} on verification of imperative
programs~\cite{Armstrong2015,Gomes2016} using Kleene algebra for verification-condition generation, which our work
heavily draws upon to deal with iteration. Automated proof support for the failures-divergences model was previously
provided by the CSP-Prover tool~\cite{Isobe2008}, which can be used to verify infinite-state systems in CSP with
Isabelle. Our work is different both in its contractual semantics, and also in our explicit handling of state, which
allows us to express variable assignments. However, we believe that several of the proof tactics defined for
CSP-Prover~\cite{Isobe2008} could be applicable in our work for a restricted subset of reactive programs that model CSP
processes. More recently Taha, Wolff, and, Ye~\cite{Taha2020CSP-Isabelle} have mechanised the set-based
failures-divergences semantics~\cite{Roscoe2005} of CSP in Isabelle/HOL. With suitable semantic links to our
mechanisation, this work similarly has the potential to improve automation.

Our work lies within the ``design-by-contract'' field~\cite{Meyer92}, and is related to the
  assume-guarantee reasoning frameworks~\cite{Benveniste2007,Benvenuti2008,Vincentelli2012}; a detailed comparison can
  be found in~\cite{Foster17c}. The refinement calculus of reactive systems~\cite{Preoteasa2014} is a language
based on property transformers containing trace information. Like our work, they support verification of reactive
systems that are nondeterministic, non-input-receptive, and infinite state. The main differences are our handling of
state variables, the basis in relational calculus, and our failures-divergences semantics. Nevertheless, our contract
framework~\cite{Foster17c} can be linked to those results, and we plan to derive an assume-guarantee calculus to support
verification of multi-party concurrent systems.

In future work, we will further optimise proof support for parallel composition through mechanisation of Oliveira's
refinement and step laws~\cite{Oliveira2005-PHD}, which allow efficient proof for concurrency patterns like bulk
synchronous parallelism. Furthermore, we will investigate the algebraic properties of the weakest rely operator
($\wppR{M}$), and its relationship with rely-guarantee algebra~\cite{Hayes2016a}. We will also tackle the remaining
operators of the \Circus language~\cite{Woodcock2001-Circus}, including hiding and renaming. Moreover, we hope to
identify a normal form for stateful-failure reactive designs using our specialised operators, which we speculate may
have the following approximate form:
$$\rc{\bigwedge_{i \in I} \, \cspin{b_1(i)}{t_1(i)}}{\bigvee_{i \in J} \, \cspen{b_2(i)}{t_2(i)}{E(i)}}{\bigvee_{i \in K} \, \csppf{b_3(i)}{\sigma(i)}{t_3(i)}}$$
This contains a conjunction of trace assumptions, a disjunction of quiescent observation, and a disjunction of
finalisers. It may well be the case that additional healthiness conditions will be required for this. We therefore will
also explore additional properties that the healthiness conditions \healthy{C2} and \healthy{CACC} support. We will
endeavour to establish formal links, using Galois connections, to existing semantic models like the original
failure-divergences model of CSP and its healthiness conditions~\cite{Roscoe2005,Cavalcanti&06}. This could provide a
way of harnessing CSP-Prover proof tactics~\cite{Isobe2008}, and therefore expand our verification capabilities.

We also aim to apply our strategy to more substantial examples, and are currently using it to build a prototype tactic
for verifying robotic controllers using a statechart-style notion with a mechanised denotational
semantics~\cite{Miyazawa2019-RoboChart,Foster18b}. To support this, we will develop a \Circus-based intermediate
verification language with annotations, such as loop invariants, to provide greater automation. Further in this direction,
our semantics and techniques will be also be extended to cater for real-time, probabilistic, and hybrid computational
behaviours~\cite{Foster17b}, which is possible due to the parametric nature of our reactive contract theory.

\section*{Acknowledgments}

\noindent This research is funded by the CyPhyAssure project\footnote{CyPhyAssure Project:
  \url{https://www.cs.york.ac.uk/circus/CyPhyAssure/}}, EPSRC grant EP/S001190/1, the RoboCalc project\footnote{RoboCalc
  Project: \url{https://www.cs.york.ac.uk/circus/RoboCalc/}}, EPSRC grant EP/M025756/1, and the Royal Academy of
Engineering. We would like to thank the anonymous reviewers of our article for their diligent and helpful feedback, which has
greatly improved the presentation of our results.

\bibliographystyle{plain}
\bibliography{JLAMP}

\end{document}